\documentclass[11pt,a4paper]{article}

\usepackage[utf8]{inputenc}
\usepackage[T1]{fontenc}
\usepackage{amsmath,amssymb,amsthm}
\usepackage{mathtools}
\usepackage[margin=1.1in]{geometry}
\usepackage{microtype}
\usepackage{xcolor}
\usepackage{hyperref}
\hypersetup{
  colorlinks=true,
  linkcolor=blue!50!black,
  urlcolor=blue!50!black,
  citecolor=blue!50!black
}

\usepackage[authoryear,round,comma]{natbib}

\usepackage{tikz}
\usepackage{pgfplots}
\pgfplotsset{compat=1.18}
\usepgfplotslibrary{fillbetween}
\usepackage{subcaption}
\usepackage{float}

\theoremstyle{plain}
\newtheorem{theorem}{Theorem}
\newtheorem{proposition}[theorem]{Proposition}
\newtheorem{lemma}[theorem]{Lemma}
\newtheorem{corollary}[theorem]{Corollary}

\theoremstyle{definition}
\newtheorem{definition}[theorem]{Definition}
\newtheorem{axiom}{Axiom}

\newtheorem{remark}[theorem]{Remark}

\newcommand{\R}{\mathbb{R}}

\newcommand{\E}{\mathbb{E}}
\newcommand{\Prob}{\mathbb{P}}
\newcommand{\Q}{\mathbb{Q}}
\newcommand{\F}{\mathcal{F}}
\newcommand{\filt}{\mathbb{F}}
\newcommand{\Linf}{L^{\infty}}
\newcommand{\WL}{W^{L}}
\newcommand{\eqdist}{\stackrel{d}{=}}
\newcommand{\re}{\, \| \,}

\title{Avellaneda--Stoikov and Cartea--Jaimungal as One Framework: \\ 
A Forced Uniqueness Theorem for Inventory Market Making}
 \author{Frank M.\,V.\,Feys}
 \date{May 30, 2026}

\begin{document}

\maketitle

\begin{abstract}
In inventory market making, the running-penalty coefficient  $\phi$ of the Cartea--Jaimungal framework and the risk-aversion parameter $\gamma$ of  the Avellaneda--Stoikov framework are typically treated as independent free parameters, 
calibrated separately.
We show that they are in fact not independent.
A small set of axioms on the market maker's dynamic preference functional, 
namely cash-additivity, normalization, concavity, strong dynamic consistency, and law-invariance, 
forces the preference functional to be the entropic certainty-equivalent on liquidation-adjusted terminal wealth, 
parametrized by a single positive scalar $\gamma$.
The Avellaneda--Stoikov framework is the  unique representative of this axiom class.
The Cartea--Jaimungal framework is its second-order Taylor expansion in inventory magnitude, 
with the running coefficient forced to $\phi = \gamma\sigma^2/2$ and 
 (under a mild regularity condition on the liquidation cost) 
  the terminal coefficient forced to $\alpha =  \tfrac{1}{2}L''(0)$.
The two frameworks, typically presented as competing alternatives with the choice between them driven by tractability, 
are different manifestations of a single underlying object.
The forced relation is invertible, $\gamma = 2\phi/\sigma^2$, 
giving a consistency cross-check on independently calibrated desk parameters.
\end{abstract}

\noindent\textbf{MSC 2020 classification.}
Primary: 91B16 (utility theory),
91G70 (statistical methods; risk measures).
Secondary: 49L20 (dynamic programming in optimal control and differential games),
60H30 (applications of stochastic analysis),
91G10 (portfolio theory),
91G80 (financial applications of other theories),
93E20 (optimal stochastic control).

\medskip

\noindent\textbf{Keywords.}
Inventory market making; 
forced uniqueness; 
axiomatic preference theory; 
Avellaneda--Stoikov; 
Cartea--Jaimungal; 
dynamic consistency; 
entropic risk measure.


\section{Introduction}
\label{sec:intro}

A market maker, in the inventory-management tradition, faces a problem with a deceptively simple structure.
She posts bid and ask quotes around a prevailing mid-price, earns the spread when her quotes are hit, accumulates inventory (typically unwanted) in the process, and must manage the risk that the mid-price moves against her position before she can unwind it, all under uncertainty about future prices and order arrivals.
At the preference-functional level, the mathematical literature is organized around two principal frameworks,
each emphasizing a different representation of risk and inventory aversion.

The first is the Avellaneda--Stoikov (AS) framework. 
Originating with \citet{HoStoll1981} and cast in modern quantitative form by \citet{AvellanedaStoikov2008}, the AS market maker is endowed with constant absolute risk aversion (CARA) preferences over terminal wealth, comprising any residual inventory marked to the terminal mid-price net of a convex liquidation penalty.
The optimal quoting strategy solves a Hamilton--Jacobi--Bellman equation,
 and the resulting reservation-price and optimal-spread formulas are by now standard textbook material, with the risk-aversion coefficient $\gamma$ as the single free preference parameter.
 
The second is the Cartea--Jaimungal framework, which we shall refer to as the CJ tradition.
Beginning with \citet{CarteaJaimungal2015} and codified in \citet{CarteaJaimungalPenalva2015}, 
the CJ market maker maximizes expected terminal wealth subject to a running quadratic inventory penalty and a terminal liquidation penalty.
The running-penalty coefficient $\phi$ and the terminal-penalty coefficient $\alpha$ are treated as independent free parameters, 
typically calibrated to market data or chosen on tractability grounds, 
 with no principled relation between them.

The two traditions are conventionally presented as competing alternatives.
In \citet{CarteaJaimungalPenalva2015}, 
\citet{Gueant2017}, and essentially every applied paper we have surveyed, 
the choice between AS and CJ is governed by tractability, 
by the analytic form of the resulting strategy, or by the practitioner's modeling preference, 
but not by anything resembling a principled argument.
This is the situation that we wish to revise.

Before proceeding, we note the scope of the present contribution.
We do not model adverse selection, queue position, toxic flow, 
or the latency-sensitive components of execution;
 our results address the preference-functional layer alone, 
 which sits above these microstructural concerns and constrains how their outputs should be aggregated into a quoting policy.

We propose to derive the market maker's objective from a small set of axioms.
Our five core axioms are stated and motivated in Section~\ref{sec:axioms}; the underlying idea is to ask, for each axiom, whether a working market maker would endorse it on reflection.
Some are uncontroversial, such as cash-additivity (a sure dollar adds a dollar of value) and normalization (a zero-terminal-wealth strategy has zero value); others carry economic content, such as concavity, strong dynamic consistency, and law-invariance.
We also record three additional natural properties, all derivable from the five core axioms: \emph{monotonicity} (M), the \emph{wealth-summary} property (W), and \emph{right-continuity in time} (R).
 One structural consequence is microstructural and, to our knowledge, has not previously been derived as a feature of the preference functional: \emph{clock-invariance} of the risk-aversion parameter with respect to the price process's quadratic-variation clock, which we shall argue is the right time-stationarity property for an inventory market maker.
The constant-volatility AS benchmark satisfies clock-invariance vacuously, whereas the standard CJ extension to stochastic-volatility settings violates it, as we show in Section~\ref{sec:supporting} via Proposition~\ref{prop:clock-corrected}.
Clock-invariance is not an additional axiom but a consequence of the entropic representation, recorded as Corollary~\ref{cor:clock-invariance} below.

Our main result, Theorem~\ref{thm:forced-uniqueness},
  is that the five axioms force the market maker's preference functional to be the entropic certainty-equivalent on liquidation-adjusted terminal wealth, parametrized by a single positive scalar $\gamma$.
Equivalently, the associated risk measure $\rho_t = -J_t$ is the (conditional) entropic risk measure.
The Avellaneda--Stoikov framework is the unique axiom-consistent representative;
 the Cartea--Jaimungal framework is the second-order Taylor expansion of the same functional, 
 with its free hyperparameters $\phi$ and $\alpha$  no longer free but fixed 
 by the relations $\phi = \gamma\sigma^2/2$ and $\alpha = \tfrac{1}{2}L''(0)$.
To the best of our knowledge, no prior statement of this result exists in the inventory market making literature.
The closest prior work is the ODE-level unification of \citet{Gueant2017}, 
discussed in detail in Section~\ref{sec:supporting}; 
our contribution lies at the preference-functional level rather than at the value-function level.

The result is conditional on the axiom system, and we are explicit about this.
Theorem~\ref{thm:forced-uniqueness}  establishes the forcing as a consequence of the five core axioms.
 It is not a free-standing claim about all reasonable market makers, 
and it does not preclude a practitioner from operating outside the axiom class if she has good reasons to do so.
The substantive content most likely to be challenged is the wealth-summary property (Proposition~\ref{prop:wealth-summary}), 
which commits the market maker to evaluating strategies solely through their liquidation-adjusted terminal wealth $\WL_T$, 
and thereby excludes path-functional preferences such as intraday drawdown aversion (discussed in Section~\ref{subsec:drawdown}).
A skeptic might object that wealth-summary is too restrictive an assumption. 
The derived-property framing of Section~\ref{subsec:derived-properties} answers this objection precisely: 
wealth-summary is not an axiom, and the skeptic cannot reject it directly.
It is a theorem, derived from cash-additivity (J1),  normalization (J2), 
 and strong dynamic consistency (J4) alone.
Any objection to wealth-summary must therefore take the form of a rejection of one of those three core axioms, 
each of which carries independent and transparent economic justification.
In particular, a practitioner who endorses cash-additivity (a sure dollar adds a dollar of value), 
normalization (a zero-terminal-wealth strategy has zero value), 
and strong dynamic consistency (earlier preferences should not knowingly disagree with later ones on the same information)
 is automatically committed to wealth-summary,
 whether or not she finds the wealth-summary framing intuitive on its face.
The forcing conclusions of the theorem then follow for her as a matter of pure logic.

From our main theorem, four rather sharp consequences follow.
First, the AS framework is uniquely correct among law-invariant dynamically-consistent inventory market making models; 
the choice of CARA utility, motivated in the original AS paper by tractability,
 is in fact forced by the axioms (Corollary~\ref{cor:as-uniqueness}).
Second, the CJ framework is incompatible with our axioms as a primitive functional,
 but can be rescued as the second-order expansion of the entropic functional around zero inventory 
 with the forced coefficient $\phi = \gamma\sigma^2/2$ (Corollary~\ref{cor:cj-inconsistency}).
Third, the CJ terminal-penalty coefficient $\alpha$ is similarly forced, equal to $\tfrac{1}{2}L''(0)$ (Corollary~\ref{cor:forced-alpha}); 
it is as such not a preference parameter but a property of the market microstructure.
  The two forcings are conceptually distinct, and the distinction is worth emphasizing.
Indeed, $\phi$ is a preference parameter disguised as  a market parameter, forced by the entropic structure of the agent's preferences via $\gamma$, 
while $\alpha$ is a genuine market parameter, forced by the curvature of the liquidation cost function $L$, which some practitioners misread as a preference parameter.
A practitioner calibrating $(\phi, \alpha, \gamma)$ as three free parameters is therefore over-parametrizing on both sides of the preference--market boundary.
Fourth, dynamic conditional-value-at-risk preferences are time-inconsistent in our framework (Corollary~\ref{cor:cvar}), 
  with implications for the gap between desk-internal optimization and CVaR-based regulatory frameworks.
In Section~\ref{sec:consequences} we develop eleven corollaries in total, including the four named above.

\paragraph{Executive summary.}
For the reader who wants the bottom line before the details:
\begin{itemize}
\item \emph{What is forced.} 
The running-penalty coefficient $\phi$  and risk-aversion parameter $\gamma$ are not independent: 
$\phi = \gamma\sigma^2/2$ is forced by five axioms on the market maker's dynamic preferences.
\item \emph{What else is forced.} 
Under a mild regularity condition on the liquidation cost function, the terminal-penalty coefficient $\alpha$ is not a preference parameter at all;
 it is forced to equal $\tfrac{1}{2}L''(0)$, a property of the market's cost-to-liquidate curvature.
A CJ desk calibrating  $(\phi, \alpha, \gamma)$ as three free parameters is over-parametrizing on both sides of the preference--market boundary.
\item \emph{What is free.} 
A single scalar $\gamma > 0$.
Everything else in both the AS and CJ frameworks follows from $\gamma$ and the market inputs.
\item \emph{What is testable.} 
A CJ desk can recover its implicit $\gamma$ via $\gamma = 2\phi/\sigma^2$ and cross-check it against an AS-style calibration from quoted spreads.
Persistent disagreement between the two is a signal of  miscalibration or of a violated axiom.
\item \emph{What changes in stochastic volatility.} 
The inventory penalty must use quadratic-variation time:
 $\int_0^T \sigma_s^2 q_s^2\, ds$, not $\int_0^T q_s^2\, ds$.
Wall-clock penalties systematically misprice inventory on high- and low-volatility days.
\item \emph{What is not covered.} 
Adverse selection, queue position, intraday drawdown limits, 
and regime-switching risk aversion are outside the framework; 
they operate at a different layer   of the market-making stack.
\item \emph{A regulatory note.} 
Dynamic CVaR is not consistent with the axiom system; 
no CVaR-based objective satisfies strong dynamic consistency (J4).
A desk's time-consistent internal optimization is therefore entropic, 
while the regulator's CVaR limit is a different, non-entropic object.
\end{itemize}

\paragraph{Outline.}
Section~\ref{sec:setup} introduces the market microstructural setup, the strategy space, and the liquidation-adjusted terminal wealth.
Section~\ref{sec:axioms} states and motivates the five core axioms and three derived properties.
 Section~\ref{sec:theorem} proves the main theorem and records clock-invariance (Corollary~\ref{cor:clock-invariance});    
 Section~\ref{sec:consequences} develops eleven further corollaries.
Section~\ref{sec:supporting} derives the forced coefficient via HJB (Proposition~\ref{prop:forced-phi}) and extends to stochastic volatility (Proposition~\ref{prop:clock-corrected}).
Section~\ref{sec:discussion} opens with the operational implications for a working desk (Section~\ref{subsec:operational}), then discusses scope (in particular, the comparison with \citet{Gueant2017} and drawdown-averse market makers), records further structural connections, and points to four directions for future work.
Section~\ref{sec:conclusion} concludes.


\section{The Market Making Setup}
\label{sec:setup}

We fix a finite trading horizon $T > 0$ and a filtered probability space $(\Omega, \F, \filt, \Prob)$,  where $\filt = (\F_t)_{t \in [0,T]}$ is a fixed filtration satisfying the usual conditions (right-continuity and completeness).
We further assume that the initial $\sigma$-algebra $\F_0$ is $\Prob$-trivial, that is, every $A \in \F_0$ satisfies $\Prob(A) \in \{0, 1\}$.
This is a standard assumption in the dynamic risk measure literature \citep{KupperSchachermayer2009}.
 It expresses the requirement that no information beyond the model primitives is available to the market maker at the start of the trading horizon.
 We also assume that the terminal $\sigma$-algebra $\F_T$ is non-atomic: for every $A \in \F_T$ with $\Prob(A) > 0$ there exists $B \in \F_T$ with $B \subseteq A$ and $0 < \Prob(B) < \Prob(A)$.
This is likewise standard in the dynamic risk measure literature and is satisfied by any filtered Brownian space or more generally by any space on which a continuously-distributed random variable is $\F_T$-measurable.
The filtration $\filt$ is given as a primitive of the model and represents the information available to the market maker at each time $t \in [0,T]$;
 $\F_t$ is interpreted as the $\sigma$-algebra of all events whose occurrence is known by time $t$.
All processes in this paper are adapted to  $\filt$ in the standard sense of stochastic analysis, 
meaning that the value at time $t$ is $\F_t$-measurable.
The primitives, namely the mid-price $S$ (together with its volatility $\sigma$ in the stochastic-volatility setting),
 the order-flow counts $N^a, N^b$, the strategy $\pi = (\delta^a,  \delta^b, C)$, the intensities $\lambda^a, \lambda^b$, 
 and the preference functional $J$, are made adapted by direct assumption; 
 all derived processes (the quadratic variation $\langle S\rangle$,
  the canonical clock $\Lambda$, the cash $X$, the inventory $q$, the wealth $W$, and the liquidation-adjusted terminal wealth $\WL_T$) are then adapted automatically.
The strategy $\pi$, the volatility  $\sigma$, and the intensities $\lambda^a, \lambda^b$ are required to be \emph{predictable}, 
the strengthening of adaptedness appropriate when integrating against jump or diffusion components.

 Throughout, we work in $\Linf = \Linf(\F_T)$, the space of bounded $\F_T$-measurable random variables, together with the natural conditional spaces $\Linf(\F_t)$ for $t \in [0, T]$.
This is the same setting as the one used in the dynamic risk measure literature; 
  see,  for example,  \citet{KupperSchachermayer2009} and \citet{BionNadal2009}.
The boundedness assumption is a technical convenience that simplifies the representation theorems on which our main result rests, and is standard in the axiomatic literature on dynamic risk measures.

We now describe, in turn, the price process, the order-flow processes, the strategy space, the wealth dynamics, 
the liquidation cost function, and the dynamic preference functional.
Each component is introduced at the minimal level of generality required for the main result, and we note explicitly where stronger or weaker hypotheses would suffice.

\subsection{Price Process and Order Flow}
\label{subsec:price-flow}

The mid-price $(S_t)_{t \in [0,T]}$ is a continuous semimartingale with quadratic variation process $\langle S \rangle$.
The constant-volatility benchmark, which we shall use as a running example, is $S_t = S_0 + \sigma B_t$ for a Brownian motion $B$ and a constant $\sigma > 0$, so that $d\langle S \rangle_t = \sigma^2 \, dt$.
More generally, we allow $S$ to be of the form $dS_t =  \mu_t \, dt + \sigma_t \, dB_t$ with $\sigma$  a positive predictable process, in which case $d\langle S \rangle_t =  \sigma_t^2 \, dt$.
Note that the drift $\mu$ plays no role in our analysis, and in what follows we set it to zero, without loss of generality for the structural results.

Order flow is described by two counting processes, $N^a = (N^a_t)$ and $N^b = (N^b_t)$,  representing the cumulative number of fills at the market maker's ask and bid, respectively.
We take $N^a$ and $N^b$ to be Cox processes
  (i.e., doubly stochastic Poisson processes) with intensities $\lambda^a_t$ and $\lambda^b_t$ that depend on the market maker's quote distances $(\delta^a, \delta^b)$ and on the relevant state variables.
The classical AS specification is $\lambda^a(\delta^a) = A e^{-\kappa \delta^a}$ for the ask intensity at distance $\delta^a$ from the mid, and similarly for the bid, but we shall not need this functional form for the structural results;
  we require only that $\lambda^a, \lambda^b$ be predictable, nonnegative, and locally integrable.

For the technical conditions of Appendix~\ref{app:technical} to be nonvacuous on the canonical Brownian setup, 
  we work throughout with a stopped version of the price process.
Fix a deterministic constant $M > 0$ and let $\tau_M  \coloneqq \inf\{t \in [0,T] \mid |S_t| > M\}$, with $\tau_M = T$ if the infimum is not attained.
Define the \emph{stopped mid-price} $\bar S_t  \coloneqq S_{t \wedge \tau_M}$.
The stopped process $\bar S$ is a continuous semimartingale, $|\bar S_t| \leq M$ almost surely for every $t \in [0,T]$, and $\langle \bar S\rangle_t = \langle S\rangle_{t \wedge \tau_M}$.
Since $M$ can be chosen so that $\Prob(\tau_M \leq T)$ is arbitrarily small, the stopping is operationally invisible on intraday horizons, with all structural results holding for $S$ on the high-probability event $\{\tau_M > T\}$.
For the rest of the paper, $S$ refers to $\bar S$, and we drop the bar notation.
A treatment without the stopping, using an Orlicz space in place of $\Linf$ in the spirit of \citet{CheriditoLi2008},
  is also possible, and it would deliver the same structural conclusions.

\subsection{Strategy Space}
\label{subsec:strategy}

The market maker's decision profile over the trading horizon has two components:
 a quoting policy (where to post limit orders relative to the mid-price) and a cash-management policy
  (how much capital to commit to or withdraw from the trading account).
We model the strategy as a triple $\pi = (\delta^a, \delta^b, C)$, where $\delta^a, \delta^b$ are $\R$-valued predictable processes
 (the quote distances from the mid-price) and $C$ is an $\F$-adapted c\`adl\`ag process of finite variation on $[0,T]$ (the cumulative cash injection into the trading account).
Predictability of $\delta^a, \delta^b$ is required because the quoting decision at time $t$ must be measurable with respect to information strictly prior to $t$ (to be matched against the jump-arrival processes $N^a, N^b$, which are integrated against in~\eqref{eq:cash}).
The cash injection $C$ is integrated as a Stieltjes integral against time, which requires only adaptedness; 
predictability would be a stronger condition than necessary and we do not impose it.
We allow for more general control structures, so as to accommodate possible extensions to multi-tick or hedge-augmented strategies.
The cash-injection component $C$ is more than a mere modeling convenience. 
Indeed, it is mathematically necessary for the reduction argument underlying Theorem~\ref{thm:forced-uniqueness}.
Lemma~\ref{lem:density} (Appendix~\ref{app:technical}) uses terminal cash injections to establish that the range of $\WL_T$ over $\Pi$ is all of $\Linf(\F_T)$, a density condition allowing the reduced functional $\tilde J_t$ to be defined on $\Linf(\F_T)$ and the Kupper--Schachermayer representation theorem to apply.
Its operational significance in quoting is secondary; 
for the purposes of the present results, the reader may take $C$ to consist of a single lump-sum injection at time $0$.
We emphasize that the closure of $\Pi$ under bounded cash injection 
(i.e., that $\pi^{[c]} \in \Pi$ whenever $\pi \in \Pi$ and $c \in \Linf(\F_t)$) 
 is a modeling hypothesis on the strategy space, distinct from and logically independent of the five preference axioms of Section~\ref{sec:axioms}; 
it is verified for the specific admissibility conditions of Definition~\ref{def:admissible} in Lemma~\ref{lem:density}.

 We take $\Pi$, the set of admissible strategies, to be a convex subset of the space of such triples, satisfying integrability and boundedness conditions sufficient for existence and pathwise uniqueness of the wealth and inventory dynamics introduced below.
Precise admissibility requirements are stated in Definition~\ref{def:admissible} of Appendix~\ref{app:technical}.

We shall make use of the following construction.
Given any strategy $\pi = ( \delta^a, \delta^b, C) \in \Pi$, any $t \in [0,T]$, and any $\F_t$-measurable bounded random variable $c$, the \emph{cash-injected strategy} $\pi^{[c]}$ is obtained from $\pi$ by adding $c$ to the cash-injection process at time $t$.
To be precise,  $\pi^{[c]} = (\delta^a, \delta^b, C^{[c]})$ with $C^{[c]}_s \coloneqq C_s + c\, \mathbf{1}_{\{s \geq t\}}$.
The quote-distance and inventory processes are unchanged, the terminal cash satisfies $X_T(\pi^{[c]}) = X_T(\pi) + c$, 
and the liquidation-adjusted terminal wealth satisfies  $\WL_T(\pi^{[c]}) = \WL_T(\pi) + c$ almost surely.
Admissibility of $\pi^{[c]}$ is verified in Lemma~\ref{lem:density} (Step~1) of Appendix~\ref{app:technical}.

\subsection{Wealth and Inventory Dynamics}
\label{subsec:wealth}

Given a strategy $\pi \in \Pi$, 
  the market maker's cash holdings $ X = (X_t)$ and inventory $q = (q_t)$ evolve according to
\begin{align}
    dX_t &= (S_t + \delta^a_t) \, dN^a_t - (S_t - \delta^b_t) \, dN^b_t + dC_t, \label{eq:cash} \\
    dq_t &= dN^b_t - dN^a_t, \label{eq:inventory}
\end{align}
with initial conditions $X_0 \in \R$ and $q_0 \in \R$.
Equation~\eqref{eq:cash} says that each fill at the ask adds $S_t + \delta^a_t$ to the cash account (the market maker sells at her ask price), 
each fill at the bid subtracts  $S_t - \delta^b_t$ (the market maker buys at her bid price), 
and the increment $dC_t$ represents the market maker's cash injection or withdrawal at time $t$. 
Equation~\eqref{eq:inventory} says that fills at the bid increase inventory by one unit and fills at the ask decrease it by one unit.
We are working in units in which the lot size is one share. 
The extension to arbitrary lot sizes is mechanical,  but notationally heavy.
Note that the two processes are coupled: 
 the inventory $q_t$ is determined entirely by the fill history, while the cash $X_t$ depends both on the fill history and on the prevailing mid-price at each fill time.
In particular, a sequence of round-trip trades (a bid fill followed by an ask fill at the same price) leaves inventory unchanged but contributes positively to cash, reflecting the earned spread.
A directional inventory accumulation contributes to cash at each fill price  but leaves the market maker exposed to subsequent mid-price moves.
The mark-to-market wealth at time $t$ is
\begin{equation*}
    W_t  =  X_t + q_t \, S_t.
\end{equation*}
 In a frictionless market in which liquidation occurs at the mid-price, the mark-to-market wealth $W_t$ is the natural measure of the maker's instantaneous value.
However, $W_t$ does not account for the fact that liquidating inventory $q_t$ at the prevailing mid-price is generally not feasible: 
market impact, the bid-ask spread of the market into which one liquidates, 
and the time required to unwind a large position all reduce the actually-realizable value below $W_t$.
We capture this fact through the liquidation cost function.

\subsection{The Liquidation Cost Function}
\label{subsec:liquidation}

Let $L \colon \R \to \R_+$ denote the  \emph{liquidation cost function}, where $L(q)$ is the cash cost of unwinding inventory $q$ at the terminal time $T$.
We require the following two properties.
\begin{enumerate}
    \item \emph{No cost at zero inventory.} 
    $L(0) = 0$.
    \item \emph{Convexity.} 
    $L$ is convex on $\R$.
\end{enumerate}
A third property, \emph{symmetry} ($L(q) = L(-q)$),  holds in settings where long and short positions are interchangeable (for example, a stock with no short-sale constraint).
We mention it for completeness but do not require it for any of our structural results.

The function $L(q)$ represents the cost (in cash units) of liquidating inventory $q$ at the terminal time.
Convexity captures the standard market-impact intuition that larger inventory positions are disproportionately more expensive to unwind, and is compatible with most explicit market-impact models in the literature (linear, square-root, power-law).
In this paper we do not need an explicit functional form for $L$, since the structural results depend only on the convexity and on the boundary condition $L(0) = 0$.

Given the liquidation cost function $L$, we define the \emph{liquidation-adjusted terminal wealth} of strategy $\pi$ as
\begin{equation*}
    \WL_T(\pi) \coloneqq  X_T(\pi) + q_T(\pi) \, S_T - L\bigl(q_T(\pi)\bigr), 
\end{equation*}
where the superscript $L$ stands for ``liquidation.''
This is the random variable in terms of which the market maker's preferences are expressed.

\begin{remark}[Boundedness of $\WL_T$]
\label{rem:WL-bounded}
Under the admissibility conditions of Definition~\ref{def:admissible} (see Appendix~\ref{app:technical}),
  every strategy $\pi \in \Pi$ satisfies $\WL_T(\pi) \in \Linf(\F_T)$; 
  the formal statement and proof are recorded as Lemma~\ref{lem:WL-bounded} in the appendix.
The boundedness is the regime in which the dynamic-risk-measure machinery of \citet{KupperSchachermayer2009} applies directly.
It can however be relaxed, at the cost of additional integrability conditions on the strategy space. 
We believe that the same uniqueness conclusion holds in a Cheridito--Li-style Orlicz-space treatment \citep{CheriditoLi2008}, 
although we do not verify this here.
In practice, the bound is innocuous, 
  since every realistic intraday market-making strategy faces hard inventory and loss limits.
\end{remark}

\begin{remark}[Deterministic Liquidation Cost Function]
\label{rem:L-deterministic}
We have taken $L$ to be a deterministic function of terminal inventory. 
This is the standard AS/CJ assumption.
A more realistic model takes $L \colon \R \times \Omega \to \R$ with  $L(q, \cdot)$ an $\F_T$-measurable random function reflecting the realized depth and microstructure of the book at terminal time.
Our axiomatic machinery operates on $\WL_T$ as a random variable, not on the internal structure of $L$, so the structural results survive this generalization unchanged provided $\WL_T \in \Linf(\F_T)$.
The forced coefficient $\alpha = \tfrac{1}{2}L''(0)$ would then simply become an $\F_T$-measurable random variable.
\end{remark}

\subsection{The Preference Functional}
\label{subsec:J-def}

The market maker's dynamic preference functional is a family
\begin{equation*}
    J = (J_t)_{t \in [0,T]}, \quad J_t \colon \Pi \to \Linf(\F_t),
\end{equation*}
in the sense of the dynamic risk measure literature  \citep{FollmerSchied2016, CheriditoDelbaenKupper2006, AcciaioPenner2011}.
Here $J_t(\pi)$ is the conditional preference value of strategy $\pi$ as seen at time $t$, expressed as an $\F_t$-measurable random variable in cash units.
Observe that a higher $J_t$ is better: 
$\pi$ is preferred to $\pi'$ when $J_t(\pi) \geq J_t(\pi')$ almost surely.
The associated risk measure $\rho_t \coloneqq -J_t$ also takes values in $\Linf(\F_t)$; we use both $J$ and $\rho$ as needed.

\subsection{Inputs, Choice, and Outputs}
\label{subsec:inputs-outputs}

 Before stating the axioms, we fix the framework's inputs and outputs.
The \emph{market inputs}, fixed throughout the paper, are the objects defined in Sections~\ref{subsec:price-flow}--\ref{subsec:liquidation}: the filtered probability space, horizon $T$, mid-price process $S$ and its quadratic variation, order-flow processes $N^a, N^b$, admissible strategy space $\Pi$, wealth and inventory dynamics, and liquidation cost function $L$.
These are properties of the market, not choices the agent makes.
The \emph{agent's choice} is the dynamic preference functional $J$, defined in Section~\ref{subsec:J-def}.
This is the one object on which she has genuine freedom, and the axioms of Section~\ref{sec:axioms} constrain only $J$.
The conceptual force of this split is that any test of the framework can be attributed cleanly to one side or the other.

The main result, Theorem~\ref{thm:forced-uniqueness}, says that under five axioms on $J$, the agent's choice collapses to a one-parameter family: $J$ is the entropic certainty-equivalent on $\WL_T$, parametrized by a single positive scalar $\gamma$.
The scalar $\gamma$ is the only preference-side degree of freedom that survives the axioms.
The two most consequential downstream quantities are the running inventory coefficient $\phi = \gamma\sigma^2/2$ (Section~\ref{sec:supporting}) and the terminal penalty coefficient $\alpha = \tfrac{1}{2}L''(0)$ (Corollary~\ref{cor:forced-alpha}).
The first is a preference parameter, masquerading in the Cartea--Jaimungal tradition as a free running cost but in fact pinned by $\gamma$ and the realized volatility.
The second is a market parameter, masquerading in the same tradition as a preference parameter but in fact pinned by the curvature of $L$ at zero inventory.
A practitioner who calibrates $(\phi, \alpha, \gamma)$ as three independent free parameters is therefore over-parametrizing 
on both sides of the preference--market boundary at once.


\section{Axioms for the Market Maker's Objective}
\label{sec:axioms}

In this section we lay out the five axioms which together pin down the market maker's dynamic preference functional, 
and we record three additional natural properties (monotonicity (M), the wealth-summary property (W), 
and right-continuity in time (R)) that are derivable from the five axioms.
We state each axiom precisely, motivate it from the perspective of a working trader, 
and contrast it with what the existing literature
 (i.e., the Avellaneda--Stoikov and Cartea--Jaimungal traditions) assumes either implicitly or as a primitive.

The list of axioms is, roughly, the following.
The first three (J1, J2, and J3) encode standard monetary properties of the preference functional: 
cash-additivity, normalization at zero terminal wealth, and the diversification preference (concavity).
The fourth (J4) is dynamic consistency, in the strong sense.
The fifth (J5) is law-invariance, that is, the requirement that the market maker's 
ranking depends only on the distribution of liquidation-adjusted terminal wealth.
A separate relevance condition, in the sense of \citet{KupperSchachermayer2009}, is not part of our axiom system;
  it is a derivable property, recorded in Section~\ref{subsec:relevance} below and invoked when the 
  representation theorem of Kupper--Schachermayer is applied in the proof of Theorem~\ref{thm:forced-uniqueness}.

In addition, three natural properties are derivable from~(J1)--(J5) rather than independent axioms: \emph{monotonicity}
 (in liquidation-adjusted terminal wealth), the \emph{wealth-summary property} (all dependence of $J_t$ on the strategy is mediated by $\WL_T$), and \emph{right-continuity in time} (the path $t \mapsto J_t(\pi)$ is right-continuous in probability).
The derivations are recorded in Section~\ref{subsec:derived-properties} below, 
with full proofs deferred to Appendix~\ref{app:indep}.
We discuss each of these derived properties with  the same care given to the five primitive axioms, 
since their intuitive content is just as essential to the framework.

Recall from Section~\ref{subsec:inputs-outputs} that the market data is fixed.
The axioms below constrain only the agent's preference functional $J = (J_t)_{t \in [0,T]}$.

\subsection{Cash-Additivity, Normalization, and Concavity}
\label{subsec:J1J2J3}

The first three axioms are basic monetary and diversification properties shared with most monetary risk measures: 
scale on cash, scale at zero, and aversion to dispersion.
For axiomatic treatments of dynamic monetary risk measures in adjacent settings, 
see also \citet{CerreiaVioglioMaccheroniMarinacciMontrucchio2011} and \citet{DrapeauKupper2013}; 
these works characterize broader classes of risk measures, 
whereas the present paper forces a one-parameter family by combining law-invariance 
and strong dynamic consistency under the market making setup.

\begin{axiom}[Cash-Additivity]
\label{ax:J1}
For all $t \in [0,T]$, all $\pi \in \Pi$, and all $\F_t$-measurable bounded $c$, the cash-injected strategy $\pi^{[c]}$ of Section~\ref{subsec:strategy} satisfies
   $ J_t(\pi^{[c]}) = J_t(\pi) + c$
almost surely. 
\end{axiom}

Axiom~\ref{ax:J1} expresses the basic monetary character of the preference functional.
The value $J_t$ is measured in the same units as wealth, 
and adding a sure amount of cash to the book translates one-for-one into the preference value.
In the risk-measure literature, this property is variously called cash-invariance or also translation-invariance \citep{ArtznerDelbaenEberHeath1999, FollmerSchied2016}.
The conditional form, with $c$ an $\F_t$-measurable random variable, encodes a zero-discount-rate assumption: 
cash is the num\'eraire, with no stochastic discounting between $t$ and $T$.
This is the standard convention in the AS/CJ literature;
 in a multi-period economy with a stochastic interest rate, 
 Axiom~\ref{ax:J1} would need modification.

\begin{axiom}[Normalization]
\label{ax:J2}
For every $t \in [0,T]$ and every $\pi_0 \in \Pi$ with $\WL_T(\pi_0) = 0$ almost surely, 
it holds that
$    J_t(\pi_0) = 0 $
almost surely. 
\end{axiom}

Axiom~\ref{ax:J2} fixes the scale of the preference functional. 
A strategy whose liquidation-adjusted terminal wealth is identically zero has zero cash-equivalent value at every time.
The axiom is a standard normalization in the dynamic risk measure literature, 
corresponding to the requirement $\rho_t(0) = 0$ in \citet[Definition 1.8]{KupperSchachermayer2009}.
It is independent of cash-additivity.
 Axiom~\ref{ax:J1} fixes how $J_t$ responds to cash shifts, 
but says nothing about the absolute level of $J_t$ on any one strategy.

Before stating the next axiom, we introduce a piece of notation.
Given $\pi, \pi' \in \Pi$ and an $\F_t$-measurable $\lambda \in [0, 1]$,  we write $\lambda\pi \oplus (1-\lambda)\pi'$ for any strategy in $\Pi$ whose liquidation-adjusted terminal wealth equals $\lambda \WL_T(\pi) + (1-\lambda)\WL_T(\pi')$ almost surely.
Existence of at least one such strategy is guaranteed by Lemma~\ref{lem:density} in Appendix~\ref{app:technical}.
Multiple such strategies may exist, and Axiom~\ref{ax:J3} should be read as a constraint on $J_t$ holding for every such representative.
Under the wealth-summary property (W) (Proposition~\ref{prop:wealth-summary}, derived from Axioms~\ref{ax:J1}, \ref{ax:J2}, and~\ref{ax:J4} below), all representatives receive the same $J_t$-value, and the constraint collapses to a single condition.

\begin{axiom}[Concavity]
\label{ax:J3}
\emph{(J3a) Weak inequality.}
For all $t \in [0,T]$, all $\pi, \pi' \in \Pi$, and all $\F_t$-measurable $\lambda$ with $0 \leq \lambda \leq 1$,
\begin{equation*}
    J_t(\lambda \pi \oplus (1-\lambda)\pi')  \geq  \lambda J_t(\pi) + (1-\lambda)J_t(\pi') \quad \text{almost surely.}
\end{equation*}
\emph{(J3b) Strict inequality at $t = 0$.}
For all $\pi, \pi' \in \Pi$ such that $\WL_T(\pi) - \WL_T(\pi')$ is not almost-surely equal to a deterministic constant, 
and all deterministic $\lambda \in (0, 1)$,
\begin{equation*}
    J_0(\lambda \pi \oplus (1-\lambda)\pi')  >  \lambda J_0(\pi) + (1-\lambda)J_0(\pi').
\end{equation*}
\end{axiom}

Axiom~\ref{ax:J3} is the formal statement of risk aversion.
The weak concavity inequality says that the value of the average payoff is at least the average of the values; equivalently, mixing two strategies is weakly preferred to the convex combination of the two valuations.
 This is the standard utility-theoretic expression of preference for diversification, applied to the cash-equivalent rather than to an underlying utility function.
The strictness clause is stated at $t = 0$ with  deterministic $\lambda$ for the following reason.
At $t = 0$ with $\F_0$ $\Prob$-trivial, every $\F_0$-measurable $\lambda$   is deterministic anyway, so the distinction between  
 ``deterministic $\lambda$'' and ``$\F_0$-measurable $\lambda$'' collapses.
The clause is restricted to $t = 0$ to match the scope of law-invariance (J5), which is itself stated only at $t = 0$.
The role of the clause is to rule out the risk-neutral limit ($\gamma = 0$, conditional expectation) of the entropic family that emerges in Section~\ref{sec:theorem}.

The asymmetry between J3a (stated for all $\F_t$-measurable $\lambda \in [0,1]$ and all $t$) 
and J3b (deterministic $\lambda \in (0,1)$, $t = 0$ only) is deliberate.
Statement J3a is the strong conditional form, needed to verify the convexity hypothesis of the Kupper--Schachermayer representation theorem in its full conditional generality.
Statement J3b is the minimal strict-concavity requirement that rules out the boundary case $\gamma = 0$; 
it suffices at $t = 0$ because the Kupper--Schachermayer theorem  identifies $\gamma$ from a single nondegenerate evaluation
 (Sub-step~3b of the proof of Theorem~\ref{thm:forced-uniqueness}), 
 and once $\gamma > 0$ is established at $t = 0$,
  the strict concavity at general $t$ follows automatically from the explicit entropic form.
Stating J3b with $\F_t$-measurable $\lambda$ for general $t$ would be a stronger axiom, but it is derivable from the rest of the system once the entropic representation is in hand (the entropic functional is strictly concave at every $t$ on $\F_t$-conditionally-nonconstant pairs for any $\F_t$-measurable $\lambda \in (0, 1)$).
The sharper open question is whether this strengthening at general $t$ follows from a proper subset of the axioms not involving (J5), which we do not pursue here.

A note on the sign convention.
Axiom~\ref{ax:J3} states the concavity of $J_t$, which is the standard expression of risk aversion in utility-theoretic terms; 
on $\rho_t = -J_t$ this becomes convexity of the risk measure \citep{FollmerSchied2016}.
The two are equivalent, and we move between them as the literature being discussed requires.

\subsection{Strong Dynamic Consistency}
\label{subsec:J4}

We turn now to the dynamic consistency requirement, which ensures that the market maker's preferences at time $s$ are consistent with her preferences at any later time  $t \geq s$.

\begin{axiom}[Strong Dynamic Consistency]
\label{ax:J4}
For all $s, t \in [0,T]$ with $s \leq t$ and all $\pi, \pi' \in \Pi$, if $J_t(\pi) \geq J_t(\pi')$ almost surely, then $J_s(\pi) \geq J_s(\pi')$ almost surely.
\end{axiom}

Of all the axioms, Axiom~\ref{ax:J4} is the one which most directly drives the uniqueness conclusion of Theorem~\ref{thm:forced-uniqueness}.
It says that if at some future time $t$, in every state of the world, the market maker prefers $\pi$ to $\pi'$, 
then she must also prefer $\pi$ to $\pi'$ at every earlier time $s \leq t$.
A rational agent cannot, at time $s$, knowingly anticipate that she will (uniformly) prefer $\pi$ to $\pi'$ at time $t$, 
and yet rank them in the reverse order at time $s$.
If she did, she would be planning to renegotiate her own preferences against herself.

The literature distinguishes between several notions of dynamic consistency.
The version above is the \emph{strong} version, sometimes also called rectangularity or recursive consistency.
Weaker versions are also studied, e.g., acceptance- and rejection-consistency in the sense of \citet{Weber2006}.
Of these, only the strong version forces the uniqueness conclusion we shall obtain.
Many practical risk frameworks (including the Basel-style CVaR framework discussed in Section~\ref{subsec:cvar}) operate with something weaker.
Our position is that strong dynamic consistency is the most 
appropriate rationality requirement for an inventory market maker.
A time-$s$ ranking disagreeing with the time-$t$ ranking on the same information structure is indeed the Strotzian time-inconsistency of \citet{StrotzMyopic1955}, which leaves money on the table relative to time-consistent agents.
A reader who prefers a weaker form may read the forcing conclusions as conditional on the choice of (J4), but should note that the resulting framework commits to a time-inconsistent agent.

Under the other axioms, Axiom~\ref{ax:J4} is equivalent to the recursive Bellman-type identity
\begin{equation*}
    J_s(\pi)  =  J_s\bigl(J_t(\pi)\bigr), \qquad s \leq t,
\end{equation*}
where the right-hand side is interpreted as the time-$s$ certainty-equivalent of the time-$t$ certainty-equivalent of $\pi$, viewed as a (possibly random) bounded cash amount.
The forward direction (Axiom~\ref{ax:J4} implies the Bellman identity) is verified at $s = 0$ in Step~2 of the proof of Theorem~\ref{thm:forced-uniqueness} below, which is all the proof requires.
The full $s \leq t$ version then follows from the entropic representation by the tower property of conditional expectations.
The corresponding discrete-time statement appears as \citet[Theorem 11.18]{FollmerSchied2016}.
In continuous time the equivalence holds under the additional regularity conditions of right-continuity and $J_t(\pi) \in \Linf(\F_t)$, both of which the present framework provides.
This is the form in which the axiom is most directly comparable to the time-consistency condition of \citet{KupperSchachermayer2009}.

\subsection{Law-Invariance}
\label{subsec:J5}

The last of the five core axioms is  law-invariance, 
asking that the market maker's ranking of strategies depend only on the unconditional probability  
 law of the liquidation-adjusted terminal wealth $\WL_T$.

\begin{axiom}[Law-Invariance]
\label{ax:J5}
For all $\pi, \pi' \in \Pi$, if $\WL_T(\pi) \eqdist \WL_T(\pi')$, then $J_0(\pi) = J_0(\pi')$ almost surely.
\end{axiom}

Axiom~\ref{ax:J5} expresses that the market maker's ranking of strategies depends only on the distribution of liquidation-adjusted terminal wealth.
Two strategies that produce the same probability law for $\WL_T$ are ranked identically, 
regardless of any other features that distinguish them, such as inventory sample paths, trade timing, or counterparty identity.
The condition is studied in the static setting by \citet{Kusuoka2001} and in the dynamic setting by \citet{JouiniSchachermayerTouzi2006} and \citet{KupperSchachermayer2009}.
The axiom is stated at $t = 0$, matching the form of \citet[Definition 1.8]{KupperSchachermayer2009}; 
since $\F_0$ is $\Prob$-trivial, the unconditional and time-$0$ conditional laws of $\WL_T$ coincide, 
 making the statement well-defined.
A version at general $t > 0$ would be too strong:
 the conditional value $J_t(\pi)$ depends on the conditional law of $\WL_T(\pi)$ given  $\F_t$, 
 which is generally not determined by the unconditional law.
Combined with cash-additivity, concavity, and dynamic consistency, law-invariance is the missing 
ingredient that pins down the entropic form among all dynamic monetary convex risk measures.

\begin{remark}[The Knightian-Ambiguity Qualification]
\label{rem:law-inv-trader}
Working traders typically endorse Axiom~\ref{ax:J5} on reflection, with one common qualification:
 ``My preferences are law-invariant given the model, but my ambiguity about the model is not.''
This is the Knightian-ambiguity objection, and it is well taken. 
We treat it as strictly outside the scope of the present paper and refer to \citet{GilboaSchmeidler1989} or \citet{MaccheroniMarinacciRustichini2006} for the relevant framework.
A recent extension of the law-invariance machinery to settings where law-invariance holds only on a  sub-$\sigma$-algebra of $\F_T$ is developed by \citet{ShenVanOostenWang2025}, 
which may provide a natural bridge between the strictly law-invariant axiomatization of this paper and the model-ambiguity reading of the objection.
\end{remark}

\subsection{Relevance}
\label{subsec:relevance}

We close the discussion of the axiom system with the relevance condition, in the sense of \citet{KupperSchachermayer2009}.
This condition is not part of our five core axioms; it is, in fact, a consequence of those axioms,  as we prove in Lemma~\ref{lem:relevance-derivable} below.
We state it explicitly here because it appears as a separate hypothesis in the Kupper--Schachermayer representation theorem, which we invoke in the proof of Theorem~\ref{thm:forced-uniqueness}.

\begin{definition}[Relevance]
\label{def:relevance}
The dynamic preference functional $J$ is \emph{relevant} if, for every $\F_T$-measurable event $A$ with $\Prob(A) > 0$, every $\varepsilon > 0$, and every pair of strategies $\pi_0, \pi_1 \in \Pi$ with
\begin{equation*}
\WL_T(\pi_1)  =   \WL_T(\pi_0) - \varepsilon \mathbf{1}_A \quad \text{almost surely,}
\end{equation*}
it holds that 
\begin{equation*}
J_0(\pi_1)  <  J_0(\pi_0) \quad \text{almost surely.}
\end{equation*}
\end{definition}

The condition is stated at $t = 0$, matching the formulation of \citet[Definition 1.8]{KupperSchachermayer2009}.
Existence of strategies $\pi_0, \pi_1 \in \Pi$ satisfying the wealth  identity is guaranteed by the closure of $\Pi$ under bounded $\F_T$-measurable shifts (Lemma~\ref{lem:density} of Appendix~\ref{app:technical}).
Note that under dynamic consistency (Axiom~\ref{ax:J4}), the $t=0$ condition propagates to every later $t$, and so the single-time and all-time formulations are equivalent.

Relevance expresses, plainly, that the market maker cannot be indifferent to a positive-probability loss; 
an agent who fails relevance has a preference functional that registers no welfare cost from a loss that occurs with positive probability, regardless of how large or how likely that loss is.
A market maker who fails relevance is consequently not merely bold or aggressive.
She is systematically blind to part of the probability space in forming her preferences.
The canonical example is the worst-case functional  $J_0(\pi) = \operatorname{ess\,inf}\WL_T(\pi)$,  
 which is indifferent to losses on any event that does not contain the worst-case scenario.
Indeed, from the perspective of a rational agent operating in a genuinely stochastic environment, indifference to a positive-probability loss is difficult to defend.
It amounts to assigning zero decision weight to an outcome that will materialize 
  with positive frequency over repeated interactions with the market.
Relevance is therefore best understood not as a technical regularity condition but as a minimal rationality requirement, ruling out preference functionals that are, in an economically meaningful sense,  blind to part of the probability space.

\begin{lemma}[Relevance Is Derivable from the Axioms]
\label{lem:relevance-derivable}
Any dynamic preference functional $J$ satisfying Axioms~\ref{ax:J1}, \ref{ax:J2}, \ref{ax:J3}, and~\ref{ax:J4} is relevant in the sense of Definition~\ref{def:relevance}.
Specifically, the derivation uses the strict-concavity clause~(J3b)  of Axiom~\ref{ax:J3} together with the monotonicity property~(M) of Proposition~\ref{prop:monotonicity}, which itself follows from Axioms~\ref{ax:J1}, \ref{ax:J2}, and~\ref{ax:J4}.
\end{lemma}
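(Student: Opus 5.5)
We must show that if $\WL_T(\pi_1) = \WL_T(\pi_0) - \varepsilon\mathbf{1}_A$ with $\Prob(A)>0$, then $J_0(\pi_1) < J_0(\pi_0)$. Since $\F_0$ is $\Prob$-trivial, both sides are deterministic constants. The plan has two parts: monotonicity gives the weak inequality, and the strict clause (J3b) upgrades it to a strict one.

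\emph{Weak inequality.} By the monotonicity property~(M) of Proposition~\ref{prop:monotonicity}, which rests on Axioms~\ref{ax:J1}, \ref{ax:J2} and~\ref{ax:J4}, $\WL_T(\pi_1) \leq \WL_T(\pi_0)$ almost surely gives $J_0(\pi_1) \leq J_0(\pi_0)$. It remains to exclude equality.

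\emph{Strictness when $\Prob(A)<1$.} Suppose $J_0(\pi_1) = J_0(\pi_0)$. The difference $\WL_T(\pi_0) - \WL_T(\pi_1) = \varepsilon\mathbf{1}_A$ is then nonconstant. Take $\lambda = 1/2$ and, via Lemma~\ref{lem:density}, a mixture strategy $\pi_{1/2} = \tfrac12\pi_0 \oplus \tfrac12\pi_1$ with $\WL_T(\pi_{1/2}) = \WL_T(\pi_0) - \tfrac{\varepsilon}{2}\mathbf{1}_A$. Clause~(J3b) gives
\[
J_0(\pi_{1/2}) > \tfrac12 J_0(\pi_0) + \tfrac12 J_0(\pi_1) = J_0(\pi_0).
\]
But $\WL_T(\pi_{1/2}) \leq \WL_T(\pi_0)$ almost surely, so monotonicity gives $J_0(\pi_{1/2}) \leq J_0(\pi_0)$. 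This is a contradiction.

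\emph{Strictness when $\Prob(A)=1$.} Here $\WL_T(\pi_1) = \WL_T(\pi_0) - \varepsilon$ almost surely. The wealth-summary property (Proposition~\ref{prop:wealth-summary}) lets us treat $\pi_1$ as the cash-injected strategy $\pi_0^{[-\varepsilon]}$ at $t=0$. Cash-additivity (Axiom~\ref{ax:J1}) then gives $J_0(\pi_1) = J_0(\pi_0) - \varepsilon < J_0(\pi_0)$.

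\emph{Expected obstacle.} The main difficulty is that $J_0$ is defined on strategies, not on wealth. The mixture in the second case is only specified through its terminal wealth, and the monotonicity comparison must apply to any representative of that wealth. So I would invoke (W) explicitly to make $J_0$ depend on $\pi$ only through $\WL_T(\pi)$, so that (J3b) and (M) compare the same objects. I would also check that Proposition~\ref{prop:monotonicity} really applies to arbitrary pairs of strategies ordered almost surely in $\WL_T$, and not only to cash-shifted pairs.
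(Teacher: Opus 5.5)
Your proposal is correct and follows essentially the same route as the paper. Monotonicity (M) gives the weak inequality. For $0<\Prob(A)<1$, (J3b) applied to the midpoint strategy from Lemma~\ref{lem:density} contradicts (M) applied to $\WL_T(\pi_{1/2}) \leq \WL_T(\pi_0)$. The case $\Prob(A)=1$ is handled by cash-additivity. Your explicit use of the wealth-summary property (W) to identify $\pi_1$ with $\pi_0^{[-\varepsilon]}$ in that last case spells out a step the paper states as following ``directly'' from (J1).
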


\begin{proof}
Fix an $\F_T$-measurable event $A$ with $\Prob(A) > 0$, $\varepsilon > 0$, and $\pi_0, \pi_1 \in \Pi$ satisfying $\WL_T(\pi_1) = \WL_T(\pi_0) - \varepsilon\mathbf{1}_A$ almost surely.
Set $W \coloneqq \WL_T(\pi_0)$ and $W' \coloneqq \WL_T(\pi_1) = W  - \varepsilon\mathbf{1}_A$, so $W \geq W'$ almost surely with $\Prob(W > W') = \Prob(A) > 0$.

The boundary case $\Prob(A) = 1$ is immediate.
In this case, $W - W' = \varepsilon$ is deterministic, and Axiom~\ref{ax:J1} (cash-additivity) gives $J_0(\pi_1) = J_0(\pi_0) - \varepsilon < J_0(\pi_0)$ directly.
For the remainder of the proof we therefore assume that $\Prob(A) \in (0,1)$.

Using Proposition~\ref{prop:monotonicity} (monotonicity (M), derived from Axioms~\ref{ax:J1}, \ref{ax:J2}, \ref{ax:J4}) applied to the pair $(\pi_0, \pi_1)$, we obtain that $J_0(\pi_0) \geq J_0(\pi_1)$ almost surely.
By contradiction, suppose that $J_0(\pi_0) = J_0(\pi_1)$ on a set of positive $\Prob$-measure; since $\F_0$ is $\Prob$-trivial and both quantities are $\F_0$-measurable, this means $J_0(\pi_0) = J_0(\pi_1)$ almost surely.
The random variable $W - W' = \varepsilon\mathbf{1}_A$ is not almost-surely equal to a deterministic constant (since $\Prob(A) \in (0, 1)$).

Choose a midpoint strategy $\pi_{1/2} \in \Pi$ with $\WL_T(\pi_{1/2}) = \tfrac{1}{2}W + \tfrac{1}{2}W'$, which exists by Lemma~\ref{lem:density}.
By the strict-concavity clause (J3b) of Axiom~\ref{ax:J3} with $\lambda =  1/2$,
\begin{equation*}
    J_0(\pi_{1/2})  >  \tfrac{1}{2}J_0(\pi_0) + \tfrac{1}{2}J_0(\pi_1)  =  J_0(\pi_0).
\end{equation*}
However, $\tfrac{1}{2}W + \tfrac{1}{2}W' \leq W$ a.s.\ (since $W \geq W'$),  so by Proposition~\ref{prop:monotonicity} (M), $J_0(\pi_{1/2}) \leq J_0(\pi_0).$
The two inequalities contradict each other, so $J_0(\pi_1) < J_0(\pi_0)$ almost surely.
\end{proof}

Two points follow.
First, since Lemma~\ref{lem:relevance-derivable} shows that relevance is in fact a consequence of Axioms~\ref{ax:J1}--\ref{ax:J4} alone, with law-invariance (J5) playing no role in the derivation, we do not include relevance as a separate hypothesis in the statement of Theorem~\ref{thm:forced-uniqueness} or in its corollaries.
We invoke it explicitly within the proof of Theorem~\ref{thm:forced-uniqueness}, where the representation theorem of \citet{KupperSchachermayer2009} lists relevance as one of its hypotheses, and we verify it there via Lemma~\ref{lem:relevance-derivable}.
Second, the analysis here clarifies how the two boundary cases $\gamma = 0$ and $\gamma = \infty$ are ruled out: both fail the strict concavity clause (J3b), with the linear functional ($\gamma = 0$) giving equality on every pair and the worst-case functional ($\gamma = \infty$) giving equality on any pair with overlapping essential-infimum sets.
Relevance corresponds to the $\gamma = \infty$ boundary failure specifically, captured by (J3b) together with monotonicity (M).

\subsection{Derived Properties}
 \label{subsec:derived-properties}

We close the axiom system by recording three natural properties of the preference functional that are derivable from Axioms~\ref{ax:J1}--\ref{ax:J5}.
The three properties are \emph{monotonicity}, the \emph{wealth-summary property}, and \emph{right-continuity in time}.
Each is stated as a numbered proposition, with the same motivational care given to the five core axioms; 
full proofs are deferred to Appendix~\ref{app:indep}.
The status shift is in fact purely a logical economy: 
every property below is just as much a feature of the framework as the five core axioms, 
 and the same intuitions apply.

The relevance condition (Definition~\ref{def:relevance}) is likewise derivable from~(J1)--(J5), 
as established in Section~\ref{subsec:relevance} as Lemma~\ref{lem:relevance-derivable}. 
We treat it separately from the three derived properties below because it does not have a  stand-alone economic reading of the kind that (M), (W), and (R) admit; 
it is a technical regularity condition that appears as an explicit hypothesis in the Kupper--Schachermayer representation theorem invoked in the proof of Theorem~\ref{thm:forced-uniqueness}.

\paragraph{Monotonicity (M).}
The first derived property states that the market maker's preferences are monotone with respect to liquidation-adjusted terminal wealth $\WL_T$, the cash-equivalent value of her book at time $T$ after deducting the cost of unwinding any remaining position.

\begin{proposition}[Monotonicity (M)]
\label{prop:monotonicity}
Any dynamic preference functional $J$ satisfying Axioms~\ref{ax:J1} (cash-additivity), \ref{ax:J2} (normalization), and~\ref{ax:J4} (strong dynamic consistency) also satisfies the following monotonicity property:
for all $\pi, \pi' \in \Pi$, if $\WL_T(\pi) \geq \WL_T(\pi')$ almost surely, then $J_t(\pi)  \geq J_t(\pi')$ almost surely for all $t \in [0,T]$.
\end{proposition}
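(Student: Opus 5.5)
The plan is to identify the time-$T$ value of every strategy with its liquidation-adjusted terminal wealth, and then use strong dynamic consistency (Axiom~\ref{ax:J4}) to carry the terminal ranking back to all earlier times. Only cash-additivity (Axiom~\ref{ax:J1}), normalization (Axiom~\ref{ax:J2}), and strong dynamic consistency (Axiom~\ref{ax:J4}) are used, together with the closure of $\Pi$ under bounded cash injections (Lemma~\ref{lem:density}).

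\emph{Step 1: the terminal identity.} Fix $\pi \in \Pi$ and set $c \coloneqq -\WL_T(\pi)$.
\begin{itemize}
\item By Remark~\ref{rem:WL-bounded} (Lemma~\ref{lem:WL-bounded}), $c$ is a bounded $\F_T$-measurable random variable.
\item It is therefore an admissible injection at time $t = T$, so $\pi^{[c]} \in \Pi$ by Lemma~\ref{lem:density}.
\item By the construction in Section~\ref{subsec:strategy}, $\WL_T(\pi^{[c]}) = \WL_T(\pi) + c = 0$ almost surely.
\item Normalization (Axiom~\ref{ax:J2}) at $t = T$ then gives $J_T(\pi^{[c]}) = 0$.
\item Cash-additivity (Axiom~\ref{ax:J1}) at $t = T$ gives $J_T(\pi^{[c]}) = J_T(\pi) + c$.
\end{itemize}
Combining the last two facts,
\begin{equation*}
J_T(\pi) = \WL_T(\pi) \quad \text{almost surely, for every } \pi \in \Pi.
\end{equation*}

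\emph{Step 2: monotonicity at $t = T$.} Suppose $\WL_T(\pi) \geq \WL_T(\pi')$ almost surely. By Step~1, $J_T(\pi) = \WL_T(\pi) \geq \WL_T(\pi') = J_T(\pi')$ almost surely.

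\emph{Step 3: propagation to earlier times.} Apply Axiom~\ref{ax:J4} with the later time equal to $T$ and an arbitrary earlier time $s \in [0,T]$. This yields $J_s(\pi) \geq J_s(\pi')$ almost surely for every $s$, which is the claim.

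The step I expect to require the most care is Step~1, specifically the legitimacy of a cash injection placed exactly at the terminal time $t = T$ with a fully $\F_T$-measurable (rather than $\F_{T-}$-measurable) amount. Three points need checking:
\begin{itemize}
\item The axioms are stated for all $t \in [0,T]$, so the endpoint $t = T$ is included.
\item The injection process $C^{[c]}_s = C_s + c\,\mathbf{1}_{\{s \geq T\}}$ remains adapted, càdlàg, and of finite variation. This is exactly what Step~1 of Lemma~\ref{lem:density} asserts.
\item Since $C$ is only required to be adapted, not predictable, no predictability issue arises.
\end{itemize}
If the endpoint were problematic, the fallback would be an injection at times $t_n \uparrow T$ combined with right-continuity. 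That route would be circular, however, because (R) is itself a derived property. I would therefore rely directly on the $t = T$ case of Lemma~\ref{lem:density}.
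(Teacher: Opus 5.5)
Your proof is correct and is essentially the paper's own argument (Proposition~\ref{prop:monotonicity-app}): you inject $-\WL_T(\pi)$ at time $T$ so that (J1) and (J2) give $J_T(\pi)=\WL_T(\pi)$, and then (J4) applied to the pair $(s,T)$ carries the terminal ranking back to every earlier time. Your concern about the endpoint injection is resolved just as you say, by Step~1 of Lemma~\ref{lem:density} at $t=T$.
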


\begin{proof}
See Appendix~\ref{app:indep}, Proposition~\ref{prop:monotonicity-app}.
The argument is short: 
cash-additivity (J1) together with normalization (J2), applied with the cash injection $c = -\WL_T(\pi)$ at time $T$, gives $J_T(\pi) = \WL_T(\pi)$ for every $\pi$; 
 strong dynamic consistency (J4) then propagates the trivial monotonicity at $t=T$ backwards to every $s \leq T$.
\end{proof}

The property is a basic dominance condition requiring that more cash-equivalent at the horizon, in every state of the world, must be weakly preferred.
The condition is what makes liquidation-adjusted terminal wealth a meaningful object for the market maker to optimize at all.
We take it to be uncontroversial.
It is indeed satisfied by every framework in the inventory market making literature that we are aware of, including the original Avellaneda--Stoikov setup \citep{AvellanedaStoikov2008} and the Cartea--Jaimungal--Penalva textbook \citep{CarteaJaimungalPenalva2015}.

\paragraph{The wealth-summary property (W).}
 The second derived property is the wealth-summary requirement: all dependence of $J_t$ on the strategy is mediated by $\WL_T$.
Concretely, two strategies that produce the same liquidation-adjusted terminal wealth almost surely must receive the same preference value at every time $t$.

\begin{proposition}[Wealth-Summary (W)]
\label{prop:wealth-summary}
Any dynamic preference functional $J$ satisfying Axioms~\ref{ax:J1}, \ref{ax:J2}, and~\ref{ax:J4} 
(equivalently,  any $J$ satisfying Proposition~\ref{prop:monotonicity}) also satisfies the following wealth-summary property:
for all $\pi, \pi' \in \Pi$, if $\WL_T(\pi) = \WL_T(\pi')$ almost surely, then also  $J_t(\pi) = J_t(\pi')$ almost surely for all $t \in [0,T]$.
\end{proposition}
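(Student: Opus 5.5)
The quickest route is to apply Proposition~\ref{prop:monotonicity} twice. Suppose $\WL_T(\pi) = \WL_T(\pi')$ almost surely. Then both $\WL_T(\pi) \geq \WL_T(\pi')$ and $\WL_T(\pi') \geq \WL_T(\pi)$ hold almost surely. Monotonicity (M) therefore gives $J_t(\pi) \geq J_t(\pi')$ and $J_t(\pi') \geq J_t(\pi)$ almost surely for every $t \in [0,T]$. Intersecting two almost-sure events yields $J_t(\pi) = J_t(\pi')$ almost surely. Since (M) was derived from Axioms~\ref{ax:J1}, \ref{ax:J2}, and~\ref{ax:J4} alone, this also justifies the parenthetical ``equivalently'' in the statement: the wealth-summary property needs nothing beyond (M).

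For transparency, I would also record the direct argument that underlies (M), since it exhibits the mechanism. First, establish the terminal identity $J_T(\pi) = \WL_T(\pi)$ for every $\pi \in \Pi$. Take $c \coloneqq -\WL_T(\pi)$, which is bounded and $\F_T$-measurable by Lemma~\ref{lem:WL-bounded}. The cash-injected strategy $\pi^{[c]}$ is admissible by Lemma~\ref{lem:density} and satisfies $\WL_T(\pi^{[c]}) = 0$. Normalization (J2) then gives $J_T(\pi^{[c]}) = 0$, and cash-additivity (J1) at $t = T$ gives $J_T(\pi) = J_T(\pi^{[c]}) - c = \WL_T(\pi)$. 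Consequently $\WL_T(\pi) = \WL_T(\pi')$ a.s.\ implies $J_T(\pi) = J_T(\pi')$ a.s. Applying Axiom~\ref{ax:J4} with $s = t$ and terminal time $T$, in both directions of the inequality, gives $J_t(\pi) \geq J_t(\pi')$ and $J_t(\pi') \geq J_t(\pi)$, hence equality almost surely for every $t \leq T$.

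The only step requiring any care is the terminal identity. Cash-additivity must be applied at $t = T$ with a genuinely random, $\F_T$-measurable injection, so the proof depends on the closure of $\Pi$ under such injections. This closure is a modeling hypothesis on the strategy space, verified in Lemma~\ref{lem:density} (Step~1), and it is not a consequence of the preference axioms. I would flag this dependence explicitly. Once the identity is in place, the backward propagation is immediate. It needs no continuity in $t$ and no Bellman recursion, because (J4) compares $J_t$ directly with $J_T$ for each fixed $t$.
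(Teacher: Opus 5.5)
Your proof is correct and is the paper's argument: it applies monotonicity (M) in both directions to get $J_t(\pi) = J_t(\pi')$ almost surely. Your supplementary derivation of (M) also matches the paper's appendix proof, namely the terminal identity $J_T(\pi) = \WL_T(\pi)$ obtained from a time-$T$ injection of $-\WL_T(\pi)$ via (J1) and (J2), followed by backward propagation with (J4).
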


\begin{proof}
See Appendix~\ref{app:indep}, Proposition~\ref{prop:wealth-summary-app}.
The argument is two lines: if $\WL_T(\pi) = \WL_T(\pi')$ a.s., then both $\WL_T(\pi) \geq \WL_T(\pi')$ a.s.\ and 
   $\WL_T(\pi') \geq \WL_T(\pi)$ a.s., and applying monotonicity   (M, Proposition~\ref{prop:monotonicity}) 
 in both directions yields $J_t(\pi) = J_t(\pi')$ a.s.
\end{proof}

Proposition~\ref{prop:wealth-summary} formalizes a methodological commitment.
The market maker is a cash-equivalent maximizer whose dislike of inventory derives entirely from its effect on liquidation-adjusted terminal wealth, not from inventory-as-such.
Inventory affects $\WL_T$ through two channels: the \emph{variance channel}, where holding nonzero inventory exposes the market maker to subsequent mid-price moves, and the \emph{liquidation-cost channel}, where the term $L(q_T)$ charges her for any residual position at the close.
A market maker who dislikes inventory through either or both of these channels is fully wealth-summary-compatible.
What wealth-summary excludes is a third channel, namely intrinsic discomfort with the bare fact of holding a position, independent of where it leaves her at the close.

Two natural alternatives are therefore excluded as primitives.
Path-dependent inventory aversion of the running-quadratic form $\int_0^T q_s^2 \, ds$,  
which appears as a primitive in the Cartea--Jaimungal tradition, is not excluded by fiat; our claim
 (developed in Section~\ref{sec:consequences}) is that any such running penalty, 
  if consistent with the five core axioms, must arise as a derived feature of $\WL_T$-preferences, with a forced coefficient.
Path-functional preferences in the strict sense 
(e.g., maximum-drawdown aversion or time-average wealth) are genuinely outside the scope of Proposition~\ref{prop:wealth-summary}.
We discuss this in Section~\ref{sec:discussion}, noting in particular that drawdown-averse market makers are not covered.

\begin{remark}[On the Circularity Charge]
The wealth-summary property is not assumed; it is \emph{forced} by cash-additivity, normalization, and strong dynamic consistency (Proposition~\ref{prop:wealth-summary}).
A practitioner who accepts those three core axioms is automatically committed to wealth-summary. 
The only escape is to reject one of (J1), (J2), or~(J4), each independently justified earlier in this section.
A skeptical reader might object that the wealth-summary property is circular, in that we use it to rule out the Cartea--Jaimungal objective as a primitive functional (Corollary~\ref{cor:cj-inconsistency}), and it is designed precisely to exclude such path-dependent running penalties.
 The derived-property framing answers this directly: wealth-summary is not assumed but derived from cash-additivity, normalization, and strong dynamic consistency, so any objection must reject one of those three axioms.
The framework moreover does not exclude running penalties as such. 
The running penalty $\phi\int_0^T q_s^2\, ds$ arises within it as the second-order Taylor expansion of the entropic functional (Corollary~\ref{cor:cj-inconsistency}, Proposition~\ref{prop:forced-phi}); 
what is excluded is treating such a penalty as a \emph{primitive, free-parameter} object rather than as a derived feature with a forced coefficient $\phi = \gamma\sigma^2/2$.
\end{remark}

\paragraph{Right-continuity in time (R).}
The third derived property is a regularity condition on the path $t \mapsto J_t(\pi)$.
Concretely, if $t_k \downarrow t$, then $J_{t_k}(\pi) \to J_t(\pi)$ in probability.
It plays a role only in the continuous-time-to-discrete-time passage of the proof of Theorem~\ref{thm:forced-uniqueness}. 

\begin{proposition}[Right-Continuity in Time (R)]
\label{prop:right-continuity}
Every  dynamic preference functional $J$  satisfying Axioms~\ref{ax:J1}--\ref{ax:J5} also satisfies the following right-continuity property:
for every  $\pi \in \Pi$ and every $t \in [0, T)$,  the path $s \mapsto J_s(\pi)$ is right-continuous in probability at $s = t$. 
That is, if $t_k \downarrow t$ with $t_k \in [t, T]$, then $J_{t_k}(\pi) \to J_t(\pi)$ in probability as $k \to \infty$.
\end{proposition}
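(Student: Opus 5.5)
The plan is to reduce the statement to right-continuity of a conditional expectation, using only the structure the axioms already impose. The key observation, already used in the proof of Proposition~\ref{prop:monotonicity}, is that (J1) and (J2) give $J_T(\pi) = \WL_T(\pi)$. The task is to show that the whole path $s \mapsto J_s(\pi)$ is a regular functional of the conditional law of $\WL_T(\pi)$ given $\F_s$. Right-continuity will then be inherited from the right-continuity of the filtration (the usual conditions) via martingale convergence.

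\textbf{Step 1: reduced functional.} Write $X \coloneqq \WL_T(\pi) \in \Linf(\F_T)$, which is bounded by Lemma~\ref{lem:WL-bounded}. By Proposition~\ref{prop:wealth-summary} and Lemma~\ref{lem:density}, $J_s$ factors through a functional $\tilde J_s$ on $\Linf(\F_T)$. This $\tilde J_s$ inherits the following properties:
\begin{itemize}
\item cash-additivity for $\F_s$-measurable shifts, from (J1);
\item normalization, from (J2);
\item concavity, from (J3a);
\item monotonicity, from Proposition~\ref{prop:monotonicity};
\item time-consistency, from (J4).
\end{itemize}
Monotonicity together with cash-additivity gives the Lipschitz bound
\begin{equation*}
|\tilde J_s(X) - \tilde J_s(Y)| \le \operatorname{ess\,sup}_{\F_s}|X-Y|,
\end{equation*}
where the right-hand side denotes the conditional essential supremum.

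\textbf{Step 2: entropic form at each fixed time.} Rather than proving (R) abstractly, I would first establish the representation without using (R). The plan is to use (J4) and (J5) at $t=0$ together with the Kupper--Schachermayer theorem in its discrete-time form, on the finite grid of dyadic times. Relevance there is supplied by Lemma~\ref{lem:relevance-derivable}. This yields $\tilde J_s(X) = -\gamma^{-1}\log \E[e^{-\gamma X}\mid \F_s]$ for every dyadic $s$ and some $\gamma > 0$. For a non-dyadic $s$, I would add $s$ to the grid and rerun the same argument; the resulting $\gamma$ must coincide, because the $\tilde J_0$-value is unchanged. The hypothesis $\F_T$ non-atomic is what makes the $t=0$ law-invariant identification possible on each augmented grid.

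\textbf{Step 3: conclusion.} Set $M_s \coloneqq \E[e^{-\gamma X}\mid\F_s]$. Since $X$ is bounded, $e^{-\gamma X}$ lies in $[e^{-\gamma\|X\|_\infty}, e^{\gamma\|X\|_\infty}]$, and so does $M_s$. Under the usual conditions, the martingale $M$ has a right-continuous modification. Moreover, for $t_k \downarrow t$, backward martingale convergence gives $M_{t_k} \to \E[e^{-\gamma X}\mid \F_{t+}] = M_t$ almost surely and in $L^1$. The map $m \mapsto -\gamma^{-1}\log m$ is Lipschitz on that compact interval. Hence $J_{t_k}(\pi) \to J_t(\pi)$ almost surely, and in particular in probability.

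\textbf{Main obstacle.} I expect Step 2 to be the hard part, because it must be carried out without circularity. The passage from continuous time to discrete grids in the main theorem is exactly where (R) is said to be used. To avoid this, the argument must apply the representation only on finite grids, where no continuity in time is needed, and pin down $\gamma$ grid-independently through $J_0$ alone. If this turns out to be delicate, a fallback is available. Using Step 1's Lipschitz bound and (J4), one would show directly that $t \mapsto \tilde J_t(X)$ is a bounded supermartingale-like object under a suitable measure. One could then use the usual conditions to extract right-continuity. I expect that route to be harder, however, since nonlinear expectations lack an off-the-shelf regularization theorem.
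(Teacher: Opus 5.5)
Your proposal is correct, but the step that matters, getting the entropic form at a non-dyadic time, follows a different route from the paper's.

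\textbf{The paper's route.} Stage~1 first gets the entropic form at dyadic times from Kupper--Schachermayer. It then reaches an arbitrary $t_0$ indirectly, in three moves:
\begin{enumerate}
\item It proves the Bellman identity $\tilde J_s(\tilde J_{t_0}(W)) = \tilde J_s(W)$ by a cash-injection argument.
\item It lets dyadic $s \uparrow t_0$, which gives $\E[e^{-\gamma W}\mid\F_{t_0-}] = \E[e^{-\gamma\tilde J_{t_0}(W)}\mid\F_{t_0-}]$.
\item It uses cash-additivity with $\F_{t_0}$-measurable test shifts, plus a conditional moment-generating-function uniqueness argument, to force $e^{-\gamma\tilde J_{t_0}(W)} = \E[e^{-\gamma W}\mid\F_{t_0}]$.
\end{enumerate}

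\textbf{Your route.} You apply the discrete-time representation theorem directly to a finite sub-filtration that contains $t_0$. You then pin the parameter through $\tilde J_0$, using the identifiability argument of Sub-step~3b. This is legitimate for two reasons:
\begin{enumerate}
\item Every hypothesis the paper checks for the dyadic restrictions holds verbatim for any finite grid containing $0$ and $T$. These are normalization, conditional cash-invariance, monotonicity, convexity, law-invariance and relevance at time $0$, and time-consistency on the grid.
\item For nondegenerate $W$, the map $\gamma \mapsto -\gamma^{-1}\log\E[e^{-\gamma W}]$ is strictly monotone. Extended to $[0,\infty]$ by the values $\E[W]$ and $\operatorname{ess\,inf} W$, it forces the same $\gamma$ and excludes the endpoints.
\end{enumerate}
The dyadic grids are in fact superfluous for you: the three-point grid $\{0,t_0,T\}$ already suffices for each $t_0$.

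\textbf{Comparison.} Your route is much shorter: it needs no left limits, no $\F_{t_0-}$, and no MGF argument. The paper's route yields a structural fact instead: once the entropic form holds on a dense set of times, (J1), (J2) and (J4) alone propagate it to every time. As a result the representation theorem is invoked only on one fixed family of grids. You invoke it on uncountably many finite filtrations, but each invocation has exactly the same hypotheses as the paper's dyadic ones, so this costs you nothing.

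Your Step~3 is the paper's Stage~2. Using L\'evy's downward theorem along the countable sequence $t_k$ is the cleaner way to avoid version issues. The supermartingale fallback you sketch is not needed.
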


\begin{proof}
See  Appendix~\ref{app:indep},  Proposition~\ref{prop:right-continuity-app}. 
The derivation proceeds in two distinct stages.
First, Axioms~\ref{ax:J1}--\ref{ax:J5} (from which relevance is derived; Lemma~\ref{lem:relevance-derivable}) force $J$ to equal the entropic certainty-equivalent on $\WL_T$ at every dyadic time $t \in \mathcal{D} \subset [0,T]$, via the Kupper--Schachermayer representation theorem applied to the restriction of $\tilde J$ to each dyadic sub-filtration, with consistency of the resulting parameter across scales.
Second, the Bellman identity (from strong dynamic consistency, Axiom~\ref{ax:J4})   together with cash-additivity (Axiom~\ref{ax:J1}), applied with bounded $\F_{t_0}$-measurable cash injections as test functions, lifts the entropic representation from dyadic times to every $t_0 \in [0,T]$.
Right-continuity then follows from the right-continuity of conditional expectations  $\E[e^{-\gamma \WL_T} \mid \F_t]$ under the usual conditions on the filtration.
\end{proof}

The information filtration $\filt$ is right-continuous and the price process $S$ is a continuous semimartingale, so the primitive processes of the setup are right-continuous in $t$.
Proposition~\ref{prop:right-continuity} (R) says that the preference functional inherits this regularity: 
  $J_t(\pi)$ does not jump at a non-information time.
 A jump in $J_t(\pi)$ at some $t$ where the underlying primitives have not jumped would be  incoherent with the cash-equivalence interpretation, since the trader could realize neither the pre-jump nor the post-jump value via any actual trade in the limit.

\subsection{Summary}
\label{subsec:axiom-summary}

We have laid out five axioms and three derived properties, which we now collect for reference.

\paragraph{The five core axioms.}
\begin{enumerate}
    \item[(J1)] Cash-additivity.
    \item[(J2)] Normalization.
    \item[(J3)] Concavity at all $t$;
       strict at $t = 0$ on pairs with nonconstant $\WL_T$-difference.
    \item[(J4)] Strong dynamic consistency.
    \item[(J5)]  Law-invariance (at $t = 0$).
\end{enumerate}

\paragraph{Three derived properties (consequences of the axioms).}
\begin{enumerate}
    \item[(M)] \emph{Monotonicity} (Proposition~\ref{prop:monotonicity}): 
    if $\WL_T(\pi) \geq \WL_T(\pi')$ a.s.,  
    $J_t(\pi) \geq J_t(\pi')$ a.s.\ for all $t$.
    \item[(W)] \emph{Wealth-summary property} (Proposition~\ref{prop:wealth-summary}): 
    $J_t(\pi)$ depends on $\pi$ only through $\WL_T(\pi)$.
    \item[(R)] \emph{Right-continuity in time} (Proposition~\ref{prop:right-continuity}): 
     the path $t \mapsto J_t(\pi)$ is right-continuous in probability.
\end{enumerate}

The axioms above are jointly satisfied by the entropic certainty-equivalent on  $\WL_T$, with any strictly positive risk-aversion parameter $\gamma$.
Theorem~\ref{thm:forced-uniqueness} below states the converse, 
namely that they are satisfied \emph{only} by this family.

We prove in Appendix~\ref{app:indep} that the five core axioms (J1)--(J5) form an \emph{independent} system, in the sense that no axiom in the list is derivable from the others.
The three derived properties 
(namely, (M) monotonicity, (W) wealth-summary, and (R) right-continuity) 
are derivable from (J1)--(J5),   with the derivations presented in that appendix. 
The relevance condition of Definition~\ref{def:relevance} is also derivable from (J1)--(J5) (Lemma~\ref{lem:relevance-derivable}); 
it is treated separately from the three derived properties because it functions as a technical hypothesis of the 
representation theorem rather than as a stand-alone property of $J$.


\section{The Forced Uniqueness Theorem}
\label{sec:theorem}

We now state and prove the main result of the paper.
The proof proceeds in three steps.
First, the wealth-summary property (W) (Proposition~\ref{prop:wealth-summary},  
  derived from Axioms~\ref{ax:J1}, \ref{ax:J2}, \ref{ax:J4}) reduces the market maker's preference functional to a functional on $\Linf(\F_T)$.
Second, this reduced functional is shown to satisfy the standard hypotheses for a law-invariant time-consistent representation.
Third, the representation result yields a one-parameter family of admissible functionals, 
 with both boundary cases ($\gamma = 0$ and $\gamma = \infty$) ruled out by the strict-concavity clause (J3b) of Axiom~\ref{ax:J3}.

\subsection{Statement of the Theorem}
\label{subsec:theorem-statement}

We are now in a position to state the main result.
Recall that $J$ denotes the market maker's dynamic preference functional.

\begin{theorem}[Forced Uniqueness]
\label{thm:forced-uniqueness}
Let the market data $(\Omega, \F, \filt, \Prob, T, S, N^a, N^b, \Pi, L)$ be as specified in Section~\ref{sec:setup}, and let $J = (J_t)_{t \in [0,T]}$ with $J_t \colon \Pi \to \Linf(\F_t)$ be a dynamic preference functional.
Then $J$ satisfies Axioms~\ref{ax:J1}--\ref{ax:J5} of Section~\ref{sec:axioms} if and only if there exists $\gamma \in (0, \infty)$ such that
\begin{equation}
    J_t(\pi)  =  -\frac{1}{\gamma}\, \log \E\!\left[\exp \bigl(-\gamma\, \WL_T(\pi)\bigr) \,\Big|\, \F_t\right]
    \label{eq:entropic}
\end{equation}
for all $t \in [0,T]$ and all $\pi \in \Pi$.
The scalar $\gamma$ is unique.
\end{theorem}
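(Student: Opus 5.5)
The proof splits into the two directions of the equivalence, followed by uniqueness of $\gamma$.

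\emph{Sufficiency.} Fix $\gamma>0$ and define $J_t$ by \eqref{eq:entropic}. Each axiom is then a direct verification.
\begin{itemize}
\item Cash-additivity (J1) holds because an $\F_t$-measurable $c$ can be pulled out of the conditional expectation of $e^{-\gamma(\WL_T+c)}$.
\item Normalization (J2) is immediate, since $\WL_T(\pi_0)=0$ gives $J_t(\pi_0)=-\gamma^{-1}\log 1=0$.
\item Concavity (J3a) follows from the conditional H\"older inequality, equivalently from conditional convexity of $X\mapsto \log\E[e^{-\gamma X}\mid\F_t]$.
\item Strictness (J3b) at $t=0$ uses the equality case of H\"older: equality forces $e^{-\gamma\WL_T(\pi)}$ and $e^{-\gamma\WL_T(\pi')}$ to be proportional, i.e.\ the difference $\WL_T(\pi)-\WL_T(\pi')$ is a.s.\ constant.
\item Strong dynamic consistency (J4) comes from the tower property. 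Since $J_s(\pi)=-\gamma^{-1}\log\E[e^{-\gamma J_t(\pi)}\mid\F_s]$, and this expression is monotone in $J_t(\pi)$, the implication $J_t(\pi)\ge J_t(\pi')\Rightarrow J_s(\pi)\ge J_s(\pi')$ follows.
\item Law-invariance (J5) holds because $J_0$ depends only on the law of $\WL_T$, $\F_0$ being trivial.
\end{itemize}

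\emph{Necessity.}

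Step 1 (reduction). By Proposition~\ref{prop:wealth-summary} (W), $J_t(\pi)$ depends on $\pi$ only through $\WL_T(\pi)$. By Lemma~\ref{lem:density}, every $X\in\Linf(\F_T)$ is attained as $\WL_T(\pi)$ for some $\pi\in\Pi$. Together these make
\[
\tilde J_t(X):=J_t(\pi)\quad\text{for any }\pi\text{ with }\WL_T(\pi)=X
\]
a well-defined map $\tilde J_t:\Linf(\F_T)\to\Linf(\F_t)$.

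Step 2 (properties of $\tilde J$). Under this reduction the axioms transfer verbatim. (J1) becomes conditional cash-invariance. (J2) gives $\tilde J_t(0)=0$. Proposition~\ref{prop:monotonicity} gives monotonicity. (J3a) gives conditional concavity. (J4) gives strong time-consistency. (J5) gives law-invariance of $\tilde J_0$. Lemma~\ref{lem:relevance-derivable} gives relevance. Hence $\rho_t=-\tilde J_t$ is a normalized, relevant, law-invariant, time-consistent dynamic convex risk measure.

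Step 3 (representation). I would apply the Kupper--Schachermayer theorem, which on a non-atomic space says that such a risk measure is entropic with some $\gamma\in[0,\infty]$, where $\gamma=0$ is the conditional expectation and $\gamma=\infty$ the conditional essential infimum. The theorem is formulated in discrete time, so I would first apply it along the dyadic sub-filtrations $(\F_{kT2^{-n}})_k$. The parameters obtained at successive refinements must coincide, because the dyadic grids are nested and $\tilde J_0$ is common to all of them. This fixes a single $\gamma$ and yields \eqref{eq:entropic} at all dyadic times. To pass to an arbitrary $t$, I would compare $\tilde J_t(X)$ with the entropic value $E_t(X)$ through the Bellman identity at $s=0$ with test injections: for $A\in\F_t$ and constants $a$, the ranking at time $0$ of $X$ against $a\mathbf 1_A+b\mathbf 1_{A^c}$ is already determined by the dyadic representation, and (J4) then forces $\tilde J_t(X)=E_t(X)$ a.s. Alternatively I would use Proposition~\ref{prop:right-continuity} together with right-continuity of $\E[e^{-\gamma X}\mid\F_t]$, approximating $t$ from the right by dyadic times.

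Step 4 (excluding the boundary). If $\gamma=0$, the functional is linear and (J3b) fails on any pair with nonconstant difference. If $\gamma=\infty$, take $X$ nonconstant and $X'=\min(X,\operatorname{ess\,inf}X+\delta)$; then $\operatorname{ess\,inf}$ is additive along the mixture and (J3b) again fails. Hence $\gamma\in(0,\infty)$.

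\emph{Uniqueness of $\gamma$.} On a non-atomic space choose a Bernoulli variable $X=\mathbf 1_A$ with $\Prob(A)=1/2$. The map $\gamma\mapsto-\gamma^{-1}\log\E e^{-\gamma X}$ is strictly decreasing in $\gamma$ for such nondegenerate $X$, so $J_0$ determines $\gamma$.

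The main obstacle is Step 3, where the discrete-time Kupper--Schachermayer result has to be passed to the continuous-time filtration. This requires both the consistency of $\gamma$ across dyadic scales and the extension of the representation to non-dyadic $t$; the latter is where the Bellman/test-injection argument, or right-continuity, has to be made rigorous.
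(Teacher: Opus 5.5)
Your proposal is correct and follows essentially the same route as the paper. You reduce via (W) and Lemma~\ref{lem:density}, verify the Kupper--Schachermayer hypotheses (with relevance from Lemma~\ref{lem:relevance-derivable}), apply the representation on dyadic sub-filtrations with $\gamma$ pinned across scales by strict monotonicity of the entropic certainty-equivalent in $\gamma$, exclude the boundary cases, lift to non-dyadic $t$, and check the converse axiom by axiom, all as the paper does.

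The differences are minor. You exclude $\gamma=\infty$ with (J3b) directly rather than through relevance. Your test-injection lift also works cleanly once made precise: the Bellman identity at $s=0$ together with (J1) gives $\tilde J_0(X+c)=\tilde J_0(\tilde J_t(X)+c)$ for all $c\in\Linf(\F_t)$, and comparing $c=0$ with $e^{-\gamma c}=1+\mathbf 1_B$ yields $\E\bigl[\mathbf 1_B\bigl(\E[e^{-\gamma X}\mid\F_t]-e^{-\gamma\tilde J_t(X)}\bigr)\bigr]=0$ for every $B\in\F_t$. This is simpler than both the paper's (R)-based lift and its appendix route through $\F_{t_0^-}$ and a conditional MGF argument.
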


The theorem is the core forced-uniqueness result.
Under the five axioms,  the market maker's preference functional is forced to take the entropic certainty-equivalent form on $\WL_T$,
with a unique positive scalar $\gamma$ that is the coefficient of absolute risk aversion
  of the underlying utility $u(x) = -\exp(-\gamma x)$ in the sense of \citet{Pratt1964}.
The scalar $\gamma$ is,  moreover,  a property of the agent's preferences alone,
 independent of the analyst's choice of clock.
Corollary~\ref{cor:clock-invariance} states and proves this clock-invariance,
 and it underwrites the stochastic-volatility forced-coefficient result of Proposition~\ref{prop:clock-corrected}.
The conclusion depends critically on the strong form of dynamic consistency (Axiom~\ref{ax:J4}).
Under acceptance-rejection consistency in the sense of \citet{Weber2006}, the uniqueness fails:
shortfall risk measures of the form $\rho(X) = \inf\{m \in \R \mid \E[u(X + m)] \geq c\}$
 are also consistent with the weakened axiom system,
 so the one-parameter entropic family is no longer the only admissible class.
The dividing line between the two regimes is therefore the strength of the dynamic-consistency axiom, with strong consistency forcing a single risk parameter and weaker consistency admitting a richer family parametrized by the utility-and-threshold pair.
This is a caveat discussed in Section~\ref{subsec:J4}.

Before turning to the proof, we record one remark on a structural feature of the result.

\begin{remark}[Why CARA Survives the Dynamic Lift]
\label{rem:cara-dynamic}
A reader familiar with the static AS setting may note that the entropic functional is the certainty-equivalent of CARA expected utility, $-\tfrac{1}{\gamma}\log\E[\exp(-\gamma W)] = u^{-1}(\E[u(W)])$ for $u(x) = -\exp(-\gamma x)$.
In the static setting this is an identity, CARA-expected-utility maximization coinciding with entropic-functional maximization for any $\gamma$.
The nontrivial fact is that this coincidence extends to the dynamic setting; the dynamic entropic functional, defined by recursive application of the conditional certainty-equivalent, is the same object as dynamic expected-utility maximization on a CARA agent.
The structural reason CARA emerges is that its Bellman value function factorizes as $V = -\exp(-\gamma(X+qS))\,\psi(t,q)$.
In the constant-volatility benchmark, after solving the HJB equation with the CARA ansatz (Proposition~\ref{prop:forced-phi}), the associated certainty-equivalent takes the explicit form $J_t(\pi) = (X_t + q_t S_t) + \theta(t,q_t)$, depending on the strategy only through the conditional distribution of $\WL_T(\pi)$ given~$\F_t$, and not on $(X_t, q_t, S_t)$ individually.
This is precisely the wealth-summary property~(W), now visible at the level of the explicit value function.
Note that the Bellman value function $V(t,X,q,S)$ itself does depend on the full state $(X,q,S)$; what is special about CARA is this factorization, which makes the associated certainty-equivalent $J_t$ inherit the wealth-summary property.
For any other utility curvature, the Bellman value function does not factorize in this way, the dynamic programming state (cash, inventory, price) enters separately into the certainty-equivalent, and the dynamic risk-measure language and dynamic expected-utility language diverge.
This is the deepest mathematical reason CARA emerges:
  it is the only utility that makes dynamic risk-measure language and dynamic expected-utility language coincide.
\end{remark}

\subsection{Proof of Theorem~\ref{thm:forced-uniqueness}}
\label{subsec:proof}

 The technical conditions on the strategy space and the boundedness of $\WL_T$ over $\Pi$ are collected in Appendix~\ref{app:technical}, a standing companion to this section.
  In particular, Lemma~\ref{lem:WL-bounded} establishes that $\WL_T(\pi) \in \Linf(\F_T)$ for every $\pi \in \Pi$, and Lemma~\ref{lem:density} establishes that the range  $\{\WL_T(\pi) \mid \pi \in \Pi\}$  is equal to all of $\Linf(\F_T)$.
The proof has two directions.
The forward direction (axioms imply the entropic form) proceeds in three steps: 
first, we reduce the problem to a question about a functional on $\Linf(\F_T)$; 
second, we verify that this functional satisfies the axioms of \citet{KupperSchachermayer2009}; 
third, we invoke their Theorem 1.10 and translate the conclusion back into the language of our setup.
The converse direction (entropic implies the axioms) is  a direct verification of each axiom on the entropic functional. 

\paragraph{($\Rightarrow$) Forward direction: the axioms imply the entropic form.}
Assume $J$ satisfies Axioms~\ref{ax:J1}--\ref{ax:J5}.
By Lemma~\ref{lem:relevance-derivable}, $J$ is then also relevant in the sense  of Definition~\ref{def:relevance}. 
We shall invoke this fact when applying the Kupper--Schachermayer representation theorem in Step~3 below.

\paragraph{Step 1: reduction to \texorpdfstring{$\Linf$}{L-infinity}.}
By property (W) (Proposition~\ref{prop:wealth-summary}), for every $\pi \in \Pi$ the value $J_t(\pi)$ depends only on the random variable $\WL_T(\pi) \in \Linf(\F_T)$.
Define, on the range
\begin{equation*}
    \mathcal{R}  \coloneqq   \{\WL_T(\pi) \mid \pi \in \Pi\}  \subset  \Linf(\F_T),
\end{equation*}
the \emph{reduced functional}
\begin{equation*}
    \tilde{J}_t \colon \mathcal{R}  \to \Linf(\F_t), \quad \tilde{J}_t(W)    \coloneqq  J_t(\pi) \;\text{ for any } \pi \in \Pi \text{ with } \WL_T(\pi) = W.
\end{equation*}
By property (W) (Proposition~\ref{prop:wealth-summary}), $\tilde{J}_t$ is well-defined on $\mathcal{R}$.
By Lemma~\ref{lem:density} of Appendix~\ref{app:technical} (Step~3, which uses Step~1 of that lemma applied at $t = T$), the range satisfies $\mathcal{R} = \Linf(\F_T)$ exactly: every bounded $\F_T$-measurable random variable arises as $\WL_T(\pi)$ for some $\pi \in \Pi$, via terminal-cash injection.
The reduced functional $\tilde{J}_t$ is therefore defined on all of $\Linf(\F_T)$ for every $t \in [0, T]$; 
no separate extension argument is required.
Define the associated risk measure $\rho_t \coloneqq -\tilde J_t \colon \Linf(\F_T) \to \Linf(\F_t)$.

We pause to fix notation for this section.
We use $\tilde J_t \colon \Linf(\F_T) \to \Linf(\F_t)$ for the reduced functional just defined, and $\widehat J_u \colon \Pi \to \Linf(\F_{\tau(u)})$ for the business-time reparametrization $\widehat J_u(\pi) \coloneqq J_{\tau(u)}(\pi)$ used in Corollary~\ref{cor:clock-invariance} below.
The two objects are distinct and play different roles in the proof.

\paragraph{Step 2: verification of the dynamic risk measure axioms.}
We check that the family $(\rho_t)_{t \in [0,T]}$ on $\Linf(\F_T)$ satisfies the hypotheses of \citet[Definition 1.8]{KupperSchachermayer2009}: 
normalization, conditional cash-invariance, monotonicity, convexity, law-invariance (at $t = 0$), time-consistency, and relevance.

\emph{Normalization.}
By Axiom~\ref{ax:J2}, $J_t(\pi_0) = 0$ a.s.\ for  every $t \in [0, T]$ and every $\pi_0 \in \Pi$ with $\WL_T(\pi_0) = 0$ a.s.
Equivalently, $\tilde J_t(0) = 0$ a.s.\ for every $t$, i.e., $\rho_t(0) = 0$ a.s.\ for every $t$.
This is the normalization required by \citet[Definition 1.8]{KupperSchachermayer2009}.

\emph{Cash-invariance.}
For any $W \in \Linf(\F_T)$ and any $\F_t$-measurable bounded $m$,  
  Axiom~\ref{ax:J1} gives $\tilde J_t(W + m) = \tilde J_t(W) + m$, i.e., $\rho_t(W + m) = \rho_t(W) - m$.
This is the cash-invariance property of \citet{KupperSchachermayer2009}.

 \emph{Monotonicity.}
For $W, W' \in \Linf(\F_T)$ with $W \geq W'$ almost surely, 
 Proposition~\ref{prop:monotonicity} (M)  applied to any pair of strategies $\pi, \pi' \in \Pi$ with $\WL_T(\pi) = W$ and $\WL_T(\pi') = W'$ gives $J_t(\pi) \geq J_t(\pi')$ almost surely, 
 so  $\tilde J_t(W) \geq \tilde J_t(W')$ and $\rho_t(W) \leq \rho_t(W')$ almost surely.
This is the monotonicity of \citet{KupperSchachermayer2009}.

\emph{Convexity.}
For $W, W' \in \Linf(\F_T)$ and any $\F_t$-measurable $\lambda \in [0,1]$,  Axiom~\ref{ax:J3}  gives $\tilde J_t(\lambda W + (1-\lambda) W') \geq \lambda \tilde J_t(W) + (1-\lambda) \tilde J_t(W')$, and thus 
\begin{equation*}
    \rho_t(\lambda W + (1-\lambda)W')  \leq  \lambda \rho_t(W) + (1-\lambda)\rho_t(W').
\end{equation*}
This is the convexity property.

\emph{Law-invariance.}
For $W, W' \in \Linf(\F_T)$ with $W \eqdist W'$, take any $\pi, \pi' \in \Pi$ with $\WL_T(\pi) = W$ and $\WL_T(\pi') = W'$.
Then $\WL_T(\pi) \eqdist \WL_T(\pi')$, and Axiom~\ref{ax:J5} gives $J_0(\pi) = J_0(\pi')$, 
hence $\tilde J_0(W) = \tilde J_0(W')$ and $\rho_0(W) = \rho_0(W')$.
(Law-invariance is required only at time zero in \citealp{KupperSchachermayer2009}. 
The conditional version at $t > 0$ follows automatically from the other K--S hypotheses once the representation is in hand.)

\emph{Time-consistency.}
We need to verify that $\rho_0(W) = \rho_0\bigl(-\rho_t(W)\bigr)$ for all $W \in \Linf(\F_T)$ and all $t \in [0,T]$.
Equivalently in $\tilde J$-form, $\tilde J_0(W) = \tilde J_0\bigl(\tilde J_t(W)\bigr)$, 
 where $\tilde J_t(W) \in \Linf(\F_t) \subseteq \Linf(\F_T)$ is viewed as a terminal random variable on the right-hand side.
The identity follows from Axiom~\ref{ax:J4} (strong dynamic consistency),
combined with Axiom~\ref{ax:J1} (cash-additivity) and Axiom~\ref{ax:J2} (normalization),  as follows.

Given $W \in \Linf(\F_T)$, fix any $\pi \in \Pi$ with $\WL_T(\pi) = W$  (this $\pi$ exists by Lemma~\ref{lem:density}), 
so that $\tilde J_t(W) = J_t(\pi)$.
Pick any $\pi_0 \in \Pi$ with $\WL_T(\pi_0) = 0$ almost surely
  (existence again by Lemma~\ref{lem:density}).
By Axiom~\ref{ax:J2} (Normalization), $J_t(\pi_0) = 0$ almost surely for every $t \in [0, T]$.
We first verify that $J_t(\pi) \in \Linf(\F_t)$ with $\|J_t(\pi)\|_\infty \leq \|\WL_T(\pi)\|_\infty$.
Since $\WL_T(\pi)  \in \Linf(\F_T)$ by Lemma~\ref{lem:WL-bounded}, 
write $M_\pi \coloneqq \|\WL_T(\pi)\|_\infty < \infty$, so $-M_\pi \leq \WL_T(\pi) \leq M_\pi$ almost surely.
Let $\pi_0^{[\pm M_\pi]}$ denote the cash-injected strategies of Section~\ref{subsec:strategy} applied to $\pi_0$, 
which satisfy $\WL_T(\pi_0^{[\pm M_\pi]}) = \pm M_\pi$ almost surely.
By Axioms~\ref{ax:J1} and~\ref{ax:J2},  $J_t(\pi_0^{[\pm M_\pi]}) = \pm M_\pi$ almost surely.
Since $\WL_T(\pi_0^{[-M_\pi]}) \leq \WL_T(\pi) \leq \WL_T(\pi_0^{[M_\pi]})$ almost surely, 
 Proposition~\ref{prop:monotonicity} (monotonicity (M), already verified above) gives $-M_\pi \leq J_t(\pi) \leq M_\pi$ 
 almost surely, so $J_t(\pi) \in \Linf(\F_t)$ with the claimed bound.
Let $\pi^*$ denote the strategy obtained from $\pi_0$ by injecting the cash amount $J_t(\pi)$ at time $t$, 
that is, $\pi^* = \pi_0^{[J_t(\pi)]}$ in the notation of Section~\ref{subsec:strategy}.
By Axiom~\ref{ax:J1} (cash-additivity at time $t$),  $J_t(\pi^*) = J_t(\pi_0) + J_t(\pi) = 0 + J_t(\pi) = J_t(\pi)$ almost surely; 
and at the wealth level, $\WL_T(\pi^*) = \WL_T(\pi_0) + J_t(\pi) = J_t(\pi)$ almost surely.
Since $J_t(\pi^*) = J_t(\pi)$ almost surely, the equality applied as a pair of inequalities
 (each direction) under Axiom~\ref{ax:J4} (strong dynamic consistency) yields $J_0(\pi^*) = J_0(\pi)$ almost surely.
Translating to the reduced functional,
\begin{equation*}
\tilde J_0(W)  =  J_0(\pi)  =  J_0(\pi^*)  =   \tilde J_0\bigl(\WL_T(\pi^*)\bigr)  =  \tilde J_0\bigl(J_t(\pi)\bigr)  =  \tilde J_0\bigl(\tilde J_t(W)\bigr),
\end{equation*}
almost surely, 
where the last equality uses $J_t(\pi) = \tilde J_t(W)$ by definition of the reduced functional (Step~1) on $W = \WL_T(\pi)$.
This is the desired time-consistency identity.

\emph{Relevance.}
By Lemma~\ref{lem:relevance-derivable}, the five core axioms (J1)--(J5) imply that  $J$ is relevant in the sense of Definition~\ref{def:relevance}, which is identical to the relevance condition of \citet{KupperSchachermayer2009}.
We emphasize that relevance is therefore not an additional assumption on $J$, but a property derived from the five axioms. 
 It appears here as a separate item only because the Kupper--Schachermayer theorem lists it as a separate hypothesis.

\paragraph{Step 3: applying the representation result.}
By the verification in Step 2, the family $(\rho_t)$ is a law-invariant, time-consistent, relevant,  convex dynamic risk measure on $\Linf(\F_T)$.
We invoke \citet[Theorem 1.10]{KupperSchachermayer2009}, which addresses the discrete-time case
 (with the filtration indexed by the nonnegative integers).
The continuous-time extension is the main work of this step.
We present below the more direct route, which uses property~(R) (Proposition~\ref{prop:right-continuity}) to lift the dyadic representation to all of $[0,T]$ via L\'evy's martingale convergence theorem.
An alternative, logically independent route is given in Stage~1 of the proof of Proposition~\ref{prop:right-continuity-app} in Appendix~\ref{app:indep}, which establishes the entropic representation at every $t \in [0,T]$ directly from (J1)--(J5) by a Bellman + cash-additivity + conditional-MGF argument,  with~(R) emerging as a consequence rather than a hypothesis.
 The two routes lead to the same conclusion; we adopt the (R)-based route here because it isolates the role of regularity in the lift.
The conclusion delivered by this step is
\begin{equation*}
    \rho_t(W)  =  \frac{1}{\gamma}\, \log \E[\exp(-\gamma W) \mid \F_t], \qquad t \in [0, T], \; W \in \Linf(\F_T),
\end{equation*}
for a unique $\gamma \in (0, \infty)$.
The argument proceeds in three sub-steps.

\emph{Sub-step 3a: Dyadic restrictions and  inheritance of axioms.}
Let $\filt^{(n)} = (\F_{kT/2^n})_{k=0}^{2^n}$ be the dyadic sub-filtration of $\filt$ at level $n$.
The restriction of $\tilde J$ to $\filt^{(n)}$, denoted $\tilde J^{(n)}$, is the family $(\tilde J^{(n)}_k)_{k=0}^{2^n}$ with $\tilde J^{(n)}_k(W) \coloneqq \tilde J_{kT/2^n}(W)$ for $W \in \Linf(\F_T)$.
We claim that $\tilde J^{(n)}$ inherits Axioms~\ref{ax:J1}--\ref{ax:J5} as well as the derived properties of Proposition~\ref{prop:monotonicity} (monotonicity (M)) and Proposition~\ref{prop:wealth-summary} (wealth-summary (W)) directly from $\tilde J$, since each is stated as a property of $\tilde J_t$ for arbitrary $t \in [0,T]$ and continues to hold when $t$ is restricted to $\{kT/2^n\}_{k=0}^{2^n}$.
Relevance (Definition~\ref{def:relevance}) is a single condition at $t = 0$, and is inherited by $\tilde J^{(n)}$ because $0 \in \{kT/2^n\}_{k=0}^{2^n}$ for every $n$.
Proposition~\ref{prop:right-continuity} (right-continuity (R) in time) is trivial on a finite discrete time-set and does not need separate verification on $ \tilde J^{(n)}$.
Specifically, dynamic consistency (J4) in the discrete-time form  $\tilde J^{(n)}_j(W) = \tilde J^{(n)}_j(\tilde J^{(n)}_k(W))$ for $j \leq k$ is an immediate consequence of strong dynamic consistency  (Axiom~\ref{ax:J4}) at the continuous-time level.
Hence, $\tilde J^{(n)}$ is a discrete-time normalized, cash-invariant, monotone, convex, law-invariant (at $t = 0$), time-consistent, and relevant dynamic risk measure on $\Linf(\F_T)$, satisfying the hypotheses of \citet[Theorem 1.10]{KupperSchachermayer2009}.
The theorem yields $\gamma_n \in [0, \infty]$ such that
\begin{equation*}
    \tilde J^{(n)}_k(W)  =  -\tfrac{1}{\gamma_n}\log\E[e^{-\gamma_n W} \mid \F_{kT/2^n}] \quad \text{for all } k \in \{0, 1,  \ldots, 2^n\},
\end{equation*}
with the two boundary   conventions  given by 
 $\tilde J^{(n)}_k(W) = \E[W \mid \F_{kT/2^n}]$ at $\gamma_n = 0$ and $\tilde J^{(n)}_k(W) = \operatorname{ess\,inf}(W \mid \F_{kT/2^n})$ at $\gamma_n = \infty$.
The strict-concavity clause (J3b) of Axiom~\ref{ax:J3} rules out $\gamma_n = 0$: 
at $\gamma_n = 0$ and $k = 0$, the representation reads $\tilde J^{(n)}_0(W) = \E[W]$
 (since $\F_0$ is $\Prob$-trivial), which is linear in $W$; for any $W, W'$ with $W - W'$ not a.s. constant and any deterministic $\lambda \in (0,1)$, $\E[\lambda W + (1-\lambda)W'] = \lambda \E[W] + (1-\lambda)\E[W']$ gives equality, contradicting (J3b).
The relevance condition rules out $\gamma_n = \infty$: 
at $\gamma_n = \infty$ and $k = 0$, the representation reads $\tilde J^{(n)}_0(W) = \operatorname{ess\,inf}(W)$.
Pick any nondegenerate $W \in \Linf(\F_T)$
 (which exists by the non-atomicity of $\F_T$) 
  and choose an event $A \in \F_T$ of positive probability with $A \subseteq \{W > \operatorname{ess\,inf}(W) + \delta\}$ for some $\delta > 0$; 
  such an $A$ exists by definition of essential infimum.
For any $\varepsilon \in (0, \delta]$, the shift $W - \varepsilon\mathbf{1}_A$ then satisfies
   $\operatorname{ess\,inf}(W - \varepsilon\mathbf{1}_A) = \operatorname{ess\,inf}(W)$,
     since $W - \varepsilon\mathbf{1}_A \geq W - \delta \geq \operatorname{ess\,inf}(W)$ on $A$ and $W - \varepsilon\mathbf{1}_A = W$ on $A^c$.
Hence,  we have $\tilde J^{(n)}_0(W - \varepsilon\mathbf{1}_A) = \tilde J^{(n)}_0(W)$,  contradicting Definition~\ref{def:relevance} (which requires strict inequality for every $\varepsilon > 0$, in particular for this $\varepsilon$).
Therefore, $\gamma_n \in (0, \infty)$.

\emph{Sub-step 3b: Consistency of $\gamma_n$ across scales.}
The claim is that $\gamma_n = \gamma_{n+1}$ for every $n$.
The argument is by identifiability of the entropic parameter from a single nondegenerate evaluation.
Both restrictions of $\tilde J$  to $ \filt^{(n)}$ and $\filt^{(n+1)}$ agree at every common time index, in particular at $t = 0$.
Since $\tilde J_0$ is a single well-defined object  (the restriction of the same functional $\tilde J$ to time $0$, independent of the chosen dyadic scale), both Kupper--Schachermayer representations must reproduce it: 
for every $W \in \Linf(\F_T)$,
\begin{equation*}
-\tfrac{1}{\gamma_n}\log\E[e^{-\gamma_n W}]     =  \tilde J_0(W)  =  -\tfrac{1}{\gamma_{n+1}}\log\E[e^{-\gamma_{n+1} W}],
\end{equation*}
equivalently,
\begin{equation*}
\tfrac{1}{\gamma_n}\log\E[e^{-\gamma_n W}]  =  \tfrac{1}{\gamma_{n+1}}\log\E[e^{-\gamma_{n+1} W}]
\end{equation*}
for every $W \in \Linf(\F_T)$.
Pick any nondegenerate $W \in \Linf(\F_T)$.
Such $W$ exists by the non-atomicity of $(\Omega, \F_T, \Prob)$   
  assumed in Section~\ref{sec:setup}: 
  there exists an event $A \in \F_T$ with $\Prob(A) \in (0,1)$, and the random variable $W = \mathbf{1}_A - \mathbf{1}_{A^c}$ is nondegenerate.
Let $\kappa(\gamma) \coloneqq \log\E[e^{-\gamma W}]$ denote its cumulant-generating function and consider the entropic certainty-equivalent function $f(\gamma) \coloneqq -\tfrac{1}{\gamma}\kappa(\gamma)$ on $(0,\infty)$.
The agreement of the two representations at $t = 0$ on this $W$ reads $f(\gamma_n) = f(\gamma_{n+1})$, 
so it suffices to show that $f$ is strictly monotone on $(0, \infty)$.
A direct calculation gives
\begin{equation*}
    f'(\gamma)  =  \frac{1}{\gamma^2}\kappa(\gamma) - \frac{1}{\gamma}\kappa'(\gamma)  =  -\frac{1}{\gamma^2}\bigl(\gamma\kappa'(\gamma) - \kappa(\gamma)\bigr).
\end{equation*}
The quantity in parentheses is exactly the relative entropy of the entropic-tilted measure
   $\Q$ (defined by $d\Q/d\Prob = e^{-\gamma W}/\E[e^{-\gamma W}]$) with respect to $\Prob$.
Indeed, with $\kappa'(\gamma) = -\E_\Q[W]$, so that $\E_\Q[-\gamma W] = \gamma\kappa'(\gamma)$, 
we have 
\begin{equation*}
    H(\Q \re \Prob)  =  \E_\Q\!\left[\log\frac{d\Q}{d\Prob}\right]  =  \E_\Q[-\gamma W - \kappa(\gamma)]  =  \gamma\kappa'(\gamma) - \kappa(\gamma)  \geq  0,
\end{equation*}
with equality if and only if $W$ is $\Prob$-almost-surely constant.
Since $W$ is nondegenerate, $H(\Q  \re  \Prob) > 0$.
Thus, $f'(\gamma) < 0$ on $(0,\infty)$, so $f$ is strictly monotone decreasing on $(0,\infty)$ for any nondegenerate $W$.
The equation $f(\gamma) = c$ thus has at most one solution in $(0,\infty)$ for any $c$ in its range.
Hence, $\gamma_n  = \gamma_{n+1}$, and we denote the common value by $\gamma$.

\emph{Sub-step 3c: Extension to all of $[0,T]$.}
At this stage we have the entropic representation $\tilde J_t(W)  =   -\tfrac{1}{\gamma}\log\E[e^{-\gamma W} \mid \F_t]$ for every dyadic $t \in \mathcal{D} \coloneqq \bigcup_n \{kT/2^n \mid 0 \leq k \leq 2^n\}$.
We extend to arbitrary $t \in [0,T]$ as follows.
Fix $t$ and a dyadic sequence $t_k \downarrow t$ in $\mathcal{D}$.
By right-continuity of $\filt$, $\F_t = \bigcap_k \F_{t_k}$ \citep[Lemma 7.13]{Kallenberg2021}, 
so by the L\'evy downward martingale convergence theorem applied to $e^{-\gamma W} \in \Linf \subset L^1$ \citep[Theorem 7.23]{Kallenberg2021},
\begin{equation*}
    \E[e^{-\gamma W} \mid \F_{t_k}] \longrightarrow \E[e^{-\gamma W} \mid \F_t] \quad \text{almost surely.}
\end{equation*}
Using the uniform lower bound $\E[e^{-\gamma W} \mid \F_{t_k}] \geq e^{-\gamma\|W\|_\infty} > 0$ and continuity of the logarithm, $-\tfrac{1}{\gamma}\log\E[e^{-\gamma W} \mid \F_{t_k}] \to -\tfrac{1}{\gamma}\log\E[e^{-\gamma W} \mid \F_t]$ almost surely.
On the left-hand side, take any $\pi \in \Pi$ with $\WL_T(\pi) = W$; by Proposition~\ref{prop:right-continuity} (R), $\tilde J_{t_k}(W) = J_{t_k}(\pi) \to J_t(\pi) = \tilde J_t(W)$ in probability.
By uniqueness of limits in probability (with both sides now viewed as convergent in probability,
   the a.s.\ convergence of the right-hand side implying its in-probability convergence), 
   we have
    $\tilde J_t(W) = -\tfrac{1}{\gamma}\log\E[e^{-\gamma W} \mid \F_t]$  for every $t \in [0,T]$ and $W \in \Linf(\F_T)$.

Step 3 yields $\rho_t(\WL_T) = \frac{1}{\gamma}\log\E[\exp(-\gamma \WL_T) \mid \F_t]$ for some $\gamma \in (0, \infty)$.
Translating back, $J_t(\pi) = -\frac{1}{\gamma}\log\E[\exp(-\gamma \WL_T(\pi)) \mid \F_t]$, which is~\eqref{eq:entropic}.
This completes the forward direction.

\paragraph{($\Leftarrow$)  Converse direction: the entropic form implies the axioms are satisfied.}
Suppose now that $J$ has the entropic   form~\eqref{eq:entropic} for some $\gamma \in (0, \infty)$.
We verify each of the five core axioms (J1)--(J5) on $J$.
Throughout, for $\pi \in \Pi$ write $W \coloneqq \WL_T(\pi) \in \Linf(\F_T)$, 
 so that $\|W\|_\infty < \infty$ and $e^{-\gamma W}$ is bounded above by $e^{\gamma\|W\|_\infty}$ and below by $e^{-\gamma\|W\|_\infty} > 0$.
In particular, $\E[e^{-\gamma W} \mid \F_t] \in \Linf(\F_t)$ is bounded away from zero, so $\log\E[e^{-\gamma W} \mid \F_t]$ is well-defined and bounded, 
and we have  $J_t(\pi) \in \Linf(\F_t)$.

\emph{(J1) Cash-Additivity.}
Let $\pi \in \Pi$, $t \in [0,T]$, and $c \in \Linf(\F_t)$ bounded.
By Lemma~\ref{lem:density} (Step~1) applied to $\pi$ at time $t$, the cash-injected strategy $\pi^{[c]}$ satisfies \mbox{$\WL_T(\pi^{[c]}) = \WL_T(\pi) + c$} almost surely.
Since $c$ is $\F_t$-measurable, $e^{-\gamma c}$ is $\F_t$-measurable and bounded, 
and hence the conditional expectation pulls it out, so we get
\begin{equation*}
\E[e^{-\gamma(\WL_T(\pi)+c)} \mid \F_t]  =  e^{-\gamma c}\, \E[e^{-\gamma\WL_T(\pi)} \mid \F_t] \quad \text{almost surely.}
\end{equation*}
Taking $-\tfrac{1}{\gamma}\log$ on both sides and using 
the identity 
$\log(e^{-\gamma c} \cdot Y) = -\gamma c + \log Y$ for $Y > 0$ gives $J_t(\pi^{[c]}) = J_t(\pi) + c$ almost surely.

\emph{(J2) Normalization.}
If $\WL_T(\pi_0) = 0$ almost surely, then $e^{-\gamma\WL_T(\pi_0)} = 1$ almost surely, 
 so $\E[1 \mid \F_t] = 1$ almost surely, and $J_t(\pi_0) = -\tfrac{1}{\gamma}\log 1 = 0$ almost surely.

\emph{(J3) Concavity, with strict clause.}
We verify the two parts of Axiom~\ref{ax:J3} separately.

\emph{Weak inequality.}
Fix $t \in [0,T]$, $\pi, \pi' \in \Pi$, and an $\F_t$-measurable random variable $\lambda$ with $0 \leq \lambda \leq 1$ almost surely.
Write $W \coloneqq \WL_T(\pi)$, $W' \coloneqq \WL_T(\pi')$, $X \coloneqq e^{-\gamma W}$, and $Y \coloneqq e^{-\gamma W'}$, both strictly positive and bounded.
Then $e^{-\gamma(\lambda W + (1-\lambda)W')} = X^\lambda Y^{1-\lambda}$ pointwise.
Since $Y > 0$ almost surely and $\E[Y \mid \F_t] \geq e^{-\gamma\|W'\|_\infty} > 0$ almost surely, 
  define the probability measure $\Q$ on $\F_T$ by $d\Q/d\Prob \coloneqq Y/\E[Y \mid \F_t]$.
Then $\E_\Q[Z \mid \F_t] = \E[Y Z \mid \F_t]/\E[Y \mid \F_t]$ for all $Z \in \Linf$, almost surely, and
 \begin{equation*}
    \E[X^\lambda Y^{1-\lambda} \mid \F_t]  =  \E\!\left[Y \cdot (X/Y)^\lambda \mid \F_t\right]  =  \E[Y \mid \F_t] \cdot \E_\Q\!\left[(X/Y)^\lambda \mid \F_t\right].
\end{equation*}
On $\{0 < \lambda < 1\}$, for each fixed realization of  the $\F_t$-measurable random variable $\lambda$ the function $z \mapsto z^\lambda$ is concave on $(0,\infty)$; applying conditional Jensen under $\Q$ pointwise on $\omega$, treating $\lambda(\omega)$ as a fixed coefficient (formally, integrate the $\F_t$-measurable kernel $(z, \omega) \mapsto z^{\lambda(\omega)}$ against the regular conditional distribution of $X/Y$ given $\F_t$, which exists by \citet[Theorem 8.5]{Kallenberg2021}), gives
\begin{equation*}
    \E_\Q\!\left[(X/Y)^\lambda \mid \F_t\right]  \leq  \left(\E_\Q[X/Y \mid \F_t]\right)^\lambda  =  \left(\frac{\E[X \mid \F_t]}{\E[Y \mid \F_t]}\right)^\lambda \quad \text{almost surely.}
\end{equation*}
 On $\{\lambda = 0\}$ and $\{\lambda = 1\}$, 
 the inequality is trivially an equality.
 Substituting back yields
\begin{equation}
    \E[X^\lambda Y^{1-\lambda} \mid \F_t]  \leq  \E[X \mid \F_t]^\lambda \, \E[Y \mid \F_t]^{1-\lambda} \quad \text{almost surely.}
    \label{eq:cond-holder}
\end{equation}
Applying the strictly decreasing map $z \mapsto -\tfrac{1}{\gamma}\log z$ to both sides of~\eqref{eq:cond-holder} reverses the inequality, and using $\log(a^\lambda b^{1-\lambda}) = \lambda\log a + (1-\lambda)\log b$ for $a, b > 0$ yields
\begin{equation*}
    -\tfrac{1}{\gamma}\log\E[e^{-\gamma(\lambda W + (1-\lambda)W')} \mid \F_t]  \geq  \lambda J_t(\pi) + (1-\lambda) J_t(\pi') \quad \text{almost surely.}
\end{equation*}
The left-hand side equals  $J_t(\lambda\pi \oplus (1-\lambda)\pi')$ by the wealth-summary property (W), which is the weak concavity inequality of Axiom~\ref{ax:J3}.

\emph{Strict inequality at $t = 0$.}
Suppose $W - W'$ is not almost-surely equal to a deterministic  constant, and let $\lambda \in (0, 1)$ be deterministic.
At $t = 0$, the $\sigma$-algebra $\F_0$ is $\Prob$-trivial, so the conditional expectations reduce to unconditional ones.
Define the probability measure $\Q$ on $\F_T$ by $d\Q/d\Prob \coloneqq Y/\E[Y]$ (well-defined since $\E[Y] \geq e^{-\gamma\|W'\|_\infty} > 0$).
It follows that 
\begin{equation*}
    \E[X^\lambda Y^{1-\lambda}]  =  \E\!\left[Y  \cdot (X/Y)^\lambda\right]  =  \E[Y] \cdot \E_\Q\!\left[(X/Y)^\lambda\right].
\end{equation*}
The function $z \mapsto z^\lambda$ is strictly concave on $(0, \infty)$ for deterministic $\lambda \in (0,1)$, 
  and $X/Y$ is a strictly positive bounded random variable.
An application of Jensen's inequality gives that 
$\E_\Q[(X/Y)^\lambda] \leq (\E_\Q[X/Y])^\lambda$, with equality  if and only if $X/Y$ is $\Q$-almost-surely equal to a constant.
Since $d\Q/d\Prob = Y/\E[Y]$ is strictly positive almost surely, we have $\Q \sim \Prob$ on $\F_T$, 
so $\Q$-a.s.\ constant is equivalent to $\Prob$-a.s.\ constant.
Now $X/Y = e^{-\gamma(W - W')}$,  and $z \mapsto e^{-\gamma z}$ is a bijection on $\R$. 
Therefore,   $X/Y$ is $\Prob$-a.s.\ constant if and only if $W - W'$ is $\Prob$-a.s.\ constant.
By hypothesis this fails, hence Jensen's inequality is strict: 
\begin{equation}
    \E[X^\lambda Y^{1-\lambda}]  <  \E[X]^\lambda \,  \E[Y]^{1-\lambda}.
    \label{eq:uncond-holder}
\end{equation}
Applying the strictly decreasing map $z \mapsto -\tfrac{1}{\gamma}\log z$ to~\eqref{eq:uncond-holder} yields
\begin{equation*}
    -\tfrac{1}{\gamma}\log\E[e^{-\gamma(\lambda W + (1-\lambda)W')}]  >  \lambda J_0(\pi) + (1-\lambda) J_0(\pi'),
\end{equation*}
which is the strict concavity inequality of Axiom~\ref{ax:J3} at $t = 0$.

\emph{(J4) Strong Dynamic Consistency.}
For $s \leq t$ in $[0,T]$ and any $\pi \in \Pi$, write
\begin{equation*}
    A_u(\pi)  \coloneqq  \E[e^{-\gamma\WL_T(\pi)} \mid \F_u], \qquad u \in [0,T].
\end{equation*}
We have  $J_u(\pi) = -\tfrac{1}{\gamma}\log A_u(\pi)$. 
Equivalently, $A_u(\pi) = e^{-\gamma J_u(\pi)}$ almost surely.
The tower property of conditional expectations gives $A_s(\pi) = \E[A_t(\pi) \mid \F_s]$ almost surely, i.e.,
\begin{equation}
    \E[e^{-\gamma  \WL_T(\pi)} \mid \F_s]  =  \E\!\bigl[e^{-\gamma J_t(\pi)} \mid \F_s\bigr] \quad \text{almost surely.}
    \label{eq:tower-entropic}
\end{equation}
Note that this  is the recursivity identity for the entropic functional.
Now suppose $J_t(\pi) \geq J_t(\pi')$ almost surely.
Then $e^{-\gamma J_t(\pi)} \leq e^{-\gamma J_t(\pi')}$ almost surely,  so conditional monotonicity 
implies that 
 $\E[e^{-\gamma J_t(\pi)} \mid \F_s] \leq \E[e^{-\gamma J_t(\pi')} \mid \F_s]$ almost surely.
Using~\eqref{eq:tower-entropic}, the latter  is equivalent to 
 $\E[e^{-\gamma\WL_T(\pi)} \mid \F_s] \leq \E[e^{-\gamma\WL_T(\pi')} \mid \F_s]$ almost surely, 
 and applying $-\tfrac{1}{\gamma}\log$ yields $J_s(\pi) \geq J_s(\pi')$ almost surely.

\emph{(J5) Law-Invariance.}
Suppose $\WL_T(\pi) \eqdist \WL_T(\pi')$.
The transformation $x \mapsto e^{-\gamma x}$ is Borel-measurable,   so $e^{-\gamma\WL_T(\pi)} \eqdist e^{-\gamma\WL_T(\pi')}$. 
In particular,  $\E[e^{-\gamma\WL_T(\pi)}] = \E[e^{-\gamma\WL_T(\pi')}]$.
At $t = 0$, the $\sigma$-algebra $\F_0$  is $\Prob$-trivial, so the conditional expectation reduces to the unconditional one and is deterministic.
Thus, $J_0(\pi) = -\tfrac{1}{\gamma}\log\E[e^{-\gamma\WL_T(\pi)}] =  - \tfrac{1}{\gamma}\log\E[e^{-\gamma\WL_T(\pi')}] = J_0(\pi')$ almost surely.

This completes the verification of the converse direction, and the proof of Theorem~\ref{thm:forced-uniqueness}. \qed

\subsection{The Clock-Invariance Corollary}
\label{subsec:clock-invariance}

The forced entropic representation of Theorem~\ref{thm:forced-uniqueness} has  an immediate structural consequence: 
 the scalar $\gamma$ is invariant under the canonical clock change associated with the price process.

 \begin{corollary}[Clock-Invariance of the Risk-Aversion Parameter]
\label{cor:clock-invariance}
Adopt the hypotheses of Theorem~\ref{thm:forced-uniqueness}, and let $\Lambda \coloneqq \langle S\rangle$ denote the quadratic-variation clock of the price process, with right-continuous inverse, 
$ \tau(u)  \coloneqq  \inf\{t \geq 0 \mid \Lambda_t  \geq  u\}$ for 
$u \in [0, \Lambda_T].$
The associated business-time filtration is $\tilde\filt = (\tilde\F_u)_{u \in [0, \Lambda_T]}$ with $\tilde\F_u \coloneqq \F_{\tau(u)}$, and the reparametrized criterion is $\widehat J_u(\pi) \coloneqq J_{\tau(u)}(\pi)$ for $u \in [0, \Lambda_T]$ and $\pi \in \Pi$. 
Then
\begin{equation}
    \widehat J_u(\pi)  =  -\frac{1}{\gamma}\, \log \E\!\left[\exp \bigl(-\gamma\, \WL_T(\pi)\bigr) \,\Big|\, \tilde\F_u\right]
    \label{eq:entropic-business}
\end{equation}
for all $u \in [0, \Lambda_T]$ and $\pi \in \Pi$, with the same scalar $\gamma$ as in Theorem~\ref{thm:forced-uniqueness}.
\end{corollary}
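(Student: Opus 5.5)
The plan is to derive~\eqref{eq:entropic-business} by direct substitution into the representation of Theorem~\ref{thm:forced-uniqueness}; no new representation theorem is needed.

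\emph{Step 1: the clock.} Since $\Lambda = \langle S\rangle$ is continuous, adapted and nondecreasing, each $\tau(u)$ is a stopping time of $\filt$. Right-continuity of $\filt$ makes the first-passage time $\tau(u)=\inf\{t \mid \Lambda_t \geq u\}$ an (optional, hence) stopping time. Moreover $\tau(u)\le T$ for $u\in[0,\Lambda_T]$. Thus $\tilde\F_u = \F_{\tau(u)}$ is a well-defined stopped $\sigma$-algebra, and $u\mapsto\tau(u)$ is nondecreasing and right-continuous, so $\tilde\filt$ is a filtration.

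\emph{Step 2: evaluate at the stopping time.} The theorem gives~\eqref{eq:entropic} at deterministic times. I need it at the random time $\tau(u)$, that is,
\[
J_{\tau(u)}(\pi) = -\tfrac{1}{\gamma}\log \E\bigl[e^{-\gamma \WL_T(\pi)} \,\big|\, \F_{\tau(u)}\bigr].
\]
To obtain this, I would consider the bounded martingale $A_t = \E[e^{-\gamma\WL_T(\pi)}\mid\F_t]$. Under the usual conditions it has a c\`adl\`ag version. By optional sampling for the bounded stopping time $\tau(u)\le T$, we have $A_{\tau(u)} = \E[e^{-\gamma \WL_T}\mid \F_{\tau(u)}]$. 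On the $J$ side, $J_{\tau(u)}(\pi)$ must be interpreted as the process $t\mapsto J_t(\pi)$ evaluated at $\tau(u)$. By the theorem, $J_t(\pi) = -\tfrac1\gamma\log A_t$ a.s. for each $t$. Since both sides are right-continuous (by (R) and the c\`adl\`ag version), they are indistinguishable. I would therefore fix the right-continuous version of $J_\cdot(\pi)$, justified by Proposition~\ref{prop:right-continuity} together with the explicit form. Evaluating at $\tau(u)$ then gives the displayed identity, with $A_{\tau(u)}\ge e^{-\gamma\|\WL_T\|_\infty}>0$ so that the logarithm is well defined.

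\emph{Step 3: same $\gamma$.} The formula uses the very $\gamma$ from the theorem, so clock-invariance is immediate. Uniqueness in business time follows as in Sub-step~3b: at $u=0$ we have $\tau(0)=0$ (when $\Lambda_t>0$ for $t>0$, or in any case $\tilde\F_0\subseteq\F_{\tau(0)}$ with $J_0$ deterministic), and the map $f$ there is strictly decreasing, so no other $\gamma$ reproduces $\widehat J_0$.

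The main obstacle is Step~2: the passage from ``a.s. for each deterministic $t$'' to evaluation at a random time. This is exactly where indistinguishability of versions is needed, which is why I would invoke (R) and the c\`adl\`ag martingale modification rather than argue pointwise. A secondary subtlety is flat stretches of $\Lambda$ (which are absent when $\sigma>0$), where $\tau$ jumps. Right-continuity of $\tau$ and of the filtration handles this, since only values at $\tau(u)$ themselves are used.
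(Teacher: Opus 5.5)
Your proposal is correct and follows the paper's route. The paper's proof simply substitutes $t=\tau(u)$ into~\eqref{eq:entropic} and uses $\tilde\F_u=\F_{\tau(u)}$; your Step~2 makes explicit what that substitution leaves implicit, namely that $J_{\tau(u)}(\pi)$ is only meaningful for a fixed right-continuous version of $t\mapsto J_t(\pi)$, and optional sampling then identifies $A_{\tau(u)}$ with $\E[e^{-\gamma \WL_T(\pi)}\mid\F_{\tau(u)}]$.

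Two small corrections. First, define that version directly as $-\tfrac{1}{\gamma}\log$ of the c\`adl\`ag modification of $A$, because property~(R) only gives right-continuity in probability and cannot by itself make a given version pathwise right-continuous. Second, with the $\geq$ definition $\tau$ is left-continuous, not right-continuous, across flat stretches of $\Lambda$; this is harmless, since monotonicity of $\tau$ alone makes $\tilde\filt$ a filtration.
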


\begin{proof}
By Theorem~\ref{thm:forced-uniqueness}, $J_t(\pi) = -\tfrac{1}{\gamma}\log \E[e^{-\gamma \WL_T(\pi)} \mid \F_t]$ for all $t \in [0, T]$ with unique $\gamma > 0$.
Substituting $t = \tau(u)$ and using $\tilde\F_u = \F_{\tau(u)}$  yields~\eqref{eq:entropic-business}.
The calculation establishes that the scalar $\gamma$ is the same whether the agent is described  in wall-clock or business-time, 
i.e., the entropic representation is indeed stable under the canonical clock change.
\end{proof}

\begin{remark}
The scalar $\gamma$ is a property of the agent's preferences alone, recoverable identically from the wall-clock and business-time formulations. 
The substantive content arises in the stochastic-volatility setting of Section~\ref{subsec:stochvol}: 
the same $\gamma$ appearing  in~\eqref{eq:entropic-business} pins the instantaneous running-penalty rate to $\phi_t = \gamma\sigma_t^2/2$, 
so $\gamma$ is the agent's single time-invariant preference parameter (Proposition~\ref{prop:clock-corrected}).
 The agent's mean-variance trade-off is thereby stationary in $\langle S \rangle$, the clock against which price risk accumulates, rather than in wall-clock time. 
A wall-clock constant $\phi$, as in direct extensions of the CJ framework to stochastic-volatility settings, is correspondingly inconsistent with the axioms:   
  by the forced relation, it implies a time-varying $\gamma_t = 2\phi/\sigma_t^2$, which Corollary~\ref{cor:clock-invariance} excludes.
In the constant-volatility benchmark of \citet{AvellanedaStoikov2008}, the wall-clock and business-time formulations coincide up to a deterministic rescaling and the result is vacuous.
\end{remark}

Corollary~\ref{cor:clock-invariance} encodes a microstructural commitment  which is satisfied (vacuously) by the constant-volatility AS benchmark, used implicitly by some recent stochastic-volatility extensions \citep{RosenbaumZhang2022}, and violated by direct wall-clock extensions of CJ to stochastic-volatility settings.
To our knowledge, the commitment has not previously been stated on the preference functional itself.

The price's diffusive risk accumulates along the quadratic-variation process $\langle S\rangle$: in the constant-volatility benchmark $\langle S\rangle_t = \sigma^2 t$ is deterministic linear in wall-clock time, while in stochastic-volatility settings $\langle S\rangle_t = \int_0^t \sigma_s^2\, ds$ is random, growing faster on volatile days and slower on quiet ones.
The market maker's inventory aversion, in our framework, is a commitment about how she trades off the mean and variance of her terminal wealth, and the natural units in which both accumulate are the units of price risk, namely $\langle S\rangle$.
Corollary~\ref{cor:clock-invariance} says that this is automatic; the agent's risk-aversion parameter $\gamma$ is the same scalar in wall-clock and business-time formulations,  so her mean-variance trade-off is stationary in the same clock against which price risk accumulates.
Among possible clocks, $\langle S\rangle$ is privileged by the Dambis--Dubins--Schwarz theorem, 
which states that for any continuous semimartingale $S$ with nontrivial quadratic variation,
there exists a Brownian motion $B$ on a possibly enlarged probability space such that $S_t = B_{\langle S\rangle_t}$, 
making $\langle S\rangle$ the unique clock under which the price process becomes a standard Brownian motion.

In the constant-volatility benchmark of \citet{AvellanedaStoikov2008}, business-time and wall-clock time coincide up to a deterministic rescaling, and Corollary~\ref{cor:clock-invariance} imposes no constraint beyond what Theorem~\ref{thm:forced-uniqueness} already gives.
Observe that this explains why the original AS framework satisfies clock-invariance without ever invoking it; 
  indeed,  the property is automatic in any setting with constant volatility.
When the volatility $\sigma_t$ is random, the two clocks diverge: a preference functional stationary in wall-clock time would deliver a constant inventory-aversion rate per second, while a preference functional stationary in  $\langle S\rangle$ delivers a constant inventory-aversion rate per unit of accumulated price variance.
The two prescriptions give different optimal quotes on the same day, with the wall-clock prescription under-penalizing inventory on high-volatility days and over-penalizing on low-volatility days. 
Corollary~\ref{cor:clock-invariance} commits to the second prescription as the correct one.
The forced relation $\phi_t = \gamma\sigma_t^2/2$ of Proposition~\ref{prop:clock-corrected}  is the corrected calibration that clock-invariance mandates.

A concrete example makes the nature of the violation more explicit.
In a Heston model with $d\sigma_t^2 = \kappa(\bar\sigma^2 - \sigma_t^2)\,dt + \nu\sigma_t\, dW_t$, the naive CJ running penalty $\phi\int_0^T q_s^2\,ds$ with constant $\phi$ corresponds, in business-time units $u = \int_0^t \sigma_s^2\,ds$, precisely to the penalty $\phi\int_0^{\Lambda_T} q_{\tau(u)}^2/\sigma_{\tau(u)}^2\,du$.
The effective business-time coefficient $\phi/\sigma_{\tau(u)}^2$ is random and time-varying. 
By Corollary~\ref{cor:clock-invariance}, this  is inconsistent with the forced constant-coefficient entropic structure.
The correct (clock-coherent) running penalty is  $\tfrac{\gamma}{2}\int_0^T q_s^2\,d\langle S\rangle_s = \tfrac{\gamma}{2}  \int_0^T \sigma_s^2 q_s^2\,ds$, as given by Proposition~\ref{prop:clock-corrected}.

The extensive literature on subordinated stochastic processes, originating  with the seminal work of \citet{Clark1973} and subsequently developed in continuous-time financial modeling by \citet{AneGeman2000} and \citet{CarrWu2004}, treats asset prices as time-changed Brownian motions in which the stochastic clock captures fluctuations in market activity and information flow. 
In fact,  \citet{BacryAlDayriMuzy2011} establish empirically that price variance is approximately linear  in trade count, suggesting that trade-count time is a viable proxy for $\langle S\rangle$-time in liquid markets.
Our framework absorbs this empirical fact as a structural feature of the preference functional, 
namely that the market maker's inventory aversion is automatically stationary in the same clock against which price risk accumulates.

\begin{remark}[BSDE-Level Consequence]
\label{rem:clock-BSDE}
Consider preference functionals satisfying our axioms which admit a BSDE representation
\begin{equation*}
    -d\tilde J_t(W)  =  g_t(Z_t)\, dt - Z_t \cdot dM_t
\end{equation*}
in the sense of \citet{BionNadal2009, DelbaenPengRosazzaGianin2010}, where $\tilde J_t$ is the reduced functional of Step~1 of the proof of Theorem~\ref{thm:forced-uniqueness} acting on $W = \WL_T(\pi) \in \Linf(\F_T)$, $g$ is the driver, $Z$ is the $L^2(\Prob)$-integrand in the martingale representation $d\,\E[e^{-\gamma \WL_T} \mid \F_t] = Z_t \cdot dM_t$, and $M$ is the driving martingale (typically the Brownian motion $B$ from Section~\ref{subsec:price-flow}).
On this class, Corollary~\ref{cor:clock-invariance} forces the driver to take the structural form 
$g_t(z) = \sigma_t^2 \cdot \tilde g_{\Lambda_t}(z)$,
where $\tilde g$ is a deterministic function of its business-time argument $u = \Lambda_t$ only.
For the entropic functional the explicit driver is $g_t(z) = \tfrac{\gamma}{2}\sigma_t^2 |z|^2$, recovering the standard quadratic-BSDE form for entropic risk measures and producing the forced running coefficient $\phi_t = \gamma\sigma_t^2/2$ of Proposition~\ref{prop:clock-corrected}.
\end{remark}

\begin{remark}[Constant-Volatility Vacuity]
\label{rem:clock-vacuity}
When $\Lambda_t = \sigma^2 t$ is deterministic linear, $\tilde\filt$ and $\filt$ differ only by a deterministic time-rescaling.
Corollary~\ref{cor:clock-invariance} is then automatically satisfied as a trivial restatement of  Theorem~\ref{thm:forced-uniqueness} on the rescaled filtration.
The corollary has real content only in settings where $\Lambda$ has a nontrivial random structure (stochastic-volatility, rough-volatility, or jump-diffusion).
\end{remark}

\begin{remark}[Noncanonical Clocks]
\label{rem:noncanonical}
Corollary~\ref{cor:clock-invariance} structurally privileges the canonical clock $ \Lambda = \langle S\rangle$.
A market maker insisting on a different clock, say  $\Lambda' = N$  (the cumulative trade-count process),  is incompatible with the clock-invariance forced by the corollary unless $N_t = f(t)\,\langle S\rangle_t$ almost surely for some deterministic function $f > 0$, i.e., unless $N$ and $\langle S\rangle$ are proportional up to a deterministic time-varying factor.
The empirical regularity $\E[d\langle S\rangle_t] \approx  c\, \E[dN_t]$ documented by \citet{BacryAlDayriMuzy2011} holds in expectation only; under this weaker regularity, the trade-count clock yields at best  approximate time-stationarity in the sense of Corollary~\ref{cor:clock-invariance}.
\end{remark}

\subsection{Discussion of the Theorem}
\label{subsec:theorem-discussion}

Theorem~\ref{thm:forced-uniqueness} has not, to our knowledge, been stated before in the inventory market making literature.
The axiom system itself is novel. 
In particular, the wealth-summary property (W) (Proposition~\ref{prop:wealth-summary}), which we derive from cash-additivity, normalization, and strong dynamic consistency rather than assume, is a new contribution to the theory of dynamic preference functionals.
Clock-invariance of $\gamma$ with respect to the canonical clock $\langle S\rangle$ (Corollary~\ref{cor:clock-invariance}) is a further new structural contribution and underwrites the framework's applicability to stochastic-volatility settings.
Of the five core axioms, strong dynamic consistency (J4) is the one that most directly drives the uniqueness conclusion. 
It forces the recursivity relation and collapses the one-parameter family of $\Phi$-divergence
 risk measures onto the entropic member  (see Remark~\ref{rem:why-kl}).
The non-atomicity of $\F_T$ enters at a single point of the proof, namely the identifiability argument of Sub-step~3b, where it ensures the existence of a nondegenerate test random variable to pin down $\gamma_n$.
Finally, the proof uses, at one step, a standard representation result from the dynamic  risk measure literature  \citep[Theorem 1.10]{KupperSchachermayer2009};
 the substantive content of the theorem lies elsewhere, 
 in the consequences developed in Section~\ref{sec:consequences} for the AS-CJ split and for the regulatory CVaR framework.
 

\section{Consequences of the Theorem}
\label{sec:consequences}

We now develop eleven sharp consequences of the forced uniqueness, as formal corollaries of Theorem~\ref{thm:forced-uniqueness}.
Unless otherwise stated, each corollary is read under the hypotheses of the theorem, namely Axioms~\ref{ax:J1}--\ref{ax:J5}.
The two corollaries on mean-variance and CVaR (Corollaries~\ref{cor:meanvar} and~\ref{cor:cvar}) are exceptions: 
  they are statements about \emph{candidate alternative functionals}, and require only the axiom statements, not the theorem's conclusion.
Four of the corollaries (\ref{cor:cj-inconsistency}, \ref{cor:meanvar}, \ref{cor:cvar}, \ref{cor:dilation}) draw conclusions in apparent tension with widely-used practical, textbook, or axiomatic frameworks, and we discuss the tension in each case.

\subsection{Single-Parameter Pinning}
\label{subsec:single-parameter}

The first consequence of Theorem~\ref{thm:forced-uniqueness} is that the entire space of preference functionals satisfying our axioms is one-dimensional, parametrized by a single positive scalar.

\begin{corollary}[Single-Parameter Pinning]
\label{cor:single-parameter}
Under the same hypotheses as in Theorem~\ref{thm:forced-uniqueness}, the market maker's dynamic preference functional $J$ is determined, given the canonical clock $\Lambda = \langle S\rangle$ and the liquidation cost function $L$, by a single free scalar parameter $\gamma > 0$.
\end{corollary}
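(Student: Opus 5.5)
The plan is to read Corollary~\ref{cor:single-parameter} as a direct repackaging of Theorem~\ref{thm:forced-uniqueness}. We show that the map $\gamma \mapsto J^{\gamma}$, with $J^\gamma_t(\pi) \coloneqq -\tfrac{1}{\gamma}\log\E[e^{-\gamma\WL_T(\pi)}\mid\F_t]$, is a bijection from $(0,\infty)$ onto the set of dynamic preference functionals satisfying Axioms~\ref{ax:J1}--\ref{ax:J5}, once the market inputs are fixed. Among those inputs, the only ones that enter $J^\gamma$ beyond the filtered probability space are the wealth dynamics and $L$, through $\WL_T$.

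\textbf{Surjectivity.} By the forward direction of Theorem~\ref{thm:forced-uniqueness}, every axiom-consistent $J$ equals $J^\gamma$ for some $\gamma\in(0,\infty)$. We note explicitly that $\WL_T(\pi) = X_T + q_T S_T - L(q_T)$ is determined by the market inputs and $\pi$ alone, with no preference-side object entering. Hence $J$ is determined by $\gamma$ together with $L$ and the price/order-flow data.

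\textbf{Well-definedness.} By the converse direction, each $J^\gamma$ with $\gamma>0$ satisfies the axioms, so the whole family is admissible.

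\textbf{Injectivity.} This is the step that makes the parameter genuinely a single free scalar rather than a redundant label. We reuse the identifiability argument of Sub-step~3b. By non-atomicity of $\F_T$, pick a nondegenerate $W=\mathbf{1}_A-\mathbf{1}_{A^c}$, and by Lemma~\ref{lem:density} realize it as $\WL_T(\pi)$ for some $\pi\in\Pi$. The map $f(\gamma)=-\tfrac1\gamma\log\E[e^{-\gamma W}]$ is strictly decreasing, so distinct $\gamma$ give distinct $J_0(\pi)$, and hence distinct functionals.

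\textbf{Role of the clock.} The dependence on $\Lambda=\langle S\rangle$ requires a remark rather than a proof. By Corollary~\ref{cor:clock-invariance}, describing $J$ in business time via $\widehat J_u = J_{\tau(u)}$ yields the same scalar $\gamma$. The parametrization is therefore not an artifact of the wall-clock choice, and fixing $\Lambda$ introduces no additional preference parameter.

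The main obstacle is interpretive rather than technical. One must make precise that ``determined by'' means that the axiom class is exactly the image of an injective map from $(0,\infty)$ once $(\Omega,\F,\filt,\Prob,S,N^a,N^b,\Pi,L)$ are held fixed. One must also check that no hidden degree of freedom survives. For example, a $t$-dependent or state-dependent risk-aversion is excluded because the theorem yields a constant $\gamma$ valid for all $t$ (Sub-steps~3b--3c). With that formulation fixed, the corollary follows from the three steps above.
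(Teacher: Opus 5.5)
Your proposal is correct and takes essentially the same route as the paper, which reads the corollary directly off the explicit entropic form of Theorem~\ref{thm:forced-uniqueness}, noting that $\gamma$ is the only quantity in it that is not market data. Your split into surjectivity, well-definedness and injectivity makes that reading explicit. The injectivity step (Sub-step~3b together with Lemma~\ref{lem:density}) is sound, but it only reproves what the theorem already records as the uniqueness of $\gamma$.
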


\begin{proof}
This is immediate from the explicit form~\eqref{eq:entropic}.
The only free quantity on the right-hand side  of~\eqref{eq:entropic}, beyond the data $(\langle S\rangle, L)$ which are properties of the market and not of the agent, is the scalar $\gamma$.
\end{proof}

The economic content of Corollary~\ref{cor:single-parameter} is worth stating sharply.
Cartea--Jaimungal-style models typically parametrize the market maker's preferences by at least three numbers, namely the running inventory penalty coefficient $\phi$, the terminal inventory penalty coefficient $\alpha$, and (when explicit) a risk-aversion parameter.
Corollary~\ref{cor:single-parameter} says, in effect, that two of these are not free, by way of two distinct mechanisms.
The running coefficient $\phi$ is forced to be a specific function of the preference 
parameter $\gamma$ and the realized volatility, computed explicitly in Section~\ref{sec:supporting}.
 The terminal coefficient $\alpha$ is forced to be the leading coefficient of the liquidation cost function $L$, a primitive of the market rather than a preference parameter.
The genuine preference degree of freedom is the single scalar $\gamma$.

A practitioner used to the CJ parametrization might find independent calibration of $\phi$ and $\alpha$ empirically convenient.
Our framework does not endorse this:
  any combination of $(\phi, \alpha, \gamma)$ outside the two-equation constraint surface
\begin{equation*}
        \begin{cases}
        \phi  =  \gamma \sigma^2/2, \\[4pt]
        \alpha  =  \tfrac{1}{2}L''(0) 
        \end{cases}
\end{equation*}
violates at least one of Axioms~\ref{ax:J1}--\ref{ax:J5}.
Calibrating outside the surface is calibrating to a model that, on the axiomatic terms above, the trader does not in fact have. 
The scope of the forcing claim deserves emphasis. 
The statement is about agents with a well-defined preference functional satisfying Axioms~\ref{ax:J1}--\ref{ax:J5}.
A CJ-style desk may not be maximizing any preference functional at all, instead using the CJ-formula as a heuristic without a clean preference-functional interpretation;
 the forcing applies to the class of agents who do have such a preference functional, 
and tells those agents that their $(\phi, \alpha, \gamma)$ are not free.

\begin{remark}[Constancy of $\gamma$]
\label{rem:constant-gamma}
A direct consequence of the explicit form~\eqref{eq:entropic} is that $\gamma$ is a single positive scalar.
It does not depend on time $t$, on the state of the world $\omega$, on the price level, on  realized volatility, on the time-of-day, or on the intraday P\&L.
What \emph{can} legitimately vary across the trading day, in our framework, is the inventory penalty rate $\frac{\gamma}{2}\sigma_t^2 q^2$, 
which naturally tracks realized volatility through $\sigma_t^2$. 
The underlying preference parameter $\gamma$ itself does not move.
The constancy result does not preclude $\gamma$ from being a function of slow-moving structural quantities such as asset class, desk capital, or firm risk policy. 
What it precludes is $\gamma$ being a function of intraday state.
A market maker whose subjective $\gamma$ does vary intraday is free to do so, but she is then outside the axiom system.
The stronger cross-asset statement is the content of the multi-asset extension (Theorem~\ref{thm:multi-asset} in Appendix~\ref{app:multi-asset}; 
see also Section~\ref{subsec:further-connections}).
\end{remark}

\subsection{AS Uniqueness}
\label{subsec:as-uniqueness}

The second consequence identifies the Avellaneda--Stoikov framework as the unique axiom-consistent model in the constant-volatility benchmark.

\begin{corollary}[Avellaneda--Stoikov Uniqueness]
\label{cor:as-uniqueness}
Suppose that the price process $S$ has constant volatility $\sigma > 0$, i.e., $d \langle S\rangle_t = \sigma^2 \,  dt$.
Then,  under the same hypotheses as  Theorem~\ref{thm:forced-uniqueness}, the market maker's preferences are uniquely given by the entropic certainty-equivalent functional
  \begin{equation}
    J_t(\pi)  =  -\frac{1}{\gamma}\, \log \E\!\left[\exp \bigl(-\gamma\, \WL_T(\pi)\bigr) \,\Big|\, \F_t\right],
    \label{eq:AS-entropic}
\end{equation}
which by Remark~\ref{rem:cara-dynamic} coincides, for the CARA utility $u(x) = -\exp(-\gamma x)$, with the conditional certainty-equivalent of expected utility under  $u$.
Under the further specializations of \citet{AvellanedaStoikov2008},  with exponential arrival intensities $\lambda^a(\delta) = A e^{-\kappa \delta}$ and vanishing terminal liquidation cost $L \equiv 0$, the optimal quoting strategy derived from~\eqref{eq:AS-entropic} coincides with the strategy in their paper.
\end{corollary}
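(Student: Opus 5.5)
The corollary has three parts, and I would prove them in this order: the functional form, the identification with CARA expected utility, and the coincidence of optimal quotes with those of \citet{AvellanedaStoikov2008}.

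\textbf{Functional form.} The first claim is Theorem~\ref{thm:forced-uniqueness} applied to the special case $\langle S\rangle_t = \sigma^2 t$. The theorem's hypotheses do not involve the volatility structure, so constant volatility is an admissible instance. The theorem then gives~\eqref{eq:AS-entropic} with a unique $\gamma\in(0,\infty)$, and nothing further is needed.

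\textbf{CARA identification.} Put $u(x)=-e^{-\gamma x}$, which is strictly increasing with inverse $u^{-1}(y)=-\tfrac1\gamma\log(-y)$ on $(-\infty,0)$. For each $t$ we have
\[
u^{-1}\bigl(\E[u(\WL_T(\pi))\mid\F_t]\bigr) = -\tfrac1\gamma\log\E[e^{-\gamma \WL_T(\pi)}\mid\F_t] = J_t(\pi)
\]
almost surely. Since $u^{-1}$ is strictly increasing, maximizing $J_t$ over $\Pi$ is equivalent to maximizing $\E[u(\WL_T)\mid\F_t]$, with the same maximizers. By the tower identity~\eqref{eq:tower-entropic}, the same holds for the dynamic (recursive) version. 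This is the content of Remark~\ref{rem:cara-dynamic}.

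\textbf{Matching the AS quotes.} With $L\equiv0$ we have $\WL_T=X_T+q_TS_T$. This is exactly the terminal wealth in the AS objective $\sup_\pi\E[-e^{-\gamma(X_T+q_TS_T)}]$.
\begin{enumerate}
\item I would set up the control problem with value function $V(t,x,s,q)$ and write its HJB equation. The diffusion term is $\tfrac12\sigma^2V_{ss}$. The ask jump term is $\sup_{\delta^a}Ae^{-\kappa\delta^a}\bigl[V(t,x+s+\delta^a,s,q-1)-V\bigr]$, and the bid term is symmetric. The terminal condition is $V(T,\cdot)=-e^{-\gamma(x+qs)}$.
\item I would then make the CARA ansatz $V=-e^{-\gamma x}e^{-\gamma\theta(t,s,q)}$. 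This reduces the problem to the reservation-price and spread equations of \citet{AvellanedaStoikov2008}.
\item The first-order conditions give the same optimal quotes as in their paper: their reservation price $r=s-q\gamma\sigma^2(T-t)$ (under their approximation), and their spread formula $\delta^a+\delta^b=\gamma\sigma^2(T-t)+\tfrac2\gamma\log(1+\gamma/\kappa)$.
\end{enumerate}
Because the two objectives are identical as functionals, their optimizers coincide, whatever approximation is then applied.

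\textbf{Main obstacle.} The difficulty is rigor rather than computation. The AS model uses an unbounded Brownian price, while the paper works with the stopped price $\bar S$ and bounded admissible strategies. I would argue that the optimizer of the stopped, bounded problem coincides with the AS quotes on the event $\{\tau_M>T\}$. Alternatively, I would read the claim as an identity of objective functionals, so that optimal strategies agree wherever both problems are well posed. I would state this explicitly as the sense of ``coincides,'' and rely on a verification theorem for the HJB equation rather than re-deriving the AS asymptotics.
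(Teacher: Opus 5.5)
Your argument is the paper's. The paper's proof notes that \eqref{eq:AS-entropic} is the constant-volatility case of \eqref{eq:entropic}, uses Remark~\ref{rem:cara-dynamic} for the CARA identification, and defers the match of quotes to the standard calculation in \citet{AvellanedaStoikov2008} and \citet[Section 10.3]{CarteaJaimungalPenalva2015}. It explicitly reads the corollary as a statement about which functional yields the AS strategy as an optimum, not as a new optimization result. Your HJB sketch is that standard calculation. The substantive point is the one you make: with $L\equiv 0$ the objective is literally $\E[-e^{-\gamma(X_T+q_TS_T)}]$.

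Drop the first of your two resolutions of the ``main obstacle'', because it would fail.
\begin{itemize}
\item \emph{The quotes do not coincide.} With the stopped price $\bar S$, the frozen-price regime after $\tau_M$ imposes boundary data at $|s|=M$: the no-diffusion solution, which differs from the AS $\psi(t,q)$ whenever $q\neq 0$ and $t<T$. The factorization $V=-e^{-\gamma(x+qs)}\psi$ therefore yields $\psi=\psi(t,q,s)$. The constrained optimal quotes then differ from the AS quotes even on $\{\tau_M>T\}$; they are only close for large $M$.
\item \emph{There is no admissible strategy to compare.} With intensities $Ae^{-\kappa\delta}$ and bounded quotes, the intensities are bounded below by $Ae^{-\kappa\bar\delta}>0$ on both sides. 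Fill counts and inventory therefore have unbounded support. Conditions~(2) and~(4) of Definition~\ref{def:admissible} then fail for every strategy with bounded quotes.
\end{itemize}

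Your second reading is the one that works: identity of the objective functionals, with optimizers compared in the AS model itself. It is exactly the sense in which the paper asserts ``coincides''.

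You should also state explicitly, as Proposition~\ref{prop:forced-phi} does, that the cash-injection component is fixed ($C\equiv 0$, or any deterministic $C$). If $C$ is free, Axiom~\ref{ax:J1} gives $\sup_{\pi}J_0(\pi)=+\infty$, and ``the optimal quoting strategy'' is undefined.
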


\begin{proof}
The functional~\eqref{eq:AS-entropic} is the constant-volatility specialization of~\eqref{eq:entropic}.
Equivalence with the AS strategy under their further specializations is by standard calculation; see \citet{AvellanedaStoikov2008} or \citet[Section 10.3]{CarteaJaimungalPenalva2015}.
The forcing in Corollary~\ref{cor:as-uniqueness} is therefore not a new optimization result but a statement about which preference functional an analyst must use to obtain the AS strategy as an optimum.
\end{proof}

The CARA utility in the AS paper is introduced with two motivations, namely tractability and the wealth-independence of the optimal $\delta^a, \delta^b$. 
Corollary~\ref{cor:as-uniqueness} changes the picture entirely.
 Given Axioms~\ref{ax:J1}--\ref{ax:J5}, CARA is not a modeling choice; it is in fact a theorem.
The wealth-independence property, often cited as a reason to use CARA, 
is a direct consequence of cash-additivity (J1) and does not require the CARA commitment, 
as shown in Corollary~\ref{cor:wealth-indep} below.

The mathematical content of AS is therefore not expected-utility maximization but dynamic-risk-measure minimization on liquidation-adjusted terminal wealth; the CARA-expected-utility framing is an artifact of the certainty-equivalent identity (Remark~\ref{rem:cara-dynamic}) that holds for CARA and for no other utility.
The standard AS calculation could be presented, without any change in conclusions, as the explicit computation of the optimal strategy for a dynamic entropic risk measure, with no reference to expected utility at all.

A related consequence is that the standard portfolio-theory   utility hierarchy is incompatible with our framework: 
CRRA, HARA, and the rest of the Merton menagerie all fail one or more of Axioms~\ref{ax:J1}--\ref{ax:J5}, with cash-additivity (J1) the most direct failure
 (under power utility, the marginal utility of a sure dollar depends on current wealth, 
 contradicting $J_t(\pi + c) = J_t(\pi) + c$ for $\F_t$-measurable bounded $c$).
Market making and portfolio choice, which share much of their mathematical apparatus, 
require structurally different preference primitives; 
the portfolio-theory tradition treats CARA as a special case, 
while ours forces CARA as the only case.

\subsection{CJ as Forced Second-Order Approximation}
\label{subsec:cj}

We now turn to the Cartea--Jaimungal framework, which models the market maker as maximizing
\begin{equation}
    \E\!\left[\WL_T(\pi)\right] - \phi \, \E\!\left[\int_0^T q_s^2 \, ds\right]
    \label{eq:CJ-objective}
\end{equation}
over admissible strategies, with constant $\phi > 0$ a running inventory penalty.
The terminal liquidation term, when present, is absorbed into our $L$,  as discussed in Section~\ref{sec:setup}.
In the practitioner literature, \eqref{eq:CJ-objective} is sometimes interpreted as a  second-order expansion of an entropic functional, with the running quadratic term serving as a conditional-variance approximation to the inventory contribution; we make this interpretation precise below.

\begin{corollary}[CJ as Forced Second-Order Approximation]
\label{cor:cj-inconsistency}
Under the same hypotheses as  Theorem~\ref{thm:forced-uniqueness}, the Cartea--Jaimungal objective~\eqref{eq:CJ-objective} is incompatible with the axiom system as a primitive functional.
It is, however, the second-order expansion of   the entropic functional~\eqref{eq:entropic} around zero inventory in the constant-volatility setting, provided that the running coefficient takes the value
\begin{equation}
    \phi  =  \frac{\gamma \sigma^2}{2}.
    \label{eq:phi-forced}
\end{equation}
Any other value of $\phi$ produces a CJ objective that is not the expansion   of any functional satisfying Axioms~\ref{ax:J1}--\ref{ax:J5}.
\end{corollary}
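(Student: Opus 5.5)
The corollary makes three claims, and I would prove them in order: (i) the CJ objective~\eqref{eq:CJ-objective} is incompatible with the axioms as a primitive functional; (ii) it is the second-order expansion of~\eqref{eq:entropic} when $\phi=\gamma\sigma^2/2$; (iii) no other $\phi$ gives such an expansion.

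\emph{Step (i): incompatibility.} By Theorem~\ref{thm:forced-uniqueness}, every axiom-consistent $J_0$ is entropic with some $\gamma>0$. It therefore suffices to show that $J^{CJ}_0(\pi)\coloneqq\E[\WL_T(\pi)]-\phi\,\E[\int_0^T q_s^2\,ds]$ fails one of the axioms. I would use the wealth-summary property (W) of Proposition~\ref{prop:wealth-summary}. Take two strategies with the same $\WL_T$ but different inventory paths: start from $\pi$ with nonzero inventory and use Lemma~\ref{lem:density} to produce $\pi'$ with identical $\WL_T$ and zero quoting, hence $q\equiv 0$. Then $J^{CJ}_0(\pi)\neq J^{CJ}_0(\pi')$ as soon as $\E\int q_s^2\,ds>0$, contradicting (W). For good measure I would also note that $J^{CJ}_0$ is affine in $\WL_T$ on inventory-free strategies, so it violates the strict concavity clause (J3b).

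\emph{Step (ii): the expansion.} Here I work in the constant-volatility setting with drift zero. Write the terminal wealth as $\WL_T = W_0+\int_0^T(\text{spread P\&L})\,dt+\int_0^T q_s\,dS_s - L(q_T)$, where the spread term is the compensated fill gains. Apply the entropic functional and expand its cumulant to second order in inventory magnitude, $-\tfrac1\gamma\log\E e^{-\gamma Y}=\E Y-\tfrac\gamma2\operatorname{Var}Y+O(\|Y-\E Y\|^3)$. The variance of the martingale part equals $\sigma^2\E\int_0^T q_s^2\,ds$ by the It\^o isometry. Cross terms with the fill-noise part are higher order in the small-inventory scaling, so the leading penalty is $\tfrac{\gamma\sigma^2}{2}\E\int_0^T q_s^2\,ds$. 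The $L(q_T)$ term is simply carried along inside $\E[\WL_T]$. A cleaner and more rigorous route is to cite the HJB computation of Proposition~\ref{prop:forced-phi}. With the CARA ansatz $V=-e^{-\gamma(X+qS+\theta(t,q))}$, the $\partial_{SS}$ term produces exactly $-\tfrac{\gamma\sigma^2}{2}q^2$ in the equation for $\theta$. This matches the CJ HJB, whose running cost is $-\phi q^2$, if and only if $\phi=\gamma\sigma^2/2$, and it gives agreement to second order in $q$.

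\emph{Step (iii): uniqueness of $\phi$.} Suppose some $\phi$ gives a CJ objective that is the expansion of an axiom-consistent functional. By the theorem that functional is entropic for a unique $\gamma$, and by Step (ii) its quadratic inventory coefficient is $\gamma\sigma^2/2$. Matching coefficients of $\E\int q_s^2\,ds$ forces $\phi=\gamma\sigma^2/2$. To ensure the coefficient is genuinely identified, I would exhibit strategies with deterministic small inventory paths, so that $\E\int q^2$ can be varied independently of $\E[\WL_T]$.

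\emph{Main obstacle.} The hard step is making ``second-order expansion in inventory magnitude'' precise, since $q$ takes integer values and the fills contribute jump noise. I would handle this with a scaling $q\mapsto\epsilon q$ (equivalently, lot size $\epsilon$) and expand in $\epsilon$. I would first try the HJB matching of Proposition~\ref{prop:forced-phi}, which avoids bounding the remainder directly.
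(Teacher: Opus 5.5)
Your plan follows the paper's proof. The primary failure is the violation of (W) by the path-dependent term $\phi\,\E\int_0^T q_s^2\,ds$, with (J3b) as a secondary failure. The coefficient is read off from the diffusion term $\tfrac{1}{2}\sigma^2\partial_{SS}V$ in the $\theta$-equation of Proposition~\ref{prop:forced-phi}, and coefficient matching gives the ``any other $\phi$'' clause. Two steps need repair, however.

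In Step (i), the witness $\pi'$ with $q\equiv 0$ is not available. Lemma~\ref{lem:density} only adds cash and cannot switch quoting off. Admissible quotes are bounded, and with intensities such as $Ae^{-\kappa\delta}>0$ fills occur at a positive rate, so nothing in the setup supplies an inventory-free strategy. Use the construction of Corollary~\ref{cor:forced-alpha} instead: take $\pi_a,\pi_b$ whose inventory laws give different values of $\E\int_0^T q_s^2\,ds$, and inject $\WL_T(\pi_a)-\WL_T(\pi_b)$ into $\pi_b$ at time $T$. This equalizes $\WL_T$ without touching $q$. The same device repairs two other points:
\begin{itemize}
\item your (J3b) remark, since cash-injected copies of one strategy share its inventory path, so along them the CJ objective is affine in $\WL_T$ and a non-constant injection gives equality in (J3b);
\item your identification argument in Step (iii), where no deterministic inventory paths are needed, since cash injections move $\E[\WL_T]$ while $\E\int_0^T q_s^2\,ds$ stays fixed.
\end{itemize}

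In Step (ii), do not claim more than the coefficient identity. Under the lot-size scaling you propose, the spread P\&L $\epsilon\int_0^T\delta\,dN$ has variance of order $\epsilon^2$. That is the same order as $\sigma^2\E\int_0^T q_s^2\,ds$, so fill noise is not higher order. The second-order cumulant expansion therefore also contains $-\tfrac{\gamma}{2}$ times that variance, a term the CJ objective does not have.

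Likewise, the two HJB equations do not agree to second order in $q$. With $x=\delta+\theta(t,q\mp 1)-\theta(t,q)$, the entropic impulse term is $-\tfrac{1}{\gamma}\lambda(\delta)\bigl(e^{-\gamma x}-1\bigr)=\lambda(\delta)\bigl(x-\tfrac{\gamma}{2}x^2+\cdots\bigr)$. This differs from the linear CJ term $\lambda(\delta)x$ for every value of $\phi$, already at $q=0$.

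What Proposition~\ref{prop:forced-phi} delivers, and all the paper's proof actually uses, is that the running $q^2$-coefficient in the $\theta$-equation is exactly $\gamma\sigma^2/2$. Phrase Step (ii) as that identification. Do not attempt the $\epsilon$-rigorization: it would expose the extra term rather than confirm the expansion.
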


\begin{proof}
For the first statement, observe that the functional~\eqref{eq:CJ-objective} depends on the strategy $\pi$ through both $\WL_T(\pi)$ (the terminal-wealth term) and the integral $\int_0^T q_s^2(\pi) \, ds$ (the running-penalty term).
The \emph{primary} failure is path-dependence: the running-penalty term is a path functional of the inventory process $q$, which can take different values for two strategies that produce the same distribution of $\WL_T$, so~\eqref{eq:CJ-objective} violates property (W) (Proposition~\ref{prop:wealth-summary}).
Even if one ignores the running term, the linear expected-wealth component $\E[\WL_T]$ is the risk-neutral limit excluded by the strict-concavity clause~(J3b), constituting an independent \emph{secondary} failure.
(To be precise: because the full CJ objective violates (W) via its path-dependent running term, it is not a functional on $\WL_T$ at all, so Axiom~\ref{ax:J3} does not directly apply to it; the secondary failure is a statement about the terminal-wealth component $\E[\WL_T]$ considered in isolation.)
Either failure alone is sufficient to conclude that the CJ objective violates the axiom system.

For the second statement, we begin  by computing  
the second-order expansion of the entropic functional~\eqref{eq:AS-entropic} around the no-inventory benchmark in the constant-volatility setting.
The full calculation is given in Section~\ref{sec:supporting} as the proof of   Proposition~\ref{prop:forced-phi}; 
the headline result is that the certainty-equivalent contribution of holding inventory $q$ at time $s$, accumulated over an infinitesimal interval $ds$, equals $-\frac{\gamma \sigma^2}{2} q^2 \, ds$, with the coefficient $\gamma\sigma^2/2$ arising \emph{exactly} from the  HJB diffusion term (no approximation in the coefficient itself; 
the ``second-order'' qualifier refers to the inventory expansion of the full functional, not to the coefficient).
This is exactly the running-penalty contribution in~\eqref{eq:CJ-objective} with coefficient $\phi = \gamma\sigma^2/2$, which establishes both directions of the second statement.
See Remark~\ref{rem:two-senses} below for the distinction made precise.
\end{proof}

A working market-making desk that uses the CJ framework as a primitive objective is, on our axiomatic terms, using a functional whose path-dependence (through the inventory penalty $\int q^2 \, ds$) violates the wealth-summary requirement (W) (Proposition~\ref{prop:wealth-summary}).
If, however, the desk wishes to interpret CJ as a tractable second-order approximation to entropic preferences, it is welcome to do so, but it must use $\phi = \gamma\sigma^2/2$.
Any other value of $\phi$ corresponds to a CJ objective that cannot be obtained as an expansion of any entropic functional, and is therefore not consistent with the CARA preferences which the second-order interpretation is supposed to encode.
At $\phi = \gamma\sigma^2/2$, the CJ-formula strategy and the AS-optimal strategy  produce the same reservation price and half-spread formulas, so they coincide  exactly in the constant-volatility benchmark.

\begin{remark}[Relation to the AS Value Function]
\label{rem:AS-value-function}
The forced coefficient~\eqref{eq:phi-forced} is implicit, though not stated, in the AS value function.
\citet{AvellanedaStoikov2008} derive, for a market maker holding inventory $q$ at time $t$ with cash $X$ and constant volatility $\sigma$, 
the formula 
\begin{equation*}
    V(t, X, q, S)  =  -\exp \! \left( -\gamma  \bigl(X + q S - \tfrac{1}{2}\gamma q^2 \sigma^2 (T-t)\bigr)\right).
\end{equation*}
The certainty-equivalent reduction $-\tfrac{1}{2}\gamma q^2 \sigma^2 (T-t)$ is the integrated running cost at rate $({\gamma\sigma^2}/{2}) \, q^2$, which is exactly equal to  $\phi q^2$ with the value~\eqref{eq:phi-forced}.
The algebraic identity has indeed been visible in the AS value function since 2008; what does not appear to have been articulated previously is its interpretation as a \emph{forced consistency condition} on the CJ-style $(\phi, \gamma)$ parametrization,   derived from preference-functional axioms rather than from a specific HJB calculation.
\end{remark}

\subsection{Forced Terminal-Penalty Coefficient}
\label{subsec:forced-alpha}

The CJ tradition also features a terminal-inventory penalty $\alpha q_T^2$, usually treated as an independent hyperparameter alongside $\phi$.
A parallel forced-coefficient logic, analogous to that of Corollary~\ref{cor:cj-inconsistency}, extends to $\alpha$.

\begin{corollary}[Forced Terminal-Penalty Coefficient]
\label{cor:forced-alpha}
Under the  hypotheses  of 
{Theorem~\ref{thm:forced-uniqueness}}, 
suppose in addition that the liquidation cost function admits a second-order 
  Taylor expansion at zero inventory: 
  $L(0) = 0$, $L'(0) = 0$, and $L(q) = \tfrac{1}{2} L''(0) \, q^2 + o(q^2)$ as $q \to 0$.
(The condition $L'(0) = 0$ is the natural requirement that the liquidation cost be minimized at zero inventory.)
Then the CJ terminal-penalty coefficient $\alpha$ is forced to take the value
 \begin{equation*}
    \alpha  =  \tfrac{1}{2}\, L''(0).
\end{equation*}
\end{corollary}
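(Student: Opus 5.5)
The plan is to run the same expansion logic as in Corollary~\ref{cor:cj-inconsistency}, but now for the terminal term instead of the running term. By Theorem~\ref{thm:forced-uniqueness}, $J_t(\pi)$ is the entropic certainty-equivalent of $\WL_T(\pi) = X_T + q_T S_T - L(q_T)$. The CJ objective's terminal component is $X_T + q_T S_T - \alpha q_T^2$. The claim will therefore follow once the entropic functional's dependence on terminal inventory is expanded to second order around $q_T = 0$ and the coefficient of the quadratic terminal term is matched against $\alpha$.

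The argument has three steps.

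\emph{Step 1.} Substitute the Taylor hypothesis into the liquidation term. From $L(0)=0$ and $L'(0)=0$,
\[
L(q_T) = \tfrac{1}{2}L''(0)\,q_T^2 + o(q_T^2),
\]
and hence
\[
\WL_T(\pi) = X_T + q_T S_T - \tfrac{1}{2}L''(0)\,q_T^2 + r(q_T), \qquad r(q) = o(q^2).
\]

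\emph{Step 2.} Evaluate the entropic functional at $t=T$. Since $J_T(\pi) = \WL_T(\pi)$, as observed in the proof of Proposition~\ref{prop:monotonicity}, the terminal condition of the entropic value function is exactly $\WL_T$. In the certainty-equivalent form $J_t = (X_t+q_tS_t)+\theta(t,q_t)$ of Remark~\ref{rem:cara-dynamic}, this gives the boundary condition $\theta(T,q) = -L(q)$. The boundary condition carries no dependence on $\gamma$, because at $t=T$ there is no residual risk to certainty-equivalize.

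\emph{Step 3.} Match to CJ. The CJ terminal condition is $\theta^{CJ}(T,q) = -\alpha q^2$. Identifying the second-order Taylor coefficients of the two boundary conditions at $q=0$, with the zeroth- and first-order terms both vanishing because $L(0)=0$ and $L'(0)=0$, forces $\alpha = \tfrac{1}{2}L''(0)$. For uniqueness, argue as in the last clause of Corollary~\ref{cor:cj-inconsistency}. Any other $\alpha$ gives a terminal condition whose quadratic coefficient differs from that of every entropic functional on this $\WL_T$. Since $L$ is a market input and not a choice (Section~\ref{subsec:inputs-outputs}), and $\gamma$ does not enter at $t=T$, no choice of $\gamma$ can absorb the discrepancy.

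The main obstacle is making precise in what sense $\alpha$ is ``the'' coefficient. Unlike $\phi$, which comes from the HJB diffusion term, $\alpha$ appears only through the terminal boundary condition. One must argue that second-order matching of boundary data at $q=0$ is the right notion of expansion, and check that the $o(q_T^2)$ remainder does not contribute at second order. The remainder control should follow from boundedness of inventory under Definition~\ref{def:admissible}, together with monotonicity (M), which converts a pointwise bound on $r(q_T)$ into a bound on the change in $J_t$. Concretely, for $|q|\le\varepsilon$ the bound $|r(q)|\le \eta(\varepsilon)\,q^2$ with $\eta(\varepsilon)\to 0$ shifts $J_t$ by at most $\eta(\varepsilon)\,\|q_T\|_\infty^2$, which is negligible relative to the quadratic term in the small-inventory regime.
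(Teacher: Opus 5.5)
Your argument is correct, and its core step is the same as the paper's: both identify the terminal-inventory part of the objective with $-L(q_T)$ and read off $\alpha$ as the $q_T^2$-coefficient of its Taylor expansion at zero.

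Where you differ is in how rigidity is secured.
\begin{itemize}
\item \emph{The paper's route.} It argues through the wealth-summary property (W). A functional using $-\alpha q_T^2$ with $\alpha \neq \tfrac12 L''(0)$ depends on something other than $\WL_T$. The paper exhibits the violation concretely with a pair $\pi_a$, $\pi_b^{[\Delta W]}$ that has equal $\WL_T$ but different $q_T$, built by a terminal cash injection (Lemma~\ref{lem:density}).
\item \emph{Your route.} You evaluate at $t=T$, where $J_T=\WL_T$ holds exactly. You then observe that this terminal datum contains no $\gamma$. So no choice of $\gamma$ can absorb a wrong $\alpha$, in contrast with $\phi$, which is tied to $\gamma$.
\end{itemize}

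What each buys:
\begin{itemize}
\item Your route is more economical and is calibrated precisely to the second-order sense of ``forced.''
\item The paper's (W)-argument, taken at face value, pins down the whole terminal penalty as $L$ itself. Any other penalty makes the functional depend on more than $\WL_T$, including $\tfrac12L''(0)q_T^2$ when $L$ is not exactly quadratic. The paper then extracts $\alpha$ from this by Taylor expansion.
\item In exchange, the paper shows that a wrong $\alpha$ is a structural inconsistency visible at every $t$, not merely a mismatch of boundary data.
\end{itemize}

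Two small remarks.
\begin{itemize}
\item You do not need Theorem~\ref{thm:forced-uniqueness} or the $\theta$-representation of Remark~\ref{rem:cara-dynamic}, which is only derived in the constant-volatility Markov benchmark. The identity $J_T=\WL_T$ follows from (J1) and (J2) alone. This matches the paper's point that $\alpha$ is forced for any functional satisfying (W), entropic or not.
\item Your remainder estimate is sound: monotonicity plus cash-additivity make $J_t$ $1$-Lipschitz in sup-norm. But it requires $\|q_T\|_\infty$ to be small, not merely bounded as Definition~\ref{def:admissible} provides. With unit lot sizes that regime is only meaningful once $q$ is treated as a continuous variable. The paper's ``near $q_T=0$'' relies on the same formality.
\end{itemize}
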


\begin{proof}
By property (W) (Proposition~\ref{prop:wealth-summary}), 
we have that $J_t(\pi)$ depends on $\pi$ only through $\WL_T(\pi) = X_T + q_T S_T - L(q_T)$.
The terminal-inventory contribution to $\WL_T$ is the term $-L(q_T)$.
By the assumed Taylor expansion of $L$ at zero, and using the hypothesis $L'(0) = 0$ to eliminate the linear term, the leading nonzero contribution near $q_T = 0$ is equal to 
 $-L(q_T) = -\tfrac{1}{2}L''(0)\,q_T^2 + o(q_T^2)$.
This is precisely the $-\alpha q_T^2$ contribution that the CJ objective adds, with $\alpha = \tfrac{1}{2}L''(0)$.

 For the rigidity, consider a market maker who specifies $-\alpha q_T^2$ as a terminal penalty with $\alpha \neq \tfrac{1}{2}L''(0)$, taking $\alpha$ as a free preference parameter.
This produces a quadratic-order dependence on $q_T$ that differs from the quadratic-order contribution $-\tfrac{1}{2}L''(0)q_T^2$ of the true $-L(q_T)$.
Since Proposition~\ref{prop:wealth-summary} fixes the terminal-wealth variable as $\WL_T = X_T + q_T S_T - L(q_T)$ with $L$ given as market data, any quadratic terminal-inventory coefficient other than $\tfrac{1}{2}L''(0)$ represents a preference functional depending on something other than $\WL_T$, which violates property (W) (Proposition~\ref{prop:wealth-summary}).
Concretely, the two functionals disagree on any pair of strategies $\pi, \pi'$ with $\WL_T(\pi) = \WL_T(\pi')$ but $q_T^2(\pi) \neq q_T^2(\pi')$, and such pairs exist whenever $q_T$ is not almost surely zero (the generic case).
The construction is the following.
Fix any two strategies $\pi_a, \pi_b \in \Pi$ with $q_T(\pi_a) \neq q_T(\pi_b)$ on a positive-probability event. 
Such a pair can be constructed by varying the quote-distance processes asymmetrically between bid and ask.
Let $\Delta W \coloneqq \WL_T(\pi_a) - \WL_T(\pi_b) \in \Linf(\F_T)$. 
Note that this is a bounded $\F_T$-measurable random variable.
By Lemma~\ref{lem:density}, the cash-shift construction of Section~\ref{subsec:strategy} produces a strategy  $\pi_b^{[\Delta W]} \in \Pi$ obtained from $\pi_b$ by injecting the bounded $\F_T$-measurable cash amount $\Delta W$ at time $T$, so that $\WL_T(\pi_b^{[\Delta W]}) = \WL_T (\pi_b) + \Delta W = \WL_T(\pi_a)$ almost surely while $q_T(\pi_b^{[\Delta W]}) = q_T(\pi_b)$ (the cash injection does not affect inventory).
Setting $\pi \coloneqq \pi_a$ and $\pi' \coloneqq \pi_b^{[\Delta W]}$ yields the desired pair: 
 $\WL_T(\pi) = \WL_T(\pi')$ almost surely, but $q_T(\pi) = q_T(\pi_a) \neq q_T(\pi_b) = q_T(\pi')$ on a positive-probability event, so $q_T^2(\pi) \neq q_T^2(\pi')$ there.
Feasibility of $\pi_b^{[\Delta W]}$ within $\Pi$ is by the admissibility verification of  Lemma~\ref{lem:density} (the cash-injection construction preserves admissibility).
\end{proof}

The corollary makes precise an observation that is implicit but not stated in the AS--CJ literature: 
$\alpha$ is not a preference parameter at all but a property of the \emph{market}
 (specifically, the curvature of the cost-to-liquidate function at zero inventory).
A market with deeper liquidity has a smaller  $L''(0)$ and hence a smaller $\alpha$; 
in the quadratic-impact benchmark $L(q) = \tfrac{\kappa}{2}q^2$, one finds $\alpha = \tfrac{\kappa}{2}$.

The forcing of $\alpha$ is different in kind from the forcing of $\phi$.
The forcing $\phi = \gamma\sigma^2/2$ is a consequence of the entropic structure of the preference functional, arising from the second-order Taylor expansion of $-\tfrac{1}{\gamma}\log\E[e^{-\gamma\WL_T}\mid\F_t]$.
The forcing $\alpha = \tfrac{1}{2}L''(0)$ is a consequence of the market structure: 
  it is the Taylor coefficient of $L$ at zero, and would hold under any preference functional satisfying Proposition~\ref{prop:wealth-summary}.
Combining Corollaries~\ref{cor:cj-inconsistency} and~\ref{cor:forced-alpha}, the full CJ-style parametrization $(\phi, \alpha, \gamma)$ collapses to a single free preference parameter $\gamma$, with $\phi$ a function of $\gamma$ and market data and $\alpha$ purely market data.
This is the sharp form of the single-parameter pinning of Corollary~\ref{cor:single-parameter}.

\subsection{Calibration Inversion: Recovering \texorpdfstring{$\gamma$}{gamma} from \texorpdfstring{$\phi$}{phi}}
\label{subsec:calibration-inversion}

 Corollary~\ref{cor:cj-inconsistency} expresses the forced relation $\phi = \gamma\sigma^2/2$ as a constraint surface in $(\phi, \gamma, \sigma)$-space.
The constraint can be read in either direction, and the reverse direction is operationally important.

\begin{corollary}[Calibration Inversion]
\label{cor:calibration-inversion}
Under the same hypotheses as  Theorem~\ref{thm:forced-uniqueness} and in the constant-volatility setting,   a desk that has calibrated the CJ running-penalty coefficient $\phi$ from inventory-path and P\&L data implicitly determines the risk-aversion parameter $\gamma$ via
\begin{equation}
    \gamma  =  \frac{2\phi}{\sigma^2}.
    \label{eq:calibration-inversion}
\end{equation}
In the stochastic-volatility setting, the same logic applies pointwise in time: if a desk has calibrated a time-varying  local penalty rate $\phi_t$ from the data, then by Corollary~\ref{cor:clock-invariance} (clock-invariance) the inversion $\gamma = 2\phi_t/\sigma_t^2$ must yield the same scalar $\gamma$ at every time $t$, with $\phi_t = \gamma\sigma_t^2/2$ the clock-coherent local rate of Proposition~\ref{prop:clock-corrected}.
\end{corollary}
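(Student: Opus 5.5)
The calibration inversion is essentially the algebraic inversion of the forced relation of Corollary~\ref{cor:cj-inconsistency}, so I would prove it in two short stages. The first stage covers the constant-volatility case. Under the hypotheses of Theorem~\ref{thm:forced-uniqueness}, the preference functional is entropic with a unique $\gamma\in(0,\infty)$. Suppose the desk's calibrated CJ objective~\eqref{eq:CJ-objective} is to be read as the second-order expansion of an axiom-consistent functional. Corollary~\ref{cor:cj-inconsistency} then forces $\phi=\gamma\sigma^2/2$, and any other $\phi$ is not the expansion of any functional satisfying (J1)--(J5). Since $\sigma>0$, the map $\gamma\mapsto\gamma\sigma^2/2$ is a strictly increasing bijection of $(0,\infty)$ onto itself. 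Hence $\phi$ determines $\gamma$ uniquely, and solving gives~\eqref{eq:calibration-inversion}. I would also note that $\phi>0$ is needed for the recovered $\gamma$ to lie in $(0,\infty)$. This matches the exclusion of the risk-neutral boundary by (J3b): a calibrated $\phi=0$ would signal a violated axiom rather than a valid $\gamma$.

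The second stage covers the stochastic-volatility case, and here I would work pointwise in time. The plan is to invoke the clock-corrected forced rate of Proposition~\ref{prop:clock-corrected}, which I assume gives the local running penalty as $\phi_t=\gamma\sigma_t^2/2$ with the $\gamma$ of Theorem~\ref{thm:forced-uniqueness}. Because $\sigma_t$ is a positive predictable process, the quotient $2\phi_t/\sigma_t^2$ is well defined for every $t$ and every $\omega$. By the pointwise identity, this quotient equals $\gamma$. Corollary~\ref{cor:clock-invariance} and Remark~\ref{rem:constant-gamma} guarantee that this $\gamma$ is one deterministic scalar, independent of $t$ and $\omega$. The inversion therefore returns the same number at every time. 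Conversely, if a calibrated $\phi_t$ gives a non-constant ratio $2\phi_t/\sigma_t^2$, it cannot be of the forced form for any single $\gamma$, so it lies outside the axiom class. This is the clock-invariance content of the statement.

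The main obstacle is interpretive rather than computational. The first issue is stating precisely what it means for a desk's calibrated $\phi$ to ``determine'' $\gamma$. I would make this rigorous by defining it as the unique $\gamma$ for which the calibrated CJ objective is the forced second-order expansion in the sense of Corollary~\ref{cor:cj-inconsistency}. The second issue is making sure the stochastic-volatility step does not rely on anything beyond Proposition~\ref{prop:clock-corrected}. In particular, the identification $\phi_t=\gamma\sigma_t^2/2$ must hold $d\Prob\otimes dt$-almost everywhere. The inversion is then valid almost everywhere, which is the appropriate sense of ``at every time $t$'' for predictable processes.
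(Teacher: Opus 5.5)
Your proposal is correct and matches the paper's proof. The paper also argues by inverting the forced relation $\phi=\gamma\sigma^2/2$ of Corollary~\ref{cor:cj-inconsistency} with $\sigma>0$, and in the stochastic-volatility case by reading the clock-corrected rate $\phi_t=\gamma\sigma_t^2/2$ of Proposition~\ref{prop:clock-corrected} pointwise, with the single scalar $\gamma$ of Theorem~\ref{thm:forced-uniqueness}; your additions (bijectivity, the requirement $\phi>0$, and the $d\Prob\otimes dt$-a.e.\ reading of ``at every time'') are harmless tightenings of what the paper calls immediate.
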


\begin{proof}
Immediate from the forced relation~\eqref{eq:phi-forced}.
The stochastic-volatility version follows from Proposition~\ref{prop:clock-corrected}, which establishes that the running penalty rate must be $\gamma\sigma_t^2/2$ with a single $\gamma$ across $t$; the inversion is the same algebraic relation read pointwise.
\end{proof}
 
 The corollary has two practical readings.

The first is methodological.
A working market-making desk typically has two independent sources of preference-parameter information.
From the quoting side, $\gamma$ can be estimated from the shape and time-variation of optimal quote distances, since the AS reservation-price and optimal-spread formulas depend on $\gamma$ explicitly 
  (see, e.g., \citet[Section 10.3]{CarteaJaimungalPenalva2015}).
From the inventory side, $\phi$ can be calibrated from the historical relationship between inventory paths and realized P\&L, via standard CJ-style HJB calibrations.
Corollary~\ref{cor:calibration-inversion} says these two estimators target the \emph{same scalar}, 
up to the deterministic rescaling $2/\sigma^2$. 
Computing both is a sanity check on whether the desk is operating on the constraint surface.
If the quoting-side $\gamma$ and the inventory-side $2\phi/\sigma^2$ disagree by more than estimation error, the desk's quoting behavior and inventory management are inconsistent with any single dynamically-consistent preference functional.
A consequence of this is that a desk that has been calibrating $\phi$ for years in the CJ tradition has
   been operating with an implicit $\gamma$ all along, whether or not its risk officers were aware of it.

The second reading concerns cross-regime stability.
Equation~\eqref{eq:calibration-inversion} implies that the implied $\gamma$ should 
 be stable across volatility regimes, since it is a property of the agent's preferences and not of the realized volatility path (Corollary~\ref{cor:clock-invariance}).
A desk's calibrated $\phi$ should scale with $\sigma^2$ across regimes; 
 if it is approximately constant (the wall-clock implementation), the implied $\gamma$ varies as $\sigma^{-2}$, which is a structural misspecification and the failure mode identified by Corollary~\ref{cor:clock-invariance}.

\subsection{Robust-Optimization Interpretation}
\label{subsec:robust}

The entropic functional admits a classical dual representation as a robust  optimization over alternative probability measures with a relative-entropy penalty \citep{FollmerSchied2016, DelbaenPengRosazzaGianin2010}.
In our market-making context, the dual has substantive content beyond the static one.

\begin{corollary}[Robust-Optimization Interpretation]
\label{cor:robust}
Under the hypotheses of  Theorem~\ref{thm:forced-uniqueness}, the forced preference functional admits the dual representation
\begin{equation}
    J_t(\pi)  =  \operatorname*{ess\,inf}_{\Q \ll \Prob} \left\{ \E_\Q\bigl[\WL_T(\pi) \,\big|\, \F_t\bigr] + \frac{1}{\gamma}\, H_t(\Q  \re  \Prob) \right\},
    \label{eq:robust-dual}
\end{equation}
where the essential infimum runs over probability measures $\Q$ absolutely continuous with respect to $\Prob$, and
\begin{equation*}
    H_t(\Q   \re  \Prob)  \coloneqq  \E_\Q\!\left[\log \tfrac{d\Q}{d\Prob} \,\Big|\, \F_t\right]
\end{equation*}
is the conditional relative entropy of $\Q$ with respect to $\Prob$ given $\F_t$.
The unique minimizing measure $\Q^*$ in~\eqref{eq:robust-dual} has $\F_T$-measurable density (with $\F_t$-measurable normalizing constant)
\begin{equation*}
    \frac{d\Q^*}{d\Prob}  =  \frac{e^{-\gamma\, \WL_T(\pi)}}{\E[e^{-\gamma\, \WL_T(\pi)} \mid \F_t]}.
\end{equation*}
At $t = 0$, the $\sigma$-algebra $\F_0$ is 
trivial and the density reduces to $d\Q^*/d\Prob = e^{-\gamma \WL_T(\pi)}/\E[e^{-\gamma \WL_T(\pi)}]$,  the static entropic-tilted measure.
\end{corollary}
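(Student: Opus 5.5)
By Theorem~\ref{thm:forced-uniqueness}, $J_t(\pi)$ equals the conditional entropic functional of $W \coloneqq \WL_T(\pi) \in \Linf(\F_T)$. The corollary is then a conditional Gibbs (Donsker--Varadhan) variational identity, proved by completing the relative entropy against the tilted measure. Fix $t$ and set $Z \coloneqq \E[e^{-\gamma W}\mid\F_t]$. Because $W$ is bounded, $Z$ lies in $[e^{-\gamma\|W\|_\infty}, e^{\gamma\|W\|_\infty}]$ almost surely. Define $\Q^*$ by $d\Q^*/d\Prob \coloneqq e^{-\gamma W}/Z$. This density is strictly positive and bounded, with $\E[d\Q^*/d\Prob \mid \F_t] = 1$, so $\Q^*$ is a probability measure equivalent to $\Prob$ that agrees with $\Prob$ on $\F_t$.

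The first step is to show that $\Q^*$ attains the value $J_t(\pi)$. Since $\log(d\Q^*/d\Prob) = -\gamma W - \log Z$ and $Z$ is $\F_t$-measurable,
\begin{equation*}
\E_{\Q^*}[W\mid\F_t] + \tfrac{1}{\gamma} H_t(\Q^*\re\Prob) = \E_{\Q^*}[W\mid\F_t] - \E_{\Q^*}[W\mid\F_t] - \tfrac{1}{\gamma}\log Z = J_t(\pi).
\end{equation*}

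The second step is the lower bound for an arbitrary $\Q \ll \Prob$. Let $D \coloneqq d\Q/d\Prob$ and restrict to $\Q$ with $H_t(\Q\re\Prob) < \infty$; otherwise the bracket is $+\infty$ and the bound is trivial. Since $\Q \ll \Q^*$, the chain rule gives $\log D = \log(d\Q/d\Q^*) - \gamma W - \log Z$. Taking $\E_\Q[\,\cdot\mid\F_t]$ and dividing by $\gamma$ yields
\begin{equation*}
\E_\Q[W\mid\F_t] + \tfrac1\gamma H_t(\Q\re\Prob) = J_t(\pi) + \tfrac1\gamma \E_\Q\!\left[\log\tfrac{d\Q}{d\Q^*}\,\Big|\,\F_t\right].
\end{equation*}
The last term is a conditional relative entropy of $\Q$ with respect to $\Q^*$. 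Conditional Jensen for the convex function $x\log x$ shows it is nonnegative, and it vanishes if and only if $d\Q/d\Q^* = 1$ almost surely.

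For this last step I would take care over one point. If $\Q$ does not coincide with $\Prob$ on $\F_t$, the correct conditioning uses the normalized densities $D/\E[D\mid\F_t]$, and the conditional expectations must be read under $\Q$. The cleanest route is to first restrict the infimum to measures with $\Q|_{\F_t} = \Prob|_{\F_t}$, which is the standard convention for conditional dual representations and is satisfied by $\Q^*$. Under that restriction the identity above holds verbatim, and Bayes' formula converts $\E_\Q[\,\cdot\mid\F_t]$ into $\E[D\,\cdot\mid\F_t]$. I expect this measure-theoretic bookkeeping to be the main obstacle, together with justifying integrability of $D\log D$ under only finite conditional entropy. I would handle integrability by truncation: apply the argument with $D\wedge n$ renormalized, then pass to the limit by monotone convergence, since $D\log D \ge -e^{-1}$.

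Combining the two steps, the essential infimum equals $J_t(\pi)$ and is attained. Uniqueness follows because any other minimizer must make the conditional relative entropy with respect to $\Q^*$ vanish, which forces $\Q = \Q^*$. At $t=0$, $\F_0$ is $\Prob$-trivial, so $Z = \E[e^{-\gamma W}]$ and the density reduces to the static entropic tilt, as stated.
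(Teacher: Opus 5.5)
Your argument is correct for the properly formulated statement, and it is essentially the paper's approach. The paper simply invokes the conditional Donsker--Varadhan identity and locates $\Q^*$ by a first-order condition. You instead prove that identity via the exact decomposition
\[
\E_\Q[W\mid\F_t]+\tfrac1\gamma H_t(\Q\re\Prob)=J_t(\pi)+\tfrac1\gamma\,\E_\Q\!\left[\log\tfrac{d\Q}{d\Q^*}\,\Big|\,\F_t\right],
\]
which gives attainment, the lower bound and uniqueness at once. A first-order condition alone only identifies a critical point.

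You should state one point more strongly. The restriction to $\Q|_{\F_t}=\Prob|_{\F_t}$ is not a "cleanest route" or a convention. Restricting the infimum to a subfamily cannot yield a lower bound over all $\Q\ll\Prob$, and with $H_t$ defined through the unnormalized density $d\Q/d\Prob$ that lower bound is false.

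To see this, set $W=\WL_T(\pi)$ and $Z=\E[e^{-\gamma W}\mid\F_t]$, and take any bounded $\F_t$-measurable $\xi>0$ with $\E[\xi]=1$. The measure $d\Q/d\Prob=\xi\,e^{-\gamma W}/Z$ gives the bracket the value $J_t(\pi)+\tfrac1\gamma\log\xi$. Taking $\xi=\varepsilon$ on some $A\in\F_t$ with $0<\Prob(A)<1$ pushes the essential infimum to $-\infty$ on $A$. Correspondingly, your Jensen step $\E_\Q[\log(d\Q/d\Q^*)\mid\F_t]\ge 0$ requires $\E_{\Q^*}[d\Q/d\Q^*\mid\F_t]=1$, which is exactly the condition $\Q|_{\F_t}=\Prob|_{\F_t}$.

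Your alternative fix, normalizing densities to $D/\E[D\mid\F_t]$, rescues the inequality but not uniqueness. Under that convention the same family attains $J_t(\pi)$ for every $\xi$, so the minimizer is not unique.

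The statement should therefore take the infimum over $\{\Q\ll\Prob:\Q|_{\F_t}=\Prob|_{\F_t}\}$. The paper's statement omits this and its citation implicitly assumes it. At $t=0$ the restriction is automatic because $\F_0$ is trivial. Under this reading your proof is complete.

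Your truncation step is unnecessary. Since $x\log x\ge -e^{-1}$, all the conditional expectations involved are well defined as generalized conditional expectations with values in $(-\infty,\infty]$, and the decomposition holds directly.
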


\begin{proof}
The representation is an application of the conditional Donsker--Varadhan variational identity. 
Indeed, we have 
\begin{equation*}
    -\tfrac{1}{\gamma}\log \E\bigl[\exp(-\gamma W) \,\big|\, \F_t\bigr]  =  \operatorname*{ess\,inf}_{\Q \ll \Prob} \left\{ \E_\Q[W \mid \F_t] + \tfrac{1}{\gamma} H_t(\Q  \re  \Prob) \right\},
\end{equation*}
applied with $W = \WL_T(\pi)$.
The conditional minimizer is identified by the conditional first-order condition, 
namely that $d\Q^*/d\Prob$ is proportional to $e^{-\gamma W}$ with the proportionality factor 
$1/\E[e^{-\gamma W} \mid \F_t]$ chosen to make $\Q^*$ a probability measure conditionally on $\F_t$.
See \citet[Proposition 4.33]{FollmerSchied2016} for the conditional version.
\end{proof}

Corollary~\ref{cor:robust} has a sharp and economically meaningful interpretation:
  the forced preference functional is mathematically identical to a robust optimization in which the market maker entertains all $\Q \ll \Prob$, evaluates each strategy with a conditional-relative-entropy penalty at rate $1/\gamma$, and takes the essential infimum, with $\Q^* \neq \Prob$ corresponding to her implicit pessimistic measure for the strategy under evaluation.
Note that $\Q^*$ depends on $\pi$ through $W = \WL_T(\pi)$, so the dual is a strategy-by-strategy max-min restatement of the forced functional rather than an adversarial game against a fixed Knightian prior.
This places the paper in conversation with the variational-preferences literature of \citet{MaccheroniMarinacciRustichini2006} and resolves the apparent tension with the single-reference-measure commitment of law-invariance (Remark~\ref{rem:law-inv-trader}).

\begin{remark}[On the Relation to $\Phi$-Divergence Risk Measures]
\label{rem:why-kl} 
 The relative-entropy penalty in Corollary~\ref{cor:robust} is the Kullback-Leibler divergence, one member of the broader family of $\Phi$-divergences (R\'enyi, Tsallis, $\chi^2$, and others). Each alternative would generate a different robust-optimization dual and a different family of risk measures (the so-called $\Phi$- or $f$-divergence risk measures; see \citet[Chapter 4]{FollmerSchied2016}).
Our forced-uniqueness theorem  identifies the conditional entropic functional, which 
by Corollary~\ref{cor:robust} corresponds to a robust optimization with the KL divergence as  ambiguity penalty, 
as the unique preference functional consistent with the five axioms.
A market maker who satisfies our axioms therefore selects KL, rather than some other $\Phi$-divergence, 
 as the implicit ambiguity penalty on her beliefs.
We do not, in this paper, prove the converse axiomatic characterization at the level of $\Phi$-divergences themselves
 (i.e., that KL is the unique $\Phi$-divergence whose conjugate dual produces a functional satisfying our axiom system); 
doing so would require a parallel analysis at the level of divergence functionals rather than risk measures,
 and is an interesting question that we leave open.
\end{remark}

\subsection{Dilation and Position-Size Nonlinearity}
\label{subsec:dilation}

A coherent risk measure, in the sense of \citet{ArtznerDelbaenEberHeath1999}, is positively homogeneous:  
 doubling all positions doubles the risk. 
The entropic functional is convex and monetary but is not positively homogeneous. 
 This failure of homogeneity has a clean market-making interpretation, developed in the corollary and discussion below.

\begin{corollary}[Dilation and Position-Size Nonlinearity]
\label{cor:dilation}
Under the same hypotheses as Theorem~\ref{thm:forced-uniqueness}, the forced risk measure $\rho_t = -J_t$ is not positively homogeneous on $\Linf(\F_T)$ in general.
For any $W \in \Linf(\F_T)$ that is conditionally nondegenerate given $\F_t$ on a set of positive $\Prob$-measure,   and any $\lambda > 0$ with $\lambda \neq 1$,
\begin{equation*}
    \rho_t(\lambda W) \neq \lambda \rho_t(W)
\end{equation*}
with positive probability.
Specifically, the rescaling identity $\rho_t(\lambda W) = \lambda \rho_t^{(\lambda\gamma)}(W)$ holds, where $\rho_t^{(\lambda\gamma)}$ denotes the entropic risk measure at parameter $\lambda\gamma$.
\end{corollary}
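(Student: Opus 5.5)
By Theorem~\ref{thm:forced-uniqueness}, the forced risk measure is $\rho_t(W) = \tfrac{1}{\gamma}\log\E[e^{-\gamma W}\mid\F_t]$. The plan has three steps.

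\textbf{First, the rescaling identity.} This is a direct substitution. For $\lambda>0$,
\begin{equation*}
\rho_t(\lambda W) = \tfrac{1}{\gamma}\log\E[e^{-\gamma\lambda W}\mid\F_t] = \lambda\cdot\tfrac{1}{\lambda\gamma}\log\E[e^{-(\lambda\gamma)W}\mid\F_t] = \lambda\,\rho^{(\lambda\gamma)}_t(W).
\end{equation*}
This holds almost surely, with no further work.

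\textbf{Second, reduce non-homogeneity to a monotonicity statement.} The claim $\rho_t(\lambda W)\neq\lambda\rho_t(W)$ with positive probability is equivalent to $\rho^{(\lambda\gamma)}_t(W)\neq\rho^{(\gamma)}_t(W)$ on a set of positive measure. So it suffices to show that, on the event where $W$ is conditionally nondegenerate given $\F_t$, the map $\beta\mapsto \tfrac{1}{\beta}\log\E[e^{-\beta W}\mid\F_t]$ is strictly increasing on $(0,\infty)$. Equivalently, the conditional entropic certainty-equivalent $f_t(\beta) = -\tfrac{1}{\beta}\log\E[e^{-\beta W}\mid\F_t]$ is strictly decreasing there.

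\textbf{Third, prove strict monotonicity by conditionalising Sub-step~3b.} I would reuse the computation of Sub-step~3b in the proof of Theorem~\ref{thm:forced-uniqueness}, now made conditional. Take a regular conditional distribution $\mu_\omega$ of $W$ given $\F_t$; it exists by \citet[Theorem 8.5]{Kallenberg2021}, as already used in the verification of (J3). For each $\omega$, let $\kappa_\omega(\beta)=\log\int e^{-\beta x}\,\mu_\omega(dx)$, which is finite and smooth in $\beta$ because $W$ is bounded. Then
\begin{equation*}
\partial_\beta\bigl(\kappa_\omega(\beta)/\beta\bigr) = \bigl(\beta\kappa_\omega'(\beta)-\kappa_\omega(\beta)\bigr)/\beta^2 = H(\Q^\beta_\omega\re\mu_\omega)/\beta^2 ,
\end{equation*}
where $\Q^\beta_\omega$ is the exponentially tilted law. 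This relative entropy is strictly positive exactly when $\mu_\omega$ is not a Dirac mass.

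The nondegeneracy hypothesis says precisely that $\{\omega:\mu_\omega \text{ not Dirac}\}$ has positive probability. On that set, the strict increase gives $\kappa_\omega(\lambda\gamma)/(\lambda\gamma)\neq\kappa_\omega(\gamma)/\gamma$ whenever $\lambda\neq1$. Since $\rho_t(W)(\omega)=\kappa_\omega(\gamma)/\gamma$ almost surely, the claim follows.

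\textbf{Main obstacle.} The main obstacle is the measure-theoretic bookkeeping in the third step. One must check that the pointwise-in-$\omega$ expression agrees almost surely with the conditional-expectation version, and that the set of nondegenerate $\omega$ is $\F_t$-measurable. The first point follows because $\beta$ is deterministic, so for each fixed $\beta$ the kernel integral is a version of the conditional expectation; one then takes the two values $\beta=\gamma$ and $\beta=\lambda\gamma$. For the second point, I would express nondegeneracy as $\operatorname{Var}(W\mid\F_t)>0$, which is measurable.

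Finally, the ``not positively homogeneous in general'' clause only needs one such $W$. For $t=0$, this is supplied by the nondegenerate $W=\mathbf 1_A-\mathbf 1_{A^c}$ constructed from non-atomicity in Sub-step~3b.
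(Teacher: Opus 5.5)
Your proposal is correct and follows essentially the same route as the paper: the rescaling identity by direct substitution, then strict monotonicity of the entropic certainty-equivalent in the risk-aversion parameter. That monotonicity comes from the derivative-equals-relative-entropy identity of Sub-step~3b, conditioned on $\F_t$. Your regular-conditional-distribution bookkeeping, with the nondegeneracy set taken as $\{\operatorname{Var}(W\mid\F_t)>0\}$, is a more careful version of the paper's $\omega$-wise differentiation of $f_t(\gamma')$. Your $t=0$ example correctly supplies the ``in general'' clause, which cannot hold at $t=T$, where $\rho_T(W)=-W$ is positively homogeneous.
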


\begin{proof}
A direct calculation gives
\begin{equation*}
    \rho_t(\lambda W)  =  \tfrac{1}{\gamma}\log\E[e^{-\gamma\lambda W} \mid \F_t]  =  \lambda \cdot \tfrac{1}{\lambda\gamma}\log\E[e^{-(\lambda\gamma) W} \mid \F_t]  =  \lambda \, \rho_t^{(\lambda\gamma)}(W).
\end{equation*}
For any $\gamma' > 0$, let 
\begin{equation*}
    f_t(\gamma')  \coloneqq  -\tfrac{1}{\gamma'}\log\E[e^{-\gamma' W} \mid \F_t]
\end{equation*}
denote the conditional entropic certainty-equivalent of $W$ at risk-aversion $\gamma'$. 
This is the reduced functional $\tilde J_t^{(\gamma')}(W)$ from the proof of Theorem~\ref{thm:forced-uniqueness}. 
Associated to each $\gamma'$ is the tilted measure $\Q^*_t = \Q^*_t(\gamma')$ with $\F_T$-density 
\begin{equation*}
    \frac{d\Q^*_t}{d\Prob}  =  \frac{e^{-\gamma' W}}{\E[e^{-\gamma' W} \mid \F_t]},
\end{equation*}
and its conditional relative entropy 
$H_t(\Q^*_t \re \Prob) = \E_{\Q^*_t}\bigl[\log(d\Q^*_t/d\Prob) \mid \F_t\bigr]$. 
A direct calculation, analogous to Sub-step~3b of the proof of Theorem~\ref{thm:forced-uniqueness}, then gives
\begin{equation*}
    \frac{d}{d\gamma'}f_t(\gamma')  =  -\frac{1}{(\gamma')^2}\,H_t(\Q^*_t \re \Prob).
\end{equation*}
For $W$ conditionally nondegenerate given $\F_t$ on a set $B \in \F_t$ of positive $\Prob$-measure, 
the conditional Jensen inequality implies that  $H_t(\Q^*_t  \re \Prob) > 0$ almost surely on $B$, so $\frac{d}{d\gamma'}f_t(\gamma') < 0$ on $B$, and $f_t$ is strictly decreasing in $\gamma'$ on $B$.
Hence, $\rho_t^{(\lambda\gamma)}(W) \neq \rho_t^{(\gamma)}(W)$ on a set of positive $\Prob$-measure, 
and $\rho_t(\lambda W) \neq \lambda\rho_t(W)$ with positive probability.
\end{proof}

A market maker who scales every position in her book by a factor $\lambda$   does not face $\lambda$ times the risk; 
she faces a risk evaluated under a different (and larger, if $\lambda > 1$) effective risk-aversion parameter $\lambda\gamma$.
The forced functional penalizes large positions super-linearly in size, in a precise and quantifiable way.
The failure has a clean reading through the robust dual of Corollary~\ref{cor:robust}: a market maker who doubles her positions does not double her relative-entropy ball, but the worst-case loss inside that fixed ball grows super-linearly because the linear functional $W \mapsto \E_\Q[W]$ becomes more sensitive to tail-tilts of $\Q$ when $W$ is scaled up.

\subsection{Mean-Variance Incompatibility}
\label{subsec:meanvar}

The mean-variance objective, in the tradition of \citet{Markowitz1952}, 
is widely used in practitioner finance as a tractable proxy for risk-averse expected-utility preferences.
For a market maker, the natural mean-variance objective would be of the form
\begin{equation}
    \E[\WL_T(\pi)] - \tfrac{\lambda}{2}\, \mathrm{Var}\bigl(\WL_T(\pi)\bigr)
    \label{eq:meanvar}
\end{equation}
for some $\lambda > 0$. 
This is conceptually parallel to the entropic functional but uses the variance rather than the cumulant-generating function as the risk-penalty.
Our framework rules this out.

\begin{corollary}[Mean-Variance Incompatibility]
\label{cor:meanvar}
Let the mean-variance preference functional be the dynamic functional defined by
\begin{equation*}
    J_t^{\mathrm{MV}}(W)  \coloneqq  \E[W \mid \F_t] - \tfrac{\lambda}{2}\,\mathrm{Var}(W \mid \F_t),
\end{equation*}
for $W \in \Linf(\F_T)$ and $\lambda > 0$, which is the natural dynamic form of~\eqref{eq:meanvar}.
This functional in fact violates both monotonicity  (M, Proposition~\ref{prop:monotonicity}) and strong   dynamic consistency (Axiom~\ref{ax:J4}).
Consequently, mean-variance preferences are incompatible with our axiom system.
\end{corollary}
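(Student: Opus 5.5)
The plan is to exhibit explicit counterexamples on $\Linf(\F_T)$, one for each failure. Each can then be transported back to strategies through Lemma~\ref{lem:density}, since every bounded terminal wealth is realized by some $\pi \in \Pi$. Throughout I write $\tilde J^{\mathrm{MV}}_t$ for the functional acting directly on $W$; by construction it satisfies (W).

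\textbf{Monotonicity.} I work at $t=0$, where $\F_0$ is trivial. By non-atomicity I choose $A \in \F_T$ with $\Prob(A)=p\in(0,1)$ and set $W' \coloneqq 0$ and $W \coloneqq a\mathbf{1}_A$ for a constant $a>0$. Then $W \geq W'$ almost surely, and
\[
J_0^{\mathrm{MV}}(W) = ap - \tfrac{\lambda}{2}a^2 p(1-p).
\]
This is negative once $a > 2/(\lambda(1-p))$, while $J_0^{\mathrm{MV}}(W')=0$. So (M) fails, and Proposition~\ref{prop:monotonicity} turns this into a violation of the axiom system. In fact the same pair already shows directly that (J4) must fail, since $J_T^{\mathrm{MV}}(W)=W\geq 0=J_T^{\mathrm{MV}}(W')$ while the time-$0$ order is reversed. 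I will nevertheless also give a genuinely intermediate-time counterexample, because that is the more convincing statement.

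\textbf{Strong dynamic consistency at an intermediate time.} I fix $0<t<T$. Using non-atomicity at the filtration level (for instance a Brownian increment after $t$, or simply independent events), I pick an $\F_t$-measurable event $B$ with $\Prob(B)=1/2$ and an $\F_T$-event $C$ independent of $\F_t$ with $\Prob(C)=1/2$. I compare two wealths:
\begin{itemize}
\item $W_1 \coloneqq c$, a deterministic constant, so that $J_t^{\mathrm{MV}}(W_1)=c$;
\item $W_2 \coloneqq \mathbf{1}_B\, b + \mathbf{1}_{B^c}\,(h\mathbf{1}_C - h\mathbf{1}_{C^c})$.
\end{itemize}
For $W_2$ we get $J_t^{\mathrm{MV}}(W_2)=b$ on $B$ and $-\tfrac{\lambda}{2}h^2$ on $B^c$. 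I then choose $b > c$ and $c < -\tfrac{\lambda}{2}h^2$, which gives $J_t^{\mathrm{MV}}(W_2)\geq J_t^{\mathrm{MV}}(W_1)$ almost surely.

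At time $0$ we have $J_0^{\mathrm{MV}}(W_1)=c$. The unconditional mean of $W_2$ is $b/2$ and its variance is $\tfrac14 b^2 + \tfrac12 h^2$. Taking $b$ large makes the time-$0$ variance penalty dominate, so
\[
J_0^{\mathrm{MV}}(W_2) = \tfrac{b}{2} - \tfrac{\lambda}{2}\bigl(\tfrac{b^2}{4}+\tfrac{h^2}{2}\bigr) < c
\]
after retuning. Concretely, I fix $h$, choose $b$ large, and then choose $c$ strictly between the two sides. Whether $-\tfrac{\lambda}{2}h^2 < c < b$ is compatible with this inequality is the one thing to check. Since the left side tends to $-\infty$ as $b\to\infty$, I instead take $c$ slightly below $-\tfrac{\lambda}{2}h^2$. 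This is the \emph{wrong} direction, so I must be careful.

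The cleaner choice is therefore to drop the requirement $c<-\tfrac{\lambda}{2}h^2$ and use $c \coloneqq -\tfrac{\lambda}{2}h^2$ exactly. Then $J_t^{\mathrm{MV}}(W_2)\geq c$ almost surely, with equality on $B^c$. For large $b$ we get $J_0^{\mathrm{MV}}(W_2)\to-\infty < c = J_0^{\mathrm{MV}}(W_1)$, which contradicts (J4).

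\textbf{Main obstacle.} The main obstacle is this parameter bookkeeping, together with the realization step. For the latter I apply Lemma~\ref{lem:density} to obtain $\pi,\pi'$ with $\WL_T(\pi)=W_2$ and $\WL_T(\pi')=W_1$, and then evaluate $J$ through (W).
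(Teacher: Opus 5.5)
Your proof is correct and follows the paper's route. For (M) you use a bounded nonnegative payoff with large variance against $0$, and for (J4) a two-period construction with a deterministic comparator pinned at the conditional mean-variance value of the risky branch; the paper's counterexample is exactly your $W_2$ versus $W_1$ with $\lambda=1$, $h=1$, $b=5$, $c=-\tfrac12$. Your parametrized version covers every $\lambda>0$, whereas the paper fixes $\lambda=1$, and your remark that the (M) pair already breaks (J4) at $(s,t)=(0,T)$ needs only non-atomicity of $\F_T$. Your hesitation in the bookkeeping is unfounded: taking $c$ below $-\tfrac{\lambda}{2}h^2$ is not the wrong direction, and any $c\le-\tfrac{\lambda}{2}h^2$ works once $b$ is taken large.
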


\begin{proof}
We exhibit two failures.

\emph{Failure 1: Monotonicity (Proposition~\ref{prop:monotonicity}).}
Consider $W, W' \in \Linf(\F_T)$ satisfying $W \geq W'$ almost surely.
We exhibit an explicit counterexample to $J_t^{\mathrm{MV}}(W) \geq J_t^{\mathrm{MV}}(W')$.
Fix $t = 0$ and let $W' = 0$ (the zero random variable), so $J_0^{\mathrm{MV}}(W') = 0$.
Since $\lambda$ is a fixed preference parameter, choose any nonconstant random variable $\xi$ with   $\xi \geq 0$ almost surely, $\E[\xi] = 1$, and $\mathrm{Var}(\xi) = v > 2/\lambda$.
Such a $\xi$ exists because the variance of nonnegative random variables with mean $1$ is  unbounded above.
Take $W = \xi$.
Then $W \geq W'$ almost surely, but
\begin{equation*}
    J_0^{\mathrm{MV}}(W)  =  1 - \tfrac{\lambda v}{2}  <  0  =  J_0^{\mathrm{MV}}(W'),
\end{equation*}
in violation of monotonicity.
The mean-variance functional thus violates Proposition~\ref{prop:monotonicity} (M).

\emph{Failure 2: Strong dynamic consistency (Axiom~\ref{ax:J4}).}
By the law of total variance,
\begin{equation*}
    \mathrm{Var}(W \mid \F_s)  =  \E[\mathrm{Var}(W \mid \F_t) \mid \F_s] + \mathrm{Var}(\E[W \mid \F_t] \mid \F_s),
\end{equation*}
the cross-term $\mathrm{Var}(\E[W \mid \F_t] \mid \F_s) - \mathrm{Var}(\E[W' \mid \F_t] \mid \F_s)$ has indefinite sign and is not controlled by the time-$t$ ranking of $W, W'$.
We exhibit a concrete two-period counterexample.
Let $s = 0$, $t \in (0,T)$, $\lambda = 1$, 
with $\F_s$ trivial and $\F_t$ generated by a fair coin flip yielding atoms $\omega_+, \omega_-$ each with probability $1/2$.
Refine $\F_T$ so that on $\omega_+$ there are two equiprobable sub-atoms where $W$ takes the values $+1$ and $-1$, while on $\omega_-$ the random variable $W$ is constant equal to  $5$.
Set $W' \equiv -1/2$ deterministically.
(Existence of such a refinement on $(\Omega, \F_T, \Prob)$ is by the non-atomicity of the probability space, as in Sub-step~3b of the proof of Theorem~\ref{thm:forced-uniqueness}.)

At time $t$, the conditional moments are:
on $\omega_+$, $\E[W \mid \F_t] = 0$ and $\mathrm{Var}(W \mid \F_t) = 1$, so $J^{\mathrm{MV}}_t(W)(\omega_+) = 0 - 1/2 = -1/2$;
on $\omega_-$, $\E[W \mid \F_t] = 5$ and $\mathrm{Var}(W \mid \F_t) = 0$, so $J^{\mathrm{MV}}_t(W)(\omega_-) = 5$.
Meanwhile $J^{\mathrm{MV}}_t(W') \equiv -1/2$ identically.
Hence,  $J^{\mathrm{MV}}_t(W) \geq J^{\mathrm{MV}}_t(W')$ almost surely, 
with equality on $\omega_+$ and strict inequality on $\omega_-$.

At time $s = 0$,  we have 
 $\E[W] = \tfrac{1}{2} \cdot 0 + \tfrac{1}{2} \cdot 5 = 5/2$ 
and $\E[W^2] = \tfrac{1}{2}\cdot 1 + \tfrac{1}{2}\cdot 25 = 13$, so $\mathrm{Var}(W) = 13 - 25/4 = 27/4$.
Therefore,
\begin{equation*}
    J^{\mathrm{MV}}_s(W)  =  \tfrac{5}{2} - \tfrac{1}{2}\cdot  \tfrac{27}{4}  =  \tfrac{20}{8}  - \tfrac{27}{8}  =  -\tfrac{7}{8},
\end{equation*}
while $J^{\mathrm{MV}}_s(W') = -1/2 = -4/8$.
Hence, $J^{\mathrm{MV}}_s(W) = -7/8 < -4/8 = J^{\mathrm{MV}}_s(W')$, contradicting the conclusion of (J4).
Fully developed counterexamples of this form are given by \citet{BasakChabakauri2010} and \citet{BjorkMurgoci2014}.
\end{proof}

\begin{remark}[Cash-Additivity and Wealth-Summary Are Not the Issue]
\label{rem:mv-cash-add}
The mean-variance functional satisfies both cash-additivity (Axiom~\ref{ax:J1}) and the wealth-summary property (W) (Proposition~\ref{prop:wealth-summary}), which is worth pointing out  since both are nonobvious.
For cash-additivity, fix $\F_t$-measurable bounded $c$.
The conditional mean shifts by $c$, and the conditional variance is unchanged; 
  since $c$ is $\F_t$-measurable, it acts as a constant in the conditional distribution of $W$ given $\F_t$, so
\begin{equation*}
    \mathrm{Var}(W + c \mid \F_t) 
    = \E[(W + c - \E[W + c \mid \F_t])^2 \mid \F_t] 
    =  \E[(W - \E[W \mid \F_t])^2 \mid \F_t] 
    = \mathrm{Var}(W \mid \F_t).
\end{equation*}
We thus have  $J_t^{\mathrm{MV}}(W + c) = J_t^{\mathrm{MV}}(W) + c$.
Since $J_t^{\mathrm{MV}}(W) = \E[W \mid \F_t] - \tfrac{\lambda}{2}\mathrm{Var}(W \mid \F_t)$ is a function of $W = \WL_T$ alone, the condition ``if $\WL_T(\pi) = \WL_T(\pi')$ almost surely, then $J_t(\pi) = J_t(\pi')$ almost surely'' is satisfied.
The incompatibility with our axiom system thus does not  come from monetary structure  or from path-dependence; 
it comes from the variance term's failure to respect both monotonicity and time-consistency.
\end{remark}

\subsection{CVaR Critique}
\label{subsec:cvar}

The Basel Committee \citep{BCBS2019} adopted expected shortfall as the official market-risk measure for dealer desks, so the market maker on a regulated desk faces a CVaR-based external risk constraint.
 The technical content reduces to that of \citet{ArtznerDelbaenEberHeathKu2007} and \citet{CheriditoStadje2009}.
Our contribution is to derive the conclusion as a corollary of our axiom system and to identify the resulting desk--regulator gap.

\begin{corollary}[CVaR Time-Inconsistency]
\label{cor:cvar}
Any dynamic preference functional that takes the form of a dynamic conditional-value-at-risk  on $\WL_T$, at any fixed confidence level $\alpha \in (0,1)$, violates Axiom~\ref{ax:J4} (strong dynamic consistency).
As a result,  no CVaR-based dynamic objective is consistent with our axiomatic framework.
\end{corollary}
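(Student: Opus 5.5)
The most economical route goes through Theorem~\ref{thm:forced-uniqueness}, with a direct counterexample as a cross-check. Let $J^{\mathrm{CV}}_t(W) \coloneqq -\mathrm{CVaR}_\alpha(W \mid \F_t)$, the conditional lower-tail average at level $\alpha$ evaluated pointwise in the regular conditional distribution. This functional is cash-additive, normalized and concave, since CVaR is a conditional coherent risk measure. It also satisfies the wealth-summary property, and at $t=0$ it is law-invariant.

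Suppose it also satisfied (J4). It would then satisfy all of (J1)--(J4) and (J5). Clause (J3b) needs a caveat, because CVaR is positively homogeneous and hence only weakly concave. In particular, $J^{\mathrm{CV}}_0(\lambda W + (1-\lambda)W')$ equals the convex combination whenever $W$ and $W'$ are comonotone and nondegenerate. So (J3b) already fails, and one could stop there. However, the corollary asserts that (J4) specifically is violated, so I will prove that failure directly and not rely on the theorem.

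The direct argument is a two-period counterexample of the same shape as Failure~2 in Corollary~\ref{cor:meanvar}. Using the non-atomicity of $\F_T$, I would let $\F_t$ be generated by a fair coin with atoms $\omega_\pm$. On $\omega_+$, let $W$ take the values $a$ and $b$ on equiprobable sub-atoms. On $\omega_-$, let $W$ be the constant $c$. Take $W'$ deterministic and equal to $J^{\mathrm{CV}}_t(W)(\omega_+)$. By construction $J^{\mathrm{CV}}_t(W) \ge J^{\mathrm{CV}}_t(W')$ almost surely, provided $c$ is at least that value.

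At time $0$, the lower $\alpha$-tail of $W$ has mass $\alpha$. If $\alpha \le 1/4$ and $b<a$, this tail sits entirely on the sub-atom of $\omega_+$ where $W=b$. Then $J^{\mathrm{CV}}_0(W) = b$, while $J^{\mathrm{CV}}_0(W') = J^{\mathrm{CV}}_t(W)(\omega_+)$. For $\alpha \le 1/2$, the conditional level-$\alpha$ tail on $\omega_+$ also lies inside $\{W=b\}$, so this value is again $b$. Equality is then harmless, and the choice of parameters is where the real work lies.

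The main obstacle is choosing the parameters for general $\alpha$. Conditional CVaR at level $\alpha$ on an atom of mass $1/2$ averages over a conditional tail of mass $\alpha$. Unconditionally, the tail of mass $\alpha$ is drawn from a mixture of the two atoms. The mismatch comes from averaging tails over different reference masses. I would therefore use three sub-atoms on $\omega_+$ with masses tuned to $\alpha$, so that the conditional tail on $\omega_+$ includes a moderate value. The unconditional tail, of total mass $\alpha$, then consists only of the worst value.

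Concretely, I would place a small mass $p<\alpha/2$ of the value $-K$ inside $\omega_+$, with the remainder at $0$. This gives $J^{\mathrm{CV}}_t(W)(\omega_+) = -Kp/(\alpha/2)\cdot\tfrac12$ and, unconditionally, $J^{\mathrm{CV}}_0(W) = -Kp/\alpha$. Matching these two values shows that the conditional tail weight equals the unconditional one, so this alone yields no strict reversal. I therefore expect to exploit the constant $c$ on $\omega_-$ instead. Take $c$ slightly below $0$ and below $J^{\mathrm{CV}}_t(W)(\omega_+)$, so that $W'$ is set to $c$, and this $c$-mass then enters the unconditional tail while pulling the average down.

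If this bookkeeping proves delicate, the fallback is to cite \citet{ArtznerDelbaenEberHeathKu2007} and \citet{CheriditoStadje2009}, who give explicit CVaR recursion counterexamples at every level $\alpha$, combined with the observation that recursive consistency would force $\mathrm{CVaR}_\alpha$ to be entropic by Theorem~\ref{thm:forced-uniqueness}. That is impossible, since $\mathrm{CVaR}_\alpha$ is positively homogeneous while Corollary~\ref{cor:dilation} shows the entropic functional is not.
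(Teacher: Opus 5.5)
As written, only your last-resort citation proves the claim; both self-contained routes have concrete defects.

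\textbf{The direct counterexample.} You abandoned it because of an arithmetic slip, and it is one idea away from working. On $\omega_+$ (mass $1/2$), the event $\{W=-K\}$ has conditional probability $2p<\alpha$. So $J^{\mathrm{CV}}_t(W)(\omega_+)=-2Kp/\alpha$, not $-Kp/\alpha$, and the two values do not match.

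Even after this correction, your comparison with $W'$ the deterministic constant $-2Kp/\alpha$ fails for $c\ge 0$. You get $J^{\mathrm{CV}}_0(W)\ge -Kp/\alpha>J^{\mathrm{CV}}_0(W')$, which is consistent with (J4). The missing idea is to compare $W$ with $W':=J^{\mathrm{CV}}_t(W)$ itself:
\begin{itemize}
\item Because $W'$ is $\F_t$-measurable, (J1) and (J2) give $J^{\mathrm{CV}}_t(W')=J^{\mathrm{CV}}_t(W)$.
\item (J4) applied in both directions then forces $J^{\mathrm{CV}}_0(W)=J^{\mathrm{CV}}_0\bigl(J^{\mathrm{CV}}_t(W)\bigr)$. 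This is the Bellman identity of Step~2 of the proof of Theorem~\ref{thm:forced-uniqueness}.
\item Your corrected example with $c=0$ breaks this identity for every $\alpha\in(0,1)$. Here $J^{\mathrm{CV}}_0(W)=-Kp/\alpha$, while $J^{\mathrm{CV}}_0\bigl(J^{\mathrm{CV}}_t(W)\bigr)$ equals $-2Kp/\alpha$ if $\alpha\le 1/2$ and $-Kp/\alpha^2$ if $\alpha>1/2$.
\end{itemize}
Your last idea also works: take $W'\equiv c$ with $c\in(-K,-2Kp/\alpha)$ close to $-2Kp/\alpha$. It only works, however, after this same corrected bookkeeping.

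Note also that $\F_t$ is given by the model, not chosen by you. You need some $t<T$ with $\F_t$ nontrivial, which is true in the paper's Brownian setup. Non-atomicity of $\F_T$ then lets you split any $A\in\F_t$ with $0<\Prob(A)<1$, and the computation adapts to any value of $\Prob(A)$. If $\F_t$ were trivial for every $t<T$, (J4) would reduce to monotonicity and CVaR would satisfy it.

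\textbf{The fallback through Theorem~\ref{thm:forced-uniqueness}.} This route is not valid. The theorem requires (J3b), which you yourself showed CVaR violates. So a CVaR functional satisfying (J4) would contradict nothing, and the argument cannot isolate (J4). For the same reason, Lemma~\ref{lem:relevance-derivable} is unavailable. Salvaging this route would mean returning to the Kupper--Schachermayer representation itself (with $\gamma\in[0,\infty]$) and checking its hypotheses for CVaR independently. That is more work than the direct example.

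\textbf{The citation.} Your bare citation is essentially the paper's entire proof. The paper invokes \citet{CheriditoDelbaenKupper2006} (Remark~6.8) and \citet{CheriditoStadje2009} for the existence of a two-period counterexample, without exhibiting one. Your observation that (J3b) fails already settles the corollary's second sentence on its own. With the repair above, your proof of the first sentence would be self-contained, which the paper's is not.
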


\begin{proof}
We argue this claim directly from the structure of the candidate preference functional $J^{\mathrm{CVaR}}_t \coloneqq -\mathrm{CVaR}_\alpha(\cdot \mid \F_t)$, with the sign chosen so that higher $J^{\mathrm{CVaR}}$ corresponds to the strategy being preferred, matching the convention of Section~\ref{subsec:J-def}.
By \citet{CheriditoDelbaenKupper2006} (see in particular Remark~6.8 therein) and \citet{CheriditoStadje2009}, 
the dynamic conditional CVaR functional fails strong dynamic consistency in the sense of Axiom~\ref{ax:J4}.
A standard counterexample exhibits $W, W' \in \Linf(\F_2)$ on a two-period filtration $\F_0 \subset \F_1 \subset \F_2$ with $J^{\mathrm{CVaR}}_1(W) \geq J^{\mathrm{CVaR}}_1(W')$ almost surely yet $J^{\mathrm{CVaR}}_0(W) < J^{\mathrm{CVaR}}_0(W')$, in direct violation of Axiom~\ref{ax:J4}.
Consequently, no dynamic CVaR functional can satisfy our axiom system, and in particular cannot be the unique functional forced by Theorem~\ref{thm:forced-uniqueness}.
\end{proof}

The practical content of Corollary~\ref{cor:cvar} is the following.
A regulator who sets a CVaR-based market-risk limit and a dealer desk that runs a time-consistent dynamic optimization in our axiomatic sense are using incompatible objectives. 
The static end-of-day CVaR computed by the regulator is generally different from any dynamic CVaR objective the desk could have optimized against intraday.
The desk's internal risk management, under strong time-consistency, is entropic. 
Its external regulatory constraint is CVaR.
The two are not the same.
We make no normative claim about which is right. 
The result identifies a structural gap which has not previously been pointed out in the inventory market making literature.
As we noted in Section~\ref{subsec:J4}, the negative result is specific to strong dynamic consistency. 
A desk willing to weaken (J4) to acceptance- and rejection-consistency in the sense 
 of \citet{Weber2006} could in principle accommodate CVaR-like (shortfall) objectives.

\subsection{The Wealth-Independence Rationale Subsumed}
\label{subsec:wealth-indep}

  \citet{AvellanedaStoikov2008} motivate the adoption of exponential utility through its wealth-independence property:
among expected-utility preferences over terminal wealth, only CARA delivers wealth-independent reservation prices \citep{Merton1971}.
Our framework subsumes the rationale entirely.

 \begin{corollary}[Wealth-Independence]
\label{cor:wealth-indep}
Under the hypotheses of Theorem~\ref{thm:forced-uniqueness}, 
the optimal quote distances chosen by the market maker do not depend on her current cash holdings.
\end{corollary}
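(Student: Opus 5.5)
The plan is to combine the explicit entropic form of Theorem~\ref{thm:forced-uniqueness} with cash-additivity (Axiom~\ref{ax:J1}), showing that current cash enters $J_t$ only as an additive, strategy-independent constant. Fix a time $t$ and two continuation strategies that agree on $[0,t)$ and differ only in their quoting from $t$ onward. Write $\WL_T(\pi) = X_t + \Delta_t(\pi)$, where
\[
\Delta_t(\pi) \coloneqq \bigl(X_T - X_t\bigr) + q_T S_T - L(q_T).
\]
By the cash dynamics~\eqref{eq:cash}, $X_T - X_t$ is the sum of the fill proceeds on $(t,T]$ and the cash increments $dC$. Neither depends on the level $X_t$, because the fill intensities $\lambda^a,\lambda^b$ depend on the quotes and on state variables, but not on cash. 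So $\Delta_t(\pi)$ is a functional of the quote controls, the inventory $q_t$, and the price and order-flow paths after $t$.

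\textbf{Step 1.} Since $X_t$ is $\F_t$-measurable and bounded (by Lemma~\ref{lem:WL-bounded} and admissibility), the entropic form~\eqref{eq:entropic} gives
\[
J_t(\pi) = X_t - \tfrac{1}{\gamma}\log\E\bigl[e^{-\gamma\Delta_t(\pi)}\mid\F_t\bigr].
\]
Equivalently, Axiom~\ref{ax:J1} applies directly: view $\pi$ as the zero-cash strategy with the same quotes, cash-injected by $c = X_t$ at time $t$. This step does not even need the entropic form, which is the point made after Corollary~\ref{cor:as-uniqueness}.

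\textbf{Step 2.} The maximizer of $J_t$ over quote distances on $[t,T]$ is therefore the maximizer of the second term. That term depends on $(q_t, S_t)$ and the conditional law of future fills, but not on $X_t$. The same holds at time $0$ with $X_0$, by taking $t = 0$.

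\textbf{Step 3.} To state the result in feedback form, I would argue via dynamic consistency (Axiom~\ref{ax:J4}) and the Bellman identity that optimality at time $0$ implies optimality of the continuation at each $t$. The optimal quotes at $t$ are then the argmax of an objective in which $X_t$ appears only additively, so any measurable selection of the argmax can be chosen independent of $X_t$.

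The main obstacle is rigor in this last step. The set of optimizers might not be a singleton, and the claim ``depends on $\pi$ only through quotes'' requires that the admissible set $\Pi$ restricted to $[t,T]$ not itself depend on $X_t$. Admissibility bounds in Definition~\ref{def:admissible} could in principle involve cash, and I would check that they constrain only inventory and quotes. The cleanest statement is then: the argmax set is invariant under translating $X_t$. This follows from the identity $J_t(\pi^{[c]}) = J_t(\pi) + c$, which preserves the ordering of all strategies that share the same cash shift.
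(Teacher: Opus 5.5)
Your proposal is correct and follows essentially the same route as the paper: cash-additivity (Axiom~\ref{ax:J1}) with the $\F_t$-measurable shift $c = X_t$ makes current cash enter $J_t$ only as an additive term that is the same for every strategy, so it cannot change how quoting strategies are ranked. Your additional checks make explicit assumptions the paper's short proof leaves implicit: that the fill intensities and the admissibility conditions of Definition~\ref{def:admissible} do not involve cash, and that the conclusion is best stated as invariance of the argmax set under translating $X_t$.
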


\begin{proof}
Write $\pi + c$ for the strategy whose terminal liquidation-adjusted wealth is shifted by $c$,
 in the sense of the cash-injection construction of Lemma~\ref{lem:density}. 
By cash-additivity (Axiom~\ref{ax:J1}), for any $\F_t$-measurable bounded $c$,
\begin{equation*}
    J_t(\pi + c)  =  J_t(\pi) + c.
\end{equation*}
Taking $c = X_t$ (assumed bounded), the dependence of $J_t$  on $X_t$ is an additive shift and does not affect the ranking of strategies. 
The optimal quote distances $\delta^{a,*}, \delta^{b,*}$ are therefore functions of $(t, S_t, q_t)$ and the relevant state of the world, but not of $X_t$.
\end{proof}

\begin{remark}
Wealth-independence follows from cash-additivity alone and does not on its own motivate the CARA form. 
The CARA structure is forced by the full axiom system, as established in Theorem~\ref{thm:forced-uniqueness}.
\end{remark}

The conceptual significance of the corollary  is that the AS rationale for CARA combines two ingredients: 
a normative desideratum (wealth-independent quotes) and a heuristic implication (wealth-independence implies CARA). 
Neither is needed under our axioms. 
Cash-additivity, the much milder condition that ``a sure dollar adds a dollar of value,'' delivers wealth-independence directly. 
The CARA form then arises from the deeper uniqueness theorem, for which wealth-independence reasoning plays no role.

\subsection{The One-Parameter Family of Preferences}
\label{subsec:one-parameter-family}

We close this section with a restatement of the one-parameter-family conclusion,
 emphasizing the structural shape of the preference space.

\begin{corollary}[The One-Parameter Family]
\label{cor:one-parameter-family}
Under the same hypotheses as Theorem~\ref{thm:forced-uniqueness}, the space of preference functionals satisfying the axiom system is the open half-line $(0, \infty)$, parametrized by $\gamma$, with the entropic functional~\eqref{eq:entropic} at parameter $\gamma$ as the unique representative.
The two endpoints of the half-line have the following limiting interpretations.
\begin{enumerate}
    \item As $\gamma \downarrow 0$, the entropic functional converges to the risk-neutral functional,
    \begin{equation*}
        J_t(\pi) \longrightarrow \E\bigl[\WL_T(\pi) \,\big|\, \F_t\bigr].
    \end{equation*}
    \item As $\gamma \uparrow \infty$, the entropic functional converges to the pessimistic worst-case functional,
    \begin{equation*}
        J_t(\pi) \longrightarrow \operatorname{ess\,inf}\bigl[\WL_T(\pi) \,\big|\, \F_t\bigr].
    \end{equation*}
\end{enumerate}
Every preference functional satisfying the axiom system lies strictly between these two extremes.
The strict positivity of $\gamma$ and the finiteness of $\gamma$  are both enforced by the strict-concavity clause (J3b) of Axiom~\ref{ax:J3}.
\end{corollary}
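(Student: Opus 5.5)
The first claim, that the admissible preference space is exactly the open half-line indexed by $\gamma$, follows from Theorem~\ref{thm:forced-uniqueness}. The forward direction says every $J$ satisfying (J1)--(J5) is entropic for some $\gamma\in(0,\infty)$, and the converse direction says every such entropic functional satisfies the axioms. The map $\gamma\mapsto J^{(\gamma)}$ is injective: two parameters giving the same $J_0$ on a nondegenerate $W$ must coincide, because $f(\gamma)=-\tfrac1\gamma\log\E[e^{-\gamma W}]$ is strictly decreasing by the relative-entropy computation of Sub-step~3b. Hence the parametrization is a bijection onto $(0,\infty)$. The exclusion of the endpoints is exactly the argument already given in Sub-step~3a. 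The linear functional $\gamma=0$ gives equality in (J3b). The ess-inf functional $\gamma=\infty$ fails relevance, which by Lemma~\ref{lem:relevance-derivable} is itself a consequence of (J3b) together with monotonicity (M).

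For the limit $\gamma\downarrow0$, I will fix $W=\WL_T(\pi)$ with $|W|\le M$ and expand $e^{-\gamma W}=1-\gamma W+R_\gamma$ with $|R_\gamma|\le \tfrac12\gamma^2M^2e^{\gamma M}$. Taking conditional expectations gives $\E[e^{-\gamma W}\mid\F_t]=1-\gamma\E[W\mid\F_t]+O(\gamma^2)$ uniformly in $\omega$. Applying $\log(1+x)=x+O(x^2)$ then yields $-\tfrac1\gamma\log\E[e^{-\gamma W}\mid\F_t]=\E[W\mid\F_t]+O(\gamma)$ in $\Linf$. This gives uniform, hence a.s., convergence.

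For $\gamma\uparrow\infty$, let $m_t:=\operatorname{ess\,inf}(W\mid\F_t)$. The lower bound is immediate: $W\ge m_t$ implies $J_t^{(\gamma)}\ge m_t$ by cash-additivity, since $m_t$ is $\F_t$-measurable. For the upper bound, I fix $\varepsilon>0$ and set $A_\varepsilon:=\{W\le m_t+\varepsilon\}$. I then use $\E[e^{-\gamma W}\mid\F_t]\ge e^{-\gamma(m_t+\varepsilon)}\Prob(A_\varepsilon\mid\F_t)$, which gives $J_t^{(\gamma)}\le m_t+\varepsilon-\tfrac1\gamma\log\Prob(A_\varepsilon\mid\F_t)$. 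The main obstacle is this step. I need $\Prob(A_\varepsilon\mid\F_t)>0$ a.s., which is the defining property of the conditional essential infimum: on any $\F_t$-set where this conditional probability vanishes, $m_t+\varepsilon$ would be a larger $\F_t$-measurable lower bound, contradicting maximality. Given that, the last term tends to $0$ pointwise as $\gamma\to\infty$. Letting $\varepsilon\downarrow0$ along a countable sequence then yields a.s. convergence; by monotonicity of $f$ in $\gamma$ it is in fact monotone convergence.

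Finally, the statement that every admissible functional lies strictly between the extremes follows from the strict monotonicity of $\gamma\mapsto f_t(\gamma)$, established in the proof of Corollary~\ref{cor:dilation}. On sets where $W$ is conditionally nondegenerate, $m_t<J_t^{(\gamma)}<\E[W\mid\F_t]$ strictly. For degenerate $W$ all three quantities coincide, so the interpretation is as a weak ordering with strictness on nondegenerate payoffs.
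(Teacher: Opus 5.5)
Your proposal is correct and follows the paper's overall structure. The bijection onto $(0,\infty)$ comes from Theorem~\ref{thm:forced-uniqueness} together with uniqueness of $\gamma$. The $\gamma\downarrow 0$ limit is a Taylor expansion. The endpoints are excluded by (J3b), directly at $\gamma=0$ and through relevance (Lemma~\ref{lem:relevance-derivable}) at $\gamma=\infty$.

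The one genuinely different step is the $\gamma\uparrow\infty$ limit.
\begin{itemize}
\item The paper cites the unconditional statement of \citet[Lemma 2.1(i)]{KupperSchachermayer2009}. It then asserts the conditional version by running that argument $\omega$-by-$\omega$ on a regular conditional distribution of $W$ given $\F_t$.
\item You prove it directly. The lower bound comes from monotonicity and cash-additivity. The upper bound comes from $\E[e^{-\gamma W}\mid\F_t]\ge e^{-\gamma(m_t+\varepsilon)}\Prob(A_\varepsilon\mid\F_t)$, together with $\Prob(A_\varepsilon\mid\F_t)>0$ a.s., which you correctly justify from the maximality of the conditional essential infimum.
\end{itemize}
Your route is self-contained. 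It also avoids a step the paper's $\omega$-by-$\omega$ sketch leaves implicit: identifying $\operatorname{ess\,inf}(W\mid\F_t)(\omega)$ with the lower endpoint of the support of the regular conditional law. The paper's route buys brevity by leaning on the literature.

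Two smaller differences:
\begin{itemize}
\item Your Lagrange-remainder bound gives convergence in $\Linf$ with an explicit $O(\gamma)$ rate. The paper's expansion instead records the first-order correction $-\tfrac{\gamma}{2}\mathrm{Var}(W\mid\F_t)$, without controlling the remainder uniformly.
\item The paper asserts ``strictly between'' without proof. Your argument via strict monotonicity of $\gamma\mapsto f_t(\gamma)$ fills this in. Your caveat is a correct refinement of the statement as written: strictness holds only where $W$ is conditionally nondegenerate, since all three quantities coincide for $\F_t$-measurable $W$.
\end{itemize}
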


\begin{proof}
For the first limit, we apply a simple Taylor expansion of the exponential to get
$-\tfrac{1}{\gamma}\log\E[e^{-\gamma W} \mid \F_t] = \E[W \mid \F_t] - \tfrac{\gamma}{2}\operatorname{Var}(W \mid \F_t) + O(\gamma^2)$ as $\gamma \to 0$, so the limit is $\E[W \mid \F_t]$.
For the second limit, \citet[Lemma 2.1(i)]{KupperSchachermayer2009} establishes the unconditional fact $\lim_{\gamma\to\infty}\rho_\gamma(X) = \operatorname{ess\,sup}(-X)$ for $X \in  \Linf$, where $\rho_\gamma(X) = \tfrac{1}{\gamma}\log\E[e^{-\gamma X}] $ is the static entropic risk measure.
To obtain the conditional extension
$$  \lim_{\gamma\to\infty} \tfrac{1}{\gamma}\log\E[e^{-\gamma W}\mid\F_t] = -\operatorname{ess\,inf}(W\mid\F_t) $$
for $W \in \Linf(\F_T)$, 
the same argument goes through $\omega$-by-$\omega$ when applied to   the regular conditional distribution of $W$ given $\F_t$ 
(which exists and is supported in $[-\|W\|_\infty, \|W\|_\infty]$);
equivalently, the conditional version is encoded in the dynamic representation of \citet[Theorem 1.10, equations (8)--(9)]{KupperSchachermayer2009}.
Both boundary cases are ruled out by the strict-concavity clause (J3b) of Axiom~\ref{ax:J3}.
At $\gamma = 0$, the linear functional $\E[\cdot]$ gives equality on every pair $\pi, \pi'$ with $\WL_T(\pi) - \WL_T(\pi')$ not a deterministic constant, contradicting (J3b) on such pairs.
At $\gamma = \infty$, the worst-case functional $J_0(\pi) = \operatorname{ess\,inf}\WL_T(\pi)$ 
gives equality on any pair $\pi_0, \pi_1$ with $\WL_T(\pi_1)  = \WL_T(\pi_0) - \varepsilon\mathbf{1}_A$ 
whenever $A$ does not lower the essential infimum.
 In this case,  $\WL_T(\pi_0) - \WL_T(\pi_1)$ 
is not a deterministic constant yet $J_0$ gives equality, contradicting~(J3b).
Equivalently, by Lemma~\ref{lem:relevance-derivable}, both (J3b) and (J1) together imply relevance, 
which fails at $\gamma = \infty$.
\end{proof}

The corollary locates the market maker's preferences in a one-dimensional family bounded by two well-understood limits, 
namely the risk-neutral limit ($\gamma \to 0$), in which she maximizes expected wealth, 
and the worst-case limit ($\gamma \to \infty$), in which she maximizes the guaranteed outcome.
The entire space of admissible preferences interpolates continuously between these two extremes, with larger $\gamma$ corresponding to heavier weighting of bad outcomes.
The naive intuition that a more risk-averse market maker quotes wider is correct, and Corollary~\ref{cor:one-parameter-family} is its rigorous form: 
her risk-aversion sits on a   half-line, and her position on that half-line determines the  entire dynamic preference functional she optimizes, 
  with no second risk-aversion parameter, no separate skewness-aversion parameter, and no path-dependent or state-dependent component.
The single scalar $\gamma$ does all the work.


\section{Supporting Results}
\label{sec:supporting}

In this section we develop two supporting results which were stated, but not proved, in earlier sections.
They appear here rather than alongside the axiomatic results for two reasons: 
  both require explicit HJB machinery not needed for the axiomatic results of Sections~\ref{sec:axioms}--\ref{sec:consequences}, and both specialize the general axiomatic conclusion to concrete model parameters.
The first is the explicit Hamilton-Jacobi-Bellman derivation of the forced coefficient $\phi = \gamma\sigma^2/2$ in the constant-volatility benchmark, invoked in the proof of Corollary~\ref{cor:cj-inconsistency}.
The second is the stochastic-volatility generalization, which uses the Dambis--Dubins--Schwarz time-change theorem and gives precise operational content to Corollary~\ref{cor:clock-invariance} (clock-invariance) in the nontrivial volatility regime.

\subsection{The Forced Coefficient via HJB}
\label{subsec:hjb}

We now provide a direct HJB-based derivation of the forced coefficient $\phi = \gamma\sigma^2/2$ from Corollary~\ref{cor:cj-inconsistency}.
The argument is independent of the axiomatic derivation and provides an alternative route to the same conclusion, suitable for readers more familiar with the stochastic-control tradition.

\begin{proposition}[HJB Derivation of the Forced Coefficient]
\label{prop:forced-phi}
Let the market making setup be as in Section~\ref{sec:setup}, with constant volatility $\sigma > 0$ and quadratic liquidation cost function $L(q) = \tfrac{\kappa}{2} q^2$.
Define the AS-style value function
\begin{equation*}
    V(t, X, q, S)  \coloneqq  \sup_{\pi \in \Pi} \E\!\left[u\bigl(\WL_T(\pi)\bigr)   \,\Big|\, X_t = X,\, q_t = q,\, S_t = S\right], \quad u(x) = -e^{-\gamma x}.
\end{equation*}
Under the same hypotheses as Theorem~\ref{thm:forced-uniqueness}, the value function admits the factorization
\begin{equation}
    V(t, X, q, S)  =  -\exp  \bigl( -\gamma  (X + qS)\bigr) \cdot \psi(t, q),
    \label{eq:V-factor}
\end{equation}
for a function $\psi \colon [0,T]  \times \R \to \R_+$, and the associated certainty-equivalent  function given by 
$\theta(t,q) \coloneqq -\tfrac{1}{\gamma}\log\psi(t,q)$ satisfies the Bellman equation~\eqref{eq:theta-PDE} below, whose running inventory term has coefficient \emph{exactly} $\gamma \sigma^2  q^2 / 2$.
This is the HJB-level realization of the forced relation $\phi  =  \gamma\sigma^2/2$ asserted in Corollary~\ref{cor:cj-inconsistency}.
\end{proposition}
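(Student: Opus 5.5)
Plan: derive the factorization by a scaling/translation argument on the HJB equation, then substitute the ansatz to obtain the PDE for $\theta$ and read off the diffusion coefficient.

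\emph{Step 1 (structure of $V$).} Fix $(t,q,S)$. Adding a constant $x$ to initial cash shifts $\WL_T(\pi)$ by $x$ for every strategy. The quotes and inventory are unaffected, exactly as in the cash-injection construction of Section~\ref{subsec:strategy}. Hence $u(\WL_T+x)=e^{-\gamma x}u(\WL_T)$, and the supremum factorizes as $V(t,X+x,q,S)=e^{-\gamma x}V(t,X,q,S)$. This is the CARA form of Corollary~\ref{cor:wealth-indep}.

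For the $S$-dependence, write $\WL_T = X_t + q_tS_t + \int_t^T q_s\,dS_s + (\text{spread income}) - L(q_T)$. Here the spread income $\int(\delta^a\,dN^a+\delta^b\,dN^b)$ does not depend on the level of $S$. Since $S$ is arithmetic Brownian and the intensities depend only on quote distances, the conditional law of $(\WL_T - X_t - q_tS_t)$ does not depend on $S_t$. I will ignore the stopping at level $M$, as permitted by the remark in Section~\ref{subsec:price-flow}. Combining the two observations gives~\eqref{eq:V-factor} with
\begin{equation*}
\psi(t,q)=\inf_\pi \E\bigl[e^{-\gamma(\WL_T-X_t-q_tS_t)}\mid q_t=q\bigr]\in(0,\infty).
\end{equation*}
Positivity and finiteness follow from Lemma~\ref{lem:WL-bounded}.

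\emph{Step 2 (HJB).} Write the HJB for $V$: $\partial_tV+\tfrac{\sigma^2}{2}\partial_{SS}V+\sup_{\delta^a}\lambda^a(\delta^a)[V(X+S+\delta^a,q-1)-V]+\sup_{\delta^b}\lambda^b(\delta^b)[V(X-S+\delta^b,q+1)-V]=0$, with terminal condition $V(T,\cdot)=-e^{-\gamma(X+qS-\kappa q^2/2)}$. Substitute $V=-e^{-\gamma(X+qS)}e^{-\gamma\theta}$. The key computation is $\partial_{SS}V=\gamma^2q^2V$, so the diffusion term becomes $\tfrac{\sigma^2}{2}\gamma^2q^2V$. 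Dividing by $\gamma V$ (which is negative) converts it into $-\tfrac{\gamma\sigma^2}{2}q^2$ in the equation for $\theta$. In the jump terms the $S$ cancels: $X+S+\delta^a+(q-1)S=X+qS+\delta^a$. They become $\sup_\delta \tfrac{\lambda^a(\delta)}{\gamma}\bigl(1-e^{-\gamma(\delta+\theta(t,q-1)-\theta(t,q))}\bigr)$, and similarly on the bid side. The resulting equation~\eqref{eq:theta-PDE} is
\begin{equation*}
\partial_t\theta-\tfrac{\gamma\sigma^2}{2}q^2+\mathcal H^a+\mathcal H^b=0,\qquad \theta(T,q)=-\tfrac{\kappa}{2}q^2 .
\end{equation*}
Its running term has coefficient exactly $\gamma\sigma^2/2$, with no expansion needed. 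As a check, with no fills the solution is $\theta=-\tfrac{\gamma\sigma^2}{2}q^2(T-t)-\tfrac{\kappa}{2}q^2$, matching Remark~\ref{rem:AS-value-function}.

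\emph{Main obstacle.} The hard part is rigor in Step 2, namely that the value function actually solves the HJB. I would avoid a viscosity theory. Instead I would use a verification argument: take $\theta$ solving the system of ODEs in $t$ indexed by the bounded inventory lattice. Admissibility bounds $q$, so this is a finite-dimensional Lipschitz system and has a unique solution. Then apply It\^o's formula for jump-diffusions to $-e^{-\gamma(X_s+q_sS_s+\theta(s,q_s))}$, and show it is a supermartingale for every $\pi$ and a martingale for the pointwise maximizing quotes. Boundedness from Lemma~\ref{lem:WL-bounded} makes the local martingales true martingales. This identifies $V$ with the ansatz, which re-proves the factorization of Step~1 independently.
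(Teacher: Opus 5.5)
Your core computation is the same as the paper's. You substitute the CARA ansatz into the AS HJB and use $\partial_{SS}V=\gamma^2q^2V$, so the diffusion term becomes $-\tfrac{\gamma\sigma^2}{2}q^2$ in the $\theta$-equation with no expansion, and $S$ cancels from the impulse terms.

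The difference is in what gets proved. The paper posits~\eqref{eq:V-factor} as an ansatz with $\psi(T,q)=e^{\gamma\kappa q^2/2}$, sets $C\equiv 0$, and never shows that $V$ solves the HJB. It therefore identifies the coefficient only conditionally on the ansatz. You derive the factorization from two invariances: cash translation, and shift-invariance in $S_t$ via $d(X+qS)=q\,dS+\delta^a\,dN^a+\delta^b\,dN^b$. Strictly, it is the optimized value, not the law under each fixed strategy, that is shift-invariant, because strategies may depend on the level of $S$. You then close the loop with a verification argument. This gives an actual proof of the structural claim, at the cost of hypotheses the paper never states.

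Your impulse terms $\sup_{\delta}\frac{\lambda(\delta)}{\gamma}\bigl(1-e^{-\gamma(\delta+\theta(t,q\mp1)-\theta(t,q))}\bigr)$ also have the correct optimization direction. The paper keeps a $\sup$ in $\mathcal I^{a,b}$ after dividing by the negative factor $-e^{-\gamma(X+qS)}$, where it should become an $\inf$. That slip does not touch the diffusion coefficient.

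Three steps in your extra rigor layer need repair, though none of them changes the coefficient $\gamma\sigma^2/2$.

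\emph{Positivity of $\psi$.} The constants in Definition~\ref{def:admissible} are per-strategy. So Lemma~\ref{lem:WL-bounded} gives $\psi<\infty$ (take any single strategy) but not $\psi>0$ for the infimum over $\pi$. Worse, if the cash-injection component stays in $\Pi$, arbitrarily large bounded injections drive the infimum to $0$ and $V\equiv 0$. You must impose $C\equiv 0$ explicitly, as the paper does. Positivity should then come from your supermartingale inequality, $\E[e^{-\gamma\WL_T(\pi)}]\ge e^{-\gamma(X+qS+\theta(t,q))}$ for every $\pi$ conditionally on the time-$t$ state, not from the lemma.

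\emph{The inventory lattice and existence.} For the same per-strategy reason, admissibility does not give you one bounded inventory lattice. You need either a global cap $Q$ or an existence result on all of $\mathbb{Z}$. With a cap, the boundary rows drop the bid term at $q=Q$ and the ask term at $q=-Q$. In addition, the Hamiltonian $p\mapsto\sup_\delta\frac{\lambda(\delta)}{\gamma}\bigl(1-e^{-\gamma(\delta+p)}\bigr)$ is only locally Lipschitz; it is exponential in $p$ for AS intensities. Global existence therefore needs an a priori bound or a comparison argument, not Picard--Lindel\"of alone.

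\emph{The true-martingale step.} Having dropped the stopping at $M$, you cannot cite Lemma~\ref{lem:WL-bounded} to upgrade local martingales to true martingales. With $q$ bounded, the exponential moments of $\int q\,dS$ are Gaussian, and that suffices.
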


\begin{proof}
We work in the constant-volatility benchmark: $dS_t = \sigma \, dB_t$ for a Brownian motion $B$ and a constant $\sigma > 0$,  with the strategy $\pi = (\delta^a, \delta^b, C) \in \Pi$, wealth and inventory dynamics, and the liquidation cost function $L(q) = \tfrac{\kappa}{2} q^2$ with $\kappa > 0$, all as in Section~\ref{sec:setup}.
We first observe that the cash-injection component $C$ does not enter the optimization in a nontrivial way.
 By Axiom~\ref{ax:J1} (cash-additivity), the value of any cash injection $dC_t$ at time $t$ is exactly the injected amount, so the optimal control problem is translation-invariant in $C$, and we may set $C \equiv 0$ for computing the optimal $(\delta^a, \delta^b)$ and the value function.
The reduced control variable is therefore the pair $(\delta^a_t, \delta^b_t) \in \R^2$, and the HJB equation below is the standard one for the AS quoting problem.

By Theorem~\ref{thm:forced-uniqueness} and the certainty-equivalent identity (Remark~\ref{rem:cara-dynamic}), maximizing $J_t(\pi)$ over $\pi \in \Pi$ is equivalent to maximizing the CARA expected utility $\E[u(\WL_T(\pi)) \mid \F_t]$ with $u(x) = -e^{-\gamma x}$, the value function in the proposition.
The terminal condition is equal to 
 $V(T, X, q, S) = u(X + qS - \tfrac{\kappa}{2}q^2)$.

The structural feature of the CARA-Brownian setup is the factorization~\eqref{eq:V-factor},
 with $\psi$ a function to be determined, and terminal condition $\psi(T, q) = \exp(\gamma \kappa q^2/2)$.
The HJB equation for the AS control problem is
\begin{equation}
    \partial_t V + \tfrac{1}{2}\sigma^2 \partial_{SS} V  + \sup_{\delta^a} \lambda^a(\delta^a) \, \Delta^a V + \sup_{\delta^b} \lambda^b(\delta^b) \, \Delta^b V  =  0,
    \label{eq:HJB}
\end{equation}
where $\Delta^a V$ and $\Delta^b V$ are the impulse operators recording the change in $V$ upon an ask fill and a bid fill, respectively,
\begin{align*}
    \Delta^a V(t, X, q, S) &\coloneqq V(t, X + S + \delta^a, q - 1, S) - V(t, X, q, S), \\
    \Delta^b V(t, X, q, S) &\coloneqq V(t, X - S + \delta^b, q + 1, S) - V(t, X, q, S).
\end{align*}

Substituting the ansatz~\eqref{eq:V-factor}  into~\eqref{eq:HJB}, two structural features emerge.
First, the diffusion term $\tfrac{1}{2}\sigma^2 \partial_{SS}V$ gives $-\tfrac{1}{2}\sigma^2 \gamma^2 q^2 \exp(-\gamma(X+qS))\, \psi$, since $\partial_{SS}V = -\gamma^2 q^2 \exp(-\gamma(X+qS)) \psi$.
This is the term that delivers the $\phi = \gamma\sigma^2/2$ coefficient. 
  Second, the impulse terms factor cleanly. 
  Using $S + (q-1)S = qS$ for the ask impulse (and the symmetric identity for the bid), 
  the wealth-state factor $\exp(-\gamma(X+qS))$ pulls out of both impulse expressions, 
    leaving $\psi$-only quantities.
After dividing through by $-\exp(-\gamma(X+qS))$, the HJB equation reduces to a two-dimensional PDE for $\psi(t,q)$, 
\begin{equation}
    \partial_t \psi + \tfrac{1}{2}\sigma^2 \gamma^2 q^2  \psi + \mathcal{I}^a[\psi] + \mathcal{I}^b[\psi]  =  0,
    \label{eq:psi-PDE}
\end{equation}
where the impulse operators acting on $\psi$ are
\begin{align}
    \mathcal{I}^a[\psi](t,q) & \coloneqq    \sup_{\delta^a}\, \lambda^a(\delta^a) \bigl[e^{-\gamma\delta^a}\psi(t,q-1) - \psi(t,q)\bigr], \notag \\
    \mathcal{I}^b[\psi](t,q) & \coloneqq  \sup_{\delta^b}\, \lambda^b(\delta^b) \bigl[e^{-\gamma\delta^b}\psi(t,q+1) - \psi(t,q)\bigr].
    \label{eq:impulse-ops}
\end{align}
This is the form obtained by substituting the CARA ansatz into the AS HJB equation,  as in standard derivations following \citet[Section 3]{AvellanedaStoikov2008}.

The forced coefficient becomes visible after passing to the certainty-equivalent function $\theta(t, q) = -\tfrac{1}{\gamma} \log \psi(t,q)$, which records the time-$t$ certainty-equivalent reduction, relative to mark-to-market wealth, of holding inventory $q$ and following the optimal strategy thereafter.
The terminal condition is given by $\theta(T,q)  = - \tfrac{\kappa}{2}q^2$.
From $\psi = \exp(-\gamma\theta)$, equation~\eqref{eq:psi-PDE} translates into
\begin{equation}
    \partial_t \theta - \tfrac{\sigma^2\gamma}{2} q^2 + \mathcal{J}^a[\theta] + \mathcal{J}^b[\theta]  =  0,
    \label{eq:theta-PDE}
\end{equation}
where the $\theta$-impulse operators are
\begin{equation*}
    \mathcal{J}^a[\theta](t,q)  =  -\tfrac{1}{\gamma}\, \sup_{\delta^a}\, \lambda^a(\delta^a) \bigl[e^{-\gamma(\delta^a + \theta(t,q-1) - \theta(t,q))} - 1\bigr],
\end{equation*}
and $\mathcal{J}^b[\theta]$ is the symmetric formula with $\theta(t,q-1) \to \theta(t,q+1)$.
Equation~\eqref{eq:theta-PDE} is, structurally, a Bellman equation for $\theta(t,q)$ with 
 a running inventory cost accruing at the rate $  \gamma \sigma^2  q^2 / 2$ per unit time, 
   partially compensated by impulse activity through spread capture.
The coefficient $\gamma\sigma^2  / 2$ is exact in the PDE, 
having entered via the diffusion term $ (\sigma^2\partial_{SS}V )/2$ in~\eqref{eq:HJB} without any approximation.
\end{proof}

\begin{remark}[Two Senses of ``Forced'' in the Coefficient $\phi = \gamma\sigma^2/2$]
\label{rem:two-senses}
Proposition~\ref{prop:forced-phi} and Corollary~\ref{cor:cj-inconsistency} together describe the relation $\phi = \gamma\sigma^2/2$ in two complementary senses, which we distinguish here once and for all.
\begin{itemize}
\item \emph{Exact in the coefficient.}
In the $\theta$-PDE~\eqref{eq:theta-PDE} above,  the running inventory term has coefficient $ \gamma\sigma^2  q^2 / 2$ \emph{exactly}, not as a limit or an approximation.
The coefficient arises from the diffusion term $ ( \sigma^2\partial_{SS}V)/2$ in the HJB equation after substitution of the CARA ansatz, with no Taylor expansion involved.
This is also visible in the closed-form AS value function (Remark~\ref{rem:AS-value-function}), 
which contains the term $- (\gamma q^2 \sigma^2 (T-t))/2$ as an exact term.
\item \emph{Approximate in the functional form.}
The full \emph{CJ objective functional}~\eqref{eq:CJ-objective}, 
with the running integral $\phi\int_0^T q_s^2\, ds$, is a leading-order
 (second-order in inventory magnitude) expansion of the full entropic functional~\eqref{eq:AS-entropic}. 
 Higher-order terms in $q$ are neglected.
The forced coefficient that this approximation must take, however, is $\gamma\sigma^2/2$ exactly.
\end{itemize}
A practitioner who reads Corollary~\ref{cor:cj-inconsistency} as saying
 ``$\phi$ is the leading-order coefficient'' should understand it in the second sense. 
 The coefficient itself is pinned to $\gamma\sigma^2/2$ exactly, by the HJB analysis above.
\end{remark}

\begin{remark}[Consistency with the AS Value Function]
\label{rem:consistency-AS}
In the AS benchmark in the absence of fills
 (i.e., between the impulses of $N^a$ and $N^b$), 
the running cost is the dominant contribution; integrating $ \gamma\sigma^2 q^2 / 2$ over $[t, T]$ at fixed inventory $q$ 
gives $ ( \gamma\sigma^2  q^2\, (T-t))/2$, 
which matches the term $- ( \gamma q^2 \sigma^2 (T-t) )/2$ appearing in the AS value function (see Remark~\ref{rem:AS-value-function}).
\end{remark}

\begin{remark}[The Certainty-Equivalent Function as a Marked-to-Market Liquidity Cost]
\label{rem:theta-mark}
 The function $\theta(t,q)$ has the following operational interpretation.
By construction, $\theta(t,q)$ is the conditional certainty-equivalent reduction, relative to the mark-to-mid wealth $X_t + q S_t$, 
attributable to holding inventory $q$ at time $t$ and following the optimal quoting strategy thereafter.
Equivalently, $\theta(t,q)$ is the cash-equivalent value that the market maker would pay, at time $t$, to be flat
  (i.e., to have $q = 0$ instead of $q$) for the remainder of the horizon,   conditional on continuing to quote optimally.
 Under the forced functional, $\theta(t,q)$  thus plays the role of an inventory-specific liquidity discount: 
the cash amount by which the desk's true valuation of its book falls short of mark-to-mid.
\end{remark}

A desk that wishes to mark its inventory book consistently with the dynamically-consistent preference 
functional we have derived should value the book at $X_t + q S_t + \theta(t,q)$, not the naive mark-to-mid $X_t + q S_t$.
The quantity $\theta(t,q)$ is the certainty-equivalent haircut that the desk's risk-aversion forces, 
and it is strictly negative whenever $q \neq 0$ and $t < T$ 
(i.e., it lowers the marked value of the book relative to mark-to-mid).
The strict negativity is visible directly in the closed-form AS value function (Remark~\ref{rem:AS-value-function}), where 
$\theta(t,q) = - ( \gamma q^2 \sigma^2 (T-t) )/2 $ 
plus impulse-induced corrections of the same sign.
A desk that uses mark-to-mid for inventory while running an AS- or CJ-style quoting strategy is using  two inconsistent valuations of its book: 
one to drive quoting decisions (which implicitly applies the haircut) and another to report P\&L (which does not).
This is a structural-gap observation parallel in shape to Corollaries~\ref{cor:cvar} and~\ref{cor:dilation}: 
a piece of standard practice is, in a precise sense, inconsistent with the time-consistent dynamic functional our axioms force.

\subsection{The Stochastic-Volatility Extension}
\label{subsec:stochvol}

We now allow the price volatility to be stochastic,
 letting $S$ be a continuous semimartingale with $d\langle S\rangle_t = \sigma_t^2 \, dt$, where $\sigma$ is a positive predictable process.
No assumption is made on the dynamics of $\sigma$ beyond  predictability and local boundedness; in particular, the result below covers Heston-type and rough-volatility models.
The natural question is whether the forced coefficient $\phi = \gamma\sigma^2/2$ of Proposition~\ref{prop:forced-phi} extends to this setting.
By Corollary~\ref{cor:clock-invariance} (clock-invariance) and the Dambis--Dubins--Schwarz theorem,   it does, with $\sigma^2$ replaced by the instantaneous $\sigma_t^2$.

\begin{proposition}[Clock-Corrected Penalty]
\label{prop:clock-corrected}
Let the price process satisfy $d\langle S\rangle_t = \sigma_t^2 \, dt$ with $\sigma$ a predictable process such that $\sigma_t > 0$ almost surely for almost every $t \in [0, T]$.
Under the same hypotheses as Theorem~\ref{thm:forced-uniqueness}, the market maker's certainty-equivalent function $\theta(t, q, \omega) = -\tfrac{1}{\gamma}\log \psi(t,q,\omega)$ satisfies a Bellman equation with running inventory cost
\begin{equation}
    \frac{\gamma}{2}\, q^2 \, d\langle S\rangle_t  =  \frac{\gamma \sigma_t^2}{2}\, q^2 \, dt.
    \label{eq:running-stochvol}
\end{equation}
Equivalently, in business-time units $u = \langle S\rangle_t$ (which carries the units of price-variance, not time),
 the running cost is the constant $\gamma q^2 / 2$ per unit of business-time.
\end{proposition}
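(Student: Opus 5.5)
The plan is to repeat the HJB computation of Proposition~\ref{prop:forced-phi}, now letting the volatility be a random, time-dependent input. By Theorem~\ref{thm:forced-uniqueness} and Remark~\ref{rem:cara-dynamic}, maximizing $J_t$ is equivalent to maximizing conditional CARA expected utility of $\WL_T$. I will therefore work with the dynamic value process
\begin{equation*}
V_t = \operatorname*{ess\,sup}_{\pi} \E[u(\WL_T(\pi)) \mid \F_t].
\end{equation*}
Because $\sigma$ is a general predictable process, this is a stochastic field rather than a Markov function. The factorization~\eqref{eq:V-factor} is then imposed in random form,
\begin{equation*}
V_t = -\exp\bigl(-\gamma(X_t + q_t S_t)\bigr)\,\psi(t,q_t,\omega),
\end{equation*}
where $\psi(\cdot,q,\cdot)$ is an adapted process with $\psi(T,q) = e^{\gamma L(q)}$.

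The ansatz is justified as in Proposition~\ref{prop:forced-phi}. Cash-additivity (Axiom~\ref{ax:J1}) pulls $X_t$ out of the certainty-equivalent. Continuity of $S$ gives $d(qS) = q\,dS$ between fills. At each fill, the identity $S + (q-1)S = qS$ makes the $X+qS$ factor pass unchanged through the impulse operators~\eqref{eq:impulse-ops}.

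Next, I apply It\^o's formula to the exponential factor at fixed $q$. Writing $d\langle S\rangle_t = \sigma_t^2\,dt$, this gives
\begin{equation*}
d\,e^{-\gamma q S_t} = e^{-\gamma q S_t}\bigl(-\gamma q\,dS_t + \tfrac{1}{2}\gamma^2 q^2\,d\langle S\rangle_t\bigr).
\end{equation*}
The drift has been set to zero as in Section~\ref{subsec:price-flow}, so the $dS$ term is a local martingale. The drift of $V_t$ then consists of three pieces: the quadratic-variation term $\tfrac12\gamma^2 q^2\sigma_t^2\,\psi\,dt$, the finite-variation part of $\psi$, and the impulse terms. The Bellman principle, which follows from Axiom~\ref{ax:J4} through the tower identity~\eqref{eq:tower-entropic}, requires this total drift to vanish at the optimum and be nonpositive otherwise. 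That is the random analogue of~\eqref{eq:psi-PDE}. Passing to $\theta = -\tfrac1\gamma\log\psi$ as before yields a backward equation whose running term is exactly $-\tfrac{\gamma}{2}q^2\,d\langle S\rangle_t$, with no expansion involved. This is~\eqref{eq:running-stochvol}. The business-time reading then follows from Corollary~\ref{cor:clock-invariance}: substituting $t = \tau(u)$ gives $d\langle S\rangle_{\tau(u)} = du$. Because $\sigma_t > 0$ almost everywhere, $\Lambda$ is strictly increasing, so $\tau$ is a genuine inverse. The running cost is therefore $\tfrac{\gamma}{2}q^2\,du$, and its coefficient is constant because $\gamma$ is the single scalar of Theorem~\ref{thm:forced-uniqueness}.

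The main obstacle is making sense of the ``Bellman equation'' when $\sigma$ is not Markovian. In that case $\psi$ may itself have a martingale component correlated with $B$, as in Heston models. The cross-variation $d\langle \psi, e^{-\gamma qS}\rangle$ then contributes an additional drift term, and one must check that it does not alter the $q^2\,d\langle S\rangle$ coefficient. My first attempt will use a BSDE formulation in the sense of Remark~\ref{rem:clock-BSDE}. Here $\theta(\cdot,q)$ solves a quadratic BSDE with martingale integrand $Z$. The cross-variation then appears as a term linear in $q$ times $Z$, plus a $|Z|^2$ term, which I would treat as part of the driver's dependence on $Z$ rather than as running inventory cost. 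Only the pure $q^2$ term is identified with~\eqref{eq:running-stochvol}. Existence for this BSDE would come from the bounded-terminal-value theory for quadratic BSDEs, which applies because $S$ is stopped and so $\WL_T \in \Linf(\F_T)$. If the correlated case proves unmanageable, I will state the identification under independence of $\sigma$ and $B$. In that case one can condition on the volatility path, reduce to the deterministic time-varying-$\sigma$ version of Proposition~\ref{prop:forced-phi}, and read off the coefficient $\gamma\sigma_t^2/2$ directly.
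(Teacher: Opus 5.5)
Your proposal is correct at the level of rigor of the statement, but it reaches the coefficient by a different route from the paper.

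\textbf{The paper's route.} The paper obtains the coefficient by a time change rather than by It\^o's formula in wall-clock time. It passes to business time $u=\Lambda_t$ and uses Dambis--Dubins--Schwarz to make the martingale part of $S_{\tau(u)}$ a standard Brownian motion. It rescales the Cox intensities by the Jacobian $\tau'(u)$, so $\tilde\lambda_u=\lambda_{\tau(u)}/\sigma_{\tau(u)}^2$, and maps strategies bijectively. It then reuses the constant-volatility HJB of Proposition~\ref{prop:forced-phi} with $\sigma=1$ to get $\partial_u\tilde\theta-\tfrac{\gamma}{2}q^2+\tilde{\mathcal J}^a+\tilde{\mathcal J}^b=0$. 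Finally it pulls back via $\partial_t\theta=\sigma_t^2\,\partial_u\tilde\theta$, multiplying the whole equation by $\sigma_t^2$.

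\textbf{Your route.} You read $\tfrac{\gamma}{2}q^2\,d\langle S\rangle_t$ directly off the quadratic-variation term of It\^o's formula, and the business-time form becomes a one-line substitution.

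\textbf{What each buys.} The paper's route makes the clock-invariance reading structural: $\gamma/2$ is literally the constant coefficient of the business-time equation, and $\sigma_t^2$ enters only as the Jacobian of the clock. It also lets the paper recycle Proposition~\ref{prop:forced-phi}. Your route has three advantages:
\begin{itemize}
\item It does not need to time-change point processes or predictable controls.
\item It needs $\sigma_t>0$ only for the business-time restatement. This matters, since the stopped price of Section~\ref{subsec:price-flow} has flat quadratic variation after $\tau_M$.
\item Most importantly, it confronts an issue the paper's proof passes over. The business-time intensities $\tilde\lambda_u$ are still random, so $\tilde\theta$ is again a random field whose martingale part can be correlated with $\tilde M$ in a Heston-type model. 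The claim that ``the HJB analysis applies directly with $\sigma$ replaced by $1$'' implicitly discards exactly the cross-variation term you flag.
\end{itemize}

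\textbf{Your BSDE treatment.} This is the right fix. With $d\theta=a\,dt+z\,dB+\cdots$, the drift condition contains $\tfrac{\gamma}{2}\bigl(\sigma_t^2q^2+2\sigma_t qz+z^2\bigr)$, plus terms from the part of the martingale orthogonal to $B$. The Markov case had only $\tfrac{\gamma}{2}\sigma^2q^2$ in that place, and the $z$-free part is the claimed running cost. You should state explicitly that this $z$-free part is what ``the $q^2$ coefficient'' means. The reason is that $z=z(t,q)$ itself depends on $q$, so the $z^2$ term can feed back into the effective $q$-dependence.

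\textbf{Two minor points.}
\begin{itemize}
\item Existence: scalar quadratic-BSDE results do not directly cover a family of equations indexed by inventory and coupled through the jump terms. It is simpler to take the value process itself as the solution, using Doob--Meyer plus the martingale optimality principle you already invoke. Alternatively, work in the $\psi$-variables, where the driver has no quadratic term in the integrand.
\item Fallback: conditioning on the whole volatility path enlarges the agent's information. It is cleaner to note that when the volatility is independent of $B$ and the intensities are not driven by $B$, one has $\langle S,\psi\rangle\equiv 0$. The cross term then vanishes without any conditioning.
\end{itemize}
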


\begin{proof}
We use a time-change argument in three steps.
Throughout the proof we use the convention that hat-decorated symbols ($\widehat J_u$) denote business-time reparametrizations on the wall-clock strategy space $\Pi$, tilde-decorated symbols ($\tilde\theta, \tilde\psi, \tilde V, \tilde\filt, \tilde S, \tilde M, \tilde\pi, \tilde\lambda^{a,b}, \tilde N^{a,b}, \tilde{\mathcal J}^{a,b}$) denote the analogous objects on the business-time setup, and undecorated symbols denote wall-clock quantities.
The two systems are related by the time-change $u = \Lambda_t$ with inverse $t = \tau(u)$.

\emph{Step 1: Time-change.}
Define $\Lambda_t \coloneqq \langle S\rangle_t = \int_0^t \sigma_s^2\, ds$.
Under the proposition's assumption that $\sigma_t > 0$ for almost every $t \in [0,T]$ almost surely, 
the process $\Lambda$ is continuous
 (by absolute continuity of the integral with respect to its upper limit) 
 and strictly increasing on $[0,T]$ 
 (since $\int_s^t \sigma_u^2\, du > 0$ for every $0 \leq s < t \leq T$ on the full-probability event $\{\sigma_u > 0 \text{ for a.e.\ } u\}$), 
 with $\Lambda_0 = 0$.
The right-continuous inverse $\tau(u) \coloneqq \inf\{t \geq 0 \mid \Lambda_t > u\}$ is a continuous time-change, 
and the time-changed filtration is $\tilde{\F}_u \coloneqq \F_{\tau(u)}$.
Write $S = M + A$ for the canonical decomposition of $S$ as a continuous semimartingale,  with $M$ the continuous local-martingale part (so $\langle M\rangle = \langle S\rangle$) and $A$ the continuous finite-variation part.
By the Dambis--Dubins--Schwarz theorem applied to $M$, the time-changed martingale part $\tilde M_u \coloneqq M_{\tau(u)}$ is a standard Brownian motion with respect to $\tilde\F$, so $\tilde S_u \coloneqq S_{\tau(u)} = \tilde M_u + A_{\tau(u)}$ is a continuous semimartingale whose martingale part is standard Brownian.
In the general stochastic-volatility case the HJB equation has the form
\begin{equation*}
    \partial_t V + \mu_t \, \partial_S V + \tfrac{1}{2}\sigma_t^2 \, \partial_{SS} V + \sup_{\delta^a} \lambda^a(\delta^a)\, \Delta^a V + \sup_{\delta^b} \lambda^b(\delta^b)\, \Delta^b V  =  0,
\end{equation*}
which extends~\eqref{eq:HJB} by the drift term $\mu_t \, \partial_S V$.
Under the CARA ansatz~\eqref{eq:V-factor}, the drift term produces $-\gamma q\, \mu_t \cdot V$, 
which contributes a \emph{linear}-in-$q$ term (the standard alpha-skewing of optimal quotes) to the $\theta$-equation, 
but does not affect the \emph{quadratic}-in-$q$ running inventory cost coming from $\tfrac{1}{2}\sigma_t^2 \, \partial_{SS} V$.
Since the conclusion of the proposition is about the quadratic-in-$q$ coefficient only, the drift is irrelevant for what follows, 
and for notational simplicity we proceed as if $A \equiv 0$. 
The same argument applied to the martingale part $M$ alone yields the identical quadratic coefficient, 
with the drift contributing only a linear-in-$q$ correction in business time.
 The time-change theorem for point processes \citep[Theorem T16, p.\ 41]{Bremaud1981} implies that the counting processes $\tilde{N}^a_u \coloneqq N^a_{\tau(u)}$ and $\tilde{N}^b_u \coloneqq N^b_{\tau(u)}$ are Cox processes under $\tilde\filt$, with business-time intensities given by the chain rule applied to $u = \Lambda_t$.
Recall that for a Cox process whose intensity is specified in one time parameterization,
 switching to a different time parameterization rescales the intensity by the Jacobian of the change of variables. 
Here the wall-clock and business-time parameterizations are related by $u = \Lambda_t$ with inverse $t = \tau(u)$, so the relevant Jacobian is $\tau'(u)$. 
A Cox process with wall-clock intensity $\lambda^{a,b}_t$ therefore has business-time intensity
\begin{equation*}
    \tilde\lambda^{a,b}_u  =  \lambda^{a,b}_{\tau(u)} \cdot \tau'(u)  =  \frac{\lambda^{a,b}_{\tau(u)}}{\Lambda'_{\tau(u)}}  =  \frac{\lambda^{a,b}_{\tau(u)}}{\sigma_{\tau(u)}^2},
\end{equation*}
where the second equality uses $\Lambda'_t = \sigma_t^2$, and the third uses $\tau'(u) = 1/\Lambda'_{\tau(u)}$, 
the inverse-function derivative for $\tau = \Lambda^{-1}$.
A strategy $\pi = (\delta^a, \delta^b, C) \in \Pi$ maps under the time-change to 
$\tilde\pi \coloneqq (\tilde\delta^a, \tilde\delta^b, \tilde C)$ with $\tilde\delta^{a,b}_u \coloneqq \delta^{a,b}_{\tau(u)}$ and $\tilde C_u \coloneqq C_{\tau(u)}$; 
this map is a bijection between $\Pi$ (defined with respect to $\filt$ on $[0,T]$) 
and the analogous admissible strategy space with respect to $\tilde\filt$ on $[0, \Lambda_T]$, 
since the admissibility conditions transfer pointwise under the continuous time-change.
The liquidation-adjusted terminal wealth is preserved, 
$\widetilde{\WL}_{\Lambda_T}(\tilde\pi) = \WL_T(\pi)$.

\emph{Step 2: HJB in business-time.}
By Corollary~\ref{cor:clock-invariance}, the business-time reparametrization  $\widehat J_u(\pi) \coloneqq J_{\tau(u)}(\pi)$ has the entropic form on $\tilde\filt$ with the same parameter $\gamma$ as $J$, hence is stationary in $\Lambda$ in the sense that the five axioms of Section~\ref{sec:axioms}, together with the three derived properties, hold for $\widehat J$ on $\tilde\filt$ with $t$ replaced by $u$ throughout.
The HJB analysis of Section~\ref{subsec:hjb} therefore applies directly in business-time, with $\sigma$ replaced by $1$ (since $\tilde M$ is standard Brownian) and arrival intensities $\tilde\lambda^{a,b}$ as in Step 1.
Substituting the ansatz $\tilde V(u, X, q, \tilde S) = -\exp(-\gamma(X+q\tilde S))\, \tilde\psi(u, q)$ and passing to $\tilde\theta(u, q) = -\tfrac{1}{\gamma}\log\tilde\psi(u,q)$, the business-time Bellman equation reads
\begin{equation}
    \partial_u \tilde\theta - \frac{\gamma}{2}\, q^2 + \tilde{\mathcal J}^a[\tilde\theta] + \tilde{\mathcal J}^b[\tilde\theta]  =  0,
    \label{eq:theta-business}
\end{equation}
with running cost rate $\tfrac{\gamma}{2} \, q^2$ per unit of business-time.

\emph{Step 3: Pull-back to wall-clock.}
Set $\theta(t,q,\omega) \coloneqq \tilde\theta(\Lambda_t(\omega), q, \omega)$, where the $\omega$-dependence on the right enters through $\Lambda_t(\omega)$ and through the path-dependence of $\tilde\theta$ on the realized intensities $\tilde\lambda^{a,b}(\omega)$.
The chain rule gives $\partial_t \theta = \sigma_t^2 \, \partial_u \tilde\theta$.
The impulse operators pull back as follows.
By Step~1, the business-time intensities satisfy $\tilde\lambda^{a,b}_u = \lambda^{a,b}_{\tau(u)}/\sigma_{\tau(u)}^2$, equivalently $\sigma_t^2 \, \tilde\lambda^{a,b}_{\Lambda_t} = \lambda^{a,b}_t$.
Hence, the business-time impulse contribution $\tilde\lambda^{a,b}_{\Lambda_t} \cdot \tilde{\mathcal{J}}^{a,b}[\tilde\theta]$, when multiplied by $\sigma_t^2$, becomes $\lambda^{a,b}_t \cdot \mathcal{J}^{a,b}[\theta]$, the wall-clock impulse contribution, since $\tilde{\mathcal{J}}^{a,b}[\tilde\theta] = \mathcal{J}^{a,b}[\theta]$ at corresponding $(u, t)$ pairs (the impulse acts on the $q$-argument, not on time).
The $\sigma_t^2$ factor here is the BSDE-level consequence of clock-invariance noted in Remark~\ref{rem:clock-BSDE}.
Multiplying~\eqref{eq:theta-business} by $\sigma_t^2$ and using these identities yields
\begin{equation*}
    \partial_t \theta - \frac{\gamma \sigma_t^2}{2}\, q^2 + \lambda^a_t \, \mathcal{J}^a[\theta] + \lambda^b_t \, \mathcal{J}^b[\theta]  =  0,
\end{equation*}
which corresponds to the wall-clock  running cost rate~\eqref{eq:running-stochvol}.
\end{proof}

In words,  the running inventory cost is constant in business-time and proportional to $\sigma_t^2$ in wall-clock time.
 A volatility spike proportionally raises the maker's inventory penalty rate; a quiet regime keeps it low.
The axiomatic framework forces this empirically natural behavior to emerge.

\paragraph{CJ in stochastic volatility, and the connection to rough-volatility market making.}
The Cartea--Jaimungal running penalty $\int_0^T q_s^2 \, ds$, applied uncritically in a stochastic-volatility setting, is inconsistent with the clock-invariance of Corollary~\ref{cor:clock-invariance}.
The clock-coherent version is $\int_0^T q_s^2 \, d\langle S\rangle_s = \int_0^T \sigma_s^2 q_s^2 \, ds$, which scales the inventory penalty with the instantaneous variance rate. 
This is the form Proposition~\ref{prop:clock-corrected} forces.

This prescription connects directly to the multi-asset framework  of \citet{RosenbaumZhang2022}, where the inventory risk is expressed in the quadratic-variation clock of an SPX underlying under the quadratic rough Heston model.
In our notation, their objective specializes  Proposition~\ref{prop:clock-corrected} to the rough Heston price process; what our framework adds is an axiomatic foundation for that scaling.
The Dambis--Dubins--Schwarz theorem applies as soon as $S$ has a continuous local-martingale part.
 The forced relation $\phi_t =  \gamma\sigma_t^2/2$ therefore survives in rough-volatility models (Bergomi, Stein--Stein, log-modulated, and the like) without any model-specific re-derivation. 
It is a consequence of the axiom system, not of any particular volatility model.
Any alternative scaling (wall-clock CJ, ad-hoc volatility-dependent heuristic) would be inconsistent with the structural clock-invariance of Corollary~\ref{cor:clock-invariance}, hence outside the axiom class.

The risk-aversion parameter $ \gamma$ is a single positive scalar by   Theorem~\ref{thm:forced-uniqueness} and is constant across the realized variance path; a market maker who wishes to allow $\gamma$ to vary across regimes would need to enrich the framework by indexing the preference functional over a state variable, which we do not develop here.


\section{Discussion}
\label{sec:discussion}

\subsection{Operational Implications}
\label{subsec:operational}

In operational terms, the framework delivers three concrete  things to a practitioner.

First, the running-penalty coefficient $\phi$ in any CJ-style implementation is not a free parameter.
It should be set as $\phi_t = \gamma\sigma_t^2/2$, with $\sigma_t$ the instantaneous volatility, 
so that the running penalty against the quadratic-variation clock has the constant coefficient $\gamma/2$.
Calibrating $\phi$ independently of $\gamma$ is over-parametrizing the preference side of the model.

 Second, the implied risk aversion $\gamma$ is recoverable from observed quoted spreads via the calibration-inversion of Corollary~\ref{cor:calibration-inversion}: 
  $\gamma = 2\phi/\sigma^2$.
For a CJ-tradition desk, this inversion provides a sanity check, since the $\gamma$ implied by the desk's running-penalty calibration should match, up to noise, the $\gamma$ implied by an AS-style fit to the same data.
Persistent disagreement between the two is a signal of miscalibration or of a preference-functional inconsistency.

Third, the same forced relation has an implementation consequence in stochastic-volatility settings: 
the inventory penalty must be integrated against the quadratic-variation clock, not wall-clock time. 
The clock-coherent CJ running cost is $\tfrac{\gamma}{2}\int_0^T q_s^2 \, d\langle S\rangle_s = \tfrac{\gamma}{2}\int_0^T \sigma_s^2 q_s^2 \, ds$, 
not $\phi\int_0^T q_s^2 \, ds$ with constant $\phi$. 
A wall-clock penalty with constant coefficient systematically under-charges inventory on high-volatility days and over-charges it on low-volatility days.

These three prescriptions are derived entirely from the underlying axioms and hold independently of any additional modeling assumptions.
A desk that accepts the five axioms
 (cash-additivity, normalization, concavity, strong dynamic consistency,  and law-invariance)
  is committed to all three, 
  whether or not it finds the axiomatic framing intuitive. 
  Figure~\ref{fig:phi-gamma-sigma} summarizes the constraint geometry and the stochastic-volatility correction. 
  
\begin{figure}[!htbp]
\centering
\begin{subfigure}[t]{\textwidth}
\centering
\begin{tikzpicture}
\begin{axis}[
    width=0.86\textwidth,
    height=7.6cm,
    view={150}{20},
    z post scale=2.0,
    xlabel={$\gamma$},
    ylabel={$\sigma^{2}$},
    zlabel={$\phi$},
    xlabel style={sloped, font=\small},
    ylabel style={sloped, font=\small},
    zlabel style={rotate=-90, font=\small},
    xmin=0.1, xmax=1.0,
    ymin=0.01, ymax=0.25,
    zmin=0, zmax=0.135,
    xtick={0.1,0.4,0.7,1.0},
    ytick={0.05,0.15,0.25},
    ztick={0,0.04,0.08,0.12},
    xticklabel style={/pgf/number format/.cd, fixed, precision=2, /tikz/.cd, font=\footnotesize},
    yticklabel style={/pgf/number format/.cd, fixed, precision=2, /tikz/.cd, font=\footnotesize},
    zticklabel style={/pgf/number format/.cd, fixed, precision=2, /tikz/.cd, font=\footnotesize},
    grid=both,
    major grid style={line width=.1pt,draw=gray!30},
    clip=false,
    colormap={blueheat}{rgb=(0.92,0.96,1.0) rgb=(0.20,0.42,0.78) rgb=(0.05,0.10,0.40)},
    point meta=z,
    domain=0.1:1.0,
    y domain=0.01:0.25,
    samples=36,
    samples y=36,
]
\addplot3[gray!60, line width=0.45pt, no marks] coordinates {(0.100,0.1000,0) (0.146,0.0686,0) (0.207,0.0484,0) (0.283,0.0353,0) (0.420,0.0238,0) (0.634,0.0158,0) (1.000,0.0100,0)};
\addplot3[gray!60, line width=0.45pt, no marks] coordinates {(0.100,0.2000,0) (0.146,0.1372,0) (0.207,0.0967,0) (0.283,0.0707,0) (0.420,0.0476,0) (0.634,0.0316,0) (1.000,0.0200,0)};
\addplot3[gray!60, line width=0.45pt, no marks] coordinates {(0.161,0.2484,0) (0.222,0.1802,0) (0.314,0.1276,0) (0.451,0.0887,0) (0.634,0.0631,0) (0.878,0.0456,0) (1.000,0.0400,0)};
\addplot3[gray!60, line width=0.45pt, no marks] coordinates {(0.329,0.2433,0) (0.451,0.1774,0) (0.603,0.1326,0) (0.786,0.1017,0) (1.000,0.0800,0)};
\addplot3[gray!60, line width=0.45pt, no marks] coordinates {(0.481,0.2493,0) (0.634,0.1893,0) (0.802,0.1497,0) (1.000,0.1200,0)};
\addplot3[gray!60, line width=0.45pt, no marks] coordinates {(0.649,0.2465,0) (0.802,0.1996,0) (1.000,0.1600,0)};
\addplot3[gray!60, line width=0.45pt, no marks] coordinates {(0.802,0.2495,0) (0.908,0.2201,0) (1.000,0.2000,0)};
\addplot3 [
    surf, opacity=0.78,
    faceted color=black!30,
    z buffer=sort,
    mesh/ordering=y varies,
] {x*y/2};
\addplot3 [only marks, mark=*, mark size=2.8pt, color=red!85!black]
    coordinates {(0.25, 0.08, 0.070)};
\addplot3 [only marks, mark=o, mark size=2.6pt, mark options={line width=0.9pt}, color=red!85!black]
    coordinates {(0.25, 0.08, 0.010)};
\addplot3 [color=red!85!black, dashed, line width=0.8pt]
    coordinates {(0.25, 0.08, 0.070) (0.25, 0.08, 0.010)};
\addplot3 [only marks, mark=*, mark size=2.8pt, color=red!85!black]
    coordinates {(0.85, 0.20, 0.025)};
\addplot3 [only marks, mark=o, mark size=2.6pt, mark options={line width=0.9pt}, color=red!85!black]
    coordinates {(0.85, 0.20, 0.085)};
\addplot3 [color=red!85!black, dashed, line width=0.8pt]
    coordinates {(0.85, 0.20, 0.025) (0.85, 0.20, 0.085)};
\node[red!85!black, font=\footnotesize, anchor=west]
    at (axis cs:0.10, 0.01, 0.115) {$\phi$ over-calibrated};
\draw[red!85!black, line width=0.4pt, ->, >=stealth]
    (axis cs:0.20, 0.04, 0.110) -- (axis cs:0.245, 0.075, 0.073);
\node[red!85!black, font=\footnotesize, anchor=west]
    at (axis cs:0.95, 0.01, 0.115) {$\phi$ under-calibrated};
\draw[red!85!black, line width=0.4pt, ->, >=stealth]
    (axis cs:0.95, 0.06, 0.112) -- (axis cs:0.86, 0.19, 0.028);
\end{axis}
\end{tikzpicture}
\caption{The forced relation $\phi  = \gamma\sigma^{2}/2$ as a surface in $(\gamma, \sigma^{2}, \phi)$ space, with the hyperbolic level curves $\phi = c$  projected onto the $(\gamma, \sigma^{2})$ floor.
Any axiom-consistent calibration lies on this surface.
The two filled red dots show off-surface desk calibrations, one with $\phi$ over-calibrated and one with $\phi$ under-calibrated; 
each is connected by a dashed segment to the corresponding open red circle on the surface, which marks the forced value.}
\label{fig:constraint-surface}
\end{subfigure}

\vspace{1.4em}

\begin{subfigure}[t]{\textwidth}
\centering
\begin{tikzpicture}
\begin{axis}[
    width=0.92\textwidth,
    height=6.4cm,
    xlabel={$t$ (intraday hours from open)},
    ylabel={$\phi$},
    label style={font=\small},
    xmin=0, xmax=6.5,
    ymin=0, ymax=0.13,
    xtick={0,1,2,3,4,5,6},
    ytick={0,0.025,0.05,0.075,0.10,0.125},
    xticklabel style={font=\footnotesize},
    yticklabel style={/pgf/number format/.cd, fixed, precision=3, /tikz/.cd, font=\footnotesize},
    grid=both,
    major grid style={line width=.1pt,draw=gray!30},
    legend style={
        at={(0.99,0.98)},
        anchor=north east,
        font=\scriptsize,
        draw=gray!50,
        fill=white,
        fill opacity=0.92,
        text opacity=1,
    },
    legend cell align=left,
]
\addplot[blue!70!black, very thick, smooth, name path=A] coordinates {
(0.000,0.0145) (0.082,0.0115) (0.165,0.0135) (0.247,0.0155) (0.329,0.0143) (0.411,0.0174) (0.494,0.0192) (0.576,0.0174) (0.658,0.0273) (0.741,0.0304) (0.823,0.0322) (0.905,0.0409) (0.987,0.0527) (1.070,0.0608) (1.152,0.0731) (1.234,0.0792) (1.316,0.0995) (1.399,0.1037) (1.481,0.1058) (1.563,0.0918) (1.646,0.0988) (1.728,0.0778) (1.810,0.0609) (1.892,0.0564) (1.975,0.0362) (2.057,0.0228) (2.139,0.0185) (2.222,0.0102) (2.304,0.0133) (2.386,0.0086) (2.468,0.0071) (2.551,0.0093) (2.633,0.0053) (2.715,0.0087) (2.797,0.0055) (2.880,0.0083) (2.962,0.0085) (3.044,0.0147) (3.127,0.0173) (3.209,0.0150) (3.291,0.0202) (3.373,0.0207) (3.456,0.0262) (3.538,0.0265) (3.620,0.0282) (3.703,0.0326) (3.785,0.0453) (3.867,0.0588) (3.949,0.0570) (4.032,0.0595) (4.114,0.0757) (4.196,0.0725) (4.278,0.0750) (4.361,0.0772) (4.443,0.0752) (4.525,0.0725) (4.608,0.0639) (4.690,0.0674) (4.772,0.0587) (4.854,0.0571) (4.937,0.0441) (5.019,0.0326) (5.101,0.0322) (5.184,0.0320) (5.266,0.0214) (5.348,0.0166) (5.430,0.0131) (5.513,0.0113) (5.595,0.0111) (5.677,0.0085) (5.759,0.0124) (5.842,0.0089) (5.924,0.0095) (6.006,0.0121) (6.089,0.0127) (6.171,0.0102) (6.253,0.0120) (6.335,0.0136) (6.418,0.0127) (6.500,0.0107)
};
\addlegendentry{$\phi_t = \gamma\sigma_t^{2}/2$ (clock-coherent)}
\addplot[red!75!black, very thick, dashed, name path=B] coordinates {
    (0,0.0354) (6.5,0.0354)
};
\addlegendentry{$\bar\phi$ (wall-clock CJ)}
\addplot[red!28, draw=none, forget plot] fill between[
    of=A and B, soft clip={domain=0:6.5}, split,
    every segment no 0/.style={fill=red!28},
    every segment no 2/.style={fill=red!28},
    every segment no 4/.style={fill=red!28},
    every segment no 6/.style={fill=red!28},
    every even segment/.style={fill=none},
];
\addplot[cyan!35, draw=none, forget plot] fill between[
    of=A and B, soft clip={domain=0:6.5}, split,
    every segment no 1/.style={fill=cyan!35},
    every segment no 3/.style={fill=cyan!35},
    every segment no 5/.style={fill=cyan!35},
    every segment no 7/.style={fill=cyan!35},
    every odd segment/.style={fill=none},
];
\addlegendimage{area legend, fill=red!28, draw=red!50}
\addlegendentry{wall-clock under-charges}
\addlegendimage{area legend, fill=cyan!35, draw=cyan!60!black}
\addlegendentry{wall-clock over-charges}
\end{axis}
\end{tikzpicture}
\caption{Clock-coherent inventory penalty rate $\phi_t  = \gamma\sigma_t^{2}/2$ on a stylized intraday volatility path with two bursts, against the wall-clock CJ constant $\bar\phi$ calibrated to the daily average.
The wall-clock constant under-charges inventory during high-volatility regimes and over-charges it during quiet periods, inconsistent with the clock-invariance of Corollary~\ref{cor:clock-invariance} and corrected by Proposition~\ref{prop:clock-corrected}.}
\label{fig:clock-correction}
\end{subfigure}

\caption{The forced relation $\phi = \gamma\sigma^{2}/2$ in two views.
(a) The constraint surface in parameter space; off-surface calibrations are over-parametrized.
(b) The stochastic-volatility version, $\phi_t = \gamma\sigma_t^{2}/2$, tracks realized variance pointwise in time, while a wall-clock constant systematically misprices inventory.}
\label{fig:phi-gamma-sigma}
\end{figure}

\subsection{Drawdown-Averse Market Makers}
\label{subsec:drawdown}

Proposition~\ref{prop:wealth-summary} commits us to preferences that depend on the strategy only through liquidation-adjusted terminal wealth $\WL_T$.
Path-functional preferences (aversion to maximum drawdown $\sup_{s \leq T} (\WL_0 - \WL_s)$, to time-average wealth, or to drawdown over a rolling window) are genuinely outside this framework and cannot be absorbed into $\WL_T$ by any reparametrization.

This is a real limitation.
Proprietary trading firms typically impose intraday drawdown limits on their market makers, 
whereby if cumulative losses since the start of the trading day exceed a threshold, the desk is stopped out and required to flatten its book.
Such limits are externally imposed constraints, not preferences in the usual sense, but the behavior they induce (more conservative quoting after a losing streak) is genuinely path-functional and is not captured by any $\WL_T$-based functional.
We note that an externally imposed intraday drawdown constraint, viewed as a hard rule on admissible strategies, changes the effective strategy space $\Pi$ and hence the optimization problem; it need not change the preference functional $J$ itself.
The limitation of our framework is therefore about \emph{preference modeling}, not about whether hard constraints can be imposed on the strategy space: 
such constraints can be added orthogonally to $\Pi$, but a market maker whose preferences are themselves drawdown-averse is outside the wealth-summary framework regardless.

 The result is best read as a uniqueness theorem within the class of $\WL_T$-based preferences, a substantive (but not universal) modeling commitment.
Note also the scoping distinction from optimal execution \citep{AlmgrenChriss2000}, which is a cost-minimization problem with fixed terminal target $q_T = 0$; our forced-uniqueness theorem does not apply to that setting.
The right axiomatic treatment of a genuinely drawdown-averse market maker is an open question. 
Relevant entry points are \citet{ChernyMadan2009} on path-dependent acceptability indices and \citet{CheriditoDelbaenKupper2006} on dynamic monetary risk measures for bounded discrete-time processes.

\subsection{Relation to Gu\'eant (2017)}
\label{subsec:gueant}

The strongest prior unification of the AS and CJ frameworks is that of \citet{Gueant2017}, who shows that in both his Model~A (AS with CARA expected-utility maximization) and Model~B (CJ with a running inventory penalty),  the value function $v(t,q)$ admits an ansatz reducing the HJB equation to a linear tridiagonal ODE system, and the two ODE systems are related by an explicit change of variables.
In this ODE-level sense the two frameworks are reconciled, in that the same machinery solves both, 
and a CJ-style problem with a given $\phi$ can be mapped to an AS-style problem with a corresponding $\gamma$.

Our contribution sits at a different structural level.
Gu\'eant unifies the two frameworks at the level of the value function and the ODE it satisfies,  by direct calculation in a specific HJB problem;
the present paper unifies them at the level of the preference functional itself, by axiomatic characterization.
The difference has three consequences.
First, parameter-forcing rather than parameter-mapping: 
Gu\'eant's result maps between two free-parameter model families, while ours identifies a constraint surface 
  $\{\phi = \gamma\sigma^2/2,\ \alpha = \tfrac{1}{2}L''(0)\}$ 
   on which any axiom-consistent preference must lie. 
A desk off the surface is over-parametrized.
Second, functional-level rather than HJB-level: Gu\'eant's reconciliation requires a specific HJB  problem with specific intensities, whereas the axiomatic result holds whenever the axioms apply (Proposition~\ref{prop:forced-phi} of Section~\ref{sec:supporting} is a specialization rather than the primary content).
Third, further structural consequences (the robust-optimization dual, position-size nonlinearity, CVaR incompatibility) are visible at the functional level and invisible at the ODE level.

\subsection{Further Structural Connections}
\label{subsec:further-connections}
Theorem~\ref{thm:forced-uniqueness} sits in a network of structural connections with adjacent literatures.
We record the multi-asset extension here. 
Further connections, to the prediction-market tradition and to minimal-entropy pricing measures, 
are noted in Section~\ref{subsec:future-work}.

\paragraph{Multi-asset market making with a single risk-aversion parameter.}
\label{par:multi-asset}
The forced uniqueness theorem is stated for a single-asset market maker, 
but its natural multi-asset extension  preserves the single-scalar-$\gamma$ structure.
Once one reduces, via the wealth-summary property (W) (Proposition~\ref{prop:wealth-summary}), to a functional of the scalar random variable $\WL_T \in \Linf(\F_T)$, the entire multi-asset structure is absorbed into the joint distribution of $\WL_T$, 
and the Kupper--Schachermayer representation continues to apply unchanged.
Asset-specific structure (correlations, asset-specific liquidation costs, asset-specific volatilities) enters through the joint distribution of $\WL_T$, not through any preference-side asymmetry. 
The formal statement and proof are in Appendix~\ref{app:multi-asset} (Theorem~\ref{thm:multi-asset}).

Operationally, this gives a sharp cross-asset falsifiability criterion. 
A desk that fits, for each of $K$ assets, its implicit $\gamma$ via the inversion 
$\gamma = 2\phi/\sigma^2$ of Corollary~\ref{cor:calibration-inversion} should obtain $K$ approximately equal values. 
Persistent cross-asset disagreement is evidence either of a violation of the multi-asset axiom system or of inconsistent calibration.

\subsection{Directions for Future Work}
 \label{subsec:future-work}

 We close with four directions which our framework suggests but does not pursue.

\paragraph{Empirical testability of the forced coefficient.}
The forced relation $\phi = \gamma\sigma^2/2$ is an empirically falsifiable prediction.
A market-making desk that calibrates $\gamma$ from AS-style quoted-spread data and $\phi$ from CJ-style P\&L or inventory-penalty data can compare the two implied values against $\gamma\sigma^2/2$ at the prevailing realized volatility. 
The calibration-inversion of Corollary~\ref{cor:calibration-inversion} gives the explicit formula $\gamma_{\phi} = 2\phi/\sigma^2$.
A desk for which $\gamma_{\phi}$ and the AS-implied $\gamma_{\text{AS}}$ disagree persistently is 
either over-parametrized or violating at least one of the five axioms.
We identify this as a productive direction for applied work with desk-internal or high-frequency data.

\paragraph{Strategy space: the limit order book.}
We have taken the strategy space to be the AS specification of two real-valued quote-distance processes (augmented with a cash-injection process).
Real limit order books operate under price-time priority, where modifying a quote loses queue priority; 
see \citet{LawViens2019}, \citet{LuAbergel2018}, and \citet{MoallemiYuan2016}.
Our preference-side axioms are independent of the strategy space; a natural extension would combine them with explicit queue-aware strategy-space axioms.
  An interesting question is whether the AS continuous-quote control arises as the small-tick limit of a queue-aware control.

\paragraph{Endogenous flow.}
Our assumption that the counterparty intensities $\lambda^a, \lambda^b$ are exogenous functions of quote distances is a simplification, since sophisticated counterparties (``skew sniffers'') read the market maker's position from her quote skew and adjust their arrivals \citep{BarzykinBergaultGueantLemmel2025}.
A natural extension would investigate whether equilibrium-rational counterparty behavior is consistent with Axioms~\ref{ax:J1}--\ref{ax:J5}; preliminary investigation suggests the framework absorbs endogenous flow as state-dependence of the intensities rather than as a violation of any axiom.
We leave the formal development to future work.

\paragraph{Convergence with the prediction-market tradition.}
The prediction-market tradition, originating with the market-scoring-rule mechanism of \citet{Hanson2003, Hanson2007} and developed by \citet{ChenPennock2007}, \citet{AbernethyChenVaughan2013}, and \citet{AbernethyFrongilloKutty2014},
identifies the entropic family as canonical under different axioms (path-independence, bounded loss, translation-invariance, monotonicity), with the cost function forced to be the conjugate of a generalized entropy \citep{AbernethyFrongilloKutty2014}.
Although the two axiom systems are structurally different, the shared emergence of the entropic family suggests a deeper connection through cash-additivity and a consistency requirement, whose unified treatment we leave to future work.

\paragraph{Connection to minimal-entropy pricing measures.}
The entropic functional is intimately connected to the minimal-entropy martingale measure of \citet{Frittelli2000}, which arises in utility-indifference pricing in incomplete markets \citep{FollmerSchweizer1991}, and identified via the dual representation of the entropic risk measure in the CARA case.
The connection is suggestive rather than exact, since our strategy space (quoting plus cash injection) differs from the self-financing space of indifference pricing; we leave the detailed analysis to future work.


\section{Conclusion}
\label{sec:conclusion}

We have proposed five core axioms for a dynamic preference functional of an inventory market maker (cash-additivity, normalization, concavity with strict concavity at $t=0$,  strong dynamic consistency, and law-invariance), and three additional properties ((M) monotonicity, (W) wealth-summary, and (R) right-continuity) that are derivable from them.
The five axioms force the market maker's preferences to be the entropic certainty-equivalent on liquidation-adjusted terminal wealth, parametrized by a single positive scalar $\gamma$ which is clock-invariant in the sense of Corollary~\ref{cor:clock-invariance}.

The result has several sharp consequences.
  The Avellaneda--Stoikov framework is the unique law-invariant dynamically-consistent  inventory market making model; the Cartea--Jaimungal objective is incompatible as a primitive functional but recoverable as a second-order approximation with a forced running-penalty coefficient  $\phi = \gamma\sigma^2/2$ and forced terminal coefficient $\alpha = \tfrac{1}{2}L''(0)$.
The forced relation $\phi = \gamma  \sigma^2/2$ is invertible, $\gamma = 2\phi/\sigma^2$, providing an audit instrument for desks operating in the CJ tradition.
The forced functional  admits a robust-optimization dual under a Knightian-ambiguity reading, is not positively homogeneous (so linear-in-size risk-capital allocation misprices large positions), and is structurally incompatible with dynamic CVaR.
The market maker's preference structure has one free scalar parameter on the open half-line, interpolating between risk-neutral and worst-case limits; the constant-absolute-risk-aversion property of the AS utility is a corollary of the deeper uniqueness, not a primitive choice.

The framework has a real limitation, namely that drawdown-averse market makers, for whom intraday loss limits induce path-functional preferences, are not covered.
Four directions for future work emerged from the analysis: (1) empirical testability of the forced coefficient, (2) the strategy-space side (limit-order-book microstructure), (3) endogenous counterparty flow, and (4) the connection to the prediction-market tradition.

The message is this. 
Since the publication of \citet{AvellanedaStoikov2008}, the inventory market making literature has carried two parallel traditions with the choice between them governed by tractability rather than principle.
Our result is that there are not two traditions but one,  and that the apparent freedom in choosing between them conceals a forced constraint, 
namely $\phi = \gamma\sigma^2/2$.
A practitioner who calibrates $\phi$ and $\gamma$ as independent parameters is leaving a falsifiable prediction on the table.


\appendix

\section{Derived Properties and Independence of Axioms}
 \label{app:indep}

This appendix establishes two complementary results about the axiom system of Section~\ref{sec:axioms}.
First, we prove the three derivability propositions mentioned in Section~\ref{subsec:derived-properties}: 
monotonicity (M) (Proposition~\ref{prop:monotonicity-app}), the wealth-summary property (W) (Proposition~\ref{prop:wealth-summary-app}), 
and right-continuity in time (R) (Proposition~\ref{prop:right-continuity-app}).
The relevance condition is likewise derivable from the five core axioms, 
as already established by Lemma~\ref{lem:relevance-derivable} in Section~\ref{subsec:relevance}.
Second, we prove that the five core axioms (J1)--(J5) are \emph{logically independent}:
 for each axiom, there exists a preference functional satisfying the other four but not that one (Theorem~\ref{thm:independence}).

\subsection{Derived Properties}
\label{subsec:derived-app}

Throughout this subsection, $J = (J_t)_{t \in [0,T]}$ is a dynamic preference functional on the strategy space $\Pi$ satisfying a subset of Axioms~\ref{ax:J1}--\ref{ax:J5} specified separately for each result.
We make repeated use of the standing assumption (Section~\ref{sec:setup}) that $\Pi$ is closed under bounded $\F_T$-measurable cash injection. 
That is, for any $\pi \in \Pi$ and any bounded $c \in \Linf(\F_T)$, the cash-injected strategy $\pi^{[c]}$ (defined to inject $c$ at time $T$) lies in $\Pi$, with $\WL_T(\pi^{[c]}) = \WL_T(\pi) + c$ almost surely (Lemma~\ref{lem:density} of Appendix~\ref{app:technical}).
We also use the existence of a strategy $\pi_0 \in \Pi$ with $\WL_T(\pi_0) = 0$ almost surely (the constant-zero strategy, also Lemma~\ref{lem:density}).

\begin{proposition}[Monotonicity (M), Appendix Version]
\label{prop:monotonicity-app}
Any map  $J$ satisfying Axioms~\ref{ax:J1} (cash-additivity),  \ref{ax:J2} (normalization), and~\ref{ax:J4} (strong dynamic consistency) also satisfies the monotonicity property (M) of Proposition~\ref{prop:monotonicity}: 
for all $\pi, \pi' \in \Pi$, if $\WL_T(\pi) \geq \WL_T(\pi')$ almost surely, then $J_t(\pi) \geq J_t(\pi')$ almost surely for every $t \in [0,T]$.
\end{proposition}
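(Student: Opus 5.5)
The argument runs in two stages: first establish monotonicity at the terminal time $t = T$, where $J_T$ is pinned down completely by (J1) and (J2), and then transport it backwards with (J4). No representation result is needed, and (J3) and (J5) play no role.

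\emph{Stage 1: identify $J_T$.} Fix $\pi \in \Pi$ and set $c \coloneqq -\WL_T(\pi)$. This is bounded and $\F_T$-measurable by Lemma~\ref{lem:WL-bounded}, so it is an admissible cash amount for injection at time $T$. By the standing closure hypothesis (Lemma~\ref{lem:density}), the cash-injected strategy $\pi^{[c]}$ lies in $\Pi$ and satisfies $\WL_T(\pi^{[c]}) = \WL_T(\pi) + c = 0$ almost surely. Axiom~\ref{ax:J2} then gives $J_T(\pi^{[c]}) = 0$. Axiom~\ref{ax:J1} at $t = T$ gives $J_T(\pi^{[c]}) = J_T(\pi) + c$. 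Combining the two,
\begin{equation*}
J_T(\pi) = -c = \WL_T(\pi) \quad \text{almost surely, for every } \pi \in \Pi.
\end{equation*}

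\emph{Stage 2: propagate backwards.} Suppose $\WL_T(\pi) \geq \WL_T(\pi')$ almost surely. By Stage 1, $J_T(\pi) = \WL_T(\pi) \geq \WL_T(\pi') = J_T(\pi')$ almost surely. This is exactly the hypothesis of Axiom~\ref{ax:J4} with the pair $s \leq T$, for an arbitrary $s \in [0,T]$. Hence $J_s(\pi) \geq J_s(\pi')$ almost surely for every $s$, which is (M).

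The only delicate point is Stage 1, where cash-additivity must be applied at the terminal time with a genuinely $\F_T$-measurable random cash amount. I need to confirm three things:
\begin{itemize}
\item Axiom~\ref{ax:J1} is stated for all $t \in [0,T]$, including $t = T$, and for all bounded $\F_t$-measurable $c$.
\item The injection $C^{[c]}_s = C_s + c\,\mathbf{1}_{\{s \geq T\}}$ is an admissible adapted finite-variation process.
\item It leaves $q_T$ unchanged while shifting $X_T$ by $c$.
\end{itemize}
The second and third points are precisely what Lemma~\ref{lem:density} (Step 1) records, so I would cite it there rather than re-verify admissibility. Once Stage 1 is in place, Stage 2 is a one-line application of (J4).
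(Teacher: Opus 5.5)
Your proposal is correct and matches the paper's proof essentially line for line. The paper also first shows $J_T(\pi) = \WL_T(\pi)$ by injecting $-\WL_T(\pi)$ at time $T$ and combining (J2) with (J1) at $t = T$, then applies (J4) to the pair of times $(t, T)$ to carry the terminal inequality back to every $t$; your admissibility check via Lemma~\ref{lem:WL-bounded} and Step~1 of Lemma~\ref{lem:density} is the same justification the paper relies on.
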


\begin{proof}
The argument has two steps.

\smallskip
\emph{Step 1: At terminal time $T$, the functional $J_T$ equals liquidation-adjusted terminal wealth.}
Fix $\pi \in \Pi$ and let $W \coloneqq \WL_T(\pi) \in \Linf(\F_T)$.
Consider the strategy $\pi^{[-W]} \in \Pi$ obtained by injecting the bounded $\F_T$-measurable cash amount $-W$ at time $T$.
Its liquidation-adjusted terminal wealth is  $\WL_T(\pi^{[-W]}) = \WL_T(\pi) + (-W) = W - W = 0$ almost surely.
By Axiom~\ref{ax:J2} (normalization) applied to $\pi^{[-W]}$, we have $J_T(\pi^{[-W]}) = 0$ almost surely.
On the other hand, Axiom~\ref{ax:J1} (cash-additivity at time $T$, with the  $ \F_T$-measurable cash injection $c = -W$) gives $J_T(\pi^{[-W]}) = J_T(\pi) + (-W) = J_T(\pi) - W$ almost surely.
Combining the two identities yields $J_T(\pi) = W = \WL_T(\pi)$ almost surely.
Since $\pi \in \Pi$ was arbitrary, we have established
\begin{equation}
    J_T(\pi)  =  \WL_T(\pi) \quad \text{almost surely, for every } \pi \in \Pi.
    \label{eq:JT-identity-app}
\end{equation}

\smallskip
\emph{Step 2: Propagate monotonicity backwards via dynamic consistency.}
Let $\pi, \pi' \in \Pi$ satisfy $\WL_T(\pi) \geq \WL_T(\pi')$ almost surely.
By~\eqref{eq:JT-identity-app}, $J_T(\pi) =  \WL_T(\pi) \geq \WL_T(\pi') = J_T(\pi')$ almost surely.
We then apply Axiom~\ref{ax:J4} (strong dynamic consistency) with the pair $(\pi, \pi')$
  at times $(s, t) = (t, T)$ for any $t \in [0,T]$: since $J_T(\pi) \geq J_T(\pi')$ almost surely,  the axiom yields $J_t(\pi) \geq J_t(\pi')$ almost surely.
This is the monotonicity property.
\end{proof}

\begin{proposition}[Wealth-Summary (W), Appendix Version]
\label{prop:wealth-summary-app}
Any $J$ satisfying monotonicity (M) of Proposition~\ref{prop:monotonicity-app} satisfies 
the wealth-summary property of Proposition~\ref{prop:wealth-summary}:
for all $\pi, \pi' \in \Pi$,  $\WL_T(\pi) = \WL_T(\pi')$ a.s.\ implies $J_t(\pi) = J_t(\pi')$ a.s.\ for every $t \in [0,T]$.
In particular, any $J$ satisfying Axioms~\ref{ax:J1}, \ref{ax:J2}, and~\ref{ax:J4} satisfies the wealth-summary property.
\end{proposition}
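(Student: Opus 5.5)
The plan is to deduce the statement from monotonicity (M) by applying it twice. Equality is obtained by sandwiching: each strategy's terminal wealth dominates the other's, so each preference value dominates the other's. No further structure of $J$ enters beyond (M), and that is exactly what the first sentence of the proposition asserts.

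Concretely, fix $\pi, \pi' \in \Pi$ with $\WL_T(\pi) = \WL_T(\pi')$ almost surely, and fix $t \in [0,T]$. Almost-sure equality gives both $\WL_T(\pi) \geq \WL_T(\pi')$ a.s.\ and $\WL_T(\pi') \geq \WL_T(\pi)$ a.s. Applying (M) of Proposition~\ref{prop:monotonicity-app} to the ordered pair $(\pi,\pi')$ gives $J_t(\pi) \geq J_t(\pi')$ a.s. Applying it to $(\pi',\pi)$ gives $J_t(\pi') \geq J_t(\pi)$ a.s. The two exceptional null sets are countably many (here, two), so their union is null. Off that union both inequalities hold, hence $J_t(\pi) = J_t(\pi')$ almost surely. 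Since $t$ was arbitrary, the property holds for every $t \in [0,T]$.

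For the second sentence, I would chain results. Proposition~\ref{prop:monotonicity-app} shows that Axioms~\ref{ax:J1}, \ref{ax:J2} and~\ref{ax:J4} imply (M), and the first part then yields (W).

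I do not expect a real obstacle. The only point worth a remark is well-definedness: (M) must be quantified over all ordered pairs in $\Pi$, including the reversed pair. It is so stated, so nothing beyond the earlier proposition is needed. In particular, neither concavity (J3) nor law-invariance (J5) is used.
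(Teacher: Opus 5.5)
Your proof is correct and is the paper's argument: apply (M) to both ordered pairs $(\pi,\pi')$ and $(\pi',\pi)$, then combine the two almost-sure inequalities into equality. The second sentence follows, as you say, by chaining through Proposition~\ref{prop:monotonicity-app}, which derives (M) from Axioms~\ref{ax:J1}, \ref{ax:J2} and~\ref{ax:J4}.
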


\begin{proof}
Suppose $\WL_T(\pi) = \WL_T(\pi')$ almost surely.
Then in particular $\WL_T(\pi) \geq \WL_T(\pi')$ almost surely and
 $\WL_T(\pi') \geq \WL_T(\pi)$ almost surely.
By Proposition~\ref{prop:monotonicity-app} applied in the first direction, $J_t(\pi) \geq J_t(\pi')$  almost surely; 
by Proposition~\ref{prop:monotonicity-app} applied in the second direction, $J_t(\pi') \geq J_t(\pi)$ almost surely.
The two inequalities together yield $J_t(\pi) = J_t(\pi')$ almost surely.
\end{proof}

\begin{proposition}[Right-Continuity in Time (R), Appendix Version]
\label{prop:right-continuity-app}
Any $J$ satisfying Axioms~\ref{ax:J1}--\ref{ax:J5} also satisfies the right-continuity property (R) of
  Proposition~\ref{prop:right-continuity}: 
  for every $\pi \in \Pi$ and every $t \in [0, T)$, the path $s \mapsto J_s(\pi)$ is right-continuous in probability at $s = t$.
\end{proposition}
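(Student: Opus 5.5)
The plan follows the two-stage outline sketched after Proposition~\ref{prop:right-continuity}, and avoids the circularity with Sub-step~3c of the main proof, which assumes (R). The route establishes the entropic representation at \emph{every} $t_0 \in [0,T]$ directly; right-continuity then follows from martingale convergence.

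\emph{Stage 1a (dyadic times).} Sub-steps~3a--3b of the proof of Theorem~\ref{thm:forced-uniqueness} use only (J1)--(J5), (M), (W) and relevance (Lemma~\ref{lem:relevance-derivable}), and not (R). They therefore already give a single $\gamma\in(0,\infty)$ with $\tilde J_t(W) = -\tfrac1\gamma\log\E[e^{-\gamma W}\mid\F_t]$ for every dyadic $t\in\mathcal D$ and every $W\in\Linf(\F_T)$. More generally, they give it on any finite time grid containing $0$.

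\emph{Stage 1b (arbitrary $t_0$).} Fix $t_0\in[0,T]$ and apply the same Kupper--Schachermayer argument to the finite grid $\{0,t_0,T\}$. This is the step I expect to carry the weight. The restricted family is a normalized, cash-invariant, monotone, convex, time-consistent family on this three-point filtration, law-invariant and relevant at $0$. Theorem~1.10 of \citet{KupperSchachermayer2009} therefore gives entropic form at time $t_0$ with some $\gamma'\in[0,\infty]$. Both boundary cases are excluded exactly as in Sub-step~3a: at time $0$, $\gamma'=0$ contradicts (J3b), and $\gamma'=\infty$ contradicts relevance. Agreement at time $0$ with the dyadic representation, together with strict monotonicity of $f(\gamma)$ from Sub-step~3b, then forces $\gamma'=\gamma$.

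If one distrusts applying the theorem on an ad hoc grid, there is a fallback that follows the ``Bellman + cash-additivity + conditional-MGF'' idea. Set $Y\coloneqq\tilde J_{t_0}(W)\in\Linf(\F_{t_0})$, which is bounded by the argument in Step~2. Take any dyadic $d\le t_0$ and any $B\in\F_d$, and consider test variables $W+c$ with $c = m\mathbf 1_B$. Cash-additivity at $t_0$ gives $\tilde J_{t_0}(W+c)=Y+c$. Time-consistency then gives
\[
\tilde J_d(W+c)=\tilde J_d(Y+c).
\]
Both sides are entropic at $d$, which yields
\[
\E\bigl[e^{-\gamma W}\mid\F_d\bigr]=\E\bigl[e^{-\gamma Y}\mid\F_d\bigr].
\]
The difficulty is passing from these equalities over $d\le t_0$ to $Y=-\tfrac1\gamma\log\E[e^{-\gamma W}\mid\F_{t_0}]$. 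Conditioning only on $\F_d$ with $d\le t_0$ does not identify an $\F_{t_0}$-measurable object, so this fallback is weaker. For that reason I will rely on the three-point-grid application and use the fallback only as a consistency check.

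\emph{Stage 2 (right-continuity).} With $J_t(\pi)=-\tfrac1\gamma\log A_t$ and $A_t\coloneqq\E[e^{-\gamma\WL_T(\pi)}\mid\F_t]$ valid for all $t$, take $t_k\downarrow t$. By right-continuity of $\filt$, $\bigcap_k\F_{t_k}=\F_t$. L\'evy's downward theorem gives $A_{t_k}\to A_t$ almost surely. The bound $A_{t_k}\ge e^{-\gamma\|\WL_T(\pi)\|_\infty}>0$ and continuity of $\log$ then give almost sure convergence of $J_{t_k}(\pi)\to J_t(\pi)$, and hence convergence in probability.

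The main obstacle is Stage~1b: verifying that the Kupper--Schachermayer hypotheses hold on the non-dyadic grid $\{0,t_0,T\}$ and that identifiability pins down the same $\gamma$. Every hypothesis is stated for arbitrary $t$, so inheritance should go through exactly as in Sub-step~3a.
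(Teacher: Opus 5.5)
Your proof is correct at the same level of rigor as the paper, but you obtain the entropic form at a non-dyadic $t_0$ by a different route.

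You apply the Kupper--Schachermayer theorem again, now to the three-point sub-filtration $(\F_0,\F_{t_0},\F_T)$, and pin its parameter to $\gamma$ via the time-$0$ identifiability of Sub-step~3b. This is short and makes the dyadic grids essentially superfluous. It does, however, lean maximally on the external theorem: it needs the hypothesis-inheritance claim of Sub-step~3a to hold on every coarse, non-dyadic grid, invoked once per $t_0$.

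The paper uses the representation theorem only on dyadic grids and then argues elementarily. Write $Y\coloneqq\tilde J_{t_0}(W)$. The Bellman identity gives $\E[e^{-\gamma W}\mid\F_s]=\E[e^{-\gamma Y}\mid\F_s]$ for dyadic $s<t_0$, and a left limit passes to $\F_{t_0^-}$. Crucially, the identity is then applied to $W+c$ for arbitrary $c\in\Linf(\F_{t_0})$. Cash-additivity at $t_0$ turns this into $\E[e^{-\gamma c}\Delta\mid\F_{t_0^-}]=0$ with $\Delta\coloneqq e^{-\gamma Y}-\E[e^{-\gamma W}\mid\F_{t_0}]$, and a conditional-MGF argument forces $\Delta=0$.

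That is precisely what your fallback lacked. By taking $c=m\mathbf 1_B$ with $B\in\F_d$, you chose test cash that factors out of $\E[\,\cdot\mid\F_d]$, so it carries no information beyond $c=0$. Cash-additivity at $t_0$ allows any $c\in\Linf(\F_{t_0})$. Then already at $d=0$ you get $\E[e^{-\gamma c}\Delta]=0$ for all such $c$, and the choices $e^{-\gamma c}=1$ and $e^{-\gamma c}=1+\tfrac12\operatorname{sgn}(\Delta)$ give $\E|\Delta|=0$.

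Correctly formulated, your fallback is therefore airtight and needs the entropic representation only at time $0$. Your primary route, by contrast, needs the full theorem on each grid $\{0,t_0,T\}$. Your Stage~2 is the same as the paper's.
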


\begin{proof}
The proof proceeds in  two stages.
Stage 1 establishes the entropic representation, 
$J_t(\pi) = -\tfrac{1}{\gamma} \log  \E[e^{-\gamma\WL_T(\pi)}  \mid \F_t]$ at every $t \in [0,T]$, 
using a Bellman + cash-additivity argument that, importantly, does not invoke right-continuity at any point.
Stage 2 then derives right-continuity in probability as a direct consequence of the right-continuity 
of conditional expectations under the usual conditions on the filtration.

\smallskip
By Proposition~\ref{prop:wealth-summary-app}, $J_t(\pi)$ depends on $\pi$ only through 
$W \coloneqq \WL_T(\pi)$, so we may work with the reduced functional $\tilde J_t \colon \Linf(\F_T) \to \Linf(\F_t)$ defined by $\tilde J_t(W) \coloneqq J_t(\pi)$ for any $\pi \in \Pi$ with $\WL_T(\pi) = W$.
By Lemma~\ref{lem:density}, every $W \in \Linf(\F_T)$ arises as some $\WL_T(\pi)$, so $\tilde J_t$ is  defined on all of $\Linf(\F_T)$.

\smallskip
\emph{Stage 1, Step 1.1: Bellman identity at every time.}
We claim that for every $t_0 \in [0,T]$ and every $W \in \Linf(\F_T)$,
\begin{equation}
    \tilde J_s\bigl(\tilde J_{t_0}(W)\bigr)  =  \tilde J_s(W) \quad \text{almost surely, for all } s \leq t_0.
    \label{eq:bellman-app}
\end{equation}
Fix $t_0$ and $W$.
 Let $\pi, \pi_0 \in \Pi$ satisfy $\WL_T(\pi) = W$ and $\WL_T(\pi_0) = 0$, respectively. 
We first verify that $\tilde J_{t_0}(W) \in \Linf(\F_{t_0})$,  with $\|\tilde J_{t_0}(W)\|_\infty \leq \|W\|_\infty$, so that the cash-injected strategy below is admissible.
Write $M_W \coloneqq \|W\|_\infty < \infty$ and let $\pi_0^{[\pm M_W]}$ denote the cash-injected strategies of Section~\ref{subsec:strategy} applied to $\pi_0$, which satisfy $\WL_T(\pi_0^{[\pm M_W]}) = \pm M_W$ almost surely.
By Axioms~\ref{ax:J1} and~\ref{ax:J2}, it follows that 
$J_{t_0}(\pi_0^{[\pm M_W]}) = \pm M_W$ almost surely.
Since we have  $\WL_T(\pi_0^{[-M_W]}) \leq \WL_T(\pi) \leq \WL_T(\pi_0^{[M_W]})$ almost surely, Proposition~\ref{prop:monotonicity-app} (monotonicity (M), derived earlier in this appendix from Axioms~\ref{ax:J1}, \ref{ax:J2}, \ref{ax:J4}) gives $-M_W \leq J_{t_0}(\pi) \leq M_W$ almost surely, so $\tilde J_{t_0}(W) = J_{t_0}(\pi) \in \Linf(\F_{t_0})$ with the claimed bound.
We define $\pi^* \coloneqq \pi_0^{[\tilde J_{t_0}(W)]}$, the strategy obtained by injecting at time $t_0$ the $\F_{t_0}$-measurable cash amount 
given by  $\tilde J_{t_0}(W)$.
Then $\WL_T(\pi^*) = 0 + \tilde J_{t_0}(W) = \tilde J_{t_0}(W)$ almost surely.
By Axiom~\ref{ax:J2} (normalization), $J_{t_0}(\pi_0) = 0$; by Axiom~\ref{ax:J1} (cash-additivity at time $t_0$),
\[
    J_{t_0}(\pi^*)  =  J_{t_0}(\pi_0) + \tilde J_{t_0}(W)  =  0 + \tilde J_{t_0}(W)  =  \tilde J_{t_0}(W)  =  J_{t_0}(\pi) \quad \text{a.s.}
\]
We apply Axiom~\ref{ax:J4} (strong dynamic consistency) in both directions to the pair $(\pi, \pi^*)$: 
since $J_{t_0}(\pi) = J_{t_0}(\pi^*)$ almost surely, we obtain $J_s(\pi) = J_s(\pi^*)$ almost surely for all $s \leq t_0$.
Translating via $\tilde J$,  
we get that $\tilde J_s(W) = \tilde J_s\bigl(\WL_T(\pi^*)\bigr) =  \tilde J_s\bigl(\tilde J_{t_0}(W)\bigr)$ a.s., which is~\eqref{eq:bellman-app}.

\smallskip
\emph{Stage 1, Step 1.2: Entropic form at dyadic times.}
Let  $\mathcal{D} \coloneqq \bigcup_{n \geq 1} \{kT/2^n \mid 0 \leq k \leq 2^n\}$ be the set of dyadic rationals in $[0,T]$.
By the Kupper--Schachermayer representation argument (the discrete-time content of  Sub-steps 3a--3b in the proof of Theorem~\ref{thm:forced-uniqueness}, which uses Axioms~\ref{ax:J1}--\ref{ax:J5} and relevance but does not use right-continuity),  there exists a unique $\gamma \in (0,\infty)$ such that for every dyadic $s \in \mathcal{D}$ and every $W \in \Linf(\F_T)$,
\begin{equation}
    \tilde J_s(W)  =  -\frac{1}{\gamma}\log \E[e^{-\gamma W} \mid \F_s] \quad \text{almost surely.}
    \label{eq:dyadic-app}
\end{equation}
At dyadic times we therefore have the desired entropic form.

\smallskip
\emph{Stage 1, Step 1.3: A conditional MGF identity below $t_0$.}
Fix an arbitrary $t_0 \in [0,T]$ (not necessarily dyadic) and $W \in \Linf(\F_T)$.
For any dyadic $s \in \mathcal{D}$ with $s < t_0$, combining~\eqref{eq:bellman-app}  and~\eqref{eq:dyadic-app} gives
\[
    -\frac{1}{\gamma}\log\E[e^{-\gamma W} \mid \F_s]  =  \tilde J_s(W)  =   \tilde J_s\bigl(\tilde J_{t_0}(W)\bigr)  =  -\frac{1}{\gamma}\log\E[e^{-\gamma \tilde J_{t_0}(W)} \mid \F_s].
\]
Since both sides are bounded random variables and the logarithm  is injective on the strictly positive bounded range,  we obtain
\begin{equation}
    \E[e^{-\gamma W} \mid \F_s]  =  \E[e^{-\gamma \tilde J_{t_0}(W)} \mid \F_s] \quad \text{almost surely, for every dyadic } s < t_0.
    \label{eq:mgf-dyadic-app}
\end{equation}

\smallskip
\emph{Stage 1, Step 1.4: Pass to the left-limit $s \uparrow t_0$.}
Let $(s_n)_{n \geq 1} \subset \mathcal{D}$ be a sequence of dyadic times with  $s_n < t_0$ and $s_n \uparrow t_0$, which exists since $\mathcal{D}$ is dense in $[0,T]$.
The $\sigma$-algebras $\F_{s_n}$ increase to $\F_{t_0^-} \coloneqq \sigma\bigl(\bigcup_{n} \F_{s_n}\bigr) = \sigma\bigl(\bigcup_{s < t_0} \F_s\bigr)$.
By the L\'evy upward martingale convergence theorem applied to the bounded random variables $e^{-\gamma W}$ 
and $e^{-\gamma\tilde J_{t_0}(W)}$, both sides of~\eqref{eq:mgf-dyadic-app} converge almost surely as $n \to \infty$, yielding
\begin{equation}
    \E[e^{-\gamma W} \mid \F_{t_0^-}]  =  \E[ e^{-\gamma \tilde J_{t_0}(W)} \mid \F_{t_0^-}] \quad \text{almost surely.}
    \label{eq:starstar-app}
\end{equation}

\smallskip
\emph{Stage 1, Step 1.5: Use cash-additivity to  interrogate  with $\F_{t_0}$-measurable test functions.}
We have established that~\eqref{eq:starstar-app} holds for every $W \in \Linf(\F_T)$.
Fix any bounded $\F_{t_0}$-measurable random variable $c \in \Linf(\F_{t_0})$ and apply~\eqref{eq:starstar-app} to $W + c$ in place of $W$ (which is valid since $W + c \in \Linf(\F_T)$).
By Axiom~\ref{ax:J1} (cash-additivity at time $t_0$), $\tilde J_{t_0}(W + c) = \tilde J_{t_0}(W) + c$ almost surely, 
  hence $e^{-\gamma\tilde J_{t_0}(W+c)} = e^{-\gamma c} \cdot e^{-\gamma\tilde J_{t_0}(W)}$ almost surely.
Substituting, we obtain that 
\begin{equation*}
    \E\!\bigl[e^{-\gamma c}\, e^{-\gamma\tilde J_{t_0}(W)} \,\big|\, \F_{t_0^-}\bigr]
     = 
    \E\!\bigl[e^{-\gamma c}\, e^{-\gamma W} \,\big|\, \F_{t_0^-}\bigr]
    \quad \text{almost surely.}
\end{equation*}
We then apply the tower property to the right-hand side,  using $\F_{t_0^-} \subseteq \F_{t_0}$ and the $\F_{t_0}$-measurability of $e^{-\gamma c}$, 
to get 
\[
    \E\!\bigl[e^{-\gamma c}\, e^{-\gamma W} \,\big|\, \F_{t_0^-}\bigr]
       = 
    \E\!\bigl[e^{-\gamma c}\, \E[e^{-\gamma W} \mid \F_{t_0}]  \,\big|\, \F_{t_0^-}\bigr].
\]
 Define
\[
    \Delta  \coloneqq  e^{-\gamma\tilde J_{t_0}(W)}  -  \E[e^{-\gamma W} \mid \F_{t_0}].
\]
Both terms are $\F_{t_0}$-measurable and bounded 
(the first because $\tilde J_{t_0}(W) \in \Linf(\F_{t_0})$,  
   the second by definition of conditional expectation), so $\Delta \in \Linf(\F_{t_0})$.
Subtracting, we obtain
\begin{equation}
    \E\!\bigl[e^{-\gamma c}\, \Delta \,\big|\, \F_{t_0^-}\bigr]  =  0 \quad \text{almost surely, for every } c \in \Linf(\F_{t_0}).
    \label{eq:dagger-app}
\end{equation}

\smallskip
\emph{Stage 1, Step 1.6: Conditional MGF argument forces $\Delta = 0$.}
Since $\Delta \in \Linf(\F_{t_0})$, for any $k \in \R$ the random variable $c_k  \coloneqq k\Delta/\gamma$ belongs to $\Linf(\F_{t_0})$. 
Upon substituting $c = c_k$ in~\eqref{eq:dagger-app}, we obtain that 
\begin{equation}
    \E\!\bigl[e^{-k\Delta}\, \Delta \,\big|\, \F_{t_0^-}\bigr]  =  0 \quad \text{almost surely, for every } k \in \R.
    \label{eq:k-test-app}
\end{equation}
Note that,  for each $k \in \R$,  
identity~\eqref{eq:k-test-app} holds outside a $\Prob$-null set that may  a priori depend on $k$.
To extract a statement uniform in $k$, we first restrict to a countable subset and then extend by continuity.
Let $\Q \subset \R$ denote the rationals.
There exists a single $\Prob$-null set $N \subset \Omega$ such that for every $\omega \notin N$ and every $k \in \Q$, $\E[e^{-k\Delta}\Delta \mid \F_{t_0^-}](\omega) = 0$.
 For each $k \in \R$ and $\omega \in \Omega$, define
$ \phi(k, \omega) \coloneqq \E[e^{-k\Delta} \mid \F_{t_0^-}](\omega). $
For each fixed $\omega$, the regular conditional distribution of $\Delta$ given $\F_{t_0^-}$ is a 
Borel probability measure on the bounded interval $[-\|\Delta\|_\infty, \|\Delta\|_\infty]$, and $\phi(\cdot, \omega)$ is
 the moment-generating function of that measure.
The moment-generating function of a compactly supported probability measure is entire on 
$\mathbb{C}$ 
(and in particular real-analytic on $\R$); this is a standard consequence of differentiation under the integral sign,
   justified by the dominating bound $|\Delta^n e^{-k\Delta}| \leq \|\Delta\|_\infty^n\, e^{|k|\,\|\Delta\|_\infty}$ valid for every $n \geq 0$, which gives an absolutely convergent Taylor series for $\phi(\cdot, \omega)$ on every bounded set in $\R$ \citep[Theorem 5.30]{Kallenberg2021}.
 Its derivative
\[
\partial_k \phi(k, \omega)  =  -\,\E\bigl[\Delta\, e^{-k\Delta}  \,\big|\, \F_{t_0^-}\bigr](\omega)
\]
is therefore continuous in $k$ for each fixed $\omega \notin N$, and vanishes on the dense subset $\Q$; by continuity, $\partial_k \phi(k, \omega) = 0$ for every $k \in \R$ and every $\omega \notin N$.
Therefore,  
$k \mapsto \phi(k, \omega)$ is constant on $\R$ for every $\omega \notin N$.
Evaluating at $k = 0$ gives $\phi(0, \omega) = 1$, so $\phi(k, \omega) = 1$ for every $k \in \R$ and every $\omega \notin N$.
Therefore, almost surely on $\Omega$, the conditional moment-generating function of $\Delta$ given $\F_{t_0^-}$ equals $1$ for every $k \in \R$.
Since $\Delta$ is bounded, its conditional law given $\F_{t_0^-}$ (a regular conditional distribution exists since $\Delta \in \Linf$ is real-valued, by \citet[Theorem 8.5]{Kallenberg2021}) is uniquely determined by this conditional MGF on any open interval containing $0$, in particular on $(-\|\Delta\|_\infty^{-1}, \|\Delta\|_\infty^{-1})$.
The unique Borel probability measure on $\R$ with MGF identically equal to $1$ on such an interval is the Dirac mass at $0$.
Hence, the conditional law of $\Delta$ given $\F_{t_0^-}$ is almost surely the Dirac mass at $0$, which implies $\E[\Delta^2 \mid \F_{t_0^-}] = 0$ almost surely, and consequently $\Delta = 0$ almost surely.
We have therefore established that 
\[
    e^{-\gamma\tilde J_{t_0}(W)}  =  \E[e^{-\gamma W} \mid \F_{t_0}] \quad \text{almost surely.}
\]
Taking $-\gamma^{-1}\log$ of both sides (valid since both sides are strictly positive and bounded away from zero by $e^{-\gamma\|W\|_\infty} > 0$), 
we obtain that 
\begin{equation}
    \tilde J_{t_0}(W)  =  -\frac{1}{\gamma}\log\E[e^{-\gamma W} \mid \F_{t_0}] \quad \text{almost surely.}
    \label{eq:entropic-all-t-app}
\end{equation}
This holds for every $t_0 \in [0,T]$ and every $W \in \Linf(\F_T)$.

\smallskip
\emph{Stage 2: Right-continuity in probability follows.}
Fix $\pi \in \Pi$, write $W \coloneqq \WL_T(\pi)$,  and let $t \in [0,T)$ with $t_k \downarrow t$, 
where $t_k \in [t, T]$.
By~\eqref{eq:entropic-all-t-app},
\[
    J_{t_k}(\pi)  =  \tilde J_{t_k}(W)   =   - \frac{1}{\gamma}\log\E[e^{-\gamma W} \mid \F_{t_k}].
\]
The process $M_u \coloneqq \E[e^{-\gamma W} \mid \F_u]$ is a uniformly integrable martingale; 
under the usual conditions on $\filt$ (right-continuity and completeness, as in Section~\ref{sec:setup}), 
  $M$ is almost surely right-continuous, and in particular $M_{t_k} \to M_t$ almost surely as $t_k \downarrow t$.
Since $M_u \geq e^{-\gamma\|W\|_\infty} > 0$ uniformly in $u$, the function $x \mapsto -\gamma^{-1}\log x$ is 
  continuous and bounded on the range of $M$, hence
\[
    J_{t_k}(\pi)  =  -\frac{1}{\gamma}\log M_{t_k} \longrightarrow -\frac{1}{\gamma}\log M_t  =  J_t(\pi) \quad \text{almost surely as } t_k \downarrow t.
\]
Almost-sure convergence implies convergence in probability.
This is the right-continuity property.
\end{proof}

\subsection{Independence of the Five Core Axioms}
\label{subsec:independence-app}

\begin{theorem}[Independence of (J1)--(J5)]
\label{thm:independence}
The five axioms (J1) (cash-additivity), (J2) (normalization), (J3) (concavity), (J4) (strong dynamic consistency),
and (J5) (law-invariance) form a logically independent system:
 for each $k \in \{1,2,3,4,5\}$,
 there exists a dynamic preference functional $J^{(k)}$ on  $\Linf(\F_T)$  satisfying every axiom in the list except~(J$k$).
\end{theorem}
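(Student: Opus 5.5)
The plan is to exhibit five explicit functionals. Each is defined on $\Linf(\F_T)$ and lifted to $\Pi$ by setting $J^{(k)}_t(\pi)\coloneqq J^{(k)}_t(\WL_T(\pi))$; Lemma~\ref{lem:density} guarantees that every $W\in\Linf(\F_T)$ is attained. Each functional is a small perturbation of the entropic family of Theorem~\ref{thm:forced-uniqueness}, chosen to break exactly one axiom. Write $E^\gamma_t(W)\coloneqq -\tfrac1\gamma\log\E[e^{-\gamma W}\mid\F_t]$ with $\gamma>0$ fixed. Throughout, the verifications of the surviving axioms are reused from the converse direction of the proof of Theorem~\ref{thm:forced-uniqueness}: the pull-out argument for (J1), the conditional H\"older inequality~\eqref{eq:cond-holder} and its strict version~\eqref{eq:uncond-holder} for (J3), the tower identity~\eqref{eq:tower-entropic} for (J4), and equality of laws for (J5).

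\emph{Failing (J2).} Set $J^{(2)}_t(W)\coloneqq E^\gamma_t(W)+1$. Cash-additivity, concavity (both weak and strict, since the shift by a constant cancels on both sides), the order-preservation in (J4), and law-invariance are all inherited from $E^\gamma$. However, $J^{(2)}_t(0)=1\neq 0$.

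\emph{Failing (J1).} Set $J^{(1)}_t(W)\coloneqq 2E^\gamma_t(W)$. Positive scaling preserves normalization, (J3a), (J3b), the implication in (J4), and (J5). However, $J^{(1)}_t(W+c)=J^{(1)}_t(W)+2c$.

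\emph{Failing (J3).} Take the risk-seeking entropic functional $J^{(3)}_t(W)\coloneqq \tfrac1\gamma\log\E[e^{\gamma W}\mid\F_t]$. It is cash-additive, normalized, time-consistent by the tower property, and law-invariant at $t=0$. By H\"older it is convex, and the inequality is strict at $t=0$ for a pair with nonconstant difference, so (J3a) fails.

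\emph{Failing (J5).} Fix an event $A$ with $\Prob(A)=1/2$, which exists by non-atomicity. Define $\Q\sim\Prob$ by $d\Q/d\Prob\coloneqq \tfrac12\mathbf 1_A+\tfrac32\mathbf 1_{A^c}$, and set $J^{(5)}_t(W)\coloneqq -\tfrac1\gamma\log\E_\Q[e^{-\gamma W}\mid\F_t]$. The converse-direction arguments go through verbatim under $\Q$. For strict concavity at $0$, one uses that $\F_0$ is $\Q$-trivial and that $\Q\sim\Prob$, so nonconstancy of $W-W'$ transfers between the two measures. Now take $W=\mathbf 1_A$ and $W'=\mathbf 1_{A^c}$. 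These have the same $\Prob$-law, but since $\Q(A)=1/4\neq 3/4=\Q(A^c)$, the values $J^{(5)}_0(W)$ and $J^{(5)}_0(W')$ differ.

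\emph{Failing (J4)}, which I expect to be the main obstacle. The functional must stay concave, cash-additive and normalized at every $t$, and law-invariant and strictly concave at $t=0$, so that only the intertemporal link is broken. The choice is
\[
J^{(4)}_0(W)\coloneqq E^\gamma_0(W), \qquad J^{(4)}_t(W)\coloneqq \E[W\mid\F_t] \quad (t>0).
\]
Axioms (J1), (J2) and (J3a) hold at each fixed time, because conditional expectation is linear and $E^\gamma$ satisfies them. Axioms (J3b) and (J5) only concern $t=0$, where the functional is entropic.

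To break (J4), I first need a time $t>0$ such that some $W$ has a nondegenerate conditional law given $\F_t$. Suppose no such $t$ exists; then $\F_t=\F_T$ modulo null sets for every $t>0$. Right-continuity gives $\F_0=\bigcap_{t>0}\F_t=\F_T$, which contradicts the triviality of $\F_0$ together with the non-atomicity of $\F_T$.

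Fix such $W$ and $t$, and set $W'\coloneqq \E[W\mid\F_t]-\varepsilon$. Then $J^{(4)}_t(W)\geq J^{(4)}_t(W')$. On the other hand, $E^\gamma_0(W)<E^\gamma_0(\E[W\mid\F_t])$: by the tower property $E^\gamma_0(W)=E^\gamma_0(E^\gamma_t(W))$, and $E^\gamma_t(W)<\E[W\mid\F_t]$ on a set of positive probability by strict conditional Jensen, while $E^\gamma_t(W)\le\E[W\mid\F_t]$ everywhere. Strict monotonicity of $E^\gamma_0$ then gives the strict inequality. Consequently $J^{(4)}_0(W)<J^{(4)}_0(W')$ for $\varepsilon>0$ small, which contradicts (J4).
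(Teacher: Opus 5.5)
Your proof is correct, and it departs from the paper's in two of the five separating models. For (J1) and (J2) you use exactly the paper's $2J^\gamma$ and $J^\gamma+1$, and for (J5) you use the same idea: an entropic functional computed under an equivalent measure $\Q$.

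For (J3), the paper takes the conditional expectation $\E[W\mid\F_t]$. It satisfies (J3a) with equality and fails only the strict clause (J3b), which isolates the strict clause as what excludes the risk-neutral endpoint $\gamma=0$. Your risk-seeking functional $\tfrac1\gamma\log\E[e^{\gamma W}\mid\F_t]$ instead breaks the weak inequality (J3a) itself. This is a complementary separation, probing the $\gamma<0$ side rather than $\gamma=0$, and either one establishes independence of (J3) as stated.

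For (J4), the paper splits the horizon into two entropic regimes with $\gamma_1\neq\gamma_2$. This keeps every time slice entropic and shows that (J4) is what makes $\gamma$ time-constant. However, its counterexample needs a nondegenerate $W'$ independent of $\F_{T/2}$, which requires fresh randomness after $T/2$. That randomness is available from Brownian increments in the intended setup, but it is not implied by the non-atomicity of $\F_T$ that the paper cites. Your model (entropic at $t=0$, risk-neutral for $t>0$) needs only some $t>0$ with $\F_t\neq\F_T$, and you correctly extract this from right-continuity, triviality of $\F_0$ and non-atomicity of $\F_T$. The violation then follows from strict conditional Jensen plus strict monotonicity of $E^\gamma_0$, so your argument uses only the standing hypotheses.

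Your explicit density $\tfrac12\mathbf{1}_A+\tfrac32\mathbf{1}_{A^c}$ with $W=\mathbf{1}_A$, $W'=\mathbf{1}_{A^c}$ is also simpler than the paper's (J5) construction. The paper's version needs a copy of $Z$ independent of $\varphi$, which again presupposes extra room in $\F_T$, and a strict-convexity argument to show $\E[Ze^{-\gamma Z}]\neq 0$.
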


\begin{proof}
In this  proof we work directly with the reduced functional 
 $\tilde J_t \colon \Linf(\F_T) \to \Linf(\F_t)$.
Each separating model below is defined as a functional of $W = \WL_T(\pi)$ directly, so the wealth-summary property (W) holds by construction in each case, without requiring the general derivation of Proposition~\ref{prop:wealth-summary-app} (which assumes J1, J2, and J4 simultaneously).
Fix any reference parameter $\gamma > 0$ and denote by $J^{\gamma}$ the entropic certainty-equivalent on $\Linf(\F_T)$, namely
\[
  J^{\gamma}_t(W)  \coloneqq   -\frac{1}{\gamma}\log\E[e^{-\gamma W}  \mid \F_t], \qquad W \in \Linf(\F_T),\; t \in [0,T].
\]
This functional satisfies all of~(J1)--(J5) (by the converse direction of Theorem~\ref{thm:forced-uniqueness});
  each separating model below is obtained as a small perturbation of $J^{\gamma}$ designed to violate exactly one axiom.
We treat the five cases in turn.

\smallskip
\noindent\emph{Model $M_1$: violates (J1), satisfies (J2)--(J5).}
Define $J^{(1)}_t(W) \coloneqq 2 J^{\gamma}_t(W)$.

\emph{(J1) fails.} 
For any nonzero $\F_t$-measurable bounded $c$,
\begin{equation*}
    J^{(1)}_t(W+c)  =  2 J^{\gamma}_t(W+c)  =  2 J^{\gamma}_t(W) + 2c \neq 2 J^{\gamma}_t(W) + c  =  J^{(1)}_t(W) + c.
\end{equation*}

\emph{(J2) holds.} 
$J^{(1)}_t(0) = 2 \cdot 0 = 0$.

\emph{(J3) holds.}
We verify J3a and J3b directly at the functional level, without invoking the strategy-level reduction
   (the strategy-level reduction would require J1 for $J^{(1)}$, which fails).
For J3a, fix $W, W' \in \Linf(\F_T)$ and any $\F_t$-measurable $\lambda \in [0,1]$.
By the (weak) concavity of $J^\gamma$ (verified independently in the converse direction of Theorem~\ref{thm:forced-uniqueness},  which uses only the explicit entropic form on  $\Linf$ and does not depend on the strategy-level J1),
we have 
\begin{align*}
    J^{(1)}_t(\lambda W + (1-\lambda)W')
    & =  2 J^\gamma_t(\lambda W + (1-\lambda)W') \\
    & \geq  2\bigl[\lambda J^\gamma_t(W) + (1-\lambda) J^\gamma_t(W')\bigr] \\
    & =  \lambda J^{(1)}_t(W) + (1-\lambda) J^{(1)}_t(W').
\end{align*}
For J3b, take $W, W'$ with $W - W'$ not a.s.\ constant and deterministic $\lambda \in (0,1)$.
The strict concavity of $J^\gamma$ at $t = 0$ on such pairs (verified independently  of J1, since the explicit entropic functional $J^\gamma$ has J3b for $\gamma > 0$) gives $J^\gamma_0(\lambda W + (1-\lambda)W') > \lambda J^\gamma_0(W) + (1-\lambda) J^\gamma_0(W')$.
Multiplying by $2 > 0$ preserves the strict inequality, hence $ J^{(1)}_0$ satisfies J3b.

\emph{(J4) holds.}
Since $J^{(1)}_t(\pi) = 2 J^\gamma_t(\pi)$ at the strategy level (by definition of $J^{(1)}$ as twice the entropic functional), the equivalence $J^{(1)}_t(\pi) \geq J^{(1)}_t(\pi')$ a.s.\ if{f} $J^{\gamma}_t(\pi) \geq J^{\gamma}_t(\pi')$ a.s.\ holds at the strategy level as well as at the reduced-functional level.
Strong dynamic consistency of $J^\gamma$ in the strategy-level form (Axiom~\ref{ax:J4}) therefore transfers directly to $J^{(1)}$.

\emph{(J5) holds.}  
Law-invariance is immediate: 
$W \eqdist W'$ implies $\E[e^{-\gamma W}] = \E[e^{-\gamma W'}]$, so $J^{(1)}_0(W) = 2J^{\gamma}_0(W) = 2J^{\gamma}_0(W') = J^{(1)}_0(W')$.

\smallskip
\noindent\emph{Model $M_2$: violates (J2), satisfies (J1) and (J3)--(J5).}
Define $J^{(2)}_t(W) \coloneqq J^{\gamma}_t(W) + 1$.

\emph{(J2) fails.} 
$J^{(2)}_t(0) = 0 + 1 = 1 \neq 0$.

\emph{(J1) holds.} 
$J^{(2)}_t(W + c) = J^{\gamma}_t(W + c) + 1 = J^{\gamma}_t(W) + c + 1 = J^{(2)}_t(W) + c$ for any $\F_t$-measurable bounded $c$.

\emph{(J3) holds.} 
Adding a constant preserves (strict) concavity, in both clauses.

\emph{(J4) holds.}
  Adding the same constant to two functionals does not change orderings: 
  $J^{(2)}_t(W) \geq J^{(2)}_t(W')$ if{f} $J^{\gamma}_t(W) \geq J^{\gamma}_t(W')$, and dynamic consistency of $J^{\gamma}$ transfers.

\emph{(J5) holds.}
  If $W \eqdist W'$ then $J^{(2)}_0(W) = J^{\gamma}_0(W) + 1 = J^{\gamma}_0(W') + 1 = J^{(2)}_0(W')$.

\smallskip
\noindent\emph{Model $M_3$: violates (J3), satisfies (J1), (J2), (J4), (J5).}
Define $J^{(3)}_t(W) \coloneqq \E[W \mid \F_t]$, the conditional expectation.

\emph{(J3) fails.} 
For any $W, W' \in \Linf(\F_T)$ and any $\F_t$-measurable $\lambda \in [0,1]$,
\begin{equation*}
    J^{(3)}_t(\lambda W + (1-\lambda) W')  =  \lambda\,\E[W \mid \F_t] + (1-\lambda)\,\E[W' \mid \F_t]  =  \lambda\, J^{(3)}_t(W) + (1-\lambda)\, J^{(3)}_t(W').
\end{equation*}
Equality holds identically, so the weak concavity inequality (J3a) is satisfied with equality everywhere, and the \emph{strict}-concavity clause (J3b) fails for every pair $(W, W')$ with $W - W'$ nonconstant.
Hence (J3), taken to require both clauses, is violated.

\emph{(J1) holds.} 
For $c \in  \Linf(\F_t)$, we have  $\E[W + c  \mid \F_t] = \E[W \mid \F_t] + c$.

\emph{(J2) holds.} 
$\E[0 \mid \F_t]  = 0$.

\emph{(J4) holds.}
  If $\E[W \mid \F_t] \geq \E[W' \mid \F_t]$ almost surely and $s \leq t$, then by the tower property,
\begin{equation*}
    \E[W \mid \F_s]  =  \E[\E[W \mid \F_t] \mid \F_s]  \geq  \E[\E[W' \mid \F_t]  \mid \F_s]  =  \E[W' \mid \F_s] \quad \text{almost surely.}
\end{equation*}

\emph{(J5) holds.} 
 If $W \eqdist W'$ then $\E[W] = \E[W']$;  at $t = 0$ with $\F_0$ trivial, 
it holds that $J^{(3)}_0(W) = \E[W] = \E[W'] = J^{(3)}_0(W')$.

\smallskip
\noindent\emph{Model $M_4$: violates (J4), satisfies (J1), (J2), (J3), (J5).}
 We construct a time-dependent entropic functional with distinct risk-aversion parameters on a partition of $[0,T]$.
Choose any two distinct parameters $0 <  \gamma_1  \neq \gamma_2  < \infty$ and define
$$
    J^{(4)}_t(W)  \coloneqq 
    \begin{cases}
        J^{\gamma_1}_t(W) & \text{if } t \in [0, T/2), \\[4pt]
        J^{\gamma_2}_t(W) & \text{if } t \in [T/2, T].
    \end{cases}
$$
That is, the functional uses risk-aversion $\gamma_1$ on the first half of the horizon and risk-aversion $\gamma_2$ on the second half.

\emph{(J4) fails.}
We exhibit $W, W' \in \Linf(\F_T)$ and times  $s = 0$, $t = T/2$ for which it holds that 
 $J^{(4)}_t(W) \geq J^{(4)}_t(W')$ almost surely yet $J^{(4)}_s(W) < J^{(4)}_s(W')$.
The construction uses that the entropic certainty-equivalent $J^{\gamma}_0$, viewed as a functional on $\Linf$, induces different orderings for different $\gamma$.
Without loss of generality, $\gamma_1 < \gamma_2$.
Take $W' \in \Linf(\F_T)$ a nondegenerate symmetric bounded random variable, independent of $\F_{T/2}$
 (e.g., $W'$ takes the values $+a$ and $-a$ each with probability $1/2$ for some $a > 0$ large enough; 
  existence of such $W'$ follows from the non-atomicity of $(\Omega, \F_T, \Prob)$, 
  in turn implied by $\langle S\rangle_T > 0$ almost surely).
Let $W \coloneqq w$ be a deterministic constant in the open interval $(J^{\gamma_2}_0(W'), J^{\gamma_1}_0(W'))$.
This interval is nonempty: 
by the strict monotonicity of $\gamma \mapsto J^{\gamma}_0(W')$  on  $(0, \infty)$ for any nondegenerate $W'$ (established in Sub-step 3b of the proof of Theorem~\ref{thm:forced-uniqueness}, equation $f'(\gamma) < 0$), we have $J^{\gamma_2}_0(W') < J^{\gamma_1}_0(W')$.
Since $W'$ is independent of $\F_{T/2}$,
\[
    J^{(4)}_{T/2}(W')  =  J^{\gamma_2}_{T/2}(W')  =  -  \frac{1}{\gamma_2}\log\E[e^{-\gamma_2 W'} \mid \F_{T/2}]  =  -\frac{1}{\gamma_2}\log\E[e^{-\gamma_2 W'}]  =  J^{\gamma_2}_0(W') \quad \text{a.s.,}
\]
the constant value $J^{\gamma_2}_0(W')$.
Meanwhile, $J^{(4)}_{T/2}(W) = w$ (since $W$ is deterministic).
By our choice of $w$, it holds that 
$w > J^{\gamma_2}_0(W') = J^{(4)}_{T/2}(W')$ almost surely, so $J^{(4)}_{T/2}(W) \geq J^{(4)}_{T/2}(W')$ almost surely (the hypothesis of (J4)).
At $s = 0$, however, $J^{(4)}_0(W)  = w < J^{\gamma_1}_0(W') = J^{(4)}_0(W')$, contradicting the conclusion of (J4).

\emph{(J1) holds.} 
At each $t$, $J^{(4)}_t$ is an entropic functional with parameter $ \gamma_1$ (if $t < T/2$) or $\gamma_2$ (if $t \geq T/2$), and entropic functionals satisfy cash-additivity.

\emph{(J2) holds.} 
$J^{\gamma_i}_t(0) = 0$ for each $i \in \{1,2\}$.

\emph{(J3) holds.} 
Each $J^{\gamma_i}_t$ is concave (and strictly so at $t = 0$ on nonconstant pairs).
Since the strict-concavity clause (J3b) is at $t = 0$, where $J^{(4)}$ uses $\gamma_1$, and $\gamma_1 > 0$ gives strict concavity, (J3b) holds.

\emph{(J5) holds.}
 $J^{(4)}_0 = J^{\gamma_1}_0$, which is law-invariant.

\smallskip
\noindent\emph{Model $M_5$: violates (J5), satisfies (J1)--(J4).}
Let $\Q$ be a probability measure on $(\Omega, \F)$ equivalent to $\Prob$ with Radon--Nikod\'ym density $\varphi \coloneqq d\Q/d\Prob \in \Linf(\F_T)$, $\varphi > 0$ almost surely, $\E[\varphi] = 1$, and $\varphi$ not $\Prob$-almost-surely constant.
Define
\[
    J^{(5)}_t(W)  \coloneqq  -\frac{1}{\gamma}\log\E^{\Q}[e^{-\gamma W}  \mid \F_t],
\]
the entropic certainty-equivalent computed under $\Q$.

\emph{(J5) fails.}
We construct $W, W' \in \Linf(\F_T)$ with $W \eqdist W'$ under $\Prob$ but $J^{(5)}_0(W) \neq J^{(5)}_0(W')$.
Take $Z \in \Linf(\F_T)$ with $\E[Z] = 0$ and $\E[Z^2] > 0$, and write $\varphi = c_0 + \varepsilon Z$ for $c_0 \coloneqq \E[\varphi] = 1$ and $\varepsilon > 0$ small enough that $\varphi > 0$.
By the non-atomicity of $(\Omega, \F_T, \Prob)$, there exists a random variable $W' \in \Linf(\F_T)$ with $W' \stackrel{d}{=} Z$ under $\Prob$ but independent of $\varphi$
 (constructed on the same probability space using an auxiliary independent uniform random variable).
Set $W \coloneqq Z$.
Then $W \eqdist W'$ under $\Prob$, but
\begin{align*}
    \E^{\Q}[e^{-\gamma W}] &= \E[\varphi\, e^{-\gamma Z}]  =  c_0\, \E[e^{-\gamma Z}] + \varepsilon\,\E[Z\, e^{-\gamma Z}], \\
    \E^{\Q}[e^{-\gamma W'}] &= \E[\varphi]\,\E[e^{-\gamma W'}]  =  \E[e^{-\gamma Z}]  =  c_0\,\E[e^{-\gamma Z}],
\end{align*}
where the second line uses the independence of $W'$  and $\varphi$ together with the fact that  $\E[\varphi] = c_0 = 1$.
The difference is $\varepsilon\,\E[Z\, e^{-\gamma Z}]$, which is nonzero whenever $Z$ is not deterministic.
To see this, let $h(k) \coloneqq \E[e^{-\gamma k Z}]$ for $k \in \R$.
The function $h$ is strictly convex on $\R$ for nondegenerate bounded $Z$ (by strict convexity of $k \mapsto e^{-\gamma k z}$ for each $z$ and the nondegeneracy of the distribution of $Z$).
Its derivative is $h'(k) = -\gamma\,\E[Z\,e^{-\gamma k Z}]$, so the claim $\E[Z\,e^{-\gamma Z}] \neq 0$ is equivalent to $h'(1) \neq 0$.
Strict convexity of $h$ implies $h'$ is strictly increasing, hence has at most one zero on $\R$.
By $\E[Z] = 0$, the derivative at $k = 0$ is $h'(0) = -\gamma\,\E[Z] = 0$, 
and therefore  $k = 0$ is the unique zero of $h'$.
Hence, $h'(1) \neq 0$, i.e., $\E[Z\,e^{-\gamma Z}] \neq 0$.
We get that  $J^{(5)}_0(W) \neq J^{(5)}_0(W')$,  violating (J5).

\emph{(J1) holds.} 
Let $c \in \Linf(\F_t)$. 
Then we have $\E^{\Q}[e^{-\gamma(W+c)} \mid \F_t] = e^{-\gamma c}\,   \E^{\Q}[e^{-\gamma W} \mid \F_t]$
  (since $c$ is $\F_t$-measurable, it factors out under $\Q$-conditional expectation as well). 
We conclude that 
   $J^{(5)}_t(W + c) = J^{(5)}_t(W) + c$.

\emph{(J2) holds.}
 $J^{(5)}_t(0) = -\gamma^{-1}\log\E^{\Q}[1 \mid \F_t] = - \gamma^{-1}\log 1 = 0$.

\emph{(J3) holds.} 
The map $W \mapsto -\gamma^{-1}\log\E^{\Q}[e^{-\gamma W} \mid \F_t]$ is strictly concave in $W$ 
by the strict convexity of $x \mapsto e^{-\gamma x}$ and Jensen's inequality applied under $\Q$; 
the same argument that gives concavity of the entropic functional under $\Prob$ applies under $\Q$.

\emph{(J4) holds.} 
By the tower property under $\Q$, for $s \leq t$,
\begin{equation*}
    \E^{\Q}\bigl[e^{-\gamma J^{(5)}_t(W)}  \,\big|\, \F_s\bigr]  =  \E^{\Q}\bigl[\E^{\Q}[e^{-\gamma W} \mid \F_t] \,\big|\, \F_s\bigr]  =  \E^{\Q}[e^{-\gamma W} \mid \F_s],
\end{equation*}
which is the Bellman identity for $J^{(5)}$.  
The Bellman identity, combined with~(J1) and~(J2), implies strong dynamic consistency.
Indeed,  suppose $J^{(5)}_t(W) \geq J^{(5)}_t(W')$ a.s. 
Then it follows that 
 $e^{-\gamma J^{(5)}_t(W)} \leq e^{-\gamma J^{(5)}_t(W')}$ a.s., and taking $\Q$-conditional expectation given $\F_s$ preserves the inequality.
The tower identity then gives $\E^{\Q}[e^{-\gamma W} \mid \F_s] \leq \E^{\Q}[e^{-\gamma W'} \mid \F_s]$ a.s., and applying $-\gamma^{-1}\log$ yields $J^{(5)}_s(W) \geq J^{(5)}_s(W')$ almost surely. 

\smallskip
This exhibits, for each $k \in \{1,2,3,4,5\}$, a functional satisfying every axiom except~(J$k$), establishing logical independence.
\end{proof}

\begin{remark}[On the Derived Properties]
The three derived properties ((M) monotonicity, (W) wealth-summary, and (R) right-continuity in time) are, by definition, consequences of (J1)--(J5).
Each of the five separating models above satisfies certain of the derived properties despite violating its respective core axiom.
The conditional-expectation model $M_3$ (violating (J3)) satisfies monotonicity (M) by direct calculation, satisfies wealth-summary (W) since the expectation depends only on $W$, and in fact satisfies right-continuity (R) as well, since the process $t \mapsto \E[W \mid \F_t]$ is a uniformly integrable martingale and therefore admits an almost surely right-continuous modification under the usual conditions on $\filt$.
It is important to note the logical distinction here.
The \emph{proof} of (R) in Proposition~\ref{prop:right-continuity-app} uses all of (J1)--(J5), since it first establishes the entropic representation and then derives right-continuity from martingale regularity.
That does not mean every functional violating one of the axioms must fail (R); whether a specific nonentropic functional happens to be right-continuous depends on the functional,   as $M_3$ illustrates.
A precise analysis of which derived properties  survive each individual axiom violation is straightforward but tangential, and we omit it.
\end{remark}


\section{Technical Conditions on the Strategy Space}
\label{app:technical}

In this appendix we record the technical conditions on the strategy space $\Pi$ which justify the reduction step in the proof of Theorem~\ref{thm:forced-uniqueness}.
The conditions are essentially standard for the Avellaneda--Stoikov framework and its extensions, but we have not seen them collected in one place in the form we need, so we make them explicit here.

A reader content with informal statements may safely skip this appendix.
Its key conclusion, which we shall use in Step 1 of the proof of Theorem~\ref{thm:forced-uniqueness}, is the following.

\begin{lemma}[Well-Definedness of the Reduced Functional]
\label{lem:well-defined}
Under the admissibility conditions of Definition~\ref{def:admissible} below, for every $\pi \in \Pi$ the liquidation-adjusted terminal wealth $\WL_T(\pi)$ is in $\Linf(\F_T)$, and there is a functional $\tilde{J}_t \colon \Linf(\F_T) \to \Linf(\F_t)$ such that $J_t(\pi) = \tilde{J}_t(\WL_T(\pi))$ for every $\pi \in \Pi$.
The functional $\tilde{J}_t$ satisfies the dynamic risk measure axioms checked in Section~\ref{subsec:proof}, Step 2, if and only if $J$ satisfies Axioms~\ref{ax:J1}--\ref{ax:J5}.
\end{lemma}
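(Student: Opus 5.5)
The plan is to split the statement into its three assertions and handle each with results already established. \emph{Boundedness}: for each $\pi\in\Pi$ the claim $\WL_T(\pi)\in\Linf(\F_T)$ is exactly Lemma~\ref{lem:WL-bounded}, so nothing new is needed. \emph{Existence of $\tilde J_t$}: the factorization $J_t(\pi)=\tilde J_t(\WL_T(\pi))$ is only meaningful if $J_t$ assigns equal values to strategies with equal liquidation-adjusted wealth. I would therefore read the lemma in the setting of the theorem, namely $J$ satisfying at least (J1), (J2), (J4), and define $\tilde J_t$ as in Step~1 of the proof of Theorem~\ref{thm:forced-uniqueness}. It is well defined by the wealth-summary property (W) of Proposition~\ref{prop:wealth-summary}. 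It is defined on all of $\Linf(\F_T)$ because, by Lemma~\ref{lem:density}, the range $\{\WL_T(\pi)\mid\pi\in\Pi\}$ equals $\Linf(\F_T)$. Its values lie in $\Linf(\F_t)$ by the sandwich bound $\|J_t(\pi)\|_\infty\le\|\WL_T(\pi)\|_\infty$ from Step~2 of that proof. Conversely, whenever some factorizing $\tilde J_t$ exists, (W) holds trivially. So for the ``if and only if'' I will take the existence of the factorization as a standing premise on both sides.

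\emph{The equivalence.} The key structural fact is that $\pi\mapsto\WL_T(\pi)$ is onto $\Linf(\F_T)$ and commutes with cash injection: $\WL_T(\pi^{[c]})=\WL_T(\pi)+c$. Hence every strategy-level axiom translates verbatim into a property of $\tilde J$, and vice versa. I will go through the axioms one at a time.
\begin{itemize}
\item (J1) is equivalent to conditional cash-invariance of $\rho_t=-\tilde J_t$.
\item (J2) is equivalent to $\rho_t(0)=0$.
\item (J5) is equivalent to law-invariance of $\rho_0$. Each $W$ is some $\WL_T(\pi)$, and each $\pi$ gives some $W$.
\item (J3a) is equivalent to conditional convexity of $\rho_t$. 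Under (W) every representative $\lambda\pi\oplus(1-\lambda)\pi'$ has the same value $\tilde J_t(\lambda W+(1-\lambda)W')$.
\item (J3b) is equivalent to the corresponding strict inequality for $\tilde J_0$.
\end{itemize}
Monotonicity of $\rho_t$ and relevance appear among the checked properties. They are not extra requirements, since Proposition~\ref{prop:monotonicity} and Lemma~\ref{lem:relevance-derivable} derive them from the axioms. In the reverse direction they are simply not needed to recover (J1)--(J5).

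The step I expect to be the main obstacle is (J4) versus the Kupper--Schachermayer time-consistency identity $\tilde J_s(W)=\tilde J_s(\tilde J_t(W))$. The forward implication is already done: Step~2 of the proof of the theorem handles $s=0$, and Stage~1, Step~1.1 of Proposition~\ref{prop:right-continuity-app} handles general $s\le t$. For the converse I would argue as follows. Suppose $\tilde J_t(W)\ge\tilde J_t(W')$ almost surely. Regard both sides as terminal wealths in $\Linf(\F_t)\subseteq\Linf(\F_T)$. Monotonicity of $\tilde J_s$ then gives $\tilde J_s(\tilde J_t(W))\ge\tilde J_s(\tilde J_t(W'))$. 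Applying the time-consistency identity on both sides yields $\tilde J_s(W)\ge\tilde J_s(W')$, which is (J4) after translating back through surjectivity.

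The subtlety is that this converse uses monotonicity of $\tilde J$ as an input, whereas at the $J$ level monotonicity was derived \emph{from} (J4). So on the risk-measure side I must include monotonicity among the hypotheses, as Kupper--Schachermayer do, rather than try to derive it. I would also need the identity for all $s\le t$ rather than only at $s=0$; this is where I would be most careful to match the paper's formulation of time-consistency.
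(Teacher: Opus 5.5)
Your treatment of the first two assertions matches the paper's own proof in Appendix~\ref{app:technical}. Boundedness is Lemma~\ref{lem:WL-bounded}, well-definedness comes from property~(W), and the full domain comes from Lemma~\ref{lem:density}. You are also right that (W), and hence (J1), (J2), (J4), has to be a standing premise; the appendix uses (W) without saying so.

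The paper only proves the forward half of the ``if and only if'' (in Step~2) and never argues the converse. Your converse breaks at the (J3b) item. Your dictionary entry is correct, but its right-hand side, strict concavity of $\tilde J_0$, is not one of the properties checked in Step~2. There, ``convexity'' is only the weak inequality. That is why, in Sub-step~3a, the Kupper--Schachermayer theorem still leaves $\gamma_n = 0$ open, and (J3b) has to be invoked separately to exclude it.

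So the implication ``Step~2 properties $\Rightarrow$ (J1)--(J5)'' is false as stated and cannot be proved. A counterexample is $\tilde J_t(W) = \E[W \mid \F_t]$, the model $M_3$ in the proof of Theorem~\ref{thm:independence}. It is normalized, conditionally cash-invariant, monotone and law-invariant; it satisfies the weak convexity inequality; it is time-consistent and relevant. Yet the induced $J$ violates (J3b). The equivalence holds only if the strict clause at $t=0$ is added to the risk-measure side.

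With that addition, the (J4) obstacle you flagged can be closed rather than left open; Step~2 states time-consistency only in the form $\tilde J_0 = \tilde J_0 \circ \tilde J_t$.
\begin{itemize}
\item For $s \le t$, set $X \coloneqq \tilde J_s(W)$ and $Y \coloneqq \tilde J_s(\tilde J_t(W))$.
\item Conditional cash-invariance at $s$ and at $t$, combined with the $s=0$ identity applied at times $s$ and $t$, gives $\tilde J_0(X+h) = \tilde J_0(W+h) = \tilde J_0(Y+h)$ for every $h \in \Linf(\F_s)$.
\item Taking $h = -Y$ and then $h = -X$ yields $\tilde J_0(D) = \tilde J_0(-D) = 0 = \tilde J_0(0)$, where $D \coloneqq X - Y$.
\item If $D$ were not a.s.\ constant, strict concavity at $0$ with $\lambda = \tfrac12$ would give $\tilde J_0(0) > 0$, a contradiction. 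Hence $D \equiv d$, and $d = \tilde J_0(d) = \tilde J_0(D) = 0$ by cash-invariance and normalization.
\end{itemize}
Thus $\tilde J_s = \tilde J_s \circ \tilde J_t$ for all $s \le t$, and your monotonicity argument then gives (J4) at every $s$. This step also relies on the strict clause, so the single repair of adding (J3b) to the risk-measure list fixes both places.
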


The proof occupies the rest of this appendix and proceeds in three short steps.

\subsection{Admissibility}
\label{app:admissibility}

We work with the setup of Section~\ref{sec:setup}, which we briefly recall.
The mid-price is $dS_t = \sigma_t \, dB_t$ with $\sigma$ predictable and locally bounded; we set $d\langle S\rangle_t = \sigma_t^2 \, dt$.
Order flow is described by the Cox processes $N^a, N^b$ with intensities $\lambda^a_t,  \lambda^b_t$ which depend on the strategy and on relevant state variables, and which we take to be predictable and locally integrable.
The quote distances $\delta^a, \delta^b$ are predictable processes taking values in $\R$.

\begin{definition}[Admissible Strategy]
\label{def:admissible}
A strategy $\pi = (\delta^a, \delta^b, C)$ is \emph{admissible} if all of the following hold.
\begin{enumerate}
    \item \emph{Predictability and boundedness of quotes.} 
    The processes $\delta^a, \delta^b$ are predictable, and there exists a deterministic constant $\bar\delta = \bar\delta(\pi) > 0$ such that $|\delta^a_t|, |\delta^b_t| \leq \bar\delta$ for all $t \in [0,T]$ almost surely.
\item \emph{Inventory bound.} 
    For each $\pi \in \Pi$ there exists a deterministic constant $Q_\pi > 0$ such that 
    $|q_t(\pi)| \leq Q_\pi$ for all $t \in [0,T]$ almost surely.
\item \emph{Mid-price bound.} 
The mid-price satisfies $\sup_{t \in [0,T]} |S_t| \leq M$  almost surely, where $M$ is the deterministic stopping constant fixed in  Section~\ref{subsec:price-flow}.
    \item \emph{Arrival-count bound.} 
    There exists a deterministic constant $\bar N = \bar N(\pi) > 0$ such that $N^a_T + N^b_T \leq \bar N$ almost surely.
    \item \emph{Cash-injection bound.} 
   The process $C$ is $\F$-adapted, c\`adl\`ag, of finite variation, and there exists a deterministic constant $\bar C = \bar C(\pi) > 0$ such that the total variation satisfies $|C|_T \leq \bar C$ almost surely.
     (By convention, $C_0$ contributes to the initial cash position together with $X_0$, so the value $C_0$ is allowed to be any deterministic real number. 
     Without loss of generality we may absorb $C_0$ into $X_0$ and assume $C_0 = 0$ when convenient.)
\end{enumerate}
Only the price bound $M$ is fixed globally (by the stopping convention of Section~\ref{subsec:price-flow}); 
the other constants are per-strategy as indicated.
  The space $\Pi$ is the set of all admissible strategies, 
  viewed as a convex subset of the space  of $(\R^2 \times \mathrm{BV}([0,T]))$-valued predictable processes.
\end{definition}

Each of the four structural conditions admits a practical reading.
The quote-distance bound corresponds to the fact that no real market maker quotes arbitrarily far from the mid-price; quote distances are bounded by the width of the order book at any reasonable depth.
The inventory bound corresponds to the position limit imposed by the desk or the risk officer; every realistic intraday market-making operation has a hard cap on $|q|$.
The mid-price bound is automatic by the stopping convention of Section~\ref{subsec:price-flow}, with $|S_t| \leq M$ holding by construction.
The arrival-count bound is the one technical condition that is not standard in the AS literature. 
 We discuss it in Remark~\ref{rem:bound-arrivals} below.

\begin{remark}[On the Arrival-Count Bound]
\label{rem:bound-arrivals}
Condition~(4) of Definition~\ref{def:admissible}  is the only condition that is not directly inherited from the standard AS/CJ framework.
The reading is entirely practical: every actual market maker faces hardware and exchange throughput constraints
 which deterministically bound the number of fills per unit time, so a horizon-$T$ deterministic cap $\bar N$ is innocuous in practice.
The bound $\bar N$ is strategy-dependent (as are $\bar\delta, Q, \bar C$): 
a strategy with  narrower quote distances induces higher fill rates and therefore a larger $\bar N$.
A global cap on $\bar N$ independent of the strategy would correspond to a hardware or exchange throughput constraint; the weaker per-strategy bound suffices for all our results.
Theoretically, the bound can be relaxed at the cost of working with a weighted $L^p$ space rather than $\Linf$, in which case the Kupper--Schachermayer theorem requires the corresponding extension to general Orlicz spaces \citep{CheriditoLi2008}.
The structural conclusion of Theorem~\ref{thm:forced-uniqueness} is unchanged under that relaxation, with $\gamma$ continuing to parametrize the forced family. 
The technical bookkeeping is, however, considerably heavier, and we have chosen the $\Linf$ route in this paper for transparency.
We expect that a careful Orlicz-space treatment can be carried out by methods that are parallel to those of \citet{CheriditoLi2008}.
\end{remark}

The inventory bound and the arrival-count bound (Conditions 2 and 4) together imply that $q$ is a bounded adapted process whose increments are governed by a Cox arrival process with deterministically bounded total count over $[0,T]$.
We emphasize that the intensities $\lambda^a, \lambda^b$ themselves need not be bounded, but the resulting cumulative counts $N^a_T, N^b_T$ are.

\subsection{Boundedness of Liquidation-Adjusted Wealth}
\label{app:bounded-wealth}

The next step is to verify that, under admissibility,   $\WL_T(\pi) \in \Linf(\F_T)$  for every $\pi \in \Pi$.

\begin{lemma}
\label{lem:WL-bounded}
For   $\pi \in \Pi$, the liquidation-adjusted terminal wealth $\WL_T(\pi)$ is essentially bounded, with a deterministic essential bound depending only on $|X_0|$, $M$, $\bar\delta$, $\bar N$, $\bar C$, $Q$, and $L$.
\end{lemma}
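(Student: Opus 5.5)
The plan is a direct pathwise estimate. I decompose $\WL_T(\pi) = X_T + q_T S_T - L(q_T)$ into its three pieces and bound each by a deterministic constant. Every bound is then read off from the admissibility conditions of Definition~\ref{def:admissible} together with the stopping convention $|S_t|\le M$ of Section~\ref{subsec:price-flow}.

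\emph{Cash term.} Integrating \eqref{eq:cash} pathwise over $[0,T]$ gives
\begin{equation*}
X_T = X_0 + \sum_{\text{ask fills}} (S_{\tau_i} + \delta^a_{\tau_i}) - \sum_{\text{bid fills}} (S_{\tau_j} - \delta^b_{\tau_j}) + (C_T - C_0).
\end{equation*}
The first two terms are finite sums over jump times of $N^a$ and $N^b$. Each summand is bounded in absolute value by $M + \bar\delta$, using conditions (1) and (3). The total number of summands is at most $N^a_T + N^b_T \le \bar N$ by condition (4). The injection term satisfies $|C_T - C_0| \le |C|_T \le \bar C$ by condition (5). Together these give
\begin{equation*}
|X_T| \le |X_0| + \bar N (M + \bar\delta) + \bar C
\end{equation*}
almost surely.

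\emph{Inventory mark.} By condition (2) and the stopping bound, $|q_T S_T| \le Q_\pi M$.

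\emph{Liquidation term.} The function $L$ is convex and finite on $\R$, hence continuous. It is therefore bounded on the compact interval $[-Q_\pi, Q_\pi]$. Because $L \ge 0$, this gives $0 \le L(q_T) \le \max_{|q|\le Q_\pi} L(q)$.

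Summing the three bounds yields a deterministic essential bound depending only on $|X_0|, M, \bar\delta, \bar N, \bar C, Q_\pi$ and $L$, as claimed.

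The only point needing care, and the main (mild) obstacle, is that the stochastic integrals $\int S_t\,dN_t$ reduce to finite sums over jump times. This holds because $N^a, N^b$ are counting processes with $N^a_T + N^b_T \le \bar N$. The integrands $S_{t}\pm\delta_t$ are predictable and bounded, so the pathwise Stieltjes sum coincides with the stochastic integral. No martingale or integrability argument is needed.

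I also note that $\WL_T(\pi)$ is $\F_T$-measurable, since it is built from adapted processes evaluated at $T$ and the Borel function $L$. Hence $\WL_T(\pi) \in \Linf(\F_T)$.
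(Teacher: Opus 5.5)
Your proof is correct and follows the paper's argument essentially line by line: the same three-way decomposition of $\WL_T(\pi)$, with the cash term bounded by $|X_0| + (M+\bar\delta)\bar N + \bar C$ via Conditions (1), (3), (4), (5), the inventory mark bounded by $QM$, and the liquidation term bounded by $\max_{|q|\le Q} L(q)$. Your extra remarks make explicit three details the paper leaves implicit: continuity of the finite convex $L$, the reduction of the $dN$-integrals to finite sums over fill times, and the $\F_T$-measurability of $\WL_T(\pi)$.
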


\begin{proof}
We decompose $\WL_T(\pi) = X_T(\pi) +  q_T(\pi) S_T - L(q_T(\pi))$ and bound each term using the conditions of Definition~\ref{def:admissible}.

The cash term $X_T$ satisfies
\begin{align*}
    |X_T| &\leq |X_0| + \int_0^T |S_t + \delta^a_t| \, dN^a_t + \int_0^T |S_t - \delta^b_t| \, dN^b_t + |C|_T \\
    &\leq |X_0| + (M + \bar\delta)\, (N^a_T + N^b_T) + \bar C \\
    &\leq |X_0| + (M + \bar\delta)\, \bar N + \bar C,
\end{align*}
where the second inequality uses Conditions~(1), (3), and~(5) of 
 Definition~\ref{def:admissible}, and the third uses Condition~(4).
Note that the  inventory term $q_T S_T$ is bounded by $Q M$ in absolute value, by Conditions~(2) and~(3).
The liquidation term $L(q_T(\pi))$ is bounded by  $\max_{|q| \leq Q} |L(q)|$, which is a deterministic constant (by convexity of $L$ and Condition~(2)).
Combining, we obtain 
\begin{equation*}
    |\WL_T(\pi)|  \leq  |X_0| + (M + \bar\delta)\, \bar N + \bar C + Q M + \max_{|q| \leq Q} |L(q)|,
\end{equation*}
which is a deterministic constant, so 
$\WL_T(\pi) \in \Linf(\F_T)$ with the stated essential bound.
\end{proof}

\subsection{Extension to \texorpdfstring{$\Linf(\F_T)$}{L-infinity}}
\label{app:extension}

We now construct the functional $\tilde{J}_t$ on $\Linf(\F_T)$   and verify the dynamic risk measure axioms.

By property (W) (Proposition~\ref{prop:wealth-summary}) (inventory-via-liquidation), the value $J_t(\pi)$ depends on $\pi$ only through $\WL_T(\pi)$.
Define
\begin{equation*}
    \mathcal{R}  \coloneqq  \{\WL_T(\pi) \mid \pi \in \Pi\}  \subset  \Linf(\F_T),
\end{equation*}
the range of the liquidation-adjusted-wealth map.
Using  property (W) (Proposition~\ref{prop:wealth-summary}), the assignment
\begin{equation*}
    \tilde{J}_t \colon \mathcal{R} \to \Linf(\F_t), \quad \tilde{J}_t\bigl(\WL_T(\pi)\bigr)  \coloneqq  J_t(\pi),
\end{equation*}
is well-defined.
The remaining question is whether $\tilde{J}_t$  extends to all of $\Linf(\F_T)$.

\begin{lemma}[Density and Extension]
\label{lem:density}
Define  $\mathcal{R} \coloneqq \{\WL_T(\pi) \mid \pi \in \Pi\}$.
 Then, $\mathcal{R} = \Linf(\F_T)$.
In particular, $\mathcal{R}$ is closed under addition by $\F_t$-measurable bounded random variables for every $t \in [0, T]$, 
and contains all real constants.
The functional $\tilde{J}_t \colon \mathcal{R} \to \Linf(\F_t)$ is therefore defined on all of $\Linf(\F_T)$ for every $t \in [0,T]$.
\end{lemma}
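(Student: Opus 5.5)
I would prove Lemma~\ref{lem:density} in three steps, following the structure the paper has already announced: Step~1 shows that $\Pi$ is closed under bounded cash injection, Step~2 produces a strategy with zero terminal wealth, and Step~3 combines the two to hit every target.

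\textbf{Step~1 (closure under cash injection).} Fix $\pi\in\Pi$, $t\in[0,T]$ and $c\in\Linf(\F_t)$, and take $C^{[c]}_s=C_s+c\,\mathbf 1_{\{s\ge t\}}$. I check the five conditions of Definition~\ref{def:admissible} one by one.
\begin{itemize}
\item Quotes are unchanged, so Condition~(1) holds with the same $\bar\delta$.
\item The cash injection does not enter the inventory equation~\eqref{eq:inventory}. Hence $q$, the fill counts and the inventory bound are unchanged, and Conditions~(2) and~(4) hold with the same $Q_\pi$ and $\bar N$.
\item Condition~(3) is global and unaffected.
\item For Condition~(5), $C^{[c]}$ is adapted, because $c\,\mathbf 1_{\{s\ge t\}}$ is $\F_s$-measurable for $s\ge t$ and zero before $t$. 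It is c\`adl\`ag and of finite variation, with $|C^{[c]}|_T\le \bar C+\|c\|_\infty$, a deterministic bound.
\end{itemize}
Integrating~\eqref{eq:cash} then gives $X_T(\pi^{[c]})=X_T(\pi)+c$. Because $q_T$ is unchanged, $\WL_T(\pi^{[c]})=\WL_T(\pi)+c$ almost surely.

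\textbf{Step~2 (a zero-wealth strategy).} I need some $\pi_0\in\Pi$ with $\WL_T(\pi_0)=0$. I would start from any admissible $\pi$, for instance quotes so wide, or order flow so constrained, that Condition~(4) holds. Its terminal wealth $W:=\WL_T(\pi)$ lies in $\Linf(\F_T)$ by Lemma~\ref{lem:WL-bounded}, so Step~1 at $t=T$ with $c=-W$ gives $\pi_0:=\pi^{[-W]}$ and $\WL_T(\pi_0)=0$.

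The main obstacle is exhibiting at least one admissible $\pi$ at all. Condition~(4) demands a deterministic cap on the fill counts, and Condition~(2) demands a deterministic inventory cap, which Cox arrivals need not respect automatically. My first attempt is to rely on the standing modeling hypothesis of Section~\ref{subsec:strategy}, together with the practical reading of Remark~\ref{rem:bound-arrivals} that $\Pi$ is nonempty. If an explicit witness is needed, I would take $\delta^a=\delta^b$ equal to a large constant. In the stopped, throughput-capped model its counts are bounded by assumption, and $|q_t|\le |q_0|+\bar N$ then bounds the inventory.

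\textbf{Step~3 (surjectivity).} Given any $Z\in\Linf(\F_T)$, apply Step~1 at $t=T$ to $\pi_0$ with $c=Z$. This gives $\WL_T(\pi_0^{[Z]})=Z$, so $\mathcal R=\Linf(\F_T)$; the reverse inclusion is Lemma~\ref{lem:WL-bounded}. The remaining claims follow at once. Closure of $\mathcal R$ under $\F_t$-measurable bounded shifts holds because $\Linf(\F_t)\subseteq\Linf(\F_T)$, and real constants are trivially in $\Linf(\F_T)$. Finally, $\tilde J_t$, which is well defined on $\mathcal R$ by Proposition~\ref{prop:wealth-summary}, is therefore defined on all of $\Linf(\F_T)$ with no extension argument needed.
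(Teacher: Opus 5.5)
Your proposal is correct and follows essentially the same three-step route as the paper: closure of $\Pi$ under bounded cash injection via $C^{[c]}_s=C_s+c\,\mathbf 1_{\{s\ge t\}}$, a terminal shift to obtain a zero-wealth strategy, and a final terminal injection of the target $Z$ to get $\mathcal R=\Linf(\F_T)$. The differences are cosmetic. You derive the closure and constants claims from the equality rather than proving them first. You also state explicitly that nonemptiness of $\Pi$ rests on the deterministic fill and inventory caps being assumed, which is more candid than the paper's bare assertion that constant quotes with $C\equiv 0$ satisfy all five admissibility conditions.
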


\begin{proof}
The proof proceeds in three steps.

\emph{Step 1: $\mathcal{R}$ is closed under addition by $\F_t$-measurable bounded random variables.} 
To show this, let $W = \WL_T(\pi) \in \mathcal{R}$ for some $\pi = (\delta^a, \delta^b, C) \in \Pi$, 
 fix $t \in [0, T]$, 
and let $c \in  \Linf(\F_t)$ with $\|c\|_\infty \leq C_*$ for some deterministic constant $C_* > 0$.
Define the modified strategy $\pi^{[c]} = (\delta^a, \delta^b, C^{[c]})$ with the same quote-distance processes as $\pi$ and with augmented cash-injection process
\begin{equation*}
    C^{[c]}_s  \coloneqq  C_s + c \cdot \mathbf{1}_{\{s \geq t\}}, \qquad s \in [0, T].
\end{equation*}
Since $c$ is  $\F_t$-measurable and the path $s \mapsto \mathbf{1}_{\{s \geq t\}}$ is deterministic and right-continuous,
 the process $C^{[c]}$ is $\F$-adapted and c\`adl\`ag.
Its total variation satisfies
\begin{equation*}
    |C^{[c]}|_T  \leq  |C|_T + |c|  \leq  \bar{C} + C_*,
\end{equation*}
which is a deterministic constant.
Hence, $\pi^{[c]}$ satisfies Condition~(5) of Definition~\ref{def:admissible} with bound $\bar{C} + C_*$ in place of $\bar{C}$.

We verify that $\pi^{[c]} \in \Pi$.
Conditions~(1) (quote-distance bound) and~(2) (inventory bound) depend only on the quote processes $\delta^{a,b}$ and on inventory $q$; since neither is changed by the cash injection, these conditions hold with the same $\bar\delta$ and $Q$ as for $\pi$.
Conditions~(3) (mid-price bound) and~(4) (arrival-count bound) are properties of the price and order-flow processes, not of the strategy, and are inherited unchanged.
Condition~(5) holds with the bound $\bar{C} + C_*$ noted above.
Thus, $\pi^{[c]} \in \Pi$.

Inserting $C^{[c]}$ into the cash dynamics~\eqref{eq:cash} and integrating from $0$ to $T$,
\begin{equation*}
    X_T(\pi^{[c]})  =  X_0 + \int_0^T (S + \delta^a) \, dN^a - \int_0^T (S - \delta^b) \, dN^b + C^{[c]}_T  =  X_T(\pi) + c.
\end{equation*}
The inventory dynamics~\eqref{eq:inventory} do not depend on $C$, so $q_T(\pi^{[c]}) = q_T(\pi)$ and the liquidation cost is unchanged.
Consequently $\WL_T(\pi^{[c]}) = \WL_T(\pi)  +  c$, and $W + c \in \mathcal{R}$.

\emph{Step 2: $\mathcal{R}$ contains all real constants.}
The strategy space $\Pi$ is nonempty, since for any constant choice $(\delta^a, \delta^b) \equiv (\bar\delta, \bar\delta)$ 
and $C \equiv 0$, all five admissibility conditions hold.
Pick any $\pi_0 \in \Pi$ and let $W_0 \coloneqq \WL_T(\pi_0) \in \mathcal{R}$.
For any $c \in \R$, applying Step~1 at $t = T$ with shift $c - W_0 \in \Linf(\F_T)$
  (this is bounded since $W_0$ is essentially bounded by Lemma~\ref{lem:WL-bounded}) yields the strategy $\pi_0^{[c - W_0]} \in \Pi$ with $\WL_T(\pi_0^{[c - W_0]}) = W_0 + (c - W_0) = c$ almost surely.
Hence, every real constant $c$ lies in $\mathcal{R}$.

\emph{Step 3: Full equality $ \mathcal{R} = \Linf(\F_T) $.}
Fix any $W \in \Linf(\F_T)$.
Pick the constant strategy $\pi_0^{(0)} \in \Pi$ from Step~2 with $\WL_T(\pi_0^{(0)}) = 0$ almost surely.
Applying Step~1 at $t = T$ with shift $c = W$ yields a strategy $\pi_0^{(0)[W]} \in \Pi$ with $\WL_T(\pi_0^{(0)[W]}) = 0 + W  = W$.
The admissibility constant for the cash-injection bound (Condition~(5)) of the resulting strategy is $\bar{C} + \|W\|_\infty$, which is deterministic by $W \in \Linf(\F_T)$ and depends on the strategy through $W$; this is consistent with Definition~\ref{def:admissible}, which states explicitly that the admissibility constants $\bar\delta, Q, \bar N, \bar C$ may depend on the strategy.
Hence, every $W \in \Linf(\F_T)$ lies in $\mathcal{R}$, and $\mathcal{R} = \Linf(\F_T)$.

By property (W) (Proposition~\ref{prop:wealth-summary}), $J_t(\pi)$ depends on $\pi$ only through $\WL_T(\pi)$, 
and therefore    $\tilde J_t \colon \mathcal{R} \to \Linf(\F_t)$ is  well-defined. 
 Since $\mathcal{R} = \Linf(\F_T)$, it is in fact  defined on all of $\Linf(\F_T)$. 
 By Axiom~\ref{ax:J1} (cash-additivity), $\tilde J_t$ satisfies $\tilde{J}_t(W + c) = \tilde{J}_t(W) + c$ for any $W \in \Linf(\F_T)$ and any bounded $c \in \Linf(\F_t)$, and the dynamic risk measure  $\rho_t \coloneqq -\tilde{J}_t$ is the object on which \citet[Theorem 1.10]{KupperSchachermayer2009} applies directly.
\end{proof}

\begin{remark}[On the Role of the Cash-Injection Component]
\label{rem:density-scope}
The cash-injection component of the strategy space is a \emph{technical device}, 
included to ensure that the reduced functional $\tilde J_t$ is defined on all of $\Linf(\F_T)$, 
the domain on which the Kupper--Schachermayer representation theorem applies.
It is not a modeling claim about how desks operate.
Concretely, the strategies $\pi_0^{(0)[W]}$ constructed in Step~3 realize any bounded $\F_T$-measurable random variable $W$ as their liquidation-adjusted terminal wealth, via a single cash injection of size $W$ at time $T$: 
the market maker quotes constantly at $(\bar\delta, \bar\delta)$ and injects $W$ into her cash account at time $T$.
Their role is purely instrumental, ensuring that the preference functional is tested on a rich enough domain for the representation theorem to apply; the forced entropic form and the unique scalar $\gamma$ produced by Theorem~\ref{thm:forced-uniqueness} are unaffected by whether such strategies are economically meaningful.
\end{remark}


\section{The Multi-Asset Extension}
\label{app:multi-asset}

In this appendix we prove the multi-asset claim of Section~\ref{subsec:further-connections}. 
The argument is short, because the reduction to the Kupper--Schachermayer framework absorbs the dimensionality of the asset side into the joint distribution of a single scalar random variable.

\paragraph{Multi-asset setup.}
The probability space $(\Omega, \F, \filt, \Prob)$ is as in Section~\ref{sec:setup}.
The market consists of $K \geq 1$ assets with price processes $S^{(1)}, \ldots, S^{(K)}$, each a continuous semimartingale adapted to $\filt$, and per-asset cumulative ask and bid fills $N^{a,(k)}, N^{b,(k)}$.
The strategy space $\Pi^{(K)}$ satisfies the natural multi-asset analogue of the admissibility conditions of Appendix~\ref{app:technical}, with a single shared cash account.
Cash and per-asset inventory evolve as
\begin{equation*}
    dX_t  =  \sum_{k=1}^K \bigl[(S^{(k)}_t + \delta^{a,(k)}_t)\,dN^{a,(k)}_t - (S^{(k)}_t - \delta^{b,(k)}_t)\,dN^{b,(k)}_t\bigr],
    \qquad
    dq^{(k)}_t  =  dN^{b,(k)}_t - dN^{a,(k)}_t.
\end{equation*}
The joint liquidation cost function $L \colon \R^K \to \R_+$ is convex with $L(0) = 0$, and the joint liquidation-adjusted terminal wealth is
\begin{equation}
    \WL_T(\pi)  =  X_T(\pi) + \sum_{k=1}^K q^{(k)}_T(\pi) S^{(k)}_T - L\bigl(q^{(1)}_T(\pi), \ldots, q^{(K)}_T(\pi)\bigr).
    \label{eq:multi-asset-WL}
\end{equation}
The preference functional $J^{(K)} = (J_t^{(K)})_{t \in [0,T]}$ with $J_t^{(K)} \colon  \Pi^{(K)} \to \Linf(\F_t)$ is assumed to satisfy the multi-asset analogues of Axioms~\ref{ax:J1}--\ref{ax:J5}, with $\Pi, J_t, \WL_T$ replaced by their multi-asset counterparts throughout.

\begin{theorem}[Multi-Asset Forced Uniqueness]
\label{thm:multi-asset}
Under the setup above, $J^{(K)}$ satisfies the multi-asset analogues of Axioms~\ref{ax:J1}--\ref{ax:J5} if and only if there exists a unique $\gamma > 0$ such that
\begin{equation}
    J_t^{(K)}(\pi)  =  -\frac{1}{\gamma}\, \log \E\!\left[\exp \bigl(-\gamma\, \WL_T(\pi)\bigr) \,\Big|\, \F_t\right]
    \label{eq:multi-asset-entropic}
\end{equation}
for all $t \in [0,T]$ and all $\pi \in \Pi^{(K)}$,  with the same $\gamma$ governing all $K$ assets.
\end{theorem}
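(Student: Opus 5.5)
The plan is to reduce Theorem~\ref{thm:multi-asset} to Theorem~\ref{thm:forced-uniqueness}. Once the single-asset argument has been restated in terms of the scalar random variable $\WL_T$, it never again uses the internal structure of the strategy space. The only inputs it needs are (i) that $\WL_T(\pi) \in \Linf(\F_T)$ for every admissible $\pi$; (ii) the cash-injection construction $\pi^{[c]}$ with $\WL_T(\pi^{[c]}) = \WL_T(\pi) + c$; and (iii) the range identity $\mathcal{R} = \Linf(\F_T)$. So the first step is to verify the multi-asset analogues of Lemma~\ref{lem:WL-bounded} and Lemma~\ref{lem:density} for $\Pi^{(K)}$.

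For boundedness, the argument of Lemma~\ref{lem:WL-bounded} is repeated asset by asset. With per-asset bounds $\bar\delta^{(k)}$, $Q^{(k)}$, $\bar N^{(k)}$ and the common price bound $M$, the cash term is bounded by $|X_0| + \sum_k (M + \bar\delta^{(k)})\bar N^{(k)} + \bar C$. Each inventory term $q^{(k)}_T S^{(k)}_T$ is bounded by $Q^{(k)} M$. The joint liquidation term is bounded by the maximum of the convex, hence continuous, function $L$ on the compact box $\prod_k [-Q^{(k)}, Q^{(k)}]$.

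For density, cash injection acts on the single shared cash account and leaves every $q^{(k)}$ unchanged. Steps~1--3 of Lemma~\ref{lem:density} therefore carry over verbatim: admissibility of $\pi^{[c]}$, constants in $\mathcal{R}$, and then $\mathcal{R} = \Linf(\F_T)$ by a terminal injection of $W$ on top of a zero-wealth strategy. The one point to check is that the multi-asset cash dynamics stated above omit the $dC_t$ term. I would read the phrase ``natural multi-asset analogue with a single shared cash account'' as including it, and state this reading explicitly.

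With (i)--(iii) in hand, the derived properties go through unchanged. Proposition~\ref{prop:monotonicity-app} and Proposition~\ref{prop:wealth-summary-app} give (M) and (W) for $J^{(K)}$, since their proofs use only J1, J2, J4 and terminal cash injection. Lemma~\ref{lem:relevance-derivable} then gives relevance. This defines a reduced functional $\tilde J^{(K)}_t \colon \Linf(\F_T) \to \Linf(\F_t)$. Step~2 of the proof of Theorem~\ref{thm:forced-uniqueness} is stated purely in terms of $\tilde J$, $\pi_0$ and cash injections, so it shows that $-\tilde J^{(K)}$ is a normalized, cash-invariant, monotone, convex, law-invariant, time-consistent, relevant dynamic risk measure on $\Linf(\F_T)$. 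Sub-steps~3a--3c then apply word for word: Kupper--Schachermayer on dyadic sub-filtrations, identifiability of $\gamma_n$ via strict monotonicity of $f(\gamma)$, and extension to all $t$ via (R). For (R) I would use Stage~1 of Proposition~\ref{prop:right-continuity-app}, which is likewise phrased only through $\tilde J$, to avoid circularity. The result is the entropic form~\eqref{eq:multi-asset-entropic} with a unique $\gamma\in(0,\infty)$. The converse direction is the verification in the ($\Leftarrow$) part of Theorem~\ref{thm:forced-uniqueness}, which uses only that $\WL_T(\pi)$ is bounded and that $\WL_T(\pi^{[c]}) = \WL_T(\pi) + c$.

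The claim that the same $\gamma$ governs all $K$ assets needs a short separate argument, and I expect it to be the main conceptual point rather than a technical obstacle. Because $J^{(K)}$ is a single functional on the joint $\WL_T$, the unique $\gamma$ is one scalar by construction. To connect this to the per-asset reading, restrict to strategies that trade only asset $k$, i.e.\ $\delta^{a,(j)}, \delta^{b,(j)}$ set so that asset $j \ne k$ never fills, or with $\bar N^{(j)} = 0$. Every such strategy is evaluated by the same formula~\eqref{eq:multi-asset-entropic}, so the implied parameter is $\gamma$ for every $k$. The main technical obstacle is the admissibility bookkeeping in the density step, namely that the per-strategy constants and the shared cash account keep $\Pi^{(K)}$ closed under bounded $\F_t$-measurable injections. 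That is where I would spend care; everything else transfers from the scalar reduction.
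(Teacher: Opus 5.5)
Your proposal is correct and follows the paper's own proof. You derive (W) from the multi-asset J1, J2, J4, use the shared cash account to get $\{\WL_T(\pi) \mid \pi \in \Pi^{(K)}\} = \Linf(\F_T)$, and run the single-asset argument on the reduced scalar functional, so a single $\gamma$ governs all assets by construction.

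Your extra bookkeeping fills in details the paper only asserts: the bound using the box $\prod_k[-Q^{(k)},Q^{(k)}]$, and your catch that the displayed multi-asset cash dynamics omit the $dC_t$ term the density step needs. The per-asset restriction argument is unnecessary for the conclusion, and it may not even be available, since under bounded quotes a strategy on which asset $j$ never fills need not exist.
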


\begin{proof}
By the multi-asset analogue of Proposition~\ref{prop:wealth-summary}
 (derived from the multi-asset analogues of Axioms~\ref{ax:J1},  \ref{ax:J2}, \ref{ax:J4} exactly as in the single-asset case), 
 $J_t^{(K)}(\pi)$ depends on $\pi$ only through the scalar random variable  $\WL_T(\pi) \in \Linf(\F_T)$ defined by~\eqref{eq:multi-asset-WL}.
There exists therefore a functional $\tilde{J}_t^{(K)}$ such that
\begin{equation*}
    J_t^{(K)}(\pi)  =  \tilde{J}_t^{(K)}\bigl(\WL_T(\pi)\bigr) \quad \text{for all } \pi \in \Pi^{(K)}.
\end{equation*}
The multi-asset analogue of Lemma~\ref{lem:density}, which exploits the shared cash account and is otherwise identical in structure to the single-asset version, gives $\{\WL_T(\pi) \mid \pi \in \Pi^{(K)}\} = \Linf(\F_T)$, so $\tilde{J}_t^{(K)} \colon \Linf(\F_T) \to \Linf(\F_t)$ is defined on the full space.
The role of the shared cash account is essential here.
A single bounded $\F_T$-measurable cash injection $c$ at time $T$ shifts the joint wealth $\WL_T$ by $c$, regardless of the per-asset inventories $q^{(1)}, \ldots, q^{(K)}$ at the terminal time;   this is what makes the range of $\WL_T$ cover all of $\Linf(\F_T)$.
With $K$ separate cash accounts (one per asset), the corresponding density statement would not hold, since the per-asset wealth components would shift jointly rather than freely.
Each multi-asset axiom on $J^{(K)}$ transfers to $\tilde{J}^{(K)}$ as a property of a functional on a single scalar random variable. 
By the single-asset argument of Theorem~\ref{thm:forced-uniqueness} applied to $\tilde{J}^{(K)}$, there exists a unique $\gamma \in (0,\infty)$ such that
\begin{equation*}
    \tilde{J}_t^{(K)}(W)  =  -\tfrac{1}{\gamma}\log\E\bigl[e^{-\gamma W} \,\big|\, \F_t\bigr]  \quad \text{for all } W \in \Linf(\F_T).
\end{equation*}
Setting $W = \WL_T(\pi)$ yields~\eqref{eq:multi-asset-entropic}.
The same $\gamma$ governs all $K$ assets because the multi-asset structure is summarized into the single scalar random variable $\WL_T \in \Linf(\F_T)$ before the Kupper--Schachermayer representation is applied; $\gamma$ is therefore necessarily independent of $k$.
The converse direction is identical to that of Theorem~\ref{thm:forced-uniqueness}.
\end{proof}

The operational implication is sharp.
A market-making desk that trades $K$ correlated assets has, under the multi-asset axioms, a single risk-aversion parameter governing all of them. 
Cross-asset correlations, liquidation costs, and volatilities enter through the joint distribution of $\WL_T$, not through any per-asset preference parameter.
A desk implementing different risk aversions on different sub-books is operating two distinct preference functionals which cannot both arise from a single dynamically-consistent agent.
The falsifiability criterion of Remark~\ref{rem:constant-gamma} sharpens accordingly,
 in that the implicit $\gamma$ recovered from each sub-book via Corollary~\ref{cor:calibration-inversion} should agree across sub-books.

\begin{proposition}[Multi-Asset Forced Running Coefficients]
\label{prop:multi-asset-Phi}
Under the same hypotheses as  Theorem~\ref{thm:multi-asset}, with constant joint covariance  $\Sigma$ of the price processes, the per-asset CJ-style running penalty matrix $\Phi = (\Phi_{ij})_{i,j=1}^K$ for the second-order expansion of the multi-asset entropic functional around zero inventory is forced to take the form
\begin{equation*}
    \Phi_{ij}  =  \frac{\gamma \Sigma_{ij}}{2},
\end{equation*}
i.e., the running inventory cost is  $\tfrac{\gamma}{2} q^\top \Sigma\, q\, dt$ where $q = (q^{(1)}, \ldots, q^{(K)})$.
\end{proposition}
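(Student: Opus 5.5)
The plan is to repeat the argument of Proposition~\ref{prop:forced-phi} with a $K$-dimensional price diffusion. By Theorem~\ref{thm:multi-asset}, maximizing $J^{(K)}_t$ is equivalent to maximizing CARA expected utility of the joint $\WL_T$ in~\eqref{eq:multi-asset-WL}. Take $dS_t = \Sigma^{1/2} dB_t$ with $B$ a $K$-dimensional Brownian motion, so that $d\langle S^{(i)}, S^{(j)}\rangle_t = \Sigma_{ij}\,dt$. As in the single-asset case, cash-additivity (Axiom~\ref{ax:J1}) lets us set $C \equiv 0$. The value function $V(t,X,q,S)$, with $q,S \in \R^K$, then satisfies the HJB equation
\begin{equation*}
\partial_t V + \tfrac12 \operatorname{tr}\bigl(\Sigma\, D^2_{S} V\bigr) + \sum_{k=1}^K \Bigl(\sup_{\delta^{a,(k)}} \lambda^{a,(k)} \Delta^{a,(k)} V + \sup_{\delta^{b,(k)}} \lambda^{b,(k)} \Delta^{b,(k)} V\Bigr) = 0 .
\end{equation*}
Here the impulse operators shift $q^{(k)}$ by $\mp 1$ and move cash by $\pm S^{(k)} + \delta^{\cdot,(k)}$. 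The terminal condition is $u(X + q^\top S - L(q))$.

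Next, insert the factorized ansatz $V = -\exp(-\gamma(X + q^\top S))\,\psi(t,q)$. Since $D^2_S V = \gamma^2 q q^\top V$, the diffusion term becomes $\tfrac12\gamma^2 (q^\top \Sigma q)\, V$. Each impulse leaves $X + q^\top S$ unchanged apart from the spread, because $S^{(k)} + (q - e_k)^\top S = q^\top S$. So the wealth factor pulls out exactly as in Proposition~\ref{prop:forced-phi}. Dividing by $-\exp(-\gamma(X+q^\top S))$ gives a PDE for $\psi$ alone, with quadratic term $\tfrac12\sigma$ replaced by $\tfrac12\gamma^2 q^\top\Sigma q\,\psi$. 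Passing to $\theta = -\gamma^{-1}\log\psi$ then yields
\begin{equation*}
\partial_t\theta - \tfrac{\gamma}{2}\, q^\top \Sigma q + \sum_k \bigl(\mathcal J^{a,(k)}[\theta] + \mathcal J^{b,(k)}[\theta]\bigr) = 0 ,
\end{equation*}
with terminal condition $\theta(T,q) = -L(q)$.

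The running cost in this equation is therefore exactly $\tfrac{\gamma}{2}q^\top\Sigma q$, so $\Phi_{ij} = \gamma\Sigma_{ij}/2$ once $\Phi$ is taken symmetric. As in Remark~\ref{rem:two-senses}, the coefficient is exact, and only the CJ functional form is a second-order approximation.

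The main obstacle is \emph{uniqueness}: showing that no other matrix can arise. A quadratic form determines only its symmetric part, so I would fix $\Phi$ symmetric by convention. Rigidity then follows by polarization. Suppose a candidate $\Phi$ agreed with the entropic expansion for every inventory vector $q$. Evaluating the quadratic form at $q = e_i$ and $q = e_i + e_j$, which are reachable admissible inventories, recovers every entry $\Phi_{ij}$ and forces $\Phi = \gamma\Sigma/2$. Any other $\Phi$ is a penalty that depends on $(q_s)$ beyond $\WL_T$, so it fails property (W), exactly as in Corollary~\ref{cor:cj-inconsistency}. A secondary technical point is that the multi-dimensional Brownian setup must remain compatible with the stopping and boundedness conventions of Appendix~\ref{app:technical}. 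I would handle this as the single-asset case does and not belabor it.
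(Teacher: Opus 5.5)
You reach the correct coefficient, but by a different route from the paper. The paper's proof never writes a multi-asset HJB. It freezes $q$ over $[t,t+dt]$, takes $dW = q^\top dS_t$ with $d\langle S\rangle_t = \Sigma\,dt$, and expands the conditional cumulant generating function: $-\tfrac{1}{\gamma}\log\E[e^{-\gamma q^\top dS_t}\mid\F_t] = -\tfrac{\gamma}{2}q^\top\Sigma q\,dt + o(dt)$, with the drift term vanishing under the zero-drift convention.

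You instead lift the CARA factorization of Proposition~\ref{prop:forced-phi} to $K$ dimensions and read the coefficient off the $\theta$-Bellman equation. Your computations are correct: $D^2_S V = \gamma^2 qq^\top V$, the per-asset impulse factorization via $S^{(k)} + (q-e_k)^\top S = q^\top S$, and $\theta(T,q) = -L(q)$.

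\emph{What your route buys.} The coefficient is exact in the sense of Remark~\ref{rem:two-senses}. It shows explicitly that fills only generate the difference terms $\mathcal{J}^{a,(k)}, \mathcal{J}^{b,(k)}$ and never contaminate the $q^\top\Sigma q$ term. Your symmetric convention plus polarization at $e_i$ and $e_i+e_j$ also makes the uniqueness claim precise, where the paper only says the argument is ``uniform in $q$''.

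\emph{What the paper's route buys.} It needs no Markovian intensity model and no assumption that the value function has the factorized form. It also adapts directly to a predictable $\Sigma_t$ (Remark~\ref{rem:multi-asset-stochvol}), where your HJB would need additional volatility state variables and cross-derivative terms.

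\emph{One correction.} Your final sentence says any other $\Phi$ ``fails property (W)''. That is not the right discriminator, because the on-surface penalty $\tfrac{\gamma}{2}\int_0^T q_s^\top\Sigma q_s\,ds$ is just as path-dependent when taken as a primitive. The rigidity comes from your polarization step: any other symmetric $\Phi$ disagrees with the expansion of the unique functional of Theorem~\ref{thm:multi-asset}.

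A small slip: the phrase ``quadratic term $\tfrac12\sigma$'' should read $\tfrac12\sigma^2\gamma^2 q^2\psi$.
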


\begin{proof}
The argument is a multi-asset Taylor expansion of~\eqref{eq:multi-asset-entropic} around zero inventory, parallel to that of Proposition~\ref{prop:forced-phi}.
Fix $t$ and $q = q_t \in \R^K$, and consider an infinitesimal interval $[t, t+dt]$ over which inventory is held constant.
The mark-to-market increment of liquidation-adjusted terminal wealth attributable to this holding is
  $dW = q^\top dS_t$, where $S = (S^{(1)}, \ldots, S^{(K)})$ has joint quadratic covariation $d\langle S\rangle_t = \Sigma\, dt$.
By the Bellman identity of the entropic functional and standard arguments (see the single-asset proof of Proposition~\ref{prop:forced-phi}), the conditional certainty-equivalent contribution of holding $q$ over $[t, t+dt]$ 
equals  the conditional cumulant-generating function increment
\begin{equation*}
    -\tfrac{1}{\gamma}\log\E\!\left[e^{-\gamma\, q^\top dS_t} \,\big|\, \F_t\right]
     =  -\tfrac{1}{\gamma}\log\!\left(1 - \gamma\, q^\top \E[dS_t \mid \F_t] + \tfrac{\gamma^2}{2}\, q^\top \Sigma\, q\, dt + o(dt)\right).
\end{equation*}
Under the zero-drift convention adopted in Section~\ref{subsec:price-flow} ($\mu \equiv 0$, 
 so that $\E[dS_t \mid \F_t] = 0$), the first-order term vanishes.
Equivalently, even with a nonzero drift $\mu_t\, dt$, the $\mu_t\, dt$ contribution to the second-order expansion of $-\tfrac{1}{\gamma}\log(1+x)$ is $o(dt)$ in the inventory-coefficient expansion (the drift contributes to the linear-in-$q$ alpha term but not to the quadratic-in-$q$ running cost).
Expanding $-\tfrac{1}{\gamma}\log(1 + x) = -\tfrac{1}{\gamma}x + o(x)$ for small $x$, the leading-order quadratic-in-$q$ contribution as $dt \to 0$ is
\begin{equation*}
    -\tfrac{1}{\gamma} \cdot \tfrac{\gamma^2}{2}\, q^\top \Sigma\, q\, dt  =  -\tfrac{\gamma}{2}\, q^\top \Sigma\, q\, dt.
\end{equation*}
This is exactly the running-penalty contribution of the multi-asset CJ objective with matrix coefficient $\Phi = \tfrac{\gamma}{2}\Sigma$, i.e., $\Phi_{ij} = \tfrac{\gamma\Sigma_{ij}}{2}$, as claimed.
The argument is uniform in $q$ and applies to every $t \in [0,T]$,  giving the constant-matrix forcing.
\end{proof}

\begin{remark}[Stochastic-Covariance Generalization]
\label{rem:multi-asset-stochvol}
The stochastic-covariance generalization, with $\Sigma_t$ predictable and a.s.\ positive semidefinite, follows by combining the multi-asset Taylor expansion of the proof above with the time-change argument of Proposition~\ref{prop:clock-corrected} applied to the canonical multi-asset clock.
The forced matrix coefficient is then time- and state-dependent,
\begin{equation*}
    \Phi_{ij,t}  =  \tfrac{\gamma\, \Sigma_{ij,t}}{2},
\end{equation*}
with a single time-invariant scalar $\gamma$ as in the single-asset case.
We do not write out the time-change argument here; it is structurally identical to that of Proposition~\ref{prop:clock-corrected}, applied componentwise to the joint quadratic covariation $d\langle S^{(i)}, S^{(j)}\rangle_t = \Sigma_{ij,t}\, dt$.
\end{remark}


\section{A Numerical Illustration}
\label{app:numerical}

In this appendix we give a short numerical consistency illustration on the AS-2008 benchmark, exploring the certainty-equivalent performance of heuristic CJ-formula strategies as the running-penalty coefficient $\phi$ varies around the forced value 
 $\phi_{\text{forced}} = \gamma\sigma^2/2$.
The illustration verifies that the forced value is the performance optimum within the parametric sweep on the over-penalizing side ($\phi > \phi_{\text{forced}}$), with monotone certainty-equivalent decline as $\phi$ moves away from the forced value.
On the under-penalizing side ($\phi < \phi_{\text{forced}}$) the table shows certainty-equivalent values above the on-surface one, which is an artifact of the AS-2008 benchmark configuration (the absent inventory cap and $L \equiv 0$) discussed in Section~\ref{subsec:numerical-tests}, and not a counterexample to the theorem.
The results are presented in tabular form (Table~\ref{tab:numerical}); the qualitative shape is described in the discussion of Section~\ref{subsec:numerical-discussion}.

\subsection{Setup}

We use the constant-volatility benchmark of Section~\ref{sec:supporting} with the parameters from \citet[Section 3.2]{AvellanedaStoikov2008}:
\begin{equation*}
    S_0 = 100, \quad \sigma = 2, \quad T = 1, \quad \gamma = 0.1, \quad A = 140, \quad \kappa = 1.5,
\end{equation*}
and no terminal liquidation cost ($L \equiv 0$), as in the AS-2008 paper.
With these parameters,
\begin{equation*}
    \phi_{\text{forced}}  =  \frac{\gamma \sigma^2}{2}  =  \frac{(0.1)(4)}{2}  =  0.2.
\end{equation*}

We consider a family of CJ-style quoting strategies indexed by the running-penalty coefficient $\phi > 0$, with reservation price and half-spread
\begin{equation*}
    r(t, S, q)  =  S - 2\phi q (T-t), \qquad
    \text{half-spread}  =  \phi(T-t) + \tfrac{1}{\gamma} \log( 1 + \gamma/\kappa).
\end{equation*}
The factor of $2$ in $r = S - 2\phi q(T-t)$ is chosen so that at 
$\phi = \phi_{\text{forced}} = \gamma\sigma^2/2$ the formula reduces to the AS reservation price 
$S - \gamma\sigma^2 q(T-t)$, consistent with \citet[Equation (28)]{AvellanedaStoikov2008}. 
Equivalent CJ parametrizations without this factor of $2$ differ only by a rescaling of the meaning of $\phi$.
Note that CJ-tradition implementations without this factor of $2$
 (e.g., the convention of \citet{CarteaJaimungalPenalva2015}) define $\phi$ as twice the quantity here; the forced coefficient in that convention is $\gamma\sigma^2$, not $\gamma\sigma^2/2$.
At $\phi = \phi_{\text{forced}} = \gamma\sigma^2/2$, the half-spread reduces to $\tfrac{\gamma \sigma^2}{2}(T-t) + \tfrac{1}{\gamma}\log(1 + \gamma/\kappa)$, which is exactly half the AS-optimal full spread of \citet[Equation (29)]{AvellanedaStoikov2008}
  (i.e., the AS-optimal half-spread), and the reservation price reduces to $S - \gamma\sigma^2 q(T-t)$, which is exactly \citet[Equation (28)]{AvellanedaStoikov2008}.
The relation $\phi = \gamma\sigma^2/2$ thus makes the CJ-form and the AS-form coincide as the \emph{same} formula written in two different ways.
For $\phi \neq \phi_{\text{forced}}$, the strategies are heuristic,
 in that they substitute different values of $\phi$ into the same parametric form but are not the HJB-optimum of any objective at the off-surface values.

\subsection{What the Simulation Tests, and What It Does Not}
\label{subsec:numerical-tests}

Theorem~\ref{thm:forced-uniqueness} and Corollary~\ref{cor:cj-inconsistency} together say that among preference functionals satisfying Axioms~\ref{ax:J1}--\ref{ax:J5}, the unique one is the entropic functional with parameter  $\gamma$; 
the AS-optimal continuous-time strategy is the CE-maximum over admissible strategies under that functional, and the second-order Taylor expansion of the functional in inventory generates a running penalty with the forced coefficient $\phi_{\text{forced}} = \gamma\sigma^2/2$.
At $\phi = \phi_{\text{forced}}$, the CJ-formula heuristic in our sweep is exactly the AS-optimal continuous-time strategy on the AS-2008 benchmark.
For $\phi \neq \phi_{\text{forced}}$, the strategies are heuristic CJ-formula evaluations with the same parametric form but not HJB-optimal for any objective.

The substantive content is therefore in the over-penalizing side ($\phi > \phi_{\text{forced}}$),  where the heuristic departs from the AS-optimal toward over-attenuated fills.
The under-penalizing side ($\phi < \phi_{\text{forced}}$) does not give a clean signal in the AS-2008 benchmark, since the benchmark uses $L \equiv 0$, $\gamma = 0.1$ (small), and the simulation does not enforce the hard inventory cap $|q_t| \leq Q$ of Definition~\ref{def:admissible}; 
 under these conditions the continuous-time CE-maximality at $\phi_{\text{forced}}$ does not translate into a Monte Carlo CE-maximum within the heuristic family.

For each value of $\phi$, we report the average terminal P\&L $\E[\WL_T]$, its standard deviation, the average absolute terminal inventory $\E[|q_T|]$, the certainty equivalent under the forced entropic functional $\mathrm{CE}(\phi) = -\tfrac{1}{\gamma}\log\E[e^{-\gamma\WL_T}]$, the Sharpe-style ratio $\E[\WL_T]/\mathrm{Std}(\WL_T)$, and the difference $\Delta\mathrm{CE} = \mathrm{CE}(\phi) - \mathrm{CE}(\phi_{\text{forced}})$.
Simulations are carried out using $N = 10{,}000$ Monte Carlo trajectories and $1{,}000$ uniform time steps on $[0, T]$,
  with parameter ratios given by
    $\phi/\phi_{\text{forced}} \in \{0.25, 0.5, 0.75, 1.0, 1.25, 1.5, 2.0, 2.5, 3.0, 4.0\}$.

\subsection{Results}
\label{subsec:numerical-results}

Table~\ref{tab:numerical} reports the Monte Carlo comparison across the heuristic CJ-formula strategy family on the AS-2008 benchmark. 
The value $\mathrm{CE}(\phi)$ is the certainty equivalent of the strategy at running coefficient $\phi$ under the forced entropic functional, 
 and $\Delta\mathrm{CE} \coloneqq \mathrm{CE}(\phi) - \mathrm{CE}(\phi_{\text{forced}})$ measures its signed deviation from the on-surface value, with the deviation being positive precisely when the off-surface strategy appears preferable.
 On the over-penalizing side ($\phi > \phi_{\text{forced}}$),  $\mathrm{CE}$ decreases monotonically and approximately linearly with the off-surface deviation, 
with slope approximately $-3.6$ per unit $\Delta = \phi/\phi_{\text{forced}} - 1$.
On the under-penalizing side ($\phi < \phi_{\text{forced}}$), $\Delta\mathrm{CE} > 0$: the off-surface CE is higher than the on-surface one. 
This is not a counterexample to the theorem but an artifact of three specific features of the AS-2008 benchmark, 
namely $L \equiv 0$, small $\gamma = 0.1$, and the absence of an enforced inventory cap. 
It is discussed in Section~\ref{subsec:numerical-discussion}.

\paragraph{Standard errors.}
With $N = 10{,}000$ trajectories, the Monte Carlo standard errors of $\E[\WL_T]$ 
and $\mathrm{Std}(\WL_T)$ are $\mathrm{Std}(\WL_T)/\sqrt{N} \in [0.065, 0.083]$ and $\mathrm{Std}(\WL_T)/\sqrt{2N} \in [0.046, 0.059]$ respectively. 
The nonparametric bootstrap standard error of $\mathrm{CE}(\phi)$ ($1{,}000$ resamples) is at most $0.12$ across the rows, 
so every $|\Delta\mathrm{CE}| \geq 0.25$ is significant at the $99\%$ level.
By the delta method, the standard error of the Sharpe-style ratio is approximately $1/\sqrt{2N} \approx 0.007$, 
so the observed Sharpe variation ($7.83$ to $8.87$, range  $\approx 1.04$) is highly significant despite being small relative to the CE variation.
The standard error of $\E[|q_T|]$ is $\mathrm{Std}(|q_T|)/\sqrt{N}$ and is below $0.03$ across the rows.

\begin{table}[h]
\centering
\begin{tabular}{cccccccc}
\hline
$\phi/\phi_{\text{forced}}$ & $\phi$ & $\E[\WL_T]$ & $\mathrm{Std}(\WL_T)$ & $ \E[|q_T|]$ &  $\mathrm{CE}(\phi)$ & $\Delta \mathrm{CE}$ &  Sharpe \\
\hline
0.25 & 0.05 & 65.87 & 8.35 & 3.77 & 62.45 &  $2.37$  &  7.89 \\
0.50 & 0.10 & 64.79 & 7.58 & 3.15 & 61.99 & $1.90$ & 8.54 \\
0.75 & 0.15 & 63.66 & 7.25 & 2.83 & 61.09 & $1.00$ & 8.78 \\
1.00 & 0.20 & 62.57 & 7.10 & 2.63 & 60.09 & $0.00$ & 8.81 \\
1.25 & 0.25 & 61.56 & 6.94 & 2.48 & 59.20 & $-0.89$ & 8.87 \\
1.50 & 0.30 & 60.61 & 6.87 & 2.37 & 58.30 & $-1.79$ & 8.82 \\
2.00 & 0.40 & 58.75 & 6.80  & 2.20 & 56.49 & $-3.59$ & 8.64 \\
2.50 & 0.50 & 56.90 & 6.73 & 2.07 & 54.69 & $-5.40$ & 8.45 \\
3.00 & 0.60 & 54.95 & 6.67 & 1.98 & 52.78 & $-7.31$ & 8.24 \\
4.00 & 0.80 & 51.29 & 6.55 & 1.84 & 49.20 & $-10.89$ & 7.83 \\
\hline
\end{tabular}
\caption{Monte Carlo comparison across the heuristic CJ-formula strategy family on the AS-2008 benchmark, 
with $N = 10{,}000$ trajectories.
See the body of Section~\ref{subsec:numerical-results} for definitions of  $\mathrm{CE}(\phi)$ and $\Delta\mathrm{CE}$, 
and for standard-error details.
{Sign convention:} $\Delta\mathrm{CE} > 0$ means the off-surface strategy outperforms the on-surface one. 
This occurs only on the under-penalizing side ($\phi < \phi_{\text{forced}}$) 
and is an artifact of three specific features of the AS-2008 benchmark
(no inventory cap, $L \equiv 0$, small $\gamma = 0.1$), 
discussed in Sections~\ref{subsec:numerical-tests} and~\ref{subsec:numerical-discussion}.}
\label{tab:numerical}
\end{table}

\subsection{Discussion}
\label{subsec:numerical-discussion}

The table records three observations, of which the first is the substantive one and the others are diagnostic.

\paragraph{Three observations.}
\emph{Over-penalizing side: linear-in-deviation  penalty  cost.}
The certainty equivalent decreases strictly and monotonically as $\phi$ increases. 
At $\phi/\phi_{\text{forced}} = 4.0$, the CE loss is $10.89$ units, about $18\%$ of the on-surface value.
The scaling in $\Delta = \phi/\phi_{\text{forced}} - 1$ is essentially linear: 
$|\Delta\mathrm{CE}|/\Delta$ takes values $3.56, 3.58, 3.59, 3.60, 3.66, 3.63$ 
at $\Delta = 0.25, 0.5, 1.0, 1.5, 2.0, 3.0$, constant to better than $3\%$. 
The mild non-monotonicity at the high end is within Monte Carlo noise given the standard errors of Section~\ref{subsec:numerical-results}. 
By contrast, the quadratic ratio $|\Delta\mathrm{CE}|/\Delta^2$ falls from $14.2$ to $1.2$, ruling out quadratic-in-deviation behavior.
The second-order Taylor structure of Corollary~\ref{cor:cj-inconsistency} would predict quadratic leading-order behavior, 
but in this AS-2008 regime with $\gamma = 0.1$ higher-order corrections dominate from  $\Delta \approx 0.25$ onward.

\emph{Under-penalizing side is not a counterexample.}
The under-penalizing side shows $\Delta\mathrm{CE} > 0$ (CE values \emph{higher} than the on-surface value), 
 seemingly in contradiction with the theorem.
The mechanism is the absent inventory cap discussed in Section~\ref{subsec:numerical-tests}. 
At $\phi = \phi_{\text{forced}}$, the CJ-formula heuristic coincides with the AS-optimal strategy, 
which is CE-maximal \emph{within} the admissible class $|q_t| \leq Q$, but our simulation does not enforce this cap. 
With $L \equiv 0$, the extra residual inventory at low $\phi$ carries no liquidation cost, 
  and the entropic CE penalty does not offset the mean gain.
Adding $L \neq 0$ and enforcing $|q_t| \leq Q$ would restore the predicted monotonicity.

\emph{Sharpe ratio does not detect the misspecification.}
The Sharpe-style ratio varies only between $7.83$ and $8.87$ across the entire sweep,  
while the CE loss on the over-penalizing side exceeds $10$ units.
The on-surface value has neither the highest nor the lowest Sharpe.
A desk monitoring strategies by Sharpe alone would not see the over-penalization cost. 
The certainty equivalent catches it, and the calibration-inversion diagnostic of  Corollary~\ref{cor:calibration-inversion} is sharper still.

\section*{Declarations}

\textbf{Conflict of interest.} 
The author declares no conflicts of interest.

\medskip

\noindent\textbf{Funding.} 
No funding was received for this work.

\medskip

\noindent\textbf{Data availability.} 
No empirical data were used in this study. 
The numerical results in Appendix~D are based entirely on Monte Carlo simulation of a model with analytically specified parameters.


\end{document}